\newcommand{\defn}{\ensuremath{: \, =}}
\newcommand{\matrixnorm}[1]{\left|\!\left|\!\left|{#1}\right|\!\right|\!\right|}
\newcommand{\lambdamin}{\lambda_{\tiny{\min}}}
\newcommand{\lambdamax}{\lambda_{\tiny{\max}}}
\newcommand{\R}{\mathbb{R}}
\newcommand{\Ee}{\mathbb{E}}
\newcommand{\bg}{\mathbf{g}}
\newcommand{\bh}{\mathbf{h}}
\newcommand{\cC}{\mathcal{C}}
\newcommand{\cG}{\mathcal{G}}
\newcommand{\cL}{\mathcal{L}}
\newcommand{\cLs}{\mathcal{L}^\ast}
\newcommand{\cN}{\mathcal{N}}
\newcommand{\cM}{\mathcal{M}}
\newcommand{\cZ}{\mathcal{Z}}
\DeclareMathOperator*{\argmin}{\arg\min}
\DeclareMathOperator{\cov}{cov}
\newcommand{\btheta}{\bar \theta}
\newcommand{\thetas} {\theta^\ast}
\newcommand{\htheta}{\widehat \theta}
\newcommand{\ttheta}{\widetilde \theta}
\newcommand{\tT}{\widetilde\Theta}
\renewenvironment{proof}[1][\proofname]{{\noindent\bfseries Proof of #1. }}{\hfill $\blacksquare$}
\newcommand{\labitemc}[2]{%
\def\@itemlabel{\textbf{#1}}
\item
\def\@currentlabel{#1}\label{#2}}
\begin{document}

\title{Distributed Bootstrap for Simultaneous Inference Under High Dimensionality}

\author{\name Yang Yu \email yuyang930930@gmail.com \\
       \addr Department of Statistics\\
       Purdue University\\
       West Lafayette, IN 47907, USA
       \AND
       \name Shih-Kang Chao \email skchao74@gmail.com \\
       \addr Department of Statistics\\
       University of Missouri\\
       Columbia, MO 65211, USA
       \AND
       \name Guang Cheng\thanks{Part of this manuscript was completed while Cheng was at Purdue.} \email guangcheng@ucla.edu \\
       \addr Department of Statistics\\
       University of California, Los Angeles \\
       Los Angeles, CA 90095, USA}

\editor{Victor Chernozhukov}

\maketitle

\begin{abstract}%   <- trailing '%' for backward compatibility of .sty file
We propose a distributed bootstrap method for simultaneous inference on high-dimensional massive data that are stored and processed with many machines. The method produces an $\ell_\infty$-norm confidence region based on a communication-efficient de-biased lasso, and we propose an efficient cross-validation approach to tune the method at every iteration. We theoretically prove a lower bound on the number of communication rounds $\tau_{\min}$ that warrants the statistical accuracy and efficiency. Furthermore, $\tau_{\min}$ only increases logarithmically with the number of workers and the intrinsic dimensionality, while nearly invariant to the nominal dimensionality. We test our theory by extensive simulation studies, and a variable screening task on a semi-synthetic dataset based on the US Airline On-Time Performance dataset. The code to reproduce the numerical results is available at GitHub: \url{https://github.com/skchao74/Distributed-bootstrap}. 
\end{abstract}

\begin{keywords}
  Distributed Learning, High-dimensional Inference, Multiplier Bootstrap, Simultaneous Inference, De-biased Lasso
\end{keywords}

\section{Introduction}

Modern massive datasets with enormous sample size and tremendous dimensionality are usually impossible to be processed with a single machine. For remedy, a master-worker architecture is often adopted, e.g., Hadoop \citep{singh2014hadoop}, which operates on a cluster of nodes for data storage and processing, where the master node also contains a portion of the data; see Figure~\ref{fig:master_slave}. An inherent problem of this architecture is that inter-node communication can be over a thousand times slower than intra-node computation due to the inter-node communication protocol, which unfortunately always comes with significant overhead \citep{lan2018communication,fan2019communication}. Hence, communication efficiency is usually a top concern for algorithm development in distributed learning.
\begin{figure}[ht]
\centering
\includegraphics[width=0.5\textwidth]{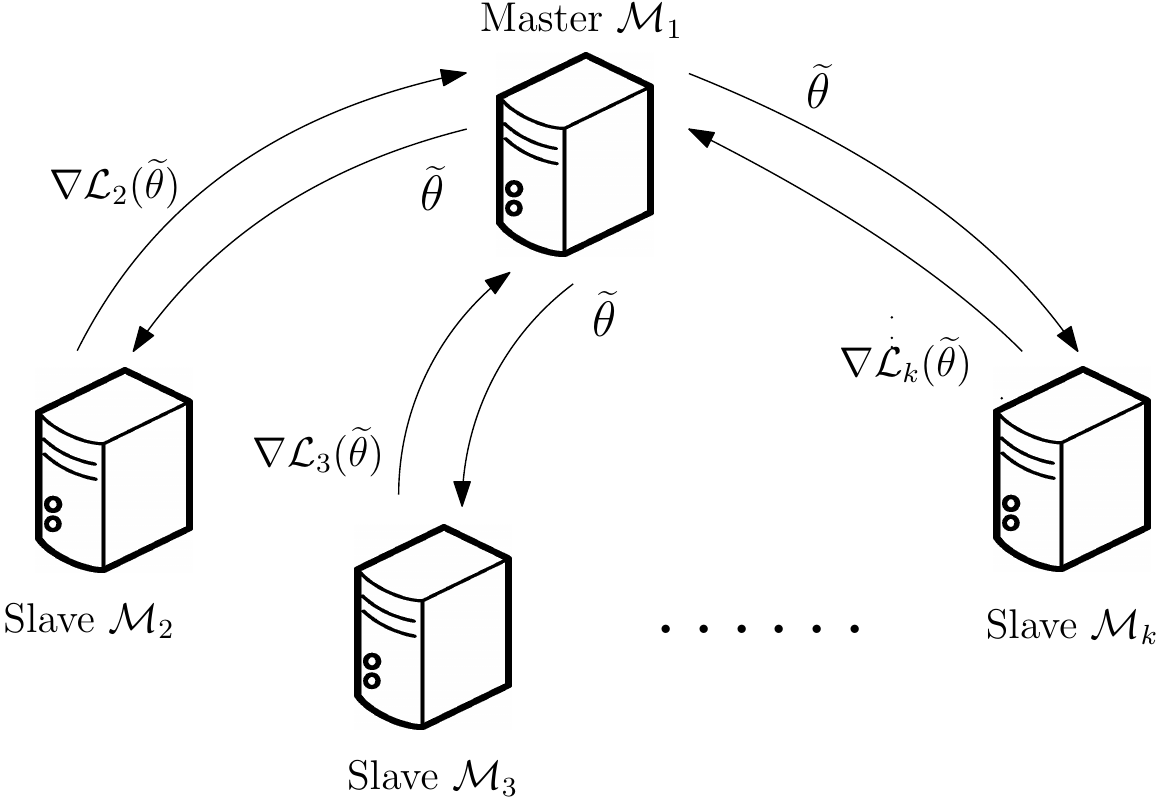}
\caption{Master-worker architecture for storing and processing distributed data.}
\label{fig:master_slave}
\end{figure}

Classical statistical methods are usually not communication-efficient as some of them require hundreds or even thousands passes over the entire dataset. In the last few years, active research has greatly advanced our ability to perform distributed statistical optimization and inference in, e.g., maximum likelihood estimation \citep{zhang2012communication,li2013statistical,chen2014split,battey2015distributed,jordan2019communication,huang2019distributed,chen2018first,ZLW20}, Lasso \citep{lee2017communication,wang2017efficient,wang2017improved}, partially linear models \citep{zhao2016partially}, nonstandard regression \citep{shi2018massive,banerjee2019divide}, quantile regression \citep{volgushev2019distributed,chen2019}, principal component analysis \citep{fan2019distributed,chen2020distributed}, just to name a few. However, solutions for many other problems in the distributed framework, for example the statistical inference for high-dimensional models, are still elusive.
 
Simultaneous inference for high-dimensional statistical models has been widely considered in many applications where datasets can be handled with a standalone computer \citep{cai2017large}, and many recent papers focus on bootstrap as an effective way to implement simultaneous inference \citep{dezeure2017high,zhang2017simultaneous,BCCW18,BCK19,yu2020simultaneous}. These existing methods typically use the well-celebrated de-biased Lasso \citep{van2014asymptotically,zhang2014confidence,javanmard2014confidence,javanmard2014hypothesis}, where the de-biased score results from the KKT condition of the Lasso optimization problem. However, de-biased Lasso is not directly applicable in a distributed computational framework. For one thing, the implementation of de-biased Lasso requires expensive subroutines such as nodewise Lasso \citep{van2014asymptotically}, which has to be replaced by a more communication-efficient method. For another, the quality of the de-biased score, which is essential to the validity of the bootstrap, is generally worse in a distributed computational framework than that in a centralized computational framework. In particular, it is heavily biased so the asymptotic normality fails. However, it can possibly be improved with sufficient rounds of communication between the master and worker nodes. The bootstrap validity therefore critically hinges on the interplay between the dimensionality of the model and the sparsity level, as well as the rounds of communication, the number of worker nodes and the size of local sample that are specific to the distributed computational framework. 

In this paper, we tackle the challenges discussed above and propose a communication-efficient simultaneous inference method for high-dimensional models. The main component at the core of our method is a novel way to improve the quality of the de-biased score with a carefully selected number of rounds of communication while relaxing the constraint on the number of machines. Our method is motivated by \cite{wang2017efficient}, who proposed an iterative procedure for computing the estimator but no statistical inference was provided. Note that the de-biased Lasso has been applied by \cite{lee2017communication} to obtain a communication-efficient $\sqrt{N}$-consistent estimator, but their method restricts the number of worker nodes to be less than the local sample size. Next, we apply communicate-efficient multiplier bootstrap methods \texttt{k-grad} and \texttt{n+k-1-grad}, which are originally proposed in \cite{ycc2020simultaneous} for low dimensional models. These bootstrap methods prevent repeatedly refitting the models and relax the constraint on the number of machines that plague the methods proposed earlier \citep{kleiner2014scalable,sengupta2016subsampled}. A key challenge in implementation is that cross-validation, which is a popular method for selecting tuning parameters, usually requires multiple passes of the entire dataset and is typically inefficient in the distributed computational framework. We propose a new cross-validation that only requires the master node for implementation without needing to communicate with the worker nodes.

Our theoretical study focuses on the explicit lower bounds on the rounds of communication that warrant the validity of the bootstrap method for high-dimensional generalized linear models; see Section \ref{sec:hd_over} for an overview. In short, the greater the number of worker nodes and/or the intrinsic dimensionality, the greater the rounds of communication required for the bootstrap validity. The bootstrap validity and efficiency are corroborated by an extensive simulation study.

We further demonstrate the merit of our method on variable screening with a semi-synthetic dataset, based on the large-scale US Airline On-Time Performance dataset. By performing a pilot study on an independently sampled subset of data, we take four key explanatory variables for flight delay, which correspond to the dummy variables of the four years after the September 11 attacks. On another independently sampled subset of data, we combine the dummy variables of the four years with artificial high-dimensional spurious variables to create a design matrix. We perform our method on this artificial dataset, and find that the relevant variables are correctly identified as the number of iteration increases. In particular, we visualize the effect of these four years by confidence intervals.

We go beyond our previous publication \cite{ycc2020simultaneous} in two major aspects: (1) In this paper we focus on high-dimensional models. In particular, the dimensionality of the model can exceed the sample size in each computing node. We handle high dimensionality using $\ell_1$ penalization, and consider de-biased Lasso under the distributed computational framework. (2) We tune the $\ell_1$ penalized problem with a carefully designed cross-validation method, which can be applied under distributed computational framework.

The rest of the paper is organized as follows. In Section~\ref{sec:method}, we introduce the problem formulation of distributed high-dimensional simultaneous inference and present the main bootstrap algorithm as well as the cross-validation algorithm for hyperparameter tuning. Theoretical guarantees of bootstrap validity for high-dimensional (generalized) linear models are provided in Section~\ref{sec:hd}. Section~\ref{sec:exp} presents simulation results that corroborate our theoretical findings. Section~\ref{sec:real} showcases an application on variable screening for high-dimensional logistic regression with a big real dataset using our new method. Finally, Section~\ref{sec:disc} concludes the paper. Technical details are in Appendices. 
The proofs of the theoretical results are in Supplementary Material. The code to reproduce the numerical results is in GitHub: \url{https://github.com/skchao74/Distributed-bootstrap}.

\noindent{\bf Notations.} We denote the $\ell_p$-norm ($p\geq 1$) of any vector $v=(v_1,\dots,v_n)$ by $\|v\|_p=(\sum_{i=1}^n |v_i|^p)^{1/p}$ and $\|v\|_\infty=\max_{1\leq i\leq n}|v_i|$. The induced $p$-norm and the max-norm of any matrix $M\in\R^{m\times n}$ (with element $M_{ij}$ at $i$-th row and $j$-th column) are denoted by $\matrixnorm{M}_p=\sup_{x\in\R^n;\|x\|_p=1} \|Mx\|_p$ and $\matrixnorm{M}_{\max}=\max_{1\leq i\leq m;1\leq j\leq n}|M_{i,j}|$. We write $a\lesssim b$ if $a=O(b)$, and $a\ll b$ if $a=o(b)$.

\section{Distributed Bootstrap for High-Dimensional Simultaneous Inference} \label{sec:method}

In this section, we introduce the distributed computational framework and present a novel bootstrap algorithm for high-dimensional simultaneous inference under this framework. A communication-efficient cross-validation method is proposed for tuning.

\subsection{Distributed Computation Framework}

Suppose data $\{Z_i\}_{i=1}^N$ are i.i.d., and $\cL(\theta;Z)$ is a twice-differentiable convex loss function arising from a statistical model, where  $\theta=(\theta_1,\dots,\theta_d)\in\R^d$. Suppose that the parameter of interest $\thetas$ is the minimizer of an expected loss: $$\thetas=\argmin_{\theta\in\R^d} \cLs(\theta), \mbox{  where $\cLs(\theta)\defn\Ee_Z[\cL(\theta;Z)]$}.$$ 
We consider a high-dimensional setting where $d>N$ is possible, and $\thetas$ is sparse, i.e., the support of $\thetas$ is small.

We consider a distributed computation framework, in which the entire data are stored distributedly in $k$ machines, and each machine has data size $n$. Denote by $\{Z_{ij}\}_{i=1,\dots,n; j=1,\dots,k}$ the entire data, where $Z_{ij}$ is $i$-th datum on the $j$-th machine $\cM_j$, and $N=nk$. Without loss of generality, assume that the first machine $\cM_1$ is the master node; see Figure \ref{fig:master_slave}. Define the local and global loss functions as 
\begin{align}
\begin{split}
    \mbox{global loss: }\cL_N(\theta)&=\frac1k\sum_{j=1}^k\cL_j(\theta),\quad\mbox{where}\\ \mbox{local loss: }\cL_j(\theta)&=\frac1n\sum_{i=1}^n\cL(\theta;Z_{ij}),\quad j=1,\dots,k. 
\end{split}
\label{eq:loss}
\end{align}
A great computational overhead occurs when the master and worker nodes communicate. 
In order to circumvent the overhead, the rounds of communications between the master and worker nodes should be minimized, and the algorithms with reduced communication overheads are ``communication-efficient''.

\subsection{High-Dimensional Simultaneous Inference}

In this paper, we focus on the simultaneous confidence region for $\thetas$ in a high-dimensional model, which is one of the effective ways for variable selection and inference that are immune to the well-known multiple testing problem. 
In particular, given an estimator $\htheta$ that is $\sqrt{N}$-consistent, simultaneous confidence intervals can be found with confidence $\alpha$, for large $\alpha\in(0,1)$, by finding the quantile
\begin{align}
c(\alpha)&\defn\inf\{t\in\R:P(\widehat T \leq t)\geq\alpha\} \quad\text{where} \label{eqn:c} \\
\widehat T&\defn \big\|\sqrt N\big(\htheta-\thetas\big)\big\|_\infty. \label{eqn:that}
\end{align}
where $\htheta$ may be computed through the de-biased Lasso \citep{van2014asymptotically,zhang2014confidence,javanmard2014confidence,javanmard2014hypothesis}:
\begin{align}
	\htheta = \htheta_{Lasso}-\widehat\Theta\nabla\cL_N(\htheta_{Lasso}), \label{eq:dblasso}
\end{align}
where 
$$\htheta_{Lasso}=\argmin_{\theta\in\R^d} \cL_N(\theta)+\lambda\|\theta\|_1$$
is the Lasso estimator with some hyperparameter $\lambda>0$, $\widehat\Theta$ is a surrogate inverse Hessian matrix and $\cL_N(\theta) = N^{-1}\sum_{i=1}^N \cL(\theta;Z_i)$ is the empirical loss.

Implementing the simultaneous inference based on $\htheta$ and $\widehat T$ in distributed computational framework inevitably faces some computational challenges. Firstly, computing $\htheta$ usually involves some iterative optimization routines that can accumulate a large communication overhead without a careful engineering. Next, some bootstrap methods have been proposed for estimating $c(\alpha)$, e.g., the multiplier bootstrap \citep{zhang2017simultaneous}, but they cannot be straightforwardly implemented within a distributed computational framework due to excessive resampling and communication. Even though some communication-efficient bootstrap methods have been proposed, e.g., \cite{kleiner2014scalable,sengupta2016subsampled,ycc2020simultaneous}, they either require a large number of machines or are inapplicable to high-dimensional models.

Because of the above-mentioned difficulties, inference based on $\widehat T$ is inapplicable in the distributed computational framework and is regarded as an ``oracle'' in this paper. Our goal is to provide a method that is communication-efficient while entertaining the same statistical accuracy as that based on the oracle $\widehat T$.

\subsection{High-Dimensional Distributed Bootstrap} \label{sec:reg_m}

In order to adapt \eqref{eq:dblasso} to the distributed computational setting, we first need to find a good substitute $\ttheta$ for $\htheta_{Lasso}$ that is communication-efficient, while noting that standard algorithms for Lasso are not communication-efficient. Fortunately, $\ttheta$ can be computed by the communication-efficient surrogate likelihood (CSL) algorithm with the $\ell_1$-norm regularization \citep{wang2017efficient,jordan2019communication}, which iteratively generates a sequence of estimators $\ttheta^{(t)}$ with regularization parameters $\lambda^{(t)}$ at each iteration $t=0,\dots,\tau-1$. See Remark \ref{rem:lam} for model tuning and Lines~\ref{line:2}-\ref{line:16} of Algorithm~\ref{alg:hd} for the exact implementation. Under regularity conditions, if $t$ is sufficiently large, it is warranted that $\ttheta$ is close to $\htheta_{Lasso}$.

Typical algorithms for computing $\widehat\Theta$, e.g., the nodewise Lasso \citep{van2014asymptotically}, cannot be extended straightforwardly to the distributed computational framework due to the same issue of communication inefficiency. We overcome this by performing the nodewise Lasso using only $\cM_1$ without accessing the entire dataset. This simple approach does not sacrifice accuracy as long as a sufficient amount of communication brings  $\ttheta$ sufficiently close to $\theta^*$.

Lastly, given the surrogate estimators $\ttheta$ for $\htheta_{Lasso}$ and $\tT$ for $\widehat\Theta$, we estimate the asymptotic quantile $c(\alpha)$ of $\widehat T$ by bootstrapping $\|\tT\sqrt N\nabla\cL_N(\ttheta)\|_\infty$ using the \texttt{k-grad} or \texttt{n+k-1-grad} bootstrap originally proposed by \citet{ycc2020simultaneous} for low-dimensional models. However, the number of communication rounds between master and worker nodes has to be carefully fine-tuned for high-dimensional models. In particular,  
the \texttt{k-grad} algorithm computes
\begin{align}
	\overline W^{(b)} \defn \bigg\|\underbrace{-\tT\frac1{\sqrt{k}}\sum_{j=1}^k\epsilon_j^{(b)}\sqrt n(\bg_j-\bar\bg)}_{=:\overline A}\bigg\|_\infty, \label{eqn:wb}
\end{align}
where $\epsilon_j^{(b)}\overset{\text{i.i.d.}}{\sim}\cN(0,1)$ independent from the data, $\bg_j=\nabla\cL_j(\ttheta)$ and $\bar\bg = k^{-1}\sum_{j=1}^k \bg_j$. However, it is known that \texttt{k-grad} does not perform well when $k$ is small \citep{ycc2020simultaneous}. The improved algorithm \texttt{n+k-1-grad} computes
\begin{align}
\begin{split}
    	\widetilde W^{(b)}\defn \bigg\|&\underbrace{-\tT\frac1{\sqrt{n+k-1}}\bigg(\sum_{i=1}^n\epsilon_{i1}^{(b)}(\bg_{i1}-\bar\bg) +\sum_{j=2}^k\epsilon_j^{(b)}\sqrt n(\bg_j-\bar\bg) \bigg)}_{=:\widetilde A}\bigg\|_\infty,
\end{split}
\label{eqn:wt}
\end{align}
where $\epsilon_{i1}^{(b)}$ and $\epsilon_j^{(b)}$ are i.i.d.\ $\cN(0,1)$ multipliers, and $\bg_{i1}=\nabla\cL(\ttheta;Z_{i1})$ is based on a single datum $Z_{i1}$ in the master. The key advantage of \texttt{k-grad} or \texttt{n+k-1-grad} is that once the master has the gradients $\bg_j$ from the worker nodes, the quantile of $\{{\overline W}^{(b)}\}_{b=1}^B$ can be computed in the master node only, without needing to communicate with worker nodes. See Algorithm~\ref{alg:kgrad} in the Appendix for the pseudocode of \texttt{k-grad} and \texttt{n+k-1-grad}.

\begin{algorithm}[ht!]
\caption{\texttt{k-grad}/\texttt{n+k-1-grad} with de-biased $\ell_1$-CSL estimator}\label{alg:hd}
\begin{algorithmic}[1]
\Statex {\bfseries Require:} $\tau\geq 1$ rounds of communication; hyperparameters $\{\lambda^{(t)}\}_{t=0}^{\tau-1}$
, nodewise Lasso procedure \texttt{Node}$(\cdot,\cdot)$ with hyperparameters $\{\lambda_l\}_{l=1}^d$ (see Section~\ref{sec:node})
\State $\ttheta^{(0)}\gets\argmin_\theta \cL_1(\theta)+\lambda^{(0)}\|\theta\|_1$ at $\cM_1$ \label{line:2}
\State Compute $\tT$ by running \texttt{Node}$(\nabla^2\cL_1(\ttheta^{(0)}),\{\lambda_l\}_{l=1}^d)$ at $\cM_1$ 
\For{$t = 1,\ldots, \tau $} 
\State Transmit $\ttheta^{(t-1)}$ to $\{\cM_j\}_{j=2}^k$
\State Compute $\nabla\cL_1(\ttheta^{(t-1)})$ at $\cM_1$
\For{$j = 2,\ldots, k $}
\State Compute $\nabla\cL_j(\ttheta^{(t-1)})$ at $\cM_j$
\State Transmit $\nabla\cL_j(\ttheta^{(t-1)})$ to $\cM_1$
\EndFor
\State $\nabla\cL_N(\ttheta^{(t-1)})\gets k^{-1}\sum_{j=1}^k\nabla\cL_j(\ttheta^{(t-1)})$ at $\cM_1$
\If{$t < \tau$}
\State $\ttheta^{(t)}\gets\argmin_\theta \cL_1(\theta)-\theta^\top\left(\nabla\cL_1(\ttheta^{(t-1)})-\nabla\cL_N(\ttheta^{(t-1)})\right)+\lambda^{(t)}\|\theta\|_1$ at $\cM_1$ \label{line:srg}
\Else
\State $\ttheta^{(\tau)}\gets\ttheta^{(\tau-1)}-\tT\nabla\cL_N(\ttheta^{(\tau-1)})$ at $\cM_1$ \label{line:db}
\EndIf
\EndFor \label{line:16}
\State Run \texttt{DistBoots}$(\text{`\texttt{k-grad}' or `\texttt{n+k-1-grad}'},\ttheta=\ttheta^{(\tau)},\{\bg_j=\nabla\cL_j(\ttheta^{(\tau-1)})\}_{j=1}^k,$ \\
\hspace{90pt}$\tT=\tT)$ at $\cM_1$
\end{algorithmic}
\end{algorithm}

Algorithm~\ref{alg:hd} presents the complete statistical inference procedure. There are two key innovative steps in Algorithm 1 that facilitate the statistical inference for high dimensional model with a big dataset. First, we introduce de-biased Lasso in distributed inference, which goes beyond high dimensional model estimation considered in \citet{jordan2019communication,wang2017efficient}. Second, we use nodewise Lasso to provide a sparse estimation of the high-dimensional inverse Hessian matrix instead of the empirical Hessian used in \citet{ycc2020simultaneous}.

Algorithm 1 can achieve high computational efficiency due to two reasons. First, we initialize Algorithm 1 with a warm start. Namely, we warm start with the Lasso estimator estimated with dataset in the master node, which provides a good initializer. Second, because the nodewise Lasso is computationally expensive, we perform it only once at the very beginning and freeze it through the iterations of the algorithm without updating it.

The number of iterations $\tau$ in Algorithm~\ref{alg:hd} steers the trade-off between statistical accuracy and communication efficiency. In particular, a larger $\tau$ leads to a more accurate coverage of the simultaneous confidence interval, but it also induces a higher communication cost. Therefore, studying the minimal $\tau$ that warrants the bootstrap accuracy is crucial, which is done in Section~\ref{sec:hd}. 

\begin{remark} \label{rem:lam}
Two groups of hyperparameters need to be chosen in Algorithm~\ref{alg:hd}: $\{\lambda^{(t)}\}_{t=0}^{\tau-1}$ for regularization in CSL estimation, and $\{\lambda_l\}_{l=1}^d$ for regularization in nodewise Lasso (see Algorithm~\ref{alg:node}). In Section~\ref{sec:cv}, we propose a cross-validation method for tuning $\{\lambda^{(t)}\}_{t=0}^{\tau-1}$.  As to $\{\lambda_l\}_{l=1}^d$, while \citet{van2014asymptotically} suggests to choose the same value for all $\lambda_l$ by cross-validation, a potentially better way may be to allow $\lambda_l$ to be different across $l$ and select each $\lambda_l$ via cross-validation for the corresponding nodewise Lasso, which is the approach we take for a distributed variable screening task in Section \ref{sec:real}. 
\end{remark}

\begin{remark}
There exist other options than CSL for $\ttheta$ such as the averaging de-biased estimator \citep{lee2017communication}, but an additional round of communication may be needed to compute the local gradients. More importantly, their method may be inaccurate when $n<k$.
\end{remark}

\subsection{Communication-Efficient Cross-Validation}\label{sec:cv}

We propose a communication-efficient cross-validation method for tuning the hyperparameters $\{\lambda^{(t)}\}_{t=0}^{\tau-1}$ in Algorithm~\ref{alg:hd}. \citet{wang2017efficient} proposes to hold out a validation set on each node for selecting $\lambda^{(t)}$. However, this method requires fitting the model for each candidate value of $\lambda^{(t)}$, which uses the same communication cost as the complete CSL estimation procedure. 

We propose a communication-efficient $K$-fold cross-validation method that chooses $\lambda^{(t)}$ for the CSL estimation at every iteration $t$. At iteration $t$, the master uses the gradients already communicated from the worker nodes at iteration $t-1$. Hence, the cross-validation needs only the master node, which circumvents costly communication between the master and the worker nodes.

Specifically, notice that the surrogate loss (see Line~\ref{line:srg} in Algorithm~\ref{alg:hd}) is constructed using $n$ observations $\cZ=\{Z_{i1}\}_{i=1}^n$ in the master node and $k-1$ gradients $\cG=\{\nabla\cL_j(\ttheta^{(t-1)})\}_{j=2}^k$ from the worker nodes. We then create $K$ (approximately) equal-size partitions to both $\cZ$ and $\cG$. The objective function for training is formed using $K-1$ partitions of $\cZ$ and $\cG$. In terms of the measure of fit, instead of computing the original likelihood or loss, we calculate the unregularized surrogate loss using the last partition of $\cZ$ and $\cG$, still in the master node. See Algorithm~\ref{alg:cv} for the pseudocode.

\begin{algorithm}[ht!]
\caption{Distributed $K$-fold cross-validation for $t$-step CSL}\label{alg:cv}
\begin{algorithmic}[1]
\Statex {\bfseries Require:} $(t-1)$-step CSL estimate $\ttheta^{(t-1)}$, set $\Lambda$ of candidate values for $\lambda^{(t)}$, partition of master data $\cZ=\bigcup_{q=1}^K\cZ_{q}$, partition of worker gradients $\cG=\bigcup_{q=1}^K\cG_q$
\For{$q = 1,\dots, K $}
\State $\cZ_{train}\gets\bigcup_{r\ne q}\cZ_r$; \quad$\cZ_{test}\gets\cZ_q$
\State $\cG_{train}\gets\bigcup_{r\ne q}\cG_r$; \quad$\cG_{test}\gets\cG_q$
\State $g_{1,train}\gets\text{Avg}_{Z\in\cZ_{train}}\Big(\nabla\cL(\ttheta^{(t-1)};Z)\Big)$; \quad$g_{1,test}\gets\text{Avg}_{Z\in\cZ_{test}}\Big(\nabla\cL(\ttheta^{(t-1)};Z)\Big)$
\State $\bar g_{train}\gets\text{Avg}_{g\in\{g_{1,train}\}\cup\cG_{train}}(g)$; \quad$\bar g_{test}\gets\text{Avg}_{g\in\{g_{1,test}\}\cup\cG_{test}}(g)$
\For{$\lambda\in\Lambda_t$}
\State $\beta\gets\argmin_\theta \text{Avg}_{Z\in\cZ_{train}}\big(\cL(\theta;Z)\big)-\theta^\top\left(g_{1,train}-\bar g_{train}\right)+\lambda\|\theta\|_1$
\State $Loss(\lambda,q)\gets\text{Avg}_{Z\in\cZ_{test}}\big(\cL(\beta;Z)\big)-\beta^\top\left(g_{1,test}-\bar g_{test}\right)$
\EndFor
\EndFor
\State Return $\lambda^{(t)}=\argmin_{\lambda\in\Lambda}K^{-1}\sum_{q=1}^K Loss(\lambda,q)$
\end{algorithmic}
\end{algorithm}

\section{Theoretical Analysis} \label{sec:hd}

Section \ref{sec:hd_over} provides an overview of the theoretical results. Sections \ref{sec:hd_lm} and \ref{sec:hd_glm} presents the rigorous statements for linear models and generalized linear models (GLMs) respectively.

\subsection{An Overview of Theoretical Results}\label{sec:hd_over}

As discussed in Section~\ref{sec:reg_m}, $\tau$ has to be large enough to ensure the bootstrap accuracy, yet it also induces a great communication cost. Hence, our main goal is to pin down the minimal number of iterations $\tau_{\min}$ (communication rounds) sufficient for the bootstrap validity in Algorithm~\ref{alg:hd}. An overview of the theoretical results is provided in Figure~\ref{fig:tau_hd}.

As an overall trend in Figure \ref{fig:tau_hd}, $\tau_{\min}$ is increasing logarithmically in $k$ and decreasing in $n$ for both \texttt{k-grad} and \texttt{n+k-1-grad} in (generalized) linear models; in addition, $\tau_{\min}$ is increasing in $\overline s$ logarithmically, where $\overline s$ is the maximum of the sparsity of the true coefficient vector and the inverse population Hessian matrix to be formally defined later. 

By comparing the left and right panels of Figure \ref{fig:tau_hd} under a fixed tuple $(n,k,\overline s)$, the $\tau_{\min}$ for \texttt{k-grad} is always greater or equal to that for \texttt{n+k-1-grad}, which indicates a greater communication efficiency of \texttt{n+k-1-grad}. For very small $k$, \texttt{n+k-1-grad} can still provably work, while \texttt{k-grad} cannot. Particularly, $\tau_{\min}=1$ can work for certain instances of \texttt{n+k-1-grad} but is always too small for \texttt{k-grad}.

Regarding the comparison between high-dimensional sparse linear models (top panels) and GLMs (bottom panels), GLMs typically require a greater $n$ than sparse linear models, which ensures that the error between $\ttheta^{(t)}$ and $\thetas$ decreases in a short transient phase; see Section \ref{rem:wang} in the Appendix for details.
\begin{figure}[ht!]
\begin{center}
\centerline{\includegraphics[width=0.8\columnwidth]{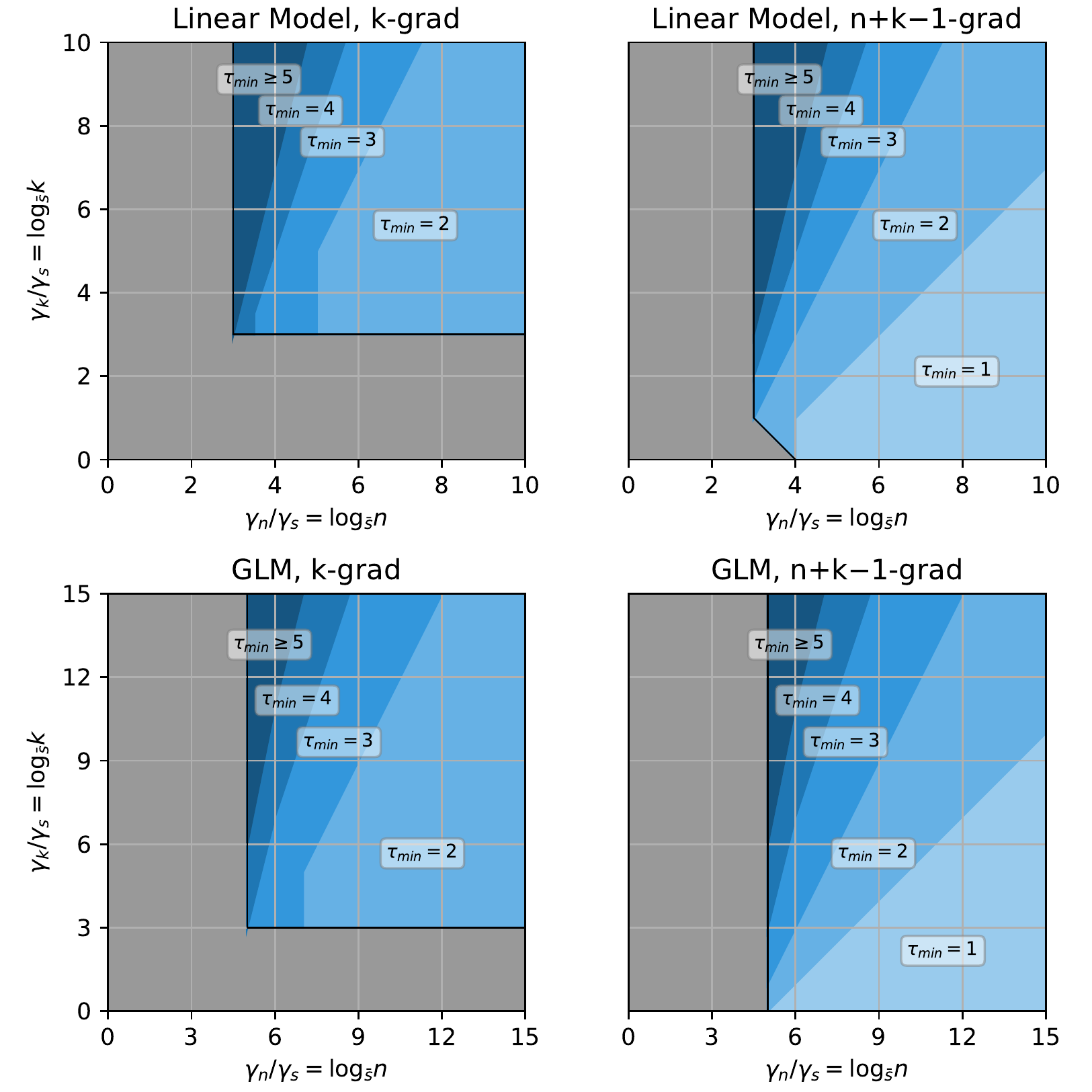}}
\caption{Illustration of Theorems \ref{theo:reg0_csl}-\ref{theo:reg_glm_csl}. Gray region are where the bootstrap validity are not warranted by our theory, and the other area is colored blue with varying lightness according to the lower bound of iteration $\tau$. $\gamma_n = \log_d n$, $\gamma_k = \log_d k$ and $\gamma_{\bar s}=\log_d \bar s$ are the orders of the local sample size $n$, number of machines $k$ and the sparsity $\bar s$.}
\label{fig:tau_hd}
\end{center}
\end{figure}
\subsection{Linear Model} \label{sec:hd_lm}

Suppose that $N$ i.i.d.\ observations are generated by a linear model $y=x^\top\thetas+e$ with an unknown coefficient vector $\thetas\in\R^d$, covariate random vector $x\in\R^d$, and noise $e\in\R$ independent of $x$ with zero mean and variance of $\sigma^2$. We consider the least-squares loss $\cL(\theta;z)=\cL(\theta;x,y)=(y-x^\top\theta)^2/2$.

We impose the following assumptions on the linear model.
\begin{itemize}
	\labitemc{(A1)}{as:design} $x$ is sub-Gaussian, i.e.,
	$$\sup_{\|w\|_2\leq1}\Ee\big[\exp((w^\top x)^2/L^2)\big]=O(1),$$
	for some absolute constant $L>0$.  Moreover, $1/\lambdamin(\Sigma)\leq\mu$ for some absolute constant $\mu>0$, where $\Sigma=\Ee[xx^\top]$.
	
	\labitemc{(A2)}{as:noise} $e$ is sub-Gaussian, i.e.,
	$$\Ee\big[\exp(e^2/L'^2)\big]=O(1),$$
	for some absolute constant $L'>0$.  Moreover, $\sigma>0$ is an absolute constant.
	
	\labitemc{(A3)}{as:sparse} 
	$\thetas$ and $\Theta_{l,\cdot}$ are sparse for $l=1,\cdots,d$, where $\Theta\defn\Sigma^{-1}=\Ee[xx^\top]^{-1}$. Specifically, we denote by $S\defn\{l:\thetas_l\neq0\}$ the active set of covariates and its cardinality by $s_0\defn|S|$.  Also, we define $s_l\defn|\{l'\neq l:\Theta_{l,l'}\neq0\}|$, $s^*\defn\max_l s_l$, and $\overline s=s_0\vee s^*$.
\end{itemize}

Assumption \ref{as:design} ensures a restricted eigenvalue condition when $n \gtrsim \bar s \log d$ by \cite{RZ13}. Under the assumptions, we first investigate the theoretical property of Algorithm~\ref{alg:hd}, where we apply \texttt{k-grad} with the de-biased $\ell_1$-CSL estimator with $\tau$ communications. Define
\begin{align}
    T&\defn \big\|\sqrt N\big(\ttheta^{(\tau)}-\thetas\big)\big\|_\infty,\label{eqn:t}
\end{align}
where $\ttheta^{(\tau)}$ is an output of Algorithm~\ref{alg:hd}. 

\begin{theorem}[\texttt{k-grad}, sparse linear model]\label{theo:reg0_csl}
	Suppose \ref{as:design}-\ref{as:sparse} hold, and that we run Algorithm \ref{alg:hd} with \texttt{k-grad} method in linear models. Let
	\begin{align}
	\lambda_l\asymp\sqrt{\frac{\log d}n} \quad\text{and}\quad \lambda^{(t)}\asymp\sqrt{\frac{\log d}{nk}}+\sqrt{\frac{\log d}n}\bigg(s_0\sqrt{\frac{\log d}n}\bigg)^t, \label{eqn:lam}
	\end{align}
	for $l=1,\dots,d$ and $t=0,\dots,\tau-1$.  Assume $n=d^{\gamma_n}$, $k=d^{\gamma_k}$, $\overline s=d^{\gamma_s}$ for some constants $\gamma_n,\gamma_k,\gamma_s>0$. If $\gamma_n>3\gamma_s$, $\gamma_k>3\gamma_s$, and $\tau\geq\tau_{\min}$, where
	\begin{align*}
	\tau_{\min}=1+\left\lfloor\max\left\{\frac{\gamma_k+\gamma_s}{\gamma_n-2\gamma_s},1+\frac{3\gamma_s}{\gamma_n-2\gamma_s}\right\}\right\rfloor,
	\end{align*}
	then for $T$ defined in \eqref{eqn:t}, we have
	\begin{align}
	\sup_{\alpha\in(0,1)}|P(T\leq c_{\overline W}(\alpha))-\alpha|=o(1). \label{eqn:kgradthm}
	\end{align}
	where $c_{\overline W}(\alpha)\defn\inf\{t\in\R:P_\epsilon(\overline W\leq t)\geq\alpha\}$, in which $\overline W$ is the \texttt{k-grad} bootstrap statistics with the same distribution as $\overline W^{(b)}$ in \eqref{eqn:wb} and $P_\epsilon$ denotes the probability with respect to the randomness from the multipliers.
	
	In addition, \eqref{eqn:kgradthm} also holds if $T$ is replaced by $\widehat T$ defined in \eqref{eqn:that}. 
\end{theorem}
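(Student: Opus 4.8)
The plan is to sandwich the law of $T$ (and of the oracle $\widehat T$) between that of an idealized statistic $U\defn\big\|\tT N^{-1/2}\sum_{i=1}^N x_ie_i\big\|_\infty$ and that of a Gaussian proxy, and then to show the \texttt{k-grad} bootstrap reproduces the same proxy. Write $\hat\Sigma_j=n^{-1}\sum_{i=1}^n x_{ij}x_{ij}^\top$, $\hat\Sigma_N=N^{-1}\sum_{i=1}^N x_ix_i^\top$, and $\delta^{(t)}=\ttheta^{(t)}-\thetas$. I will rely on the following facts, all available under \ref{as:design}--\ref{as:sparse}: (i) the CSL error rates $\|\delta^{(t)}\|_2\lesssim\sqrt{s_0\log d/(nk)}+\sqrt{s_0\log d/n}\,\big(s_0\sqrt{\log d/n}\big)^{t}$ and $\|\delta^{(t)}\|_1\lesssim s_0\sqrt{\log d/(nk)}+\big(s_0\sqrt{\log d/n}\big)^{t+1}$, obtained by iterating the surrogate-Lasso analysis of \citet{wang2017efficient}: for the least-squares loss the surrogate program at Line~\ref{line:srg} is an ordinary Lasso with Gram $\hat\Sigma_1$ and effective noise $N^{-1}\sum_i x_ie_i-(\hat\Sigma_N-\hat\Sigma_1)\delta^{(t-1)}$, whose $\ell_\infty$-level $\lesssim\sqrt{\log d/N}+\sqrt{\log d/n}\,\|\delta^{(t-1)}\|_1$ drives the recursion, with the restricted eigenvalue supplied by \citet{RZ13} since $n\gtrsim\overline s\log d$ and $\lambda^{(t)}$ as in \eqref{eqn:lam}; (ii) the master-only nodewise-Lasso guarantees of \citet{van2014asymptotically}: $\|I-\tT\hat\Sigma_1\|_{\max}\lesssim\sqrt{\log d/n}$, each row of $\tT$ being $\lesssim\overline s$-sparse with $\max_l\|\tT_{l,\cdot}\|_2\lesssim1$, hence $\max_l\|\tT_{l,\cdot}\|_1\lesssim\sqrt{\overline s}$; (iii) sub-Gaussian concentration $\|\hat\Sigma_J-\Sigma\|_{\max}\lesssim\sqrt{\log d/|J|}$ and $\big\||J|^{-1}\sum_{i\in J}x_ie_i\big\|_\infty\lesssim\sqrt{\log d/|J|}$ for any index block $J$; (iv) the high-dimensional CLT, Gaussian comparison, and anti-concentration inequalities of Chernozhukov, Chetverikov and Kato (CCK).

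\emph{Step 1: expansion and bias control.} For the least-squares loss $\nabla\cL_N(\theta)=\hat\Sigma_N(\theta-\thetas)-N^{-1}\sum_i x_ie_i$ exactly, so Line~\ref{line:db} gives the identity
\begin{align*}
\sqrt N\,\delta^{(\tau)}\;=\;\tT\, N^{-1/2}\sum_{i=1}^N x_ie_i\;+\;\big(I-\tT\hat\Sigma_N\big)\sqrt N\,\delta^{(\tau-1)},
\end{align*}
whence $|T-U|\le\sqrt N\,\|I-\tT\hat\Sigma_N\|_{\max}\,\|\delta^{(\tau-1)}\|_1$. Using $\|I-\tT\hat\Sigma_N\|_{\max}\le\|I-\tT\hat\Sigma_1\|_{\max}+\max_l\|\tT_{l,\cdot}\|_1\|\hat\Sigma_N-\hat\Sigma_1\|_{\max}\lesssim\sqrt{\overline s\log d/n}$ by (ii)--(iii) and the $\ell_1$-rate from (i) at $t=\tau-1$,
\begin{align*}
|T-U|\;\lesssim\;\underbrace{s_0\sqrt{\overline s}\,\tfrac{\log d}{\sqrt n}}_{\text{floor}}\;+\;\underbrace{\sqrt{k\,\overline s\,\log d}\,\big(s_0\sqrt{\log d/n}\big)^{\tau}}_{\text{transient}}.
\end{align*}
The floor term is $o(1/\sqrt{\log d})$ precisely under $\gamma_n>3\gamma_s$; the transient term is $o(1/\sqrt{\log d})$ exactly when $\tau$ clears the first bound $(\gamma_k+\gamma_s)/(\gamma_n-2\gamma_s)$ in $\tau_{\min}$ (balancing the $\sqrt k$ factor against the geometric decay $\rho\asymp s_0\sqrt{\log d/n}$), while the second bound $1+3\gamma_s/(\gamma_n-2\gamma_s)$ emerges on collecting the remaining $\overline s$-inflated contributions — from $\tT\hat\Sigma_N-I$, from the nodewise rows, and (via Step~2) from the bootstrap side — against the same decay; the outer $1+\lfloor\cdot\rfloor$ and the appearance of $\ttheta^{(\tau-1)}$ rather than $\ttheta^{(\tau)}$ reflect that the communication-free de-biasing step at Line~\ref{line:db} absorbs one extra power of $\rho$, so $\tau$ (not $\tau+1$) rounds suffice. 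The same computation with the classical rate $\|\htheta_{Lasso}-\thetas\|_1\lesssim s_0\sqrt{\log d/(nk)}$ in place of $\|\delta^{(\tau-1)}\|_1$ gives $|\widehat T-U|\lesssim s_0\sqrt{\overline s}\,\log d/\sqrt n=o(1/\sqrt{\log d})$ under $\gamma_n>3\gamma_s$.

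\emph{Step 2: Gaussian and bootstrap approximations, and assembly.} $U$ is the $\ell_\infty$-norm of a normalized sum of the i.i.d.\ mean-zero vectors $\tT x_ie_i$, whose coordinates are sub-exponential with variances $\sigma^2[\tT\Sigma\tT^\top]_{ll}$ bounded and bounded away from $0$; the CCK CLT gives $\sup_t|P(U\le t)-P(\|\bg_0\|_\infty\le t)|=o(1)$ with $\bg_0\sim\cN(0,\sigma^2\tT\Sigma\tT^\top)$. Conditionally on the data, $\overline A$ in \eqref{eqn:wb} is Gaussian with covariance $\widehat\Gamma=\tT\big(k^{-1}\sum_{j=1}^k n(\bg_j-\bar\bg)(\bg_j-\bar\bg)^\top\big)\tT^\top$; decomposing $\bg_j-\bar\bg=-(\bar e_j-\bar e)+(\hat\Sigma_j-\hat\Sigma_N)\delta^{(\tau-1)}$ with $\bar e_j=n^{-1}\sum_i x_{ij}e_{ij}$, I will show that $k^{-1}\sum_j n(\bar e_j-\bar e)(\bar e_j-\bar e)^\top$ concentrates in max-norm around $\sigma^2\Sigma$ (an average of $k$ centered sub-exponential outer products), so that the ``noise part'' of $\widehat\Gamma$ equals $\sigma^2\tT\Sigma\tT^\top+o_p(1)$ in max-norm, and that the ``bias part'' contributes $o_p(1/\sqrt{\log d})$ both to $\widehat\Gamma$ and to $\|\overline A\|_\infty$ itself, using $\max_l\|\tT_{l,\cdot}\|_1\lesssim\sqrt{\overline s}$, $\|\hat\Sigma_j-\hat\Sigma_N\|_{\max}\lesssim\sqrt{\log d/n}$, and the $\ell_1$-rate of $\delta^{(\tau-1)}$ — here the margin $\gamma_k>3\gamma_s$ is what neutralizes the $\overline s$-inflated bootstrap remainders. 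The CCK Gaussian comparison inequality then yields $\sup_t|P_\epsilon(\overline W\le t)-P(\|\bg_0\|_\infty\le t)|=o_p(1)$. Combining with Step~1, the CCK anti-concentration bound for $\|\bg_0\|_\infty$ absorbs the $o_p(1/\sqrt{\log d})$ gaps $|T-U|$ and $|\widehat T-U|$, and the usual quantile-coupling argument converts the resulting Kolmogorov bounds into $\sup_{\alpha\in(0,1)}|P(T\le c_{\overline W}(\alpha))-\alpha|=o(1)$, and likewise with $\widehat T$ in place of $T$.

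\emph{Main obstacle.} The crux is Step~1: pinning down the exact threshold $\tau_{\min}$ requires a \emph{sharp} $\ell_1$-rate for the CSL iterate $\ttheta^{(\tau-1)}$ — tracking the geometric factor $\rho\asymp s_0\sqrt{\log d/n}$ (which must be $<1$, whence the denominators $\gamma_n-2\gamma_s$) and the statistical floor $s_0\sqrt{\log d/(nk)}$ through $\tau-1$ surrogate-Lasso solves with the schedule \eqref{eqn:lam} — together with a sharp bound on $\|I-\tT\hat\Sigma_N\|_{\max}$, in which the $\sqrt{\overline s}$ loss from passing a max-norm bound on $\hat\Sigma_N-\hat\Sigma_1$ through the $\overline s$-sparse nodewise rows is exactly what forces $\gamma_n>3\gamma_s$ and shapes the two competing terms inside the maximum. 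Getting this bookkeeping right — including the observation that the free de-biasing step at Line~\ref{line:db} buys one extra contraction factor, so the relevant iterate is $\ttheta^{(\tau-1)}$ — is the delicate part; the Step-2 bootstrap-covariance analysis is comparatively routine, but still needs the extra margin $\gamma_k>3\gamma_s$ to kill the sparsity-inflated remainders arising from $(\hat\Sigma_j-\hat\Sigma_N)\delta^{(\tau-1)}$ inside $\bg_j-\bar\bg$.
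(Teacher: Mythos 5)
Your proposal follows essentially the same route as the paper: the CSL contraction from \citet{wang2017efficient} controls $\|\ttheta^{(\tau-1)}-\thetas\|_1$, the de-biasing identity reduces $T$ to a linear statistic plus a remainder of order $\sqrt N\,\matrixnorm{I-\tT\widehat{\Sigma}_N}_{\max}\|\ttheta^{(\tau-1)}-\thetas\|_1$, and bootstrap validity is then obtained from the Chernozhukov--Chetverikov--Kato CLT, Gaussian comparison, and anti-concentration, combined with a max-norm comparison of the conditional bootstrap covariance with $\sigma^2\Theta$; your decomposition $\bg_j-\bar\bg=-(\bar e_j-\bar e)+(\widehat{\Sigma}_j-\widehat{\Sigma}_N)\delta^{(\tau-1)}$ is exactly the linear-model instance of the paper's covariance lemmas, and your bookkeeping of the contraction factor $s_0\sqrt{\log d/n}$, the $\sqrt{\overline s}$ inflation from the nodewise rows, and the resulting conditions $\gamma_n>3\gamma_s$, $\gamma_k>3\gamma_s$ and the two terms in $\tau_{\min}$ agrees with the paper. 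The one point you should repair is the choice of intermediate statistic $U=\|\tT N^{-1/2}\sum_i x_ie_i\|_\infty$: because $\tT$ is computed from the master-node data, the vectors $\tT x_{i1}e_{i1}$ are neither i.i.d.\ nor independent of $\tT$, so the high-dimensional CLT cannot be applied to $U$ as written. The paper avoids this by targeting $T_0=\max_l-\sqrt N\big(\Theta\nabla\cL_N(\thetas)\big)_l$ with the population $\Theta$ and paying the additional, already-controlled price $\sqrt N\,\matrixnorm{\tT-\Theta}_\infty\|N^{-1}X_N^\top e_N\|_\infty=O_P(s^*\log d/\sqrt n)$, which is negligible under the same condition $\gamma_n>3\gamma_s$; with that substitution (and the standard reduction from $\|\cdot\|_\infty$ to a signed coordinatewise maximum so that the CLT and anti-concentration results apply), your argument matches the paper's.
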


Theorem \ref{theo:reg0_csl} warrants the bootstrap validity for the simultaneous confidence intervals produced by Algorithm \ref{alg:hd} with the \texttt{k-grad}. Furthermore, it also suggests that the bootstrap quantile can approximates the quantile of the oracle statistics $T$; that is, our distributed bootstrap procedure is as statistically efficient as the oracle centralized method.

\bigskip
Next, we show that the same distributed bootstrap validity and the efficiency of the \texttt{k-grad} also hold for the \texttt{n+k-1-grad} in Algorithm \ref{alg:hd}. 

\begin{theorem}[\texttt{n+k-1-grad}, sparse linear model]\label{theo:reg_csl}
	Suppose \ref{as:design}-\ref{as:sparse} hold, and that we run Algorithm \ref{alg:hd} with \texttt{n+k-1-grad} method.  Let $\lambda_l$ and $\lambda^{(t)}$ be as in \eqref{eqn:lam} for $l=1,\dots,d$ and $t=0,\dots,\tau-1$.  Assume $n=d^{\gamma_n}$, $k=d^{\gamma_k}$, $\overline s=d^{\gamma_s}$ for some constants $\gamma_n,\gamma_k,\gamma_s>0$. If $\gamma_n>3\gamma_s$, $\gamma_n+\gamma_k>4\gamma_s$, and $\tau\geq\tau_{\min}$, where
	\begin{align*}
	\tau_{\min}=1+\left\lfloor\frac{(\gamma_k\vee\gamma_s)+\gamma_s}{\gamma_n-2\gamma_s}\right\rfloor,
	\end{align*}
	then for $T$ defined in \eqref{eqn:t}, we have
	\begin{align}
	\sup_{\alpha\in(0,1)}|P(T\leq c_{\widetilde W}(\alpha))-\alpha|=o(1). \label{eqn:nk1gradthm}
	\end{align}
	where $$c_{\widetilde W}(\alpha)\defn\inf\{t\in\R:P_\epsilon(\widetilde W\leq t)\geq\alpha\},$$ in which $\widetilde W$ is the \texttt{n+k-1-grad} bootstrap statistics with the same distribution as $\widetilde W^{(b)}$ in \eqref{eqn:wt} and $P_\epsilon$ denotes the probability with respect to the randomness from the multipliers.
	
	In addition, \eqref{eqn:nk1gradthm} also holds if $T$ is replaced by $\widehat T$ defined in \eqref{eqn:that}.
\end{theorem}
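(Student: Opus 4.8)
\noindent\textbf{Proof proposal for Theorem~\ref{theo:reg_csl}.}
The plan is to sandwich both the law of $T$ and the conditional law of $\widetilde W$ given the data between the law of $\|\zeta\|_\infty$, $\zeta\sim\cN(0,\sigma^2\tT\Sigma\tT^\top)$, up to Kolmogorov distance $o_P(1)$, and then convert this into the quantile statement \eqref{eqn:nk1gradthm} by Gaussian anti-concentration. Write $u\defn\ttheta^{(\tau-1)}-\thetas$, and let $\widehat\Sigma_N=N^{-1}\sum_{i,j}x_{ij}x_{ij}^\top$ and $\widehat\Sigma_j=n^{-1}\sum_i x_{ij}x_{ij}^\top$ be the global and local sample second-moment matrices (the Hessians of the quadratic loss). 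Since the second-order expansion is exact here, Line~\ref{line:db} gives
\begin{align*}
\sqrt N\big(\ttheta^{(\tau)}-\thetas\big)=\tT\,\frac1{\sqrt N}\sum_{i,j}x_{ij}e_{ij}\;+\;\underbrace{\sqrt N\,(I-\tT\widehat\Sigma_N)\,u}_{=:\,R_N}.
\end{align*}
I would show $\|R_N\|_\infty=o_P((\log d)^{-1/2})$ using $\|R_N\|_\infty\le\sqrt N\,\matrixnorm{I-\tT\widehat\Sigma_N}_{\max}\,\|u\|_1$ and the split $I-\tT\widehat\Sigma_N=(I-\tT\widehat\Sigma_1)-\tT(\widehat\Sigma_N-\widehat\Sigma_1)$, invoking the nodewise-Lasso guarantees of Section~\ref{sec:node} ($\matrixnorm{I-\tT\widehat\Sigma_1}_{\max}\lesssim\max_l\lambda_l\asymp\sqrt{\log d/n}$ and $\max_l\|\tT_{l,\cdot}\|_1\lesssim\sqrt{\overline s}$), the sub-Gaussian concentration $\matrixnorm{\widehat\Sigma_N-\widehat\Sigma_1}_{\max}\lesssim\sqrt{\log d/n}$, and the CSL contraction bound $\|u\|_1\lesssim s_0\lambda^{(\tau-1)}\lesssim(s_0\sqrt{\log d/n})^{\tau}+s_0\sqrt{\log d/(nk)}$. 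The latter is the iterative error-contraction analysis in the spirit of \citet{wang2017efficient,jordan2019communication}, adapted to the adaptively shrinking sequence $\lambda^{(t)}$ of \eqref{eqn:lam} that tracks the error decay from the warm start $\ttheta^{(0)}$. Multiplying out and matching the exponents $n=d^{\gamma_n}$, etc., the stochastic-floor piece of $\|R_N\|_\infty$ vanishes exactly under $\gamma_n>3\gamma_s$, while the transient piece $\sqrt{k\overline s\log d}\,(s_0\sqrt{\log d/n})^{\tau}$ vanishes exactly once $\tau\ge\tau_{\min}$.

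For the bootstrap, conditionally on the data $\widetilde A$ in \eqref{eqn:wt} is an exactly Gaussian vector with covariance $\tT\widehat\Gamma\tT^\top$, where $\widehat\Gamma=(n+k-1)^{-1}\big(\sum_{i=1}^n(\bg_{i1}-\bar\bg)(\bg_{i1}-\bar\bg)^\top+n\sum_{j=2}^k(\bg_j-\bar\bg)(\bg_j-\bar\bg)^\top\big)$. Using $\bg_j=-n^{-1}\sum_i x_{ij}e_{ij}+\widehat\Sigma_j u$ and $\bg_{i1}=-x_{i1}e_{i1}+x_{i1}x_{i1}^\top u$, I would decompose $\widehat\Gamma$ into its noise part, which concentrates in max-norm to $\mathrm{Var}(xe)=\sigma^2\Sigma$ (each worker contributing $n\,\mathrm{Var}(n^{-1}\sum_i x_{ij}e_{ij})=\sigma^2\Sigma$ and the $n$ master terms contributing $\mathrm{Var}(xe)$), plus cross- and quadratic-in-$u$ perturbations controlled by $\|u\|_1$, $\|u\|_2$, $\max_l\|\tT_{l,\cdot}\|_1\lesssim\sqrt{\overline s}$ and $\|x\|_\infty,\|\tT x\|_\infty\lesssim\sqrt{\log d}$. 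The target rate is $\|\widehat\Gamma-\sigma^2\Sigma\|_{\max}=o_P((\log d)^{-2})$, so that it survives the Gaussian comparison; matching exponents shows this needs $\gamma_n+\gamma_k>4\gamma_s$ together with $\tau\ge\tau_{\min}$ — it is in the master single-datum block that the $\gamma_k\vee\gamma_s$, rather than just $\gamma_k$, enters $\tau_{\min}$, because for small $k$ the binding perturbation is the $\overline s\,(s_0\sqrt{\log d/n})^{\tau}$ term from the master gradients rather than the one coming from the $k$ worker averages. A Gaussian comparison inequality of Chernozhukov, Chetverikov and Kato then yields $\sup_t|P_\epsilon(\widetilde W\le t)-P(\|\zeta\|_\infty\le t)|=o_P(1)$.

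Finally I would invoke the high-dimensional CLT over hyperrectangles of Chernozhukov, Chetverikov and Kato for the normalized i.i.d.\ sum $\tT N^{-1/2}\sum_{i,j}x_{ij}e_{ij}$ (conditioning on $\{x_{i1}\}_{i=1}^n$ to absorb the mild dependence of $\tT$ on the master data): the summands are sub-exponential with covariance $\sigma^2\tT\Sigma\tT^\top$ whose diagonal is bounded below since $\tT\Sigma\tT^\top\to\Theta$ and $\lambdamin(\Sigma)\ge1/\mu$. This gives $\sup_t|P(\|\tT N^{-1/2}\sum x_{ij}e_{ij}\|_\infty\le t)-P(\|\zeta\|_\infty\le t)|=o(1)$, and together with the remainder bound of the first paragraph and Nazarov's anti-concentration, $\sup_t|P(T\le t)-P(\|\zeta\|_\infty\le t)|=o(1)$. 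Chaining this with the bootstrap bound gives $\sup_t|P(T\le t)-P_\epsilon(\widetilde W\le t)|=o_P(1)$, and \eqref{eqn:nk1gradthm} follows by the standard argument of inverting the CDF estimate and using anti-concentration to turn it into uniform control of $|P(T\le c_{\widetilde W}(\alpha))-\alpha|$. For the last claim, the same high-dimensional CLT applies to the oracle $\widehat T$ with its surrogate inverse Hessian $\widehat\Theta$, and since both $\tT$ and $\widehat\Theta$ satisfy $\|\tT\Sigma\tT^\top-\Theta\|_{\max},\|\widehat\Theta\Sigma\widehat\Theta^\top-\Theta\|_{\max}=o_P((\log d)^{-2})$ (using the nodewise-Lasso rate $\max_l\|\tT_{l,\cdot}-\Theta_{l,\cdot}\|_1\lesssim\overline s\sqrt{\log d/n}$ and $\gamma_n>2\gamma_s$), one more Gaussian comparison shows $\widehat T$ has the same limiting law $\|\zeta\|_\infty$, so $c_{\widetilde W}(\alpha)$ is also valid for $\widehat T$.

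I expect the main obstacle to be the error bookkeeping behind the first two paragraphs: propagating the CSL contraction bound on $\|\ttheta^{(t)}-\thetas\|$ cleanly through all $\tau$ iterations — including the subtlety that the nodewise Lasso is frozen at $\ttheta^{(0)}$ and that de-biasing uses $\widehat\Sigma_N$ while $\tT$ is built from $\widehat\Sigma_1$ — and verifying that the thresholds $\gamma_n>3\gamma_s$, $\gamma_n+\gamma_k>4\gamma_s$ and $\tau\ge\tau_{\min}$ are precisely what forces $R_N$ below $(\log d)^{-1/2}$ and every covariance perturbation below $(\log d)^{-2}$; the high-dimensional CLT, Gaussian comparison and anti-concentration inputs are then essentially off the shelf once those bounds are in place.
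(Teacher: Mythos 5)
Your proposal follows essentially the same route as the paper's proof: a Bahadur-type decomposition of $\sqrt N(\ttheta^{(\tau)}-\thetas)$ into a linear score term plus a remainder controlled by $\matrixnorm{I-\tT\widehat\Sigma_N}_{\max}\|\ttheta^{(\tau-1)}-\thetas\|_1$, the CSL contraction of \citet{wang2017efficient} for $\|\ttheta^{(\tau-1)}-\thetas\|_1$, the Chernozhukov--Chetverikov--Kato Gaussian approximation and comparison machinery, and a max-norm comparison of the conditional bootstrap covariance with the target covariance; the paper merely reshuffles this by routing both $T$ and $\widetilde W$ through an intermediate oracle multiplier statistic $W^*$ built from $\Theta\nabla\cL(\thetas;Z_{ij})$ rather than comparing both directly to a fixed Gaussian law. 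Two points in your bookkeeping need care. First, your stated target $\|\widehat\Gamma-\sigma^2\Sigma\|_{\max}=o_P((\log d)^{-2})$ omits the amplification by $\matrixnorm{\tT}_\infty^2\lesssim s^*$ incurred when passing to the covariance of $\widetilde A=\tT(\cdot)$; the correct requirement is $o_P(1/(s^*\log^{2+\kappa}d))$, and since $s^*=d^{\gamma_s}$ this factor is exactly what produces the $4\gamma_s$ in $\gamma_n+\gamma_k>4\gamma_s$ and the $+\gamma_s$ in $\tau_{\min}$, so it cannot be absorbed into logarithms. Second, keeping $\tT$ inside the leading term $\tT N^{-1/2}\sum_{i,j}x_{ij}e_{ij}$ and "conditioning on $\{x_{i1}\}$" breaks the i.i.d.\ structure of the $n$ master summands relative to the $(k-1)n$ worker summands, which complicates the high-dimensional CLT; the paper avoids this by first swapping $\tT$ for $\Theta$, paying an extra $\matrixnorm{\tT-\Theta}_\infty\|X_N^\top e_N/N\|_\infty=O_P(s^*\log d/\sqrt n\cdot k^{-1/2})$ term that is harmless under $\gamma_n>3\gamma_s$, and then applying the CLT to the genuinely i.i.d.\ oracle scores. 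With those two adjustments your argument reproduces the paper's conditions and conclusion.
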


Note by Theorem 2.4 of \citet{van2014asymptotically} that $\widehat T$ is well approximated by $\|\widehat\Theta\sqrt N\nabla\cL_N(\thetas)\|_{\infty}$ , which is further approximated by the $\ell_{\infty}$-norm of the oracle score
$$
A = -\Theta\frac1{\sqrt N}\sum_{i=1}^n\sum_{j=1}^k\nabla\cL(\thetas;Z_{ij}),
$$
given that $\widehat\Theta$ only deviates from $\Theta$ up to order $O_P(s^*(\log d)^{1/2}N^{-1/2})$ in $\ell_\infty$-norm. To gain a deeper look into the efficiency of \texttt{k-grad} and \texttt{n+k-1-grad}, we compare the difference between the covariance of $A$ and the conditional covariance of $\overline A$ (for \texttt{k-grad}, defined in \eqref{eqn:wb}), and $\widetilde A$ (for \texttt{n+k-1-grad}, defined in \eqref{eqn:wt}). In particular, 
conditioning on the data $Z_{ij}$, we have
\begin{align}
\matrixnorm{\cov_\epsilon(\overline A)-\cov(A)}_{\max}\leq &s^*\|\ttheta^{(\tau-1)}-\thetas\|_1
+ns^*\|\ttheta^{(\tau-1)}-\thetas\|_1^2 \notag\\
&+ O_P\bigg(\sqrt{\frac{{s^*}^2}k}+\sqrt{\frac{s^*}n}\bigg), \label{eqn:kgrad_err_hd} \\
\matrixnorm{\cov_\epsilon(\widetilde A)-\cov(A)}_{\max}\leq &s^*\|\ttheta^{(\tau-1)}-\thetas\|_1
+(n\wedge k)s^*\|\ttheta^{(\tau-1)}-\thetas\|_1^2 \notag\\
&+ O_P\bigg(\sqrt{\frac{{s^*}^2}{n+k}}+\sqrt{\frac{s^*}n}\bigg), 
\label{eqn:nk1grad_err_hd}
\end{align}
up to some logarithmic terms in $d$, $n$ or $k$. Overall, \texttt{n+k-1-grad} in \eqref{eqn:nk1grad_err_hd} has a smaller error term than that of \texttt{k-grad} in \eqref{eqn:kgrad_err_hd}. In particular, \texttt{k-grad} requires both $n$ and $k$ to be large, while \texttt{n+k-1-grad} requires a large $n$ but not necessarily a large $k$. In addition, $\tau=1$ could be enough for \texttt{n+k-1-grad}, but not for \texttt{k-grad}. To see it, if $\|\ttheta^{(0)}-\thetas\|_1$ is of order $O_P(s^*/\sqrt n)$, the right-hand side of \eqref{eqn:kgrad_err_hd} can grow with $s^*$, while the error in \eqref{eqn:nk1grad_err_hd} still shrinks to zero as long as $k\ll n$.

\begin{remark}
Note in both Theorems~\ref{theo:reg0_csl}~and~\ref{theo:reg_csl} that the expression of $\tau_{\min}$ does not depend on $d$, because the direct effect of $d$ only enters through an iterative logarithmic term $\log\log d$ which is dominated by $\log\overline s \asymp \log d$.
\end{remark}

\begin{remark}
The rates of $\{\lambda^{(t)}\}_{t=0}^{\tau-1}$ and $\{\lambda_l\}_{l=1}^d$ in Theorems~\ref{theo:reg0_csl}~and~\ref{theo:reg_csl} are motivated by those in \citet{wang2017efficient} and \citet{van2014asymptotically}, which, unfortunately, are not useful in practice. We therefore provide a practically useful cross-validation method in Section \ref{sec:cv}.
\end{remark}

\begin{remark}
The main result (Theorem 2.2) in \citet{zhang2017simultaneous} can be seen as a justification of multiplier bootstrap for high-dimensional linear models with data being processed in a centralized manner. Theorem~\ref{theo:reg_csl} compliments it by justifying a distributed multiplier bootstrap with at least one round of communication ($\tau\ge1$).
\end{remark}

\begin{remark}
A rate of $\sup_{\alpha\in(0,1)}\left|P(T\leq c_{\overline W}(\alpha))-\alpha\right|$ may be shown to be polynomial in $n$ and $k$ with a more careful analysis, which is faster than the order obtained by the extreme value distribution approach \citep{chernozhukov2013gaussian,zhang2017simultaneous} that is at best logarithmic.
\end{remark}

\begin{remark}
We have not addressed the question of whether the conditions for $\tau_{\min}$ in Theorem \ref{theo:reg0_csl} and \ref{theo:reg_csl} can be improved in a minimax sense. This is left for future research. On the other hand, we remark that the total communication cost in our algorithm is of order $\Omega(\tau_{\min} kd)$, because in each iteration we communicate $d$-dimensional vectors between the master node and $k-1$ worker nodes, and $\tau_{\min}$ only grows logarithmically with $k$. Our order matches those in the existing communication-efficient statistical inference literature e.g., \citet{jordan2019communication,wang2017efficient}.
\end{remark}

\subsection{Generalized Linear Model} \label{sec:hd_glm}

In this section, we consider GLMs, which generate i.i.d.\ observations $(x,y)\in\R^d\times\R$.
We assume that the loss function $\cL$ is of the form $\cL(\theta;z)=g(y,x^\top\theta)$ for $\theta,x\in\R^d$ and $y\in\R$ with $g:\R\times\R\to\R$, and $g(a,b)$ is three times differentiable with respect to $b$, and denote $\frac{\partial}{\partial b}g(a,b)$, $\left(\frac{\partial}{\partial b}\right)^2 g(a,b)$, $\left(\frac{\partial}{\partial b}\right)^3 g(a,b)$ by $g'(a,b)$, $g''(a,b)$, $g'''(a,b)$ respectively.  We let $\thetas$ be the unique minimizer of the expected loss $\cLs(\theta)$. 

We let $X_1\in\R^{n\times d}$ be the design matrix in the master node $\cM_1$ and $X_1^*\defn P^*X_1$ be the weighted design matrix with a diagonal $P^*\in\R^{n\times n}$ with elements $\{g''(y_{i1},x_{i1}^\top\thetas)^{1/2}\}_{i=1,\dots,n}$.  We further let $(X_1^*)_{-l}\varphi^*_l$ be the $L_2$ projection of $(X_1^*)_l$ on $(X_1^*)_{-l}$, for $l=1,\dots,d$. Equivalently, for $l=1,\dots,d$, we define $\varphi^*_l\defn\argmin_{\varphi\in\R^{d-1}}\Ee[\|(X_1^*)_l-(X_1^*)_{-l}\varphi\|_2^2]$.

We impose the following assumptions on the GLM.
\begin{itemize}
	\labitemc{(B1)}{as:smth_glm} For some $\Delta>0$, and $\Delta'>0$ such that $|x^\top\thetas|\leq\Delta'$,
	\begin{align*}
	\sup_{|b|\vee|b'|\leq\Delta+\Delta'}&\sup_a\frac{|g''(a,b)-g''(a,b')|}{|b-b'|}\leq1, \\
	\max_{|b_0|\leq\Delta} &\sup_a |g'(a,b_0)|=O(1),\quad\text{and}\quad \max_{|b|\leq\Delta+\Delta'} \sup_a |g''(a,b)|=O(1).
	\end{align*}
	
	\labitemc{(B2)}{as:design_glm} $\|x\|_\infty=O(1)$. Moreover, $x^\top\thetas=O(1)$ and $\max_l\big|g''(y,x^\top\thetas)^{1/2}x_{-l}^\top\varphi^*_l\big|=O(1)$, where $x_{-l}$ consists of all but the $l$-th coordinate of $x$.
	
	\labitemc{(B3)}{as:hes_glm} The least and the greatest eigenvalues of $\nabla^2\cLs(\thetas)$ and $\Ee\left[\nabla\cL(\thetas;Z)\nabla\cL(\thetas;Z)^\top\right]$ are bounded away from zero and infinity respectively.
	
	\labitemc{(B4)}{as:subexp_glm} For some constant $L>0$,
	$$\max_l\max_{q=1,2}\Ee[|\bh_l^{2+q}|/L^q]+\Ee[\exp(|\bh_l|/L)]=O(1),\quad\text{or}$$
	$$\max_l\max_{q=1,2}\Ee[|\bh_l^{2+q}|/L^q]+\Ee[(\max_l|\bh_l|/L)^4]=O(1),$$
	where $\bh=\nabla^2\cLs(\thetas)^{-1}\nabla\cL(\thetas;Z)$ and $\bh_l$ is the $l$-th coordinate.

	\labitemc{(B5)}{as:glmsparse} $\thetas$ and $\Theta_{l,\cdot}$ are sparse, where the inverse population Hessian matrix $\Theta\defn\nabla^2\cLs(\thetas)^{-1}$, i.e., $S\defn\{l:\thetas_l\neq0\}$, $s_0\defn|S|$, $s_l\defn|\{l'\neq l:\Theta_{l,l'}\neq0\}|$, $s^*\defn\max_l s_l$, and $\overline s=s_0\vee s^*$.
\end{itemize}

Assumption \ref{as:smth_glm} imposes smoothness conditions on the loss function, which is satisfied by, for example, the logistic regression. In particular, logistic regression has $g(a,b)=-ab+\log(1+\exp(b))$, and it can be easily seen that $|g'(a,b)|\leq2$, $|g''(a,b)|\leq1$, $|g'''(a,b)|\leq1$. Assumption \ref{as:design_glm} imposes some boundedness conditions required for the validity of the nodewise Lasso (Algorithm~\ref{alg:node}; \citet{van2014asymptotically}) in the master node.  Assumption \ref{as:hes_glm} is a standard assumption in the GLM literature. Assumption \ref{as:subexp_glm} is required for proving the validity of multiplier bootstrap \citep{chernozhukov2013gaussian}.

Analogously to Theorem \ref{theo:reg0_csl} and \ref{theo:reg_csl} that focus on the distributed bootstrap validity and the efficiency of Algorithm~\ref{alg:hd} using \texttt{k-grad}/ \texttt{n+k-1-grad} for linear models, here we extend them to the high-dimensional de-biased GLMs. See Figure~\ref{fig:tau_hd} for a comparison between the results of high-dimensional linear models and GLMs.

\begin{theorem}[\texttt{k-grad}, sparse GLM]\label{theo:reg0_glm_csl}
	Suppose \ref{as:smth_glm}-\ref{as:glmsparse} hold, and that we run Algorithm \ref{alg:hd} with \texttt{k-grad} method in GLMs. Let $\lambda_l\asymp\sqrt{\log d/n}$ for $l=1,\dots,d$, and $\lambda^{(t)}$ be as
	\begin{align}
	\lambda^{(t)} \asymp \begin{cases}
    \sqrt{\frac{\log d}{nk}}+\frac1{s_0^2}\Big(s_0^2\sqrt{\frac{\log d}n}\Big)^{2^t}, & t\leq\tau_0,\\
    \sqrt{\frac{\log d}{nk}}+\frac1{s_0^2}\Big(s_0^2\sqrt{\frac{\log d}n}\Big)^{2^{\tau_0}}\Big(s_0\sqrt{\frac{\log d}n}\Big)^{t-\tau_0}, & t>\tau_0+1,
\end{cases} \label{eqn:lam_glm}
	\end{align}
	for $t=0,\dots,\tau-1$, where
	\begin{align}
	\tau_0=1+\left\lfloor\log_2\frac{\gamma_n-2\gamma_s}{\gamma_n-4\gamma_s}\right\rfloor. \label{eqn:tau0}
	\end{align}
	Assume $n=d^{\gamma_n}$, $k=d^{\gamma_k}$, $\overline s=d^{\gamma_s}$ for some constants $\gamma_n,\gamma_k,\gamma_s>0$. If $\gamma_n>5\gamma_s$, $\gamma_k>3\gamma_s$, and $\tau\geq\tau_{\min}$, where
	\begin{align*}
	\tau_{\min}=\max\left\{\tau_0+\left\lfloor\frac{\gamma_k+\gamma_s}{\gamma_n-2\gamma_s}+\nu_0\right\rfloor, 2+\left\lfloor\log_2\frac{\gamma_n-\gamma_s}{\gamma_n-4\gamma_s}\right\rfloor\right\},
	\end{align*}
	\begin{align}
	\nu_0=2-\frac{2^{\tau_0}(\gamma_n-4\gamma_s)}{\gamma_n-2\gamma_s}\in(0,1], \label{eqn:nu0}
	\end{align}
	then we have \eqref{eqn:kgradthm}.   In addition, \eqref{eqn:kgradthm} also holds if $T$ is replaced by $\widehat T$ defined in \eqref{eqn:that}.
\end{theorem}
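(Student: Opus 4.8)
The plan is to follow the skeleton behind Theorems~\ref{theo:reg0_csl}--\ref{theo:reg_csl} and insert the extra ingredients forced by the nonlinearity of the GLM loss. The backbone is a chain of comparisons
$$T \;\approx\; \widehat T \;\approx\; \|A\|_\infty \;\approx_{\mathrm{d}}\; \|G\|_\infty \;\approx_{\mathrm{d}}\; \overline W,$$
where $A=-\Theta N^{-1/2}\sum_{i,j}\nabla\cL(\thetas;Z_{ij})$ is the oracle score, $G\sim\cN(0,\cov(A))$, and $\overline W$ is the \texttt{k-grad} statistic, which conditionally on the data is the $\ell_\infty$-norm of a centered Gaussian vector with covariance $\cov_\epsilon(\overline A)$. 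The two distributional steps are the high-dimensional CLT and the Gaussian comparison inequality of Chernozhukov--Chetverikov--Kato, which apply under \ref{as:hes_glm}--\ref{as:subexp_glm}; the Gaussian comparison requires $\matrixnorm{\cov_\epsilon(\overline A)-\cov(A)}_{\max}=o_P(1)$ up to logarithmic factors, i.e.\ a sharpened version of \eqref{eqn:kgrad_err_hd}, and the anti-concentration (Nazarov) inequality is used throughout to convert Kolmogorov-type distances into the uniform-in-$\alpha$ statement and to transfer bounds between the bootstrap quantile $c_{\overline W}(\alpha)$ and the corresponding Gaussian quantile. The step $\widehat T\approx\|A\|_\infty$ is Theorem~2.4 of \citet{van2014asymptotically} together with the nodewise-Lasso rate $\matrixnorm{\tT-\Theta}_{\max}=O_P(s^*\sqrt{\log d/n})$ for the run on $\cM_1$, so once $T\approx\|A\|_\infty$ is established the $\widehat T$ version follows for free.

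The first genuinely new step is to couple $T=\|\sqrt N(\ttheta^{(\tau)}-\thetas)\|_\infty$ to $\|A\|_\infty$. Using the de-biasing step in Line~\ref{line:db} and Taylor-expanding $\nabla\cL_N$ about $\thetas$,
\begin{align*}
\sqrt N\big(\ttheta^{(\tau)}-\thetas\big)
&= -\sqrt N\,\tT\nabla\cL_N(\thetas)
+\big(I-\tT\nabla^2\cL_N(\thetas)\big)\sqrt N\big(\ttheta^{(\tau-1)}-\thetas\big)
+ \sqrt N\,\tT\,R_\tau,
\end{align*}
where $R_\tau$ is the second-order Taylor remainder, bounded coordinatewise by the Lipschitz control on $g''$ in \ref{as:smth_glm} and the boundedness in \ref{as:design_glm}, so that $\|R_\tau\|_\infty\lesssim \|\ttheta^{(\tau-1)}-\thetas\|_2^2$ up to logs. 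Replacing $\tT$ by $\Theta$ (an $O_P(s^*\sqrt{\log d/n})$ max-norm error against $\sqrt N\nabla\cL_N(\thetas)=O_P(\sqrt{\log d})$) turns the first term into $\|A\|_\infty$ up to $O_P(s^*\log d/\sqrt n)$; the middle term is at most $\matrixnorm{I-\tT\nabla^2\cL_N(\thetas)}_{\max}\cdot\sqrt N\|\ttheta^{(\tau-1)}-\thetas\|_1$, where the matrix factor is small by the nodewise-Lasso KKT conditions and Hessian concentration (using that $\tT$ is frozen at $\ttheta^{(0)}$, whose error is the initial Lasso rate and is handled by the Lipschitz-in-$b$ property of $g''$); and the remainder contributes $\sqrt N\|\ttheta^{(\tau-1)}-\thetas\|_2^2$ up to logs. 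Each of these must be $o_P(1/\sqrt{\log d})$, which, combined with the covariance requirement above, dictates how small $\|\ttheta^{(\tau-1)}-\thetas\|_1$ and $\|\ttheta^{(\tau-1)}-\thetas\|_2$ must be made, and is ultimately the source of the conditions $\gamma_n>5\gamma_s$ and $\gamma_k>3\gamma_s$.

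The crux --- and the step I expect to be the main obstacle --- is the analysis of the $\ell_1$-regularized CSL iteration (Lines~\ref{line:2}--\ref{line:srg}) certifying that $\tau\geq\tau_{\min}$ iterations push $\|\ttheta^{(\tau-1)}-\thetas\|_1$ below the threshold just identified. Here the GLM departs essentially from the linear model: since $\nabla^2\cL_j$ is not constant, one CSL step contracts the error according to a recursion of the schematic form
$$\delta_t \;\lesssim\; \Big(\sqrt{\tfrac{s_0\log d}{n}} + \delta_{t-1}\Big)\,\delta_{t-1} \;+\; s_0\sqrt{\tfrac{\log d}{nk}},$$
where $\delta_t\asymp\|\ttheta^{(t)}-\thetas\|_1$ and the $\delta_{t-1}^2$ term comes from the Hessian-difference bound $\matrixnorm{\nabla^2\cL_j(\ttheta^{(t-1)})-\nabla^2\cL_j(\thetas)}\lesssim\|\ttheta^{(t-1)}-\thetas\|$ afforded by \ref{as:smth_glm}. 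Starting from the initial Lasso error $\delta_0=O_P(s_0\sqrt{\log d/n})$, this recursion decays \emph{quadratically} --- so $\delta_t$ shrinks like a double exponential with exponent $2^t$ --- for the first $\tau_0$ steps, until $\delta_t$ drops below $\sqrt{s_0\log d/n}$, after which the contraction factor stabilizes at $\sqrt{s_0\log d/n}$ and the decay becomes \emph{geometric} (exponent $2^{\tau_0}$ times a linear factor). This is precisely the two-branch schedule for $\lambda^{(t)}$ in \eqref{eqn:lam_glm}, with $\tau_0$ in \eqref{eqn:tau0} the length of the transient phase and $\nu_0$ in \eqref{eqn:nu0} absorbing the fractional part of the transient count. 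Matching $\lambda^{(t)}$ to $\delta_t$ at each step, I would solve this recursion explicitly in $\gamma_n,\gamma_k,\gamma_s$ and show that $2+\lfloor\log_2\frac{\gamma_n-\gamma_s}{\gamma_n-4\gamma_s}\rfloor$ iterations suffice to reach the regime where the $\delta^2$ and remainder constraints of the preceding paragraph hold, while the geometric phase then needs $\lfloor\frac{\gamma_k+\gamma_s}{\gamma_n-2\gamma_s}+\nu_0\rfloor$ further iterations to drive $\delta_{\tau-1}$ down to the $\sqrt{\log d/(nk)}$ scale demanded by \eqref{eqn:kgrad_err_hd}; the maximum of the two yields $\tau_{\min}$. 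The hard part is the bookkeeping: establishing the recursion with the correct dependence on $s_0,s^*,n,k,d$ (together with the restricted-eigenvalue and Hessian-concentration events on which it holds, via \ref{as:design_glm} and \citet{RZ13}), keeping all error bounds simultaneously valid on one high-probability event across the $\tau$ iterations, and checking that the floor functions in $\tau_{\min}$ and $\tau_0$ genuinely suffice rather than being off by one. Once $\delta_{\tau-1}$ is under control, plugging it into the covariance bound \eqref{eqn:kgrad_err_hd}, the de-biasing expansion, and the CLT / Gaussian-comparison / anti-concentration chain gives \eqref{eqn:kgradthm}, and the identical argument applies verbatim to $\widehat T$.
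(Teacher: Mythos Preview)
Your proposal is correct and mirrors the paper's proof: the CCK comparison chain and the de-biasing/covariance bounds are packaged there as Lemma~\ref{lem:reg0_glm} (built from Lemmas~\ref{lem:tbd_reg_glm}, \ref{lem:gbdo_reg_glm}, \ref{lem:gbd0_reg_glm}), and the two-phase $\ell_1$-CSL rate is imported from Theorem~6 of \citet{wang2017efficient} rather than re-derived. Two small slips to fix when you do the bookkeeping: the CSL recursion \eqref{eqn:csl} carries $s_0$ prefactors on \emph{both} the linear and quadratic terms (linear coefficient $s_0\sqrt{\log d/n}$, quadratic coefficient $Ms_0$), so the transition point and the double-exponential solution reproduce \eqref{eqn:tau0}--\eqref{eqn:nu0} only with those factors in place; and the Taylor remainder is controlled in $\ell_1$ via $\|x\|_\infty=O(1)$ from \ref{as:design_glm}, not in $\ell_2$.
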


The $\tau_0$ in \eqref{eqn:tau0} is the preliminary communication rounds needed for the CSL estimator to go through the regions which are far from $\thetas$. As $\overline s$ grows, the time spent in these regions can increase. However, when $n$ is large, e.g., $n\gg \overline s^6$, the loss function is more well-behaved so that the preliminary communication round can reduce to $\tau_0=1$. See Section \ref{rem:wang} in the Appendix for more details.

\begin{theorem}[\texttt{n+k-1-grad}, sparse GLM]\label{theo:reg_glm_csl}
	Suppose \ref{as:smth_glm}-\ref{as:glmsparse} hold, and that we run Algorithm \ref{alg:hd} with \texttt{n+k-1-grad} method in GLMs. Let $\lambda_l\asymp\sqrt{\log d/n}$ for $l=1,\dots,d$, and $\lambda^{(t)}$ be as in \eqref{eqn:lam_glm} for $t=0,\dots,\tau-1$.  Assume $n=d^{\gamma_n}$, $k=d^{\gamma_k}$, $\overline s=d^{\gamma_s}$ for some constants $\gamma_n,\gamma_k,\gamma_s>0$. If $\gamma_n>5\gamma_s$ and $\tau\geq\tau_{\min}$, where
	\begin{align*}
	\tau_{\min}=\begin{cases}
    \max\left\{2+\left\lfloor\log_2\frac{\gamma_k+\gamma_s}{\gamma_n-4\gamma_s}\right\rfloor,1\right\}, & \text{if}\quad \gamma_k\leq\gamma_n-3\gamma_s,\\
    \tau_0+\left\lfloor\frac{\gamma_k+\gamma_s}{\gamma_n-2\gamma_s}+\nu_0\right\rfloor, & \text{otherwise},
\end{cases}
	\end{align*}
	$\tau_0$ and $\nu_0$ defined as in \eqref{eqn:tau0} and \eqref{eqn:nu0} respectively, then we have \eqref{eqn:nk1gradthm}.   In addition, \eqref{eqn:nk1gradthm} also holds if $T$ is replaced by $\widehat T$ defined in \eqref{eqn:that}.
\end{theorem}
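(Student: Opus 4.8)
The plan is to run the same three–part argument used for the \texttt{k-grad} GLM result (Theorem~\ref{theo:reg0_glm_csl}), but with the sharper \texttt{n+k-1-grad} covariance bound \eqref{eqn:nk1grad_err_hd} in place of \eqref{eqn:kgrad_err_hd}, and then to re-optimize the iteration count. The three pieces are: (i) controlling the CSL trajectory $\|\ttheta^{(t)}-\thetas\|_1$ for the GLM loss; (ii) controlling the nodewise-Lasso surrogate $\tT$ for $\Theta=\nabla^2\cLs(\thetas)^{-1}$, which is computed from the $n$ master observations only; and (iii) a Gaussian-multiplier-bootstrap comparison that upgrades a max-norm covariance bound into a Kolmogorov-distance bound between the law of $T$ and the bootstrap quantile $c_{\widetilde W}(\alpha)$. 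Since the \texttt{n+k-1-grad} bootstrap statistic $\widetilde A$ differs from $\overline A$ only in how the centered gradients are reweighted, pieces (i) and (ii) are essentially imported verbatim from the proof of Theorem~\ref{theo:reg0_glm_csl}, and the novelty is concentrated in re-deriving the crossover that produces the two-branch formula for $\tau_{\min}$.

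For piece (i), I would invoke the GLM version of the $\ell_1$-CSL analysis of \citet{wang2017efficient} worked out in Appendix~\ref{rem:wang}: under \ref{as:smth_glm}--\ref{as:glmsparse}, $\gamma_n>5\gamma_s$, and the schedule \eqref{eqn:lam_glm}, the iteration exhibits a two-phase contraction — a quadratic (Newton-type) phase for $t\le\tau_0$ with $\|\ttheta^{(t)}-\thetas\|_1\lesssim s_0^{-2}\big(s_0^2\sqrt{\log d/n}\big)^{2^t}$ up to logarithmic factors, followed for $t>\tau_0$ by a linear phase with contraction factor $s_0\sqrt{\log d/n}$, where $\tau_0$ in \eqref{eqn:tau0} is exactly the number of steps needed to exit the quadratic regime. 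Evaluating at $t=\tau-1$ gives an explicit bound $\|\ttheta^{(\tau-1)}-\thetas\|_1\lesssim d^{-\rho(\tau)}$ (up to logs) with $\rho(\tau)$ read off from \eqref{eqn:lam_glm}, itself piecewise in $\tau$. For piece (ii), since the nodewise Lasso uses the $n$ master samples with $\lambda_l\asymp\sqrt{\log d/n}$, Theorem~2.4 of \citet{van2014asymptotically} together with \ref{as:design_glm}--\ref{as:hes_glm} gives $\|\tT_{l,\cdot}\|_1=O_P(\sqrt{s^*})$ for each row and an entrywise deviation $\matrixnorm{\tT-\Theta}$ of order $O_P(s^*\sqrt{\log d/n})$; the same bounds, fed into the de-biasing step (Line~\ref{line:db}), control the distance of $\ttheta^{(\tau)}$ to the oracle de-biased Lasso $\htheta$ and hence of $T$ to $\widehat T$, which yields the ``In addition'' clause.

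For piece (iii), conditioning on the data I would write $\cov_\epsilon(\widetilde A)$ explicitly as $\tT\,(n+k-1)^{-1}\big(\sum_{i=1}^n(\bg_{i1}-\bar\bg)(\bg_{i1}-\bar\bg)^\top + n\sum_{j=2}^k(\bg_j-\bar\bg)(\bg_j-\bar\bg)^\top\big)\,\tT^\top$ and compare it entrywise with $\cov(A)=\Theta\,\Ee[\nabla\cL(\thetas;Z)\nabla\cL(\thetas;Z)^\top]\,\Theta^\top$. The discrepancy decomposes into a $\tT$-versus-$\Theta$ term (absorbed by piece (ii)), a gradient-perturbation term (using $\nabla\cL(\theta;Z)=g'(y,x^\top\theta)x$ and the Lipschitz bound in \ref{as:smth_glm}, giving $\lesssim s^*\|\ttheta^{(\tau-1)}-\thetas\|_1+(n\wedge k)s^*\|\ttheta^{(\tau-1)}-\thetas\|_1^2$, the $n\wedge k$ appearing because the $n$ master terms carry weight $1/(n+k-1)$ while each of the $k-1$ worker terms carries weight $n/(n+k-1)$), and an empirical-versus-population fluctuation term, which by a Bernstein / matrix-concentration argument is $O_P(\sqrt{s^{*2}/(n+k)}+\sqrt{s^*/n})$ up to logs — this is exactly \eqref{eqn:nk1grad_err_hd}. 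Plugging $\|\ttheta^{(\tau-1)}-\thetas\|_1\lesssim d^{-\rho(\tau)}$ into \eqref{eqn:nk1grad_err_hd} and demanding the right-hand side be $o(1/\log d)$ makes the Gaussian comparison theorem of \citet{chernozhukov2013gaussian} applicable under the moment condition \ref{as:subexp_glm}, giving $\sup_\alpha|P_\epsilon(\widetilde W\le t)-P(\|A\|_\infty\le t)|=o(1)$; combined with the high-dimensional CLT for $\|A\|_\infty$ and the $T\approx\widehat T\approx\|A\|_\infty$ chain from piece (ii), this yields \eqref{eqn:nk1gradthm} and its $\widehat T$ version.

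The main obstacle is the exponent bookkeeping in the last step. Because $\rho(\tau)$ is itself piecewise — geometric-in-$2^{\tau}$ during the transient phase and then ordinary-geometric-in-$\tau$ afterwards — one must determine, as a function of $(\gamma_n,\gamma_k,\gamma_s)$, which of the three terms in \eqref{eqn:nk1grad_err_hd} is binding: the irreducible $\sqrt{s^{*2}/(n+k)}$ (which is $o(1)$ precisely under $\gamma_n>5\gamma_s$, with no separate constraint on $\gamma_k$ — this is where \texttt{n+k-1-grad} gains over \texttt{k-grad}), the linear term $s^*\|\ttheta^{(\tau-1)}-\thetas\|_1$, or the quadratic term $(n\wedge k)s^*\|\ttheta^{(\tau-1)}-\thetas\|_1^2$. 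The regime $\gamma_k\le\gamma_n-3\gamma_s$ is where the quadratic term with $n\wedge k=k$ dominates and the binding requirement is merely to push $\tau$ through (a portion of) the transient phase, producing the $2+\lfloor\log_2\frac{\gamma_k+\gamma_s}{\gamma_n-4\gamma_s}\rfloor$ branch; otherwise one additionally needs linear-phase iterations, producing the $\tau_0+\lfloor\frac{\gamma_k+\gamma_s}{\gamma_n-2\gamma_s}+\nu_0\rfloor$ branch, with $\nu_0$ from \eqref{eqn:nu0} the fractional-exponent slack left over when one leaves the quadratic phase at the integer time $\tau_0$ rather than at the ideal real time. Pinning down these crossovers and matching the floor functions exactly is the delicate part; the rest is a transcription of the \texttt{k-grad} GLM proof with $k\mapsto n+k-1$ in the bootstrap normalization and $n\mapsto n\wedge k$ in the quadratic error term.
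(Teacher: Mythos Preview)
Your three-piece architecture matches the paper's (which packages pieces (ii)--(iii) into Lemma~\ref{lem:reg_glm} before combining with the CSL rate from \citet{wang2017efficient}), and your two-phase description of the GLM CSL trajectory is correct. However, there is a genuine gap in where you locate the binding constraint on $\|\ttheta^{(\tau-1)}-\thetas\|_1$, and it would prevent you from reproducing the stated $\tau_{\min}$.

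You argue that the case split is driven by the covariance discrepancy \eqref{eqn:nk1grad_err_hd}, in particular its quadratic term $(n\wedge k)s^*\|\ttheta^{(\tau-1)}-\thetas\|_1^2$. But that term only forces $\|\ttheta^{(\tau-1)}-\thetas\|_1\ll d^{-(\min(\gamma_n,\gamma_k)+\gamma_s)/2}$, and the linear term only $\|\ttheta^{(\tau-1)}-\thetas\|_1\ll d^{-\gamma_s}$; neither yields the first-branch formula $2+\lfloor\log_2\frac{\gamma_k+\gamma_s}{\gamma_n-4\gamma_s}\rfloor$, which corresponds to the stricter requirement $\|\ttheta^{(\tau-1)}-\thetas\|_1\ll d^{-(\gamma_k+3\gamma_s)/2}$. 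That requirement comes not from the bootstrap covariance but from the de-biasing error $|T-T_0|$ (Lemma~\ref{lem:tbd_reg_glm}): in the GLM case the remainder after de-biasing contains a term of order $\sqrt{s^*k}\,s_0\sqrt{\log d}\,\|\ttheta^{(\tau-1)}-\thetas\|_1$, whose extra $s_0$ factor arises from the second-order Taylor expansion of $g'$ combined with the fact that $\tT$ is computed at $\ttheta^{(0)}$ rather than at $\thetas$. It is this term --- together with the companion $(nks^*)^{1/4}$ threshold from the same lemma --- that drives both branches of $\tau_{\min}$; the crossover at $\gamma_k=\gamma_n-3\gamma_s$ is precisely the value below which the target accuracy $d^{-(\gamma_k+3\gamma_s)/2}$ is attainable while still in the quadratic phase (i.e.\ with $\tau\le\tau_0+1$). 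Your piece (ii) alludes to the de-biasing step but treats it only as a soft $T\approx\widehat T$ comparison; to recover the stated exponents you must make the $|T-T_0|$ bound explicit and track its $s_0$- and $k$-dependence separately from the covariance comparison.
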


\begin{remark}
The selection of $\{\lambda_l\}_{l=1}^d$ in Theorems~\ref{theo:reg0_glm_csl}~and~\ref{theo:reg_glm_csl} are motivated by those in \citet{van2014asymptotically}, $\{\lambda^{(t)}\}_{t=0}^{\tau-1}$ are motivated by \citet{wang2017efficient} and \citet{jordan2019communication}. We perform a more careful analysis for the two phases of model tuning as in \eqref{eqn:lam_glm}.
\end{remark}

\section{Simulation Studies} \label{sec:exp}

We demonstrate the merits of our methods using synthetic data in this section. The code to reproduce the simulation experiments, results, and plots is available at GitHub: \url{https://github.com/skchao74/Distributed-bootstrap}.

We consider a Gaussian linear model and a logistic regression model. We fix total sample size $N=2^{14}$ and the dimension $d=2^{10}$, and choose the number of machines $k$ from $\{2^2,2^3,\dots,2^6\}$. The true coefficient $\thetas$ is a $d$-dimensional vector in which the first $s_0$ coordinates are 1 and the rest is 0, where $s_0\in\{2^2,2^4\}$ for the linear model and $s_0\in\{2^1,2^3\}$ for the GLM. We generate covariate vector $x$ independently from $\cN(0,\Sigma)$, while considering two different specifications for $\Sigma$: 
\begin{itemize}
	\item Toeplitz: $\Sigma_{l,l'}=0.9^{|l-l'|}$;
	\item Equi-correlation: $\Sigma_{l,l'}=0.8$ for all $l\neq l'$, $\Sigma_{l,l}=1$ for all $l$.
\end{itemize} 
For linear model, we generate the model noise independently from $\cN(0,1)$; for GLM, we obtain i.i.d. responses from $y\sim\text{Ber}(1/(1+\exp[-x^\top\thetas]))$. For each choice of $s_0$ and $k$, we run Algorithm~\ref{alg:hd} with \texttt{k-grad} and \texttt{n+k-1-grad} on $1{,}000$ independently generated datasets, and compute the empirical coverage probability and the average width based on the results from these $1{,}000$ replications. At each replication, we draw $B=500$ bootstrap samples, from which we calculate the $95\%$ empirical quantile to further obtain the $95\%$ simultaneous confidence interval.

For the $\ell_1$-CSL computation, we choose the initial $\lambda^{(0)}$ by a local $K$-fold cross-validation, where $K=10$ for linear regression and $K=5$ for logistic regression. For each iteration $t$, $\lambda^{(t)}$ is selected by 
Algorithm~\ref{alg:cv} in Section~\ref{sec:cv} with $K'$ folds with $K'=\min\{k-1,5\}$, which ensures that each partition of worker gradients is non-empty 
when $k$ is small. For an efficient implementation of the nodewise Lasso, we select a $\hat\lambda$ at every simulation repetition and set $\lambda_l=\bar\lambda$ for all $l$. Specifically, for each simulated dataset, we select $\bar\lambda = 10^{-1}\sum_{l=1}^{10} \hat\lambda_l$, where each $\hat\lambda_l$ is obtained obtained by a cross-validation of nodewise Lasso regression of $l$-th variable on the remaining variables. Since the variables are homogeneous, these $\hat\lambda_l$'s only deviate by some random variations, which can be alleviated by an average.

The computation of the oracle width starts with fixing $(N,d,s_0)$ and generating $500$ independent datasets. For each dataset, we compute the centralized de-biased Lasso estimator $\htheta$ as in \eqref{eq:dblasso}.  The oracle width is defined as two times the $95\%$ empirical quantile of $\|\htheta-\thetas\|_\infty$ of the 500 samples. The average widths are compared against the oracle widths by taking the ratio of the two. 

The empirical coverage probabilities and the average width ratios of \texttt{k-grad} and \texttt{n+k-1-grad} are displayed for the linear model in Figures \ref{fig:hd_lm_tp} (Toeplitz design) and \ref{fig:hd_lm_eq} (equi-correlation design), and for the logistic regression in Figures \ref{fig:hd_glm_tp} (Toeplitz design) and \ref{fig:hd_glm_eq} (equi-correlation design), respectively. Note that increase in $k$ indicates decrease in $n$, given the fixed $N$.

For small $k$, \texttt{k-grad} tends to over-cover, whereas \texttt{n+k-1-grad} has a more accurate coverage. By contrast, the coverage of both algorithms fall when $k$ gets too large (or $n$ gets too small), since the estimator $\ttheta^{(\tau)}$ deviates from $\htheta$ and the deviation of the width from the oracle width, which reflects the discussion of \eqref{eqn:kgrad_err_hd} and \eqref{eqn:nk1grad_err_hd}. Moreover, as $s_0=\|\thetas\|_0$ increases, it becomes harder for both algorithms to achieve the accurate $95\%$ coverage, and both algorithms start to fail at a smaller $k$ (or larger $n$), which stems from the fact that the bootstrap cannot accurately approximate variance of the asymptotic distribution as shown in \eqref{eqn:kgrad_err_hd} and \eqref{eqn:nk1grad_err_hd}. Nevertheless, raising the number of iterations improves the coverage, which verifies our theory.
We also observe an under-coverage of our bootstrap method in both the linear regression and the logistic regression at the early stage of increasing $k$. This is due to the loss of accuracy in estimating the inverse Hessian matrices using only the data in the master node when $k$ increases (or $n$ decreases).

\begin{figure}[ht!]
\vspace{-5mm}
\begin{center}
\centerline{\includegraphics[width=0.9\columnwidth]{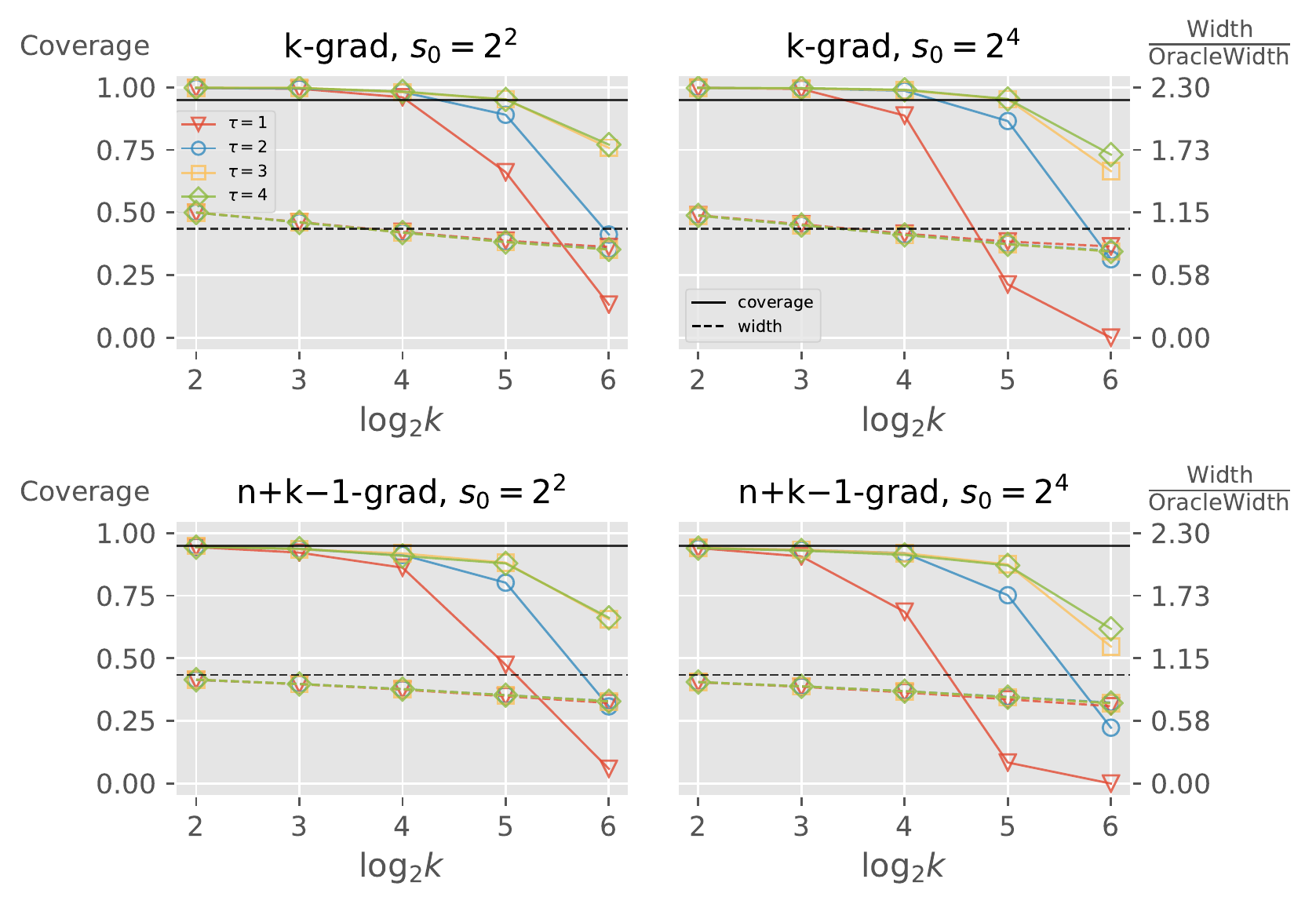}}
\caption{Empirical coverage probability (\textbf{left axis, solid lines}) and average width (\textbf{right axis, dashed lines}) of simultaneous confidence intervals by \texttt{k-grad} and \texttt{n+k-1-grad} in sparse linear regression with Toeplitz design and varying sparsity. Black solid line represents the $95\%$ nominal level and black dashed line represents 1 on the right $y$-axis.}
\label{fig:hd_lm_tp}
\end{center}
\vspace{-7mm}
\end{figure}

\begin{figure}[ht!]
\vspace{-5mm}
\begin{center}
\centerline{\includegraphics[width=0.9\columnwidth]{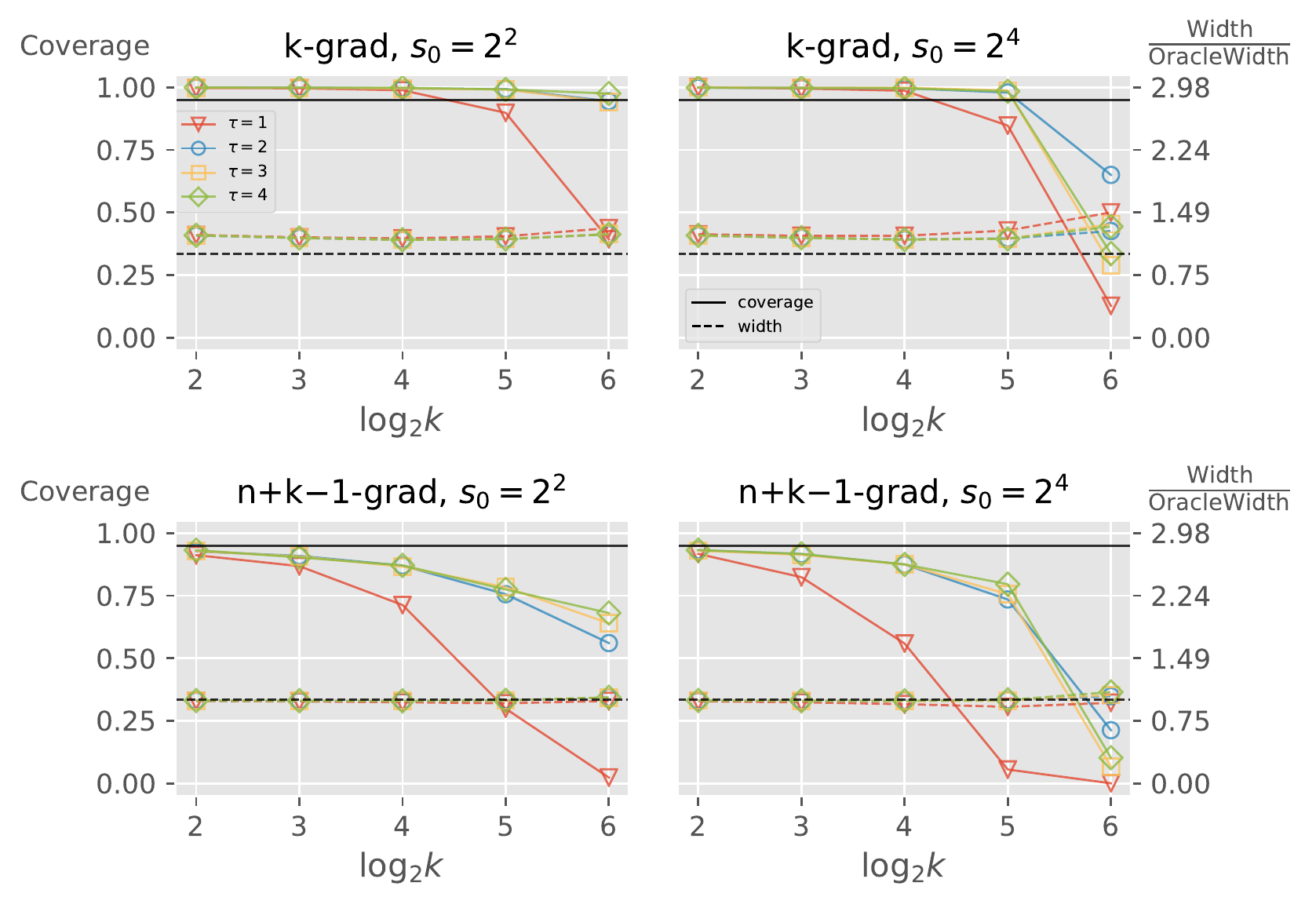}}
\caption{Empirical coverage probability (\textbf{left axis, solid lines}) and average width (\textbf{right axis, dashed lines}) of simultaneous confidence intervals by \texttt{k-grad} and \texttt{n+k-1-grad} in sparse linear regression with equi-correlation design and varying sparsity. Black solid line represents the $95\%$ nominal level and black dashed line represents 1 on the right $y$-axis.}
\label{fig:hd_lm_eq}
\end{center}
\vspace{-7mm}
\end{figure}

\begin{figure}[ht!]
\vspace{-5mm}
\begin{center}
\centerline{\includegraphics[width=0.9\columnwidth]{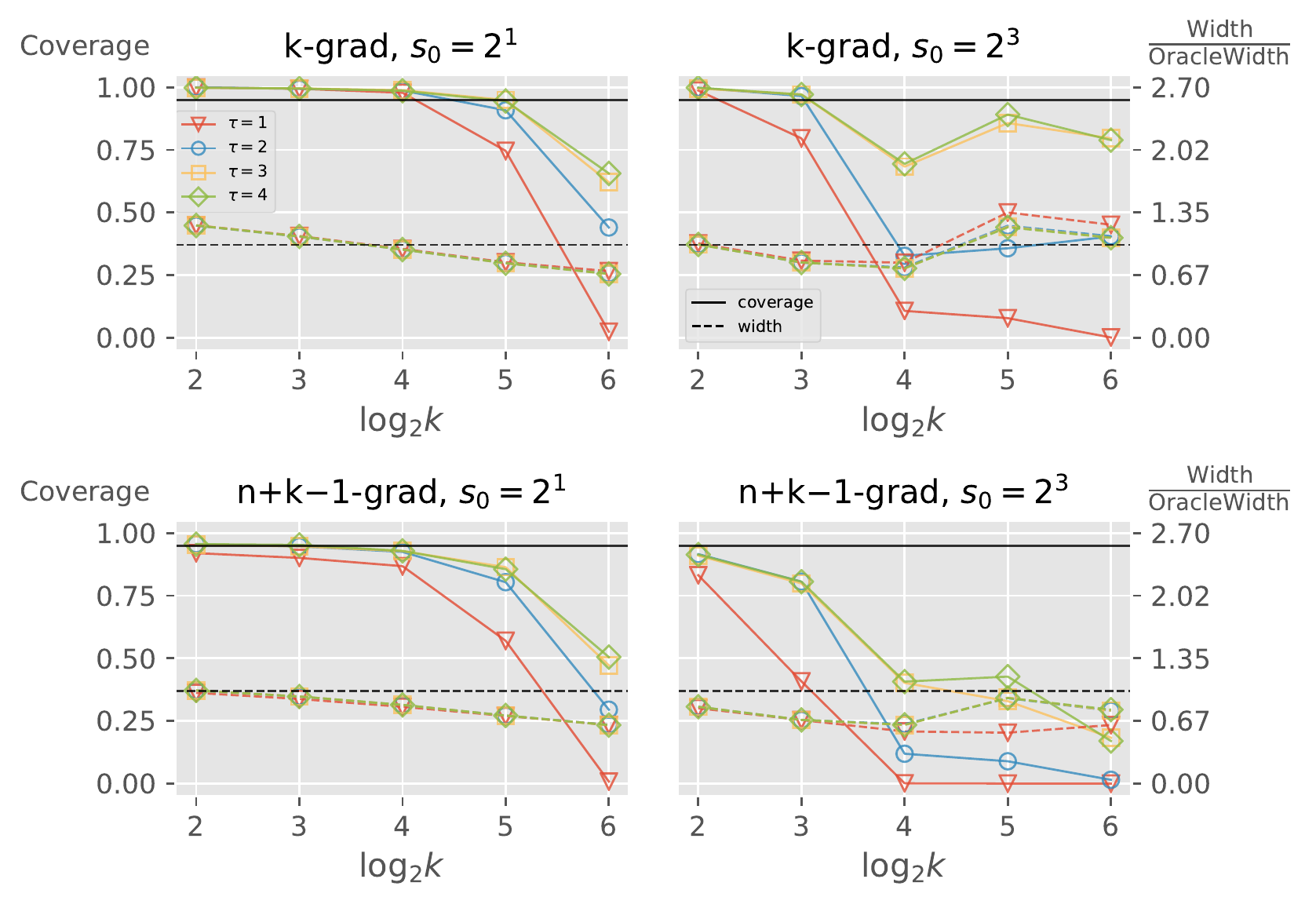}}
\caption{Empirical coverage probability (\textbf{left axis, solid lines}) and average width (\textbf{right axis, dashed lines}) of simultaneous confidence intervals by \texttt{k-grad} and \texttt{n+k-1-grad} in sparse logistic regression with Toeplitz design and varying sparsity.  Black solid line represents the $95\%$ nominal level and black dashed line represents 1 on the right $y$-axis.}
\label{fig:hd_glm_tp}
\end{center}
\vspace{-7mm}
\end{figure}

\begin{figure}[ht!]
\vspace{-5mm}
\begin{center}
\centerline{\includegraphics[width=0.9\columnwidth]{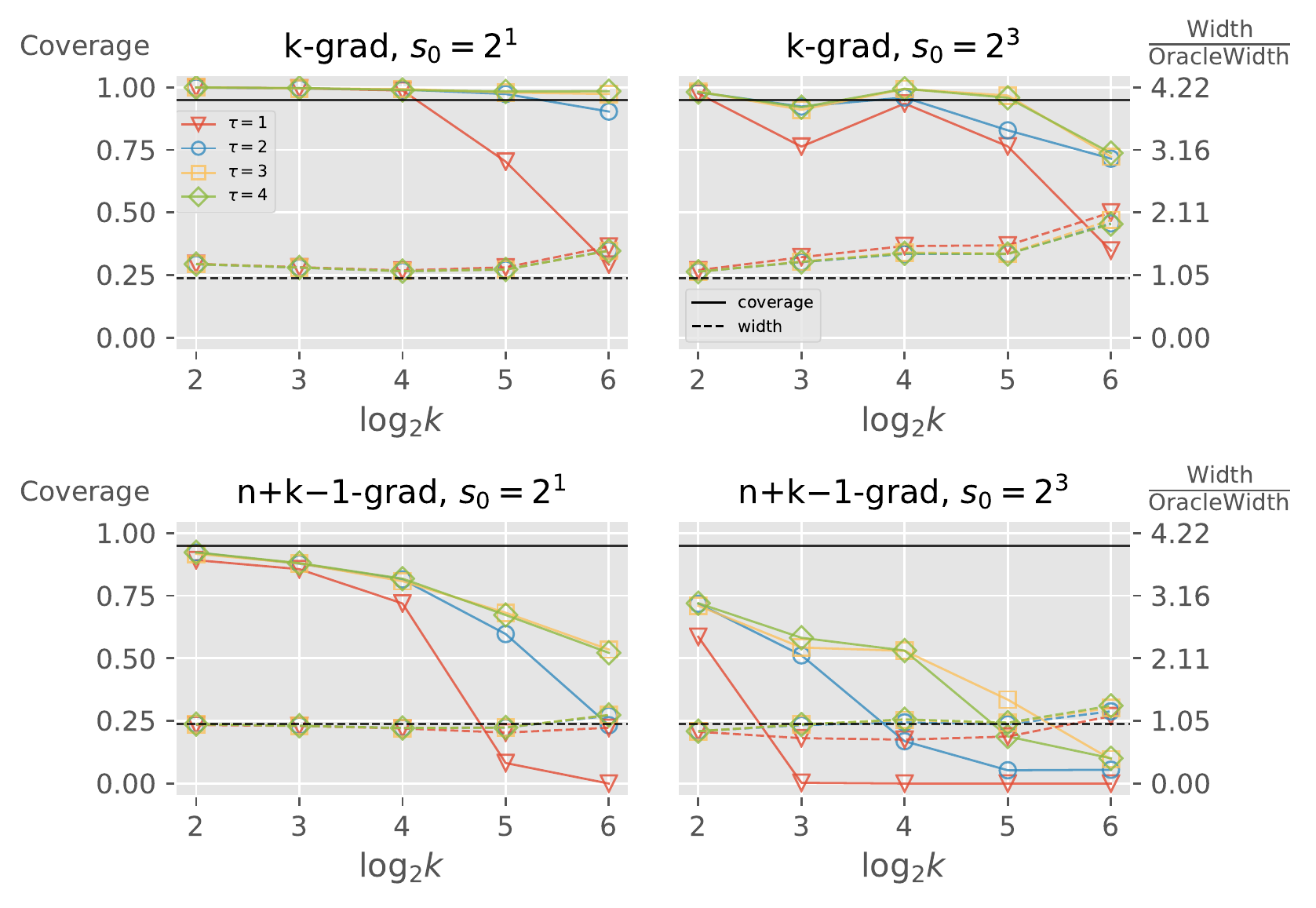}}
\caption{Empirical coverage probability (\textbf{left axis, solid lines}) and average width (\textbf{right axis, dashed lines}) of simultaneous confidence intervals by \texttt{k-grad} and \texttt{n+k-1-grad} in sparse logistic regression with equi-correlation design and varying sparsity.  Black solid line represents the $95\%$ nominal level and black dashed line represents 1 on the right $y$-axis.}
\label{fig:hd_glm_eq}
\end{center}
\vspace{-7mm}
\end{figure}

\section{Variable Screening with Distributed Simultaneous Inference} \label{sec:real}

Having demonstrated the performance of our method on purely synthetic data using sparse models in the last section, in this section, we artificially create spurious variables and mix them with the variables obtained from a real big dataset. We check if our method can successfully select the relevant variables associated with the response variable from the real dataset. The code to retrieve data and reproduce the analyses, results, and plots is available at  GitHub: \url{https://github.com/skchao74/Distributed-bootstrap}.

\subsection{Data}

The US Airline On-Time Performance dataset \citep{DVN/HG7NV7_2008}, available at
\url{http://stat-computing.org/dataexpo/2009}, consists of flight arrival and departure details for all commercial flights within the US from 1987 to 2008. Given the high dimensionality after dummy transformation and the huge sample size of the entire dataset, the most efficient way to process the data is using a distributed computational system, with sample size on each worker node likely to be smaller than the dimension. Our goal here is to uncover statistically significant independent variables associated with flight delay. We use variables \textsf{Year}, \textsf{Month}, \textsf{DayOfWeek}, \textsf{CRSDepTime}, \textsf{CRSArrTime}, \textsf{UniqueCarrier}, \textsf{Origin}, \textsf{Dest}, and \textsf{ArrDelay} in our model; descriptions are deferred to Appendix (Section~\ref{sec:var}).

The response variable is labeled by $1$ to denote a delay if \textsf{ArrDelay} is greater than zero, and by $0$ otherwise. The rest of the variables are treated as categorical explanatory variables and are converted into dummy variables; refer to Appendix (Section~\ref{sec:dummy}) for the details of the dummy variable creation. This results in a total of $203$ predictors. The total sample size is 113.9 million observations. We randomly sample a dataset $\mathcal D_1$ of $N=500{,}000$ observations, and conceptually distribute them across $k=1{,}000$ nodes such that each node receives $n=500$ observations. We randomly sample another dataset $\mathcal D_2$ of $N=500{,}000$ observations for a pilot study to select relevant variables, where $\mathcal D_1 \cap \mathcal D_2 = \emptyset$.

\subsection{An Artificial Design Matrix and Variable Screening}
In the first stage, we perform a preliminary study that informs us some seemingly relevant variables to include in an artificial design matrix, which will be used to demonstrate variable screening performance of our method in the second stage. Note that the purpose of this stage is only to preliminarily discover possibly relevant variables, rather than to select variables in a fully rigorous manner. We perform a logistic regression in a centralized manner with intercept and without regularization using the $N$ observations in $\mathcal D_2$. Standard Wald tests reveal that $144$ out of $203$ slopes are significantly non-zero ($p$-values less than $0.05$).

The four predictors with the least $p$-values correspond to the dummy variables of years 2001–2004, and the coefficients are all negative, which suggests less likelihood of flight delay in these years. This interesting finding matches the results of previous study that the September 11 terrorist attacks have negatively impacted the US airline demand \citep{ito2005assessing}, which led to less flights and congestion. In addition, the \textit{Notice of Market-based Actions to Relieve Airport Congestion and Delay}, (Docket No. OST-2001-9849) issued by Department of Transportation on August 21, 2001, might also alleviate the US airline delay.

To construct the artificial design matrix, we group the $4$ predictors with the least $p$-values mentioned above and the intercept, so the number of the relevant columns is $5$. Given $d$, we artificially create $d-5$ columns of binary and real valued variables 
by first sampling rows from $\cN(0, \cC_{d-5})$, where $\cC_{d-5}$ is a Toeplitz matrix ($(\cC_{d-5})_{l,l'}=0.5^{|l-l'|}$), and then converting half of the columns to either 0 or 1 by their signs. 
Then, we combine these $d-5$ spurious columns with a column of intercept and the $4$ columns in $\mathcal D_1$ that are associated with the selected relevant variables to obtain an artificial design matrix.

In the second stage, using the artificial design matrix with the binary response vector from the \textsf{ArrDelay} in $\mathcal D_1$, we test if our distributed bootstrap \texttt{n+k-1-grad} (Algorithm~\ref{alg:hd}) can screen the artificially created spurious variables. Note that $\mathcal D_1$ and $\mathcal D_2$ are disjoint, where $\mathcal D_2$ is used in the first stage for the preliminary study.
For model tuning, we select $\lambda^{(0)}$ by a local $10$-fold cross-validation; for each $t\ge1$, $\lambda^{(t)}$ is chosen by running a distributed $10$-fold cross-validation in Algorithm~\ref{alg:cv}. We select each $\lambda_l$ by performing a $10$-fold cross-validation for the nodewise Lasso of each variable. The same entire procedure is repeated under each dimensionality $d\in\{200,500,1{,}000\}$.

The left panel of Figure~\ref{fig:real} plots the number of significant variables against the number of iterations $\tau$, which was broken down into the number intersecting with the relevant variables (solid lines) and the number intersecting with the spurious variables (dashed lines). 
First, all of the $4$ relevant variables are tested to be significant at all iterations. 
For the spurious variables, we see that with $\tau=1$, the distributed bootstrap falsely detects one of them. However, as the number of iterations increases, less spurious variables are detected until none of them is detected. We also see that $2$ iterations ($\tau=2$) for $d=500,1{,}000$ and $3$ iterations ($\tau=3$) for $d=200$ are sufficient, which empirically verifies that our method is not very sensitive to the nominal dimension $d$.

As an illustration that is potentially useful in practice, the confidence intervals computed with the simultaneous quantile for the $4$ important slopes under $d=1{,}000$ and $\tau=2$ are plotted in the right panel of Figure~\ref{fig:real}. It can be seen that the flights in years 2002 and 2003 are relatively less likely to delay, which match the decreased air traffic in the aftermath of the September 11 terrorist attacks.

\begin{figure}[ht!]
\centering
    \begin{subfigure}{0.4\textwidth}
      \centering
      \includegraphics[height=0.23\textheight]{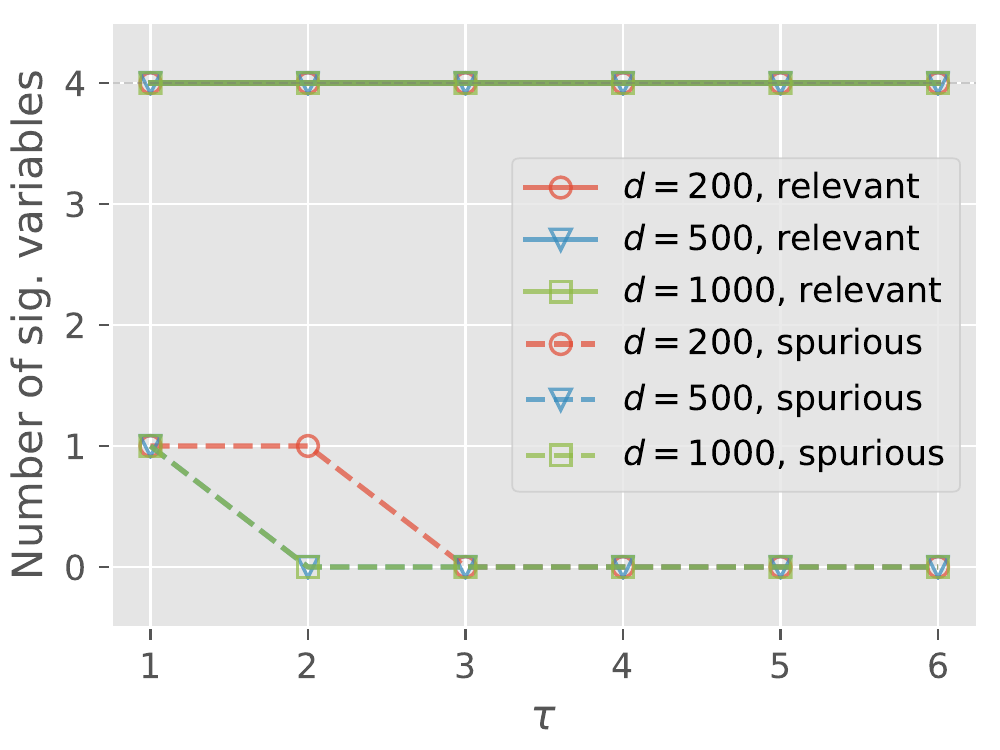}
    \end{subfigure}
    \hfill
    \begin{subfigure}{0.55\textwidth}
      \centering
      \includegraphics[height=0.23\textheight]{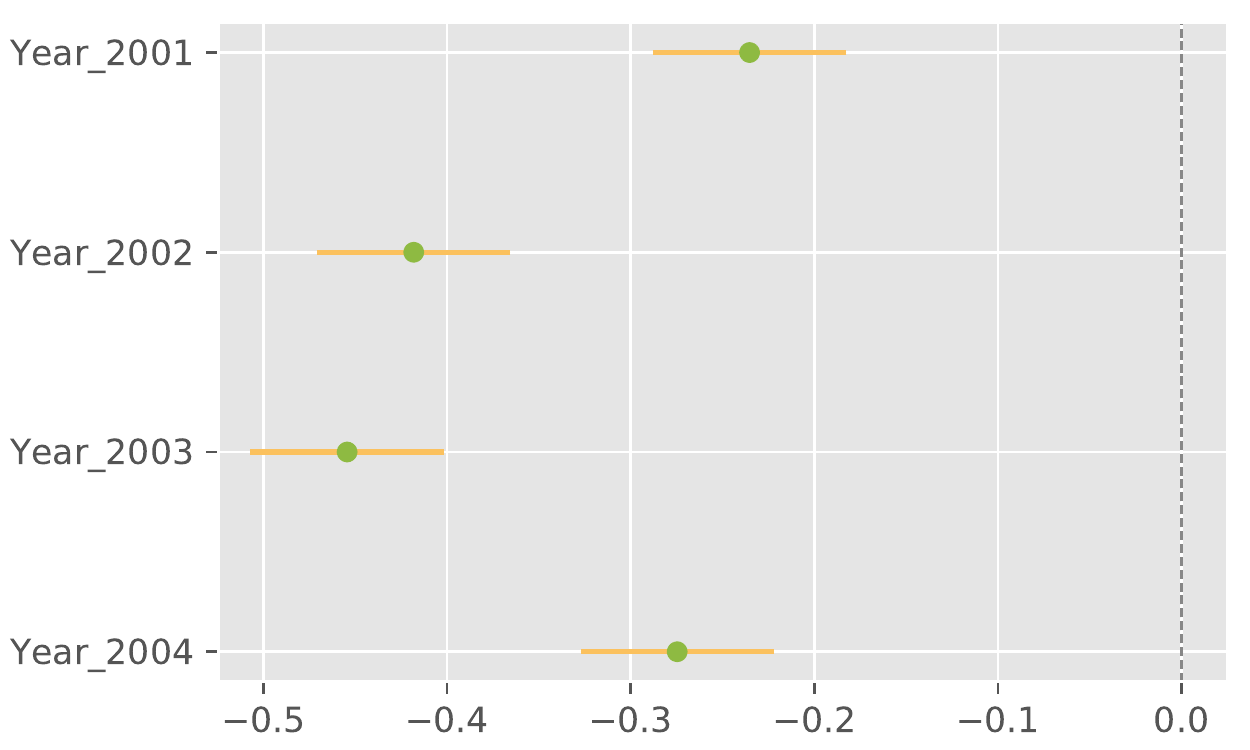}
    \end{subfigure}
\caption{The left panel shows the number of significant variables uncovered by the simultaneous confidence intervals among the $4$ relevant variables and among the $d-5$ spurious variables for $d=200,500,1{,}000$. The right panel shows the simultaneous confidence intervals of the $4$ relevant variables for $d=1{,}000$ and $\tau=2$.}
\label{fig:real}
\end{figure}

\section{Conclusion} \label{sec:disc}

We propose a distributed bootstrap method for high-dimensional simultaneous inference based on the de-biased $\ell_1$-CSL estimator as well as a distributed cross-validation method for hyperparameter tuning. The bootstrap validity and oracle efficiency are rigorously studied, and the merits are further shown via simulation study on coverage probability and efficiency, and a practical example on variable screening.

\acks{Shih-Kang Chao would like to acknowledge the financial support from the Research Council of the University of Missouri. Guang Cheng would like to acknowledge support 
from the National Science Foundation (NSF -- SCALE MoDL (2134209)). }

\newpage

\appendix

\section{Pseudocode for \texttt{k-grad} and \texttt{n+k-1-grad}}

\begin{algorithm}[ht!]
\caption{\texttt{DistBoots$(\text{method},\ttheta,\{\bg_j\}_{j=1}^k,\tT)$}: only need the master node $\cM_1$} \label{alg:kgrad}
\begin{algorithmic}[1]
\Statex {\bfseries Require:} local gradient $\bg_j$ and estimate $\tT$ of inverse Hessian obtained at $\cM_1$
\State $\bar\bg \gets k^{-1}\sum_{j=1}^k \bg_j$
\For{$b = 1,\ldots, B $}
\If{\texttt{method}=`\texttt{k-grad}'}
\State Draw $\epsilon_{1}^{(b)},\ldots,\epsilon_{k}^{(b)}\overset{\text{i.i.d.}}{\sim}\cN(0,1)$ and compute $W^{(b)}$ by \eqref{eqn:wb}
\ElsIf{\texttt{method}=`\texttt{n+k-1-grad}'}
\State Draw $\epsilon_{11}^{(b)},\ldots,\epsilon_{n1}^{(b)},\epsilon_{2}^{(b)},\ldots,\epsilon_{k}^{(b)}\overset{\text{i.i.d.}}{\sim}\cN(0,1)$ and compute $W^{(b)}$ by \eqref{eqn:wt}
\EndIf
\EndFor
\State Compute the quantile $c_{W}(\alpha)$ of $\{W^{(1)},\dots,W^{(B)}\}$ for $\alpha\in(0,1)$ 
\State Return $\ttheta_l\pm N^{-1/2}c_W(\alpha)$, $l=1,\dots,d$
\end{algorithmic}
\end{algorithm}

\begin{remark}
Although in Algorithm~\ref{alg:kgrad} the same $\ttheta$ is used for the center of the confidence interval and for evaluating the gradients $\bg_{ij}$, allowing them to be different (such as in Algorithm~\ref{alg:hd}) can save one round of communication. For example, we can use $\ttheta^{(\tau)}$ for the center of the confidence interval, while the gradients are evaluated with $\ttheta^{(\tau-1)}$.
\end{remark}

\section{Nodewise Lasso} \label{sec:node}

In Algorithm~\ref{alg:node}, we state the nodewise Lasso method for constructing approximate inverse Hessian matrix used in Section~3.1.1 of \citet{van2014asymptotically}, which we apply in Algorithm~\ref{alg:hd}.  We define the components of $\widehat\gamma_l$ as $\widehat\gamma_l=\{\widehat\gamma_{l,l'};l'=1,\dots,d,l'\neq l\}$.  We denote by $\widehat M_{l,-l}$ the $l$-th row of $\widehat M$ without the diagonal element $(l,l)$, and by $\widehat M_{-l,-l}$ the submatrix without the $l$-th row and $l$-th column.

\begin{algorithm}[th]
\caption{\texttt{Node}($\widehat M$)}\label{alg:node}
\begin{algorithmic}[1]
\Statex {\bfseries Require:} sample Hessian matrix $\widehat M\in\R^{d\times d}$, hyperparameters $\{\lambda_l\}_{l=1}^d$
\For{$l = 1,\ldots,d$}
\State Compute $\widehat\gamma_l=\argmin_{\gamma\in\R^{d-1}} \widehat M_{l,l}-2\widehat M_{l,-l}\gamma+\gamma^\top \widehat M_{-l,-l}\gamma+2\lambda_l\|\gamma\|_1$
\State Compute $\widehat\tau_l^2=\widehat M_{l,l}-\widehat M_{l,-l}\widehat\gamma_l$
\EndFor
\State Construct $\widehat{M^{-1}}$ as
$$\widehat{M^{-1}}=
\begin{pmatrix}
    \widehat\tau_1^{-2} & 0 & \dots  & 0 \\
    0 & \widehat\tau_2^{-2} & \dots  & 0 \\
    \vdots & \vdots & \ddots & \vdots \\
    0 & 0 & \dots  & \widehat\tau_d^{-2}
\end{pmatrix}
\begin{pmatrix}
    1& -\widehat\gamma_{1,2} & \dots  & -\widehat\gamma_{1,d} \\
    -\widehat\gamma_{2,1} & 1 & \dots  & -\widehat\gamma_{2,d} \\
    \vdots & \vdots & \ddots & \vdots \\
    -\widehat\gamma_{d,1} & -\widehat\gamma_{d,2} & \dots  & 1
\end{pmatrix}.
$$
\end{algorithmic}
\end{algorithm}

\begin{remark}
Throughout this paper, we fix the choice of nodewise Lasso in Algorithm~\ref{alg:hd} for computing an approximate inverse Hessian matrix. In practice, various approaches (e.g., \citet{zhang2014confidence,javanmard2014confidence}) can be chosen from in consideration of estimation accuracy and computational efficiency.
\end{remark}

\section{CSL Estimator for GLMs}\label{rem:wang}
For the $\ell_1$-penalized CSL estimator of generalized linear models, Theorem 3.3 of \citet{wang2017efficient} states that
	\begin{align}
	\big\|\ttheta^{(t+1)}-\thetas\big\|_1 \lesssim s_0\sqrt{\frac{\log d}N}+s_0\sqrt{\frac{\log d}n}\big\|\ttheta^{(t)}-\thetas\big\|_1+Ms_0\big\|\ttheta^{(t)}-\thetas\big\|_1^2, \label{eqn:csl}
	\end{align}
where $M\geq 0$ is a Lipschitz constant of the $g''$, which exists due to Assumptions \ref{as:smth_glm}. 
In linear models, $g(a,b)=(a-b)^2/2$, $g''$ is a constant, so $M=0$ and CSL estimator has linear convergence to $\thetas$ with rate $s_0(\log d)^{1/2}n^{-1/2}$ until it reaches the upper bound given by the first term, which is also the rate of the centralized (oracle) estimator. For GLMs, however, $M>0$ and the third term can be dominant when $t$ is small. For example, when $t=0$, given that $\|\ttheta^{(0)}-\thetas\|_1\lesssim s_0(\log d)^{1/2}n^{-1/2}$, it is easy to see that the third term is always $s_0$ times larger than the second term (up to a constant), and a larger $n$ is required to ensure third term is less than $\big\|\ttheta^{(t)}-\thetas\big\|_1$ and the error is shrinking. However, when $t$ is sufficiently large, this dominance reverses. The threshold is given by the $\tau_0$ in \eqref{eqn:tau0}, and this implies the three phases of convergence: When $t\leq\tau_0$, the third term dominates and the convergence is quadratic; when $t>\tau_0$, the second term dominates the third and the linear convergence kicks in. Finally, when $t$ is sufficiently large, the first term dominates. Our analysis complements that of \citet{wang2017efficient}, while in their Corollary 3.7 it is simply assumed that the second term dominates the third.

\section{Variable Descriptions} \label{sec:var}
We use the following variables in our model for the semi-synthetic study in Section~\ref{sec:real}:
\begin{itemize}
\itemsep-0.5em 
\item \textsf{Year}: from 1987 to 2008,
\item \textsf{Month}: from 1 to 12,
\item \textsf{DayOfWeek}: from 1 (Monday) to 7 (Sunday),
\item \textsf{CRSDepTime}: scheduled departure time (in four digits, first two representing hour, last two representing minute),
\item \textsf{CRSArrTime}: scheduled arrival time (in the same format as above),
\item \textsf{UniqueCarrier}: unique carrier code,
\item \textsf{Origin}: origin (in IATA airport code),
\item \textsf{Dest}: destination (in IATA airport code),
\item \textsf{ArrDelay}: arrival delay (in minutes). Positive value means there is a delay.
\end{itemize}
The complete variable information can be found at \url{http://stat-computing.org/dataexpo/2009/the-data.html}.

\section{Creation of Dummy Variables} \label{sec:dummy}

We categorize \textsf{CRSDepTime} and \textsf{CRSArrTime} into $24$ one-hour time intervals (e.g., 1420 is converted to 14 to represent the interval [14:00,15:00]), and then treat \textsf{Year}, \textsf{Month}, \textsf{DayOfWeek}, \textsf{CRSDepTime}, \textsf{CRSArrTime}, \textsf{UniqueCarrier}, \textsf{Origin}, and \textsf{Dest} as nominal predictors. The nominal
predictors are encoded by dummies with appropriate dimensions and merging all categories of lower counts into ``others'', and either ``others'' or the smallest ordinal value is treated as the baseline.

To ensure that none of the columns of the design matrix on the master node is completely zero so that the nodewise Lasso can be computed, we create the dummy variables using only the observations in the master node on the dataset $\mathcal D_1$. Specifically, for variables \textsf{UniqueCarrier}, \textsf{Origin}, and \textsf{Dest}, we keep the top categories that make up $90\%$ of the data in the master node on $\mathcal D_1$; the rest categories are merged into ``others'' and are treated as baseline. For \textsf{CRSDepTime} and \textsf{CRSArrTime}, we merge the time intervals 23:00-6:00 and 1:00-7:00 respectively (due to their low counts) and use them as baseline. For \textsf{Year}, \textsf{Month}, and \textsf{DayOfWeek}, we treat year 1987, January, and Monday as baseline respectively.

\section{Extension to Heteroscedastic Error Across Machines}

As suggested by the associated editor, here we consider an extension to a more challenging scenario for linear models where the data across machines have heteroscedastic errors. In this scenario, Algorithm \ref{alg:cv} can no longer apply as it relies on the homogeneity in data across machines. We provide a new Algorithm \ref{alg:hdm} by exploiting the multiplier bootstrap idea underlying the ``High-Dimensional Metrics'' (HDM, \cite{hdm16}).

\begin{algorithm}[ht!]
\caption{Simultaneous inference for distributed data with heteroscedasticity}\label{alg:hdm}
\begin{algorithmic}[1]
\Statex {\bfseries Require:} $\tau\geq 1$ rounds of communication; nodewise Lasso procedure \texttt{Node}$(\cdot,\cdot)$ with hyperparameters $\{\lambda_l\}_{l=1}^d$, theoretical constant $c$
\State $\ttheta^{(0)}\gets\argmin_\theta \cL_1(\theta)+\lambda^{(0)}\|\theta\|_1$ at $\cM_1$, where $\lambda^{(0)}$ is chosen by cross-validation using the data at $\cM_1$
\State Compute $\tT$ by running \texttt{Node}$(\nabla^2\cL_1(\ttheta^{(0)}),\{\lambda_l\}_{l=1}^d)$ at $\cM_1$
\For{$t = 1,\ldots, \tau $} 
\State Transmit $\ttheta^{(t-1)}$ to $\{\cM_j\}_{j=2}^k$
\State Compute $\nabla\cL_1(\ttheta^{(t-1)})$ and $\psi_1^{(t-1)}=n^{-1}\sum_{i=1}^n\nabla\cL((x_{i1}, y_{i1}),\ttheta^{(t-1)})^2$ at $\cM_1$
\For{$j = 2,\ldots, k $}
\State Compute $\nabla\cL_j(\ttheta^{(t-1)})$ and $\psi_j^{(t-1)}=n^{-1}\sum_{i=1}^n\nabla\cL((x_{ij}, y_{ij}),\ttheta^{(t-1)})^2$ at $\cM_j$
\State Transmit $\nabla\cL_j(\ttheta^{(t-1)})$ and $\psi_j^{(t-1)}$ to $\cM_1$
\EndFor
\State $\nabla\cL_N(\ttheta^{(t-1)})\gets k^{-1}\sum_{j=1}^k\nabla\cL_j(\ttheta^{(t-1)})$ at $\cM_1$
\If{$t < \tau$}
\For{$b = 1,\ldots, B $} \label{line:lambda_start}
\State Draw $\epsilon_{1}^{(b)},\ldots,\epsilon_{k}^{(b)}\overset{\text{i.i.d.}}{\sim}\cN(0,1)$
\State $\Lambda_b^{(t)} \gets c k^{-1} \|\sum_{j=1}^k \epsilon_j^{(b)}\nabla\cL_j(\ttheta^{(t-1)}) \|_{\infty}$
\EndFor
\State $\lambda^{(t)}\gets 90\%$ quantile of $\{\Lambda_1^{(t)},\dots,\Lambda_B^{(t)}\}$ \label{line:lambda_end}
\For{$l = 1,\ldots, d$} \label{line:psi_start}
\State $\Psi_l^{(t)}\gets\sqrt{k^{-1}\sum_{j=1}^k(\psi_j^{(t-1)})_l}$
\EndFor
\State $\Psi^{(t)}\gets \text{diag}(\Psi_1^{(t)},\dots,\Psi_d^{(t)})$
\State $\ttheta^{(t)}\gets\argmin_\theta \cL_1(\theta)-\theta^\top\left(\nabla\cL_1(\ttheta^{(t-1)})-\nabla\cL_N(\ttheta^{(t-1)})\right)+\lambda^{(t)}\|\Psi^{(t)}\theta\|_1$ at $\cM_1$ \label{line:psi_end}
\Else
\State $\ttheta^{(\tau)}\gets\ttheta^{(\tau-1)}-\tT\nabla\cL_N(\ttheta^{(\tau-1)})$ at $\cM_1$
\EndIf
\EndFor
\State Run \texttt{DistBoots}$(\text{`\texttt{k-grad}' or `\texttt{n+k-1-grad}'},\ttheta=\ttheta^{(\tau)},\{\bg_j=\nabla\cL_j(\ttheta^{(\tau-1)})\}_{j=1}^k,$ \\
\hspace{90pt}$\tT=\tT)$ at $\cM_1$
\end{algorithmic}
\end{algorithm}

 In Algorithm~\ref{alg:hdm}, we select the regularization parameters $\{\lambda^{(t)}\}_{t=1}^{\tau-1}$ in lines~\ref{line:lambda_start}-\ref{line:lambda_end} by integrating the idea of \citet{spindler2016hdm}. In addition, we handle heteroscedasticity by data-driven regularization loadings $\Psi^{(t)}$ in lines~\ref{line:psi_start}-\ref{line:psi_end}.

Under heteroscedasticity, we expect the \texttt{k-grad} in Algorithm \ref{alg:kgrad} to continue being valid because it treats each machine equally as an independent data point. However, \texttt{n+k-1-grad} may no longer provide an accurate coverage because each single data point in the first machine is treated as equally important as the average of entire data in $j$ machine for $j=2,\cdots,k$, so the variance in the first machine could dominate so that the \texttt{n+k-1-grad} bootstrap could fail to precisely approximate the variance of the target empirical distribution. A careful theoretical study deserves future research.

The empirical performance of Algorithm~\ref{alg:hdm} is verified by a simulation study based on a heteroscedastic Gaussian linear model. We fix total sample size $N=2^{14}$ and the dimension $d=2^{10}$, and choose the number of machines $k$ from $\{2^2,2^3,\dots,2^6\}$. The true coefficient $\thetas$ is a $d$-dimensional vector in which the first $s_0$ coordinates are 1 and the rest is 0, where $s_0\in\{2^2,2^4\}$. We generate covariate vector $x$ independently from $\cN(0,\Sigma)$, where $\Sigma$ is a Toeplitz matrix with $\Sigma_{l,l'}=0.9^{|l-l'|}$. We introduce heteroscedasticity across machines by first independently generating the model noise from $\mathcal N(0,1)$ for all data in the master node $\mathcal M_1$. Next, we generate model noise for each data point $i$ in worker node $\mathcal M_j$ ($j=2,\dots,k$) independently from $\mathcal N(0,\sigma_{j}^2+\omega_{ij})$, where the node level variance $\sigma_j^2$ is generated independently from $\text{Unif}(2,3)$ and the idiosyncratic variance $\omega_{ij}$ is generated independently from $\text{Unif}(-0.2,0.2)$. For each choice of $s_0$ and $k$, we run Algorithm~\ref{alg:hdm} with \texttt{k-grad} and \texttt{n+k-1-grad} on $1{,}000$ independently generated datasets, and compute the empirical coverage probability and the average width based on the results from these $1{,}000$ replications. At each replication, we draw $B=500$ bootstrap samples, from which we calculate the $95\%$ empirical quantile to further obtain the $95\%$ simultaneous confidence interval. For tuning the nodewise Lasso, we use the same approach as in the main text. The computation of the oracle width starts with fixing $(N,d,s_0,k)$ and generating $500$ independent datasets. For each dataset, we compute the centralized de-biased Lasso estimator $\htheta$. The oracle width is defined as two times the $95\%$ empirical quantile of $\|\htheta-\thetas\|_\infty$ of the 500 samples. 

Figure \ref{fig:hdm_c05} shows the coverage probability and efficiency in the form of relative widths of Algorithm~\ref{alg:hdm}. As expected, the coverage of the simultaneous confidence intervals is improved as the iteration goes using the new data-driven parameter tuning and heteroscedasticity-adapted regularization. The \texttt{k-grad} performs much better than the \texttt{n+k-1-grad}, which basically fails as the coverage probability of \texttt{n+k-1-grad} is nearly zero in all cases. The failure of the \texttt{n+k-1-grad} is due to the fact that it over-weigh the data in the master node $\cM_1$ which leads to an under-estimation of the variance in other nodes, whereas in \texttt{k-grad} each node is weighed equally.

By comparing Figure \ref{fig:hdm_c05} and Figure \ref{fig:hdm_c1}, we observe that our algorithm is generally robust to the selection of $c$ as it performs similarly for $c=0.5$ and $c=1$. However, we note that $c=0.5$ could be too small to stabilize the algorithm as the optimization solver in \ref{line:psi_end} fails to converge in about $2\%$ of the replications, and the divergent results are not included in Figure \ref{fig:hdm_c05}. This suggests that the penalty at $c=0.5$ may be so small that leads to an ill-conditioned objective function. After increasing $c$ from $0.5$ to $1$, optimization solvers of all the replications stably converge. 

\begin{figure}[ht]
\centering
\includegraphics[width=0.8\textwidth]{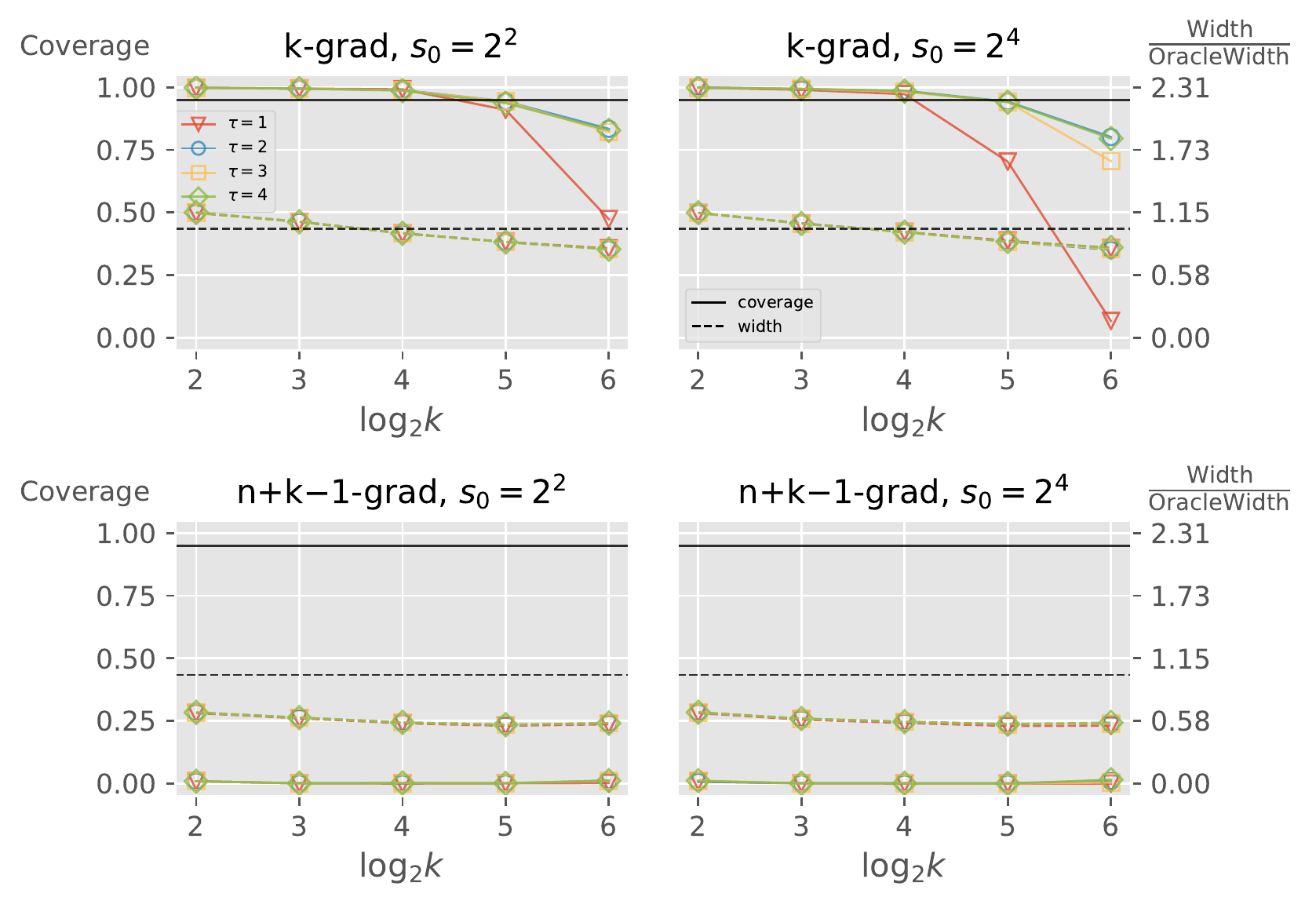}
\caption{Under $c=0.5$, empirical coverage probability (\textbf{left axis, solid lines}) and average relative width (\textbf{right axis, dashed lines}) of simultaneous confidence intervals by \texttt{k-grad} and \texttt{n+k-1-grad} in sparse linear regression with Toeplitz design and varying sparsity. Black solid line represents the $95\%$ nominal level and black dashed line represents 1 on the right $y$-axis.}
\label{fig:hdm_c05}
\end{figure}

\begin{figure}[ht]
\centering
\includegraphics[width=0.8\textwidth]{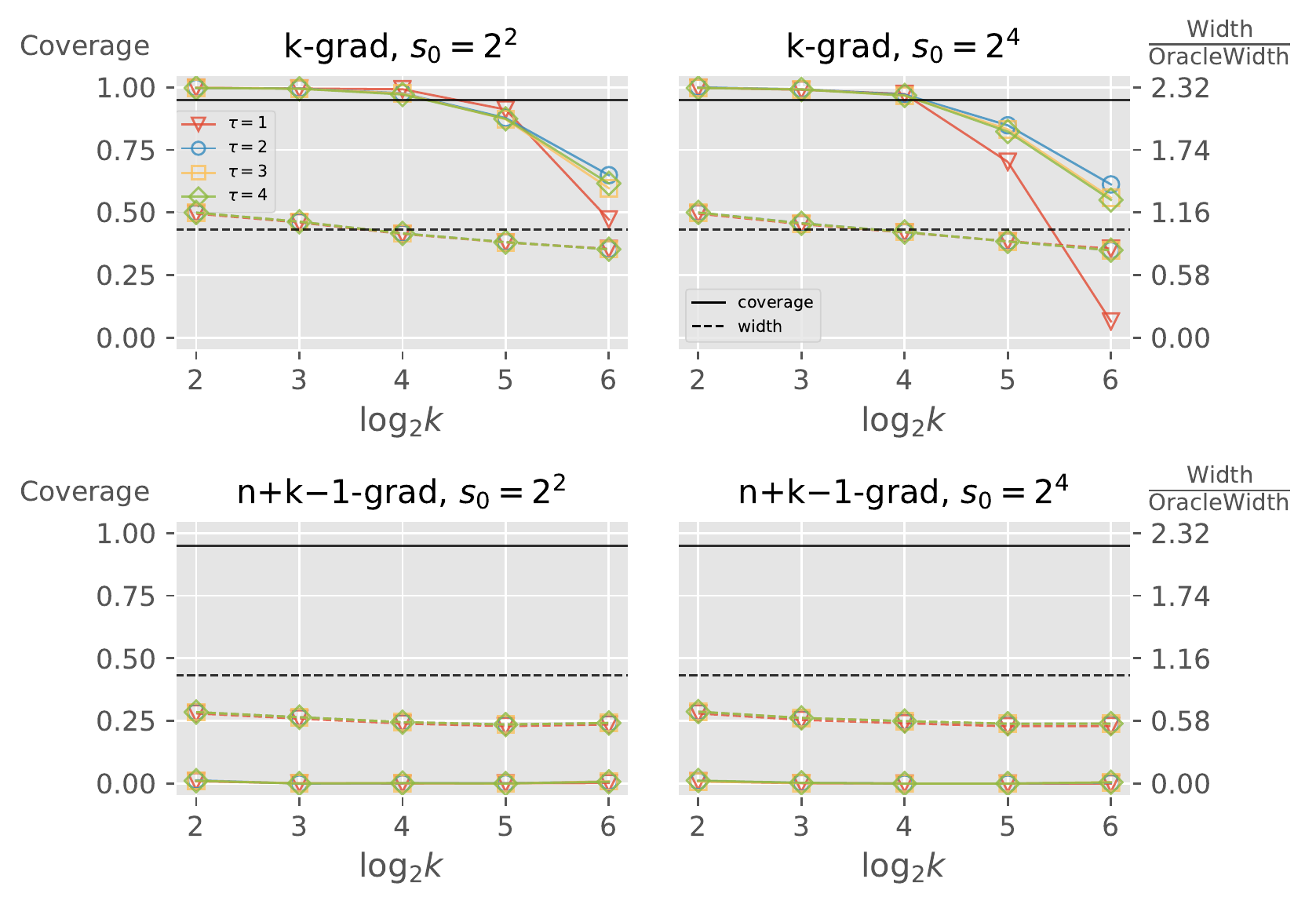}
\caption{Under $c=1$, empirical coverage probability (\textbf{left axis, solid lines}) and average relative width (\textbf{right axis, dashed lines}) of simultaneous confidence intervals by \texttt{k-grad} and \texttt{n+k-1-grad} in sparse linear regression with Toeplitz design and varying sparsity. Black solid line represents the $95\%$ nominal level and black dashed line represents 1 on the right $y$-axis.}
\label{fig:hdm_c1}
\end{figure}

\vskip 0.2in
\bibliography{Reference}

\newpage
\begin{center}
{\large\bf SUPPLEMENTARY MATERIAL}
\end{center}

\section*{1. Proofs of Main Results} \label{sec:mp}

To simplify the notation, in the proof we denote $\btheta=\ttheta^{(\tau-1)}$, where $\ttheta^{(\tau-1)}$ is the $\ell_1$-penalized estimator at $\tau-1$ iterator output by Algorithm~\ref{alg:hd}. Denote $\ttheta=\ttheta^{(\tau)}$ output by Algorithm~\ref{alg:hd}.

\begin{proof}[Theorem \ref{theo:reg0_csl}]
	We apply Theorem~3 of \citet{wang2017efficient}, where their Assumption~2 is inherited from Assumption~\ref{as:design}, and obtain that if $n\gg s_0^2\log d$,
	$$\left\|\btheta-\thetas\right\|_1 = \left\|\ttheta^{(\tau-1)}-\thetas\right\|_1 = O_P\left(s_0\sqrt{\frac{\log d}N}+\left(s_0\sqrt{\frac{\log d}n}\right)^\tau\right).$$
	Then, by Lemma~\ref{lem:reg0}, we have that $\sup_{\alpha\in(0,1)}\left|P(T\leq c_{\overline W}(\alpha))-\alpha\right|=o(1),$ as long as $n\gg{s^*}^2\log^{3+\kappa} d + {s^*}\log^{5+\kappa} d+s_0^2\log d$, $k\gg{s^*}^2\log^{5+\kappa} d$, and
	$$s_0\sqrt{\frac{\log d}N}+\left(s_0\sqrt{\frac{\log d}n}\right)^\tau\ll\min\left\{\frac1{ \sqrt{k{s^*}}\log^{1+\kappa}d},\frac1{ \sqrt{n {s^*}}\log^{1+\kappa}d}\right\}.$$
	
	These conditions hold if $n\gg  ({s^*}^2+{s^*} s_0^2)\log^{3+\kappa} d + {s^*}\log^{5+\kappa} d$, $k\gg  {s^*} s_0^2\log^{3+\kappa} d + {s^*}^2\log^{5+\kappa} d$, and
	$$\tau>\max\left\{\frac{\log k+\log{  {s^*}}+\log(C\log^{2+\kappa}d)}{\log n-\log(s_0^2)-\log\log d},1+\frac{\log{  {s^*}}+\log(s_0^2)+\log(C\log^{3+\kappa}d)}{\log n-\log(s_0^2)-\log\log d}\right\}.$$
	
	If $n=d^{\gamma_n}$, $k=d^{\gamma_k}$, $\overline s = s_0\vee s^*=d^{\gamma_s}$ for some constants $\gamma_n$, $\gamma_k$, and $\gamma_s$, then a sufficient condition is $\gamma_n>3\gamma_s$, $\gamma_k>3\gamma_s$, and
	$$\tau\geq1+\left\lfloor\max\left\{\frac{\gamma_k+\gamma_s}{\gamma_n-2\gamma_s},1+\frac{3\gamma_s}{\gamma_n-2\gamma_s}\right\}\right\rfloor.$$
\end{proof}

\begin{proof}[Theorem \ref{theo:reg_csl}]
	Similarly to the proof of Theorem~\ref{theo:reg0_csl}, applying Theorem~3 of \citet{wang2017efficient} and Lemma~\ref{lem:reg}, we have that $\sup_{\alpha\in(0,1)}\left|P(T\leq c_{\widetilde W}(\alpha))-\alpha\right|=o(1),$ as long as $n\gg{s^*}^2\log^{3+\kappa} d + {s^*}\log^{5+\kappa} d+s_0^2\log d$, $n+k\gg{s^*}^2\log^{5+\kappa} d$, and
	$$s_0\sqrt{\frac{\log d}N}+\left(s_0\sqrt{\frac{\log d}n}\right)^\tau\ll\min\left\{\frac1{ \sqrt{k{s^*}}\log^{1+\kappa}d},\frac1{  {s^*}\sqrt{\log((n+k)d)}\log^{2+\kappa}d}\right\}.$$
	These conditions hold if $n\gg  ({s^*}^2+{s^*} s_0^2)\log^{3+\kappa} d + {s^*}\log^{5+\kappa} d$, $n+k\gg{s^*}^2\log^{5+\kappa} d$, $nk\gg{s^*}^2s_0^2\log^{5+\kappa} d$, and
	$$\tau>\max\left\{\frac{\log k+\log{  {s^*}}+\log(C\log^{2+\kappa}d)}{\log n-\log(s_0^2)-\log\log d},\frac{\log{{s^*}^2}+\log\log((n+k)d)+\log(C\log^{4+\kappa}d)}{\log n-\log(s_0^2)-\log\log d}\right\}.$$
	
	If $n=d^{\gamma_n}$, $k=d^{\gamma_k}$, $\overline s = s_0\vee s^*=d^{\gamma_s}$ for some constants $\gamma_n$, $\gamma_k$, and $\gamma_s$, then a sufficient condition is $\gamma_n>3\gamma_s$, $\gamma_n+\gamma_k>4\gamma_s$, and
	$$\tau\geq1+\left\lfloor\frac{(\gamma_k\vee\gamma_s)+\gamma_s}{\gamma_n-2\gamma_s}\right\rfloor.$$
\end{proof}
	
\begin{proof}[Theorem \ref{theo:reg0_glm_csl}]
	We apply Theorem~6 of \citet{wang2017efficient}, where their Assumption~2 is inherited from Assumption~\ref{as:hes_glm}, and obtain that if $n\gg s_0^4\log d$,
	\begin{align*}
	\left\|\btheta-\thetas\right\|_1 &= \left\|\ttheta^{(\tau-1)}-\thetas\right\|_1 = \begin{cases}
    O_P\left(s_0\sqrt{\frac{\log d}N}+\frac1{s_0}\left(s_0^2\sqrt{\frac{\log d}n}\right)^{2^{\tau-1}}\right), & \tau\leq\tau_0+1,\\
    O_P\left(s_0\sqrt{\frac{\log d}N}+\frac1{s_0}\left(s_0^2\sqrt{\frac{\log d}n}\right)^{2^{\tau_0}}\left(s_0\sqrt{\frac{\log d}n}\right)^{\tau-\tau_0-1}\right), & \tau>\tau_0+1,
\end{cases}
	\end{align*}
	where $\tau_0$ is the smallest integer $t$ such that
	$$\left(s_0^2\sqrt{\frac{\log d}n}\right)^{2^t}\lesssim s_0\sqrt{\frac{\log d}n},$$
	that is,
	$$\tau_0=\left\lceil\log_2\left(\frac{\log n-\log(s_0^2)-\log(C\log d)}{\log n-\log(s_0^4)-\log\log d}\right)\right\rceil.$$
	Then, by Lemma~\ref{lem:reg0_glm}, we have that $\sup_{\alpha\in(0,1)}\left|P(T\leq c_{\overline W}(\alpha))-\alpha\right|=o(1),$ as long as $n\gg  (s_0^2+{s^*}^2)\log^{3+\kappa} d+(s_0+{s^*})\log^{5+\kappa} d+s_0^4\log d$, $k\gg{s^*}^2\log^{5+\kappa} d$, and
	\begin{align*}
	s_0\sqrt{\frac{\log d}N}+\frac1{s_0}\left(s_0^2\sqrt{\frac{\log d}n}\right)^{2^{\tau-1}}\ll\min\left\{\frac1{ \sqrt{k{s^*}}s_0\log^{1+\kappa}d},\frac1{ \sqrt{n {s^*}}\log^{1+\kappa}d}\right\},
	\end{align*}
	if $\tau\leq\tau_0+1$, and
	\begin{align*}
	s_0\sqrt{\frac{\log d}N}+\frac1{s_0}\left(s_0^2\sqrt{\frac{\log d}n}\right)^{2^{\tau_0}}\left(s_0\sqrt{\frac{\log d}n}\right)^{\tau-\tau_0-1}\ll\min\left\{\frac1{ \sqrt{k{s^*}}s_0\log^{1+\kappa}d},\frac1{ \sqrt{n {s^*}}\log^{1+\kappa}d}\right\},
	\end{align*}
	if $\tau>\tau_0+1$.
		
	If $n=d^{\gamma_n}$, $k=d^{\gamma_k}$, $\overline s = s_0\vee s^*=d^{\gamma_s}$ for some constants $\gamma_n$, $\gamma_k$, and $\gamma_s$, then a sufficient condition is $\gamma_n>5\gamma_s$, $\gamma_k>3\gamma_s$, and
	\begin{align*}
	\tau&\geq1+\left\lfloor\max\left\{1+\log_2\frac{\gamma_n-\gamma_s}{\gamma_n-4\gamma_s}, \tau_0+1+\frac{\gamma_k+(4\cdot 2^{\tau_0}+1)\gamma_s-2^{\tau_0}\gamma_n}{\gamma_n-2\gamma_s}\right\}\right\rfloor \\
	&=\left\lfloor\max\left\{2+\log_2\frac{\gamma_n-\gamma_s}{\gamma_n-4\gamma_s}, \tau_0+2+\frac{\gamma_k+(4\cdot 2^{\tau_0}+1)\gamma_s-2^{\tau_0}\gamma_n}{\gamma_n-2}\right\}\right\rfloor \\
	&=\left\lfloor\max\left\{2+\log_2\frac{\gamma_n-\gamma_s}{\gamma_n-4\gamma_s}, \tau_0+\frac{\gamma_k+\gamma_s}{\gamma_n-2\gamma_s}+\nu_0\right\}\right\rfloor \\
	&=\max\left\{\tau_0+\left\lfloor\frac{\gamma_k+\gamma_s}{\gamma_n-2\gamma_s}+\nu_0\right\rfloor, 2+\left\lfloor\log_2\frac{\gamma_n-\gamma_s}{\gamma_n-4\gamma_s}\right\rfloor\right\},
	\end{align*}
	where
	$$\tau_0=1+\left\lfloor\log_2\frac{\gamma_n-2\gamma_s}{\gamma_n-4\gamma_s}\right\rfloor,\quad\nu_0=2-\frac{2^{\tau_0}(\gamma_n-4\gamma_s)}{\gamma_n-2\gamma_s}\in(0,1].$$
\end{proof}

\begin{proof}[Theorem \ref{theo:reg_glm_csl}]
	Similarly to the proof of Theorem~\ref{theo:reg_csl}, applying Theorem~3 of \citet{wang2017efficient} and Lemma~\ref{lem:reg_glm}, we have that $\sup_{\alpha\in(0,1)}\left|P(T\leq c_{\overline W}(\alpha))-\alpha\right|=o(1),$ as long as $n\gg  (s_0+{s^*})\log^{5+\kappa} d+(s_0^2+{s^*}^2)\log^{3+\kappa} d$, $n+k\gg{s^*}^2\log^{5+\kappa} d$, and
	\begin{align*}
	&s_0\sqrt{\frac{\log d}N}+\frac1{s_0}\left(s_0^2\sqrt{\frac{\log d}n}\right)^{2^{\tau-1}} \\
	&\ll\min\left\{\frac{n+k}{  {s^*}\left(n+k\sqrt{\log d}+k^{3/4}\log^{3/4}d\right)\log^{2+\kappa}d},\frac1{ \sqrt{k{s^*}}s_0\log^{1+\kappa}d}, \frac1{\left(nk{s^*}\log^{1+\kappa}d\right)^{1/4}}\right\},
	\end{align*}
	if $\tau\leq\tau_0+1$, and
	\begin{align*}
	&s_0\sqrt{\frac{\log d}N}+\frac1{s_0}\left(s_0^2\sqrt{\frac{\log d}n}\right)^{2^{\tau_0}}\left(s_0\sqrt{\frac{\log d}n}\right)^{\tau-\tau_0-1} \\
	&\ll\min\left\{\frac{n+k}{  {s^*}\left(n+k\sqrt{\log d}+k^{3/4}\log^{3/4}d\right)\log^{2+\kappa}d},\frac1{ \sqrt{k{s^*}}s_0\log^{1+\kappa}d}, \frac1{\left(nk{s^*}\log^{1+\kappa}d\right)^{1/4}}\right\},
	\end{align*}
	if $\tau>\tau_0+1$, where
	$$\tau_0=\left\lceil\log_2\left(\frac{\log n-\log(s_0^2)-\log(C\log d)}{\log n-\log(s_0^4)-\log\log d}\right)\right\rceil.$$
	
	If $n=d^{\gamma_n}$, $k=d^{\gamma_k}$, $\overline s = s_0\vee s^*=d^{\gamma_s}$ for some constants $\gamma_n$, $\gamma_k$, and $\gamma_s$, then a sufficient condition is $\gamma_n>5\gamma_s$, and
	
	Let $\overline s=s_0\vee s^*$. If $n=\overline s^{\gamma_n}$, $k=\overline s^{\gamma_k}$, and $d=\overline s^{\gamma_d}$ for some constants $\gamma_n$, $\gamma_k$, and $\gamma_d$, then a sufficient condition is $\gamma_n>5$, and if $\tau\leq\tau_0+1$,
	$$\tau\geq\max\left\{2+\left\lfloor\log_2\frac{\gamma_k+1}{\gamma_n-4}\right\rfloor,1\right\},$$
	and if $\tau>\tau_0+1$
	\begin{align*}
	\tau&\geq1+\left\lfloor\tau_0+1+\frac{\gamma_k+4\cdot 2^{\tau_0}+1-2^{\tau_0}\gamma_n}{\gamma_n-2}\right\rfloor \\
	&=\left\lfloor\tau_0+2+\frac{\gamma_k+4\cdot 2^{\tau_0}+1-2^{\tau_0}\gamma_n}{\gamma_n-2}\right\rfloor \\
	&=\left\lfloor\tau_0+\frac{\gamma_k+1}{\gamma_n-2}+\nu_0\right\rfloor \\
	&=\tau_0+\left\lfloor\frac{\gamma_k+1}{\gamma_n-2}+\nu_0\right\rfloor,
	\end{align*}
	where
	$$\tau_0=1+\left\lfloor\log_2\frac{\gamma_n-2}{\gamma_n-4}\right\rfloor,\quad\nu_0=2-\frac{2^{\tau_0}(\gamma_n-4)}{\gamma_n-2}\in(0,1].$$
	
\end{proof}

\section*{2. Technical Lemmas}

\begin{lemma}[\texttt{k-grad}]\label{lem:reg0}
	In sparse linear model, under Assumptions~\ref{as:design}~and~\ref{as:noise}, if $n\gg{s^*}^2\log^{3+\kappa} d + {s^*}\log^{5+\kappa} d$, $k\gg{s^*}^2\log^{5+\kappa} d$, and
	\begin{align*}
	\left\|\btheta-\thetas\right\|_1\ll\min\left\{\frac1{ \sqrt{k{s^*}}\log^{1+\kappa}d},\frac1{ \sqrt{n {s^*}}\log^{1+\kappa}d}\right\},
	\end{align*}
	for some $\kappa>0$, then we have that
	\begin{align}
	\sup_{\alpha\in(0,1)}\left|P(T\leq c_{\overline W}(\alpha))-\alpha\right|&=o(1), \quad\text{and} \label{eqn:lem_wb_t} \\
	\sup_{\alpha\in(0,1)}\left|P(\widehat T\leq c_{\overline W}(\alpha))-\alpha\right|&=o(1). \label{eqn:lem_wb_ht}
	\end{align}
\end{lemma}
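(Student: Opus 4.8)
The plan is to establish the \texttt{k-grad} bootstrap validity by reducing it to a Gaussian comparison/anti-concentration argument for high-dimensional vectors, in the spirit of \citet{chernozhukov2013gaussian} and \citet{zhang2017simultaneous}, but carefully tracking the extra error incurred by (i) replacing the inverse population Hessian $\Theta$ with the nodewise-Lasso estimate $\tT$ computed only on $\cM_1$, and (ii) replacing $\thetas$ with $\btheta=\ttheta^{(\tau-1)}$ in the gradients. First I would invoke Theorem~2.4 of \citet{van2014asymptotically} (valid since $n\gg \bar s^2\log d$ gives the restricted eigenvalue condition via \citet{RZ13}, and Assumption~\ref{as:design} controls the relevant quantities) to write the de-biased estimator as $\sqrt N(\ttheta^{(\tau)}-\thetas) = -\tT\,N^{-1/2}\sum_{i,j}\nabla\cL(\btheta;Z_{ij}) + \text{(remainder)}$, and then further expand $\nabla\cL(\btheta;Z_{ij})$ around $\thetas$. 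In the linear model $\nabla\cL(\theta;x,y) = -x(y-x^\top\theta) = -xe + xx^\top(\theta-\thetas)$, so the second-order term vanishes and the expansion is exact up to the term $\tT(\hSigma - \Sigma_{\text{pop}})(\btheta-\thetas)$ plus $(\tT-\Theta)$-type errors; bounding these in $\ell_\infty$-norm using $\matrixnorm{\tT - \Theta}_{\max}\lesssim s^*\sqrt{\log d/n}$, $\|\btheta-\thetas\|_1 \ll (\sqrt{k s^*}\log^{1+\kappa}d)^{-1}\wedge(\sqrt{n s^*}\log^{1+\kappa}d)^{-1}$, and maximal inequalities for the empirical process is where the stated rate conditions on $n,k$ come from.

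Next I would show that both $T = \|\sqrt N(\ttheta^{(\tau)}-\thetas)\|_\infty$ (and likewise $\widehat T$) and the bootstrap statistic $\overline W$ are, up to $o_P(1)$ additive errors in the relevant sup-norm, equal to the $\ell_\infty$-norm of the oracle score $A = -\Theta N^{-1/2}\sum_{i,j}\nabla\cL(\thetas;Z_{ij})$ and, respectively, of a centered Gaussian vector whose covariance is $\cov_\epsilon(\overline A)$. The heart of the matter is the covariance-matching bound already displayed in the paper as \eqref{eqn:kgrad_err_hd}:
\begin{align*}
\matrixnorm{\cov_\epsilon(\overline A)-\cov(A)}_{\max}\leq s^*\|\btheta-\thetas\|_1 + n s^*\|\btheta-\thetas\|_1^2 + O_P\Big(\sqrt{{s^*}^2/k}+\sqrt{s^*/n}\Big),
\end{align*}
which I would prove by writing $\overline A = -\tT\, k^{-1/2}\sum_j \epsilon_j^{(b)}\sqrt n(\bg_j - \bar\bg)$, computing its conditional covariance as $\tT\,\widehat{\cov}(\sqrt n \bg_j)\,\tT^\top$ with $\widehat{\cov}$ the empirical covariance across the $k$ machine-level gradients, and comparing term by term against $\Theta\,\Ee[\nabla\cL(\thetas;Z)\nabla\cL(\thetas;Z)^\top]\,\Theta^\top$; the three error sources are exactly the $\tT$-vs-$\Theta$ replacement (controlled by $s^*\sqrt{\log d/n}$, needing $k\gg {s^*}^2\log^{5+\kappa}d$ for the $\sqrt{{s^*}^2/k}$ sampling-fluctuation term), the $\btheta$-vs-$\thetas$ replacement in each gradient (first two terms), and the fluctuation of the average of $k$ i.i.d.\ $d\times d$ matrices in max-norm (the $O_P$ term, via a matrix Bernstein / union bound giving the $\log d$ factors).

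Having the covariance match to order $o(\log^{-(1+\kappa)}d)$ (which is precisely what the displayed rate hypotheses buy), I would then apply the Gaussian comparison theorem of \citet{chernozhukov2013gaussian} (Comparison of distributions of maxima under covariance perturbation) together with their anti-concentration inequality for maxima of Gaussians, to conclude $\sup_t|P_\epsilon(\overline W\le t) - P(\|A\|_\infty \le t)| = o_P(1)$; combined with the high-dimensional CLT of the same authors giving $\sup_t |P(\|A\|_\infty\le t) - P(\|G\|_\infty \le t)|=o(1)$ for $G$ Gaussian with $\cov(A)$, and with the $o_P(1)$ sup-norm approximations of $T$ and $\widehat T$ by $\|A\|_\infty$, the two claimed conclusions \eqref{eqn:lem_wb_t} and \eqref{eqn:lem_wb_ht} follow after converting quantile statements into the uniform-in-$\alpha$ form via anti-concentration once more. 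The main obstacle I anticipate is the bookkeeping in the covariance-matching step: one must simultaneously handle the bias from $\btheta\ne\thetas$ entering quadratically (hence the $ns^*\|\btheta-\thetas\|_1^2$ term and the requirement $n\gg {s^*} s_0^2\log^{3+\kappa}d$ after substituting the CSL rate), the sparsity-adjusted $\ell_\infty\to\ell_1$ conversions that bring in the $s^*$ factors, and the logarithmic penalties from the many maximal inequalities, all while keeping the final error below the anti-concentration threshold $\log^{-(1+\kappa)}d$; getting all these to line up is exactly what forces the specific exponents $\gamma_n>3\gamma_s$, $\gamma_k>3\gamma_s$ and the stated formula for $\tau_{\min}$.
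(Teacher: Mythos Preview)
Your proposal is correct and follows essentially the same route as the paper. The paper formalizes the argument by introducing an oracle multiplier bootstrap statistic $W^*$ and a Bahadur representation $T_0$, then assembles a master inequality bounding $\sup_{\alpha}|P(T\le c_{\overline W}(\alpha))-\alpha|$ by terms involving $|T-T_0|$, $\matrixnorm{\widehat\Omega-\Omega_0}_{\max}$, and $\matrixnorm{\overline\Omega-\widehat\Omega}_{\max}$, which are handled by separate technical lemmas (Lemmas~\ref{lem:tbd_reg}, \ref{lem:gbdo_reg}, \ref{lem:gbd0_reg}); your Gaussian-comparison/anti-concentration outline is the same machinery with these intermediate objects left implicit.
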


\begin{proof}[Lemma \ref{lem:reg0}]
	    As noted by \cite{zhang2017simultaneous}, since $\|\sqrt N(\ttheta-\thetas)\|_\infty=\max_l\sqrt N|\ttheta_l-\thetas_l|=\sqrt N\max_l\big((\ttheta_l-\thetas_l)\vee(\thetas_l-\ttheta_l)\big)$, the arguments for the bootstrap consistency result with
    \begin{align}
        T&=\max_l\sqrt N(\ttheta-\thetas)_l\quad\text{and} \label{eqn:tp} \\
         \widehat T&=\max_l\sqrt N(\htheta-\thetas)_l \label{eqn:thp}
    \end{align}
    imply the bootstrap consistency result for $T=\|\sqrt N(\ttheta-\thetas)\|_\infty$ and $\widehat T=\|\sqrt N(\htheta-\thetas)\|_\infty$. Hence, from now on, we redefine $T$ and $\widehat T$ as \eqref{eqn:tp} and \eqref{eqn:thp}. Define an oracle multiplier bootstrap statistic as
	\begin{align}
	W^*\defn\max_{1\leq l\leq d}-\frac1{\sqrt{N}}\sum_{i=1}^n\sum_{j=1}^k \left(\nabla^2\cLs(\thetas)^{-1}\nabla\cL(\thetas;Z_{ij})\right)_l\epsilon_{ij}^*, \label{eqn:wsdef}
	\end{align}
	where $\{\epsilon_{ij}^*\}_{i=1,\dots,n;j=1,\dots,k}$ are $N$ independent standard Gaussian variables, also independent of the entire dataset.  The proof consists of two steps; the first step is to show that $W^*$ achieves bootstrap consistency, i.e., $\sup_{\alpha\in(0,1)}|P(T\leq c_{W^*}(\alpha))-\alpha|$ converges to $0$, where $c_{W^*}(\alpha)=\inf\{t\in\R:P_\epsilon(W^*\leq t)\geq\alpha\},$ and the second step is to show the bootstrap consistency of our proposed bootstrap statistic by showing the quantiles of $W$ and $W^*$ are close.
	
	Note that $\nabla^2\cLs(\thetas)^{-1}\nabla\cL(\thetas;Z)=\Ee[xx^\top]^{-1}x(x^\top\thetas-y)=\Theta xe$ and
	$$\Ee\left[\left(\nabla^2\cLs(\thetas)^{-1}\nabla\cL(\thetas;Z)\right)\left(\nabla^2\cLs(\thetas)^{-1}\nabla\cL(\thetas;Z)\right)^\top\right]=\Theta\Ee\left[xx^\top e^2\right]\Theta=\sigma^2\Theta\Sigma\Theta=\sigma^2\Theta.$$
	Then, under Assumptions~\ref{as:design}~and~\ref{as:noise},
	\begin{align}
	\min_l\Ee\left[\left(\nabla^2\cLs(\thetas)^{-1}\nabla\cL(\thetas;Z)\right)_l^2\right]=\sigma^2\min_l\Theta_{l,l}\geq\sigma^2\lambdamin(\Theta)=\frac{\sigma^2}{\lambdamax(\Sigma)}, \label{eqn:chkvar}
	\end{align}
	is bounded away from zero.  Under Assumption \ref{as:design}, $x$ is sub-Gaussian, that is, $w^\top x$ is sub-Gaussian with uniformly bounded $\psi_2$-norm for all $w\in S^{d-1}$.  To show $w^\top\Theta x$ is also sub-Gaussian with uniformly bounded $\psi_2$-norm, we write it as
	$$w^\top\Theta x = (\Theta w)^\top x = \left\|\Theta w\right\|_2 \left(\frac{\Theta w}{\left\|\Theta w\right\|_2}\right)^\top x.$$
	Since $\Theta w/\left\|\Theta w\right\|_2\in S^{d-1}$, we have that $\left(\Theta w/\left\|\Theta w\right\|_2\right) x$ is sub-Gaussian with $O(1)$ $\psi_2$-norm, and hence, $w^\top\Theta x$ is sub-Gaussian with $O(\left\|\Theta w\right\|_2)=O(\lambdamax(\Theta))=O(\lambdamin(\Sigma)^{-1})=O(1)$ $\psi_2$-norm, under Assumption~\ref{as:design}.  Since $e$ is also sub-Gaussian under Assumption~\ref{as:noise} and is independent of $w^\top\Theta x$, we have that $w^\top\Theta xe$ is sub-exponential with uniformly bounded $\psi_1$-norm for all $w\in S^{d-1}$, and also, all $\left(\nabla^2\cLs(\thetas)^{-1}\nabla\cL(\thetas;Z)\right)_l$ are sub-exponential with uniformly bounded $\psi_1$-norm.  Combining this with \eqref{eqn:chkvar}, we have verified Assumption~(E.1) of \cite{chernozhukov2013gaussian} for $\nabla^2\cLs(\thetas)^{-1}\nabla\cL(\thetas;Z)$.
	
	Define
	\begin{align}
	T_0\defn\max_{1\leq l\leq d}-\sqrt{N}\left(\nabla^2\cLs(\thetas)^{-1}\nabla\cL_N(\thetas)\right)_l, \label{eqn:t0}
	\end{align}
	which is a Bahadur representation of $T$.  Under the condition $\log^7(dN)/N\lesssim N^{-c}$ for some constant $c>0$, which holds if $N\gtrsim\log^{7+\kappa}d$ for some $\kappa>0$, applying Theorem 3.2 and Corollary 2.1 of \cite{chernozhukov2013gaussian}, we obtain that for some constant $c>0$ and for every $v,\zeta>0$, 
	\begin{align}
	\begin{split}
	\sup_{\alpha\in(0,1)}\left|P(T\leq c_{W^*}(\alpha))-\alpha\right| &\lesssim N^{-c}+v^{1/3}\left(1\vee\log\frac dv\right)^{2/3}+P\left(\matrixnorm{\widehat\Omega-\Omega_0}_{\max}>v\right) \\
	&\quad+\zeta\sqrt{1\vee\log\frac d\zeta}+P\left(|T-T_0|>\zeta\right), \label{eqn:ws}
	\end{split}
	\end{align}
	where
	\begin{align}
	\begin{split}
	\widehat\Omega&\defn\cov_\epsilon\left(-\frac1{\sqrt{N}}\sum_{i=1}^n\sum_{j=1}^k\nabla^2\cLs(\thetas)^{-1}\nabla\cL(\thetas;Z_{ij})\epsilon_{ij}^* \right) \\
	&=\nabla^2\cLs(\thetas)^{-1}\left(\frac1N\sum_{i=1}^n\sum_{j=1}^k\nabla\cL(\thetas;Z_{ij})\nabla\cL(\thetas;Z_{ij})^\top\right)\nabla^2\cLs(\thetas)^{-1}, \quad\text{and} \label{eqn:oh}
	\end{split}
	\end{align}
	\begin{align}
	\Omega_0&\defn\cov\left(-\nabla^2\cLs(\thetas)^{-1}\nabla\cL(\thetas;Z)\right)=\nabla^2\cLs(\thetas)^{-1}\Ee\left[\nabla\cL(\thetas;Z)\nabla\cL(\thetas;Z)^\top\right]\nabla^2\cLs(\thetas)^{-1}. \label{eqn:o0}
	\end{align}
	To show the quantiles of $\overline W$ and $W^*$ are close, we first have that for any $\omega$ such that $\alpha+\omega,\alpha-\omega\in(0,1)$,
	\begin{align*}
	&P(\{T\leq c_{\overline W}(\alpha)\}\ominus\{T\leq c_{W^*}(\alpha)\}) \\
	&\leq 2P(c_{W^*}(\alpha-\omega)<T\leq c_{W^*}(\alpha+\omega)) + P(c_{W^*}(\alpha-\omega)>c_{\overline W}(\alpha)) + P(c_{\overline W}(\alpha)>c_{W^*}(\alpha+\omega)),
	\end{align*}
	where $\ominus$ denotes symmetric difference.
	Following the arguments in the proof of Lemma 3.2 of \cite{chernozhukov2013gaussian}, we have that
	$$P(c_{\overline W}(\alpha)>c_{W^*}(\alpha+\pi(u)))\leq P\left(\matrixnorm{\overline\Omega-\widehat\Omega}_{\max}>u\right),\quad\text{and}$$
	$$P(c_{W^*}(\alpha-\pi(u))>c_{\overline W}(\alpha))\leq P\left(\matrixnorm{\overline\Omega-\widehat\Omega}_{\max}>u\right),$$
	where $\pi(u)\defn u^{1/3}\left(1\vee\log(d/u)\right)^{2/3}$ and
	\begin{align}
	\begin{split}
	\overline\Omega&\defn\cov_\epsilon\left(-\frac1{\sqrt{k}}\sum_{j=1}^k \widetilde\Theta \sqrt n\left(\nabla\cL_j(\btheta)-\nabla\cL_N(\btheta)\right)\epsilon_j \right) \\
	&=\widetilde\Theta\left(\frac1k\sum_{j=1}^k n\left(\nabla\cL_j(\btheta)-\nabla\cL_N(\btheta)\right)\left(\nabla\cL_j(\btheta)-\nabla\cL_N(\btheta)\right)^\top\right)\widetilde\Theta^\top. \label{eqn:ob}
	\end{split}
	\end{align}
	By letting $\omega=\pi(u)$, we have that
	\begin{align*}
	&P(\{T\leq c_{\overline W}(\alpha)\}\ominus\{T\leq c_{W^*}(\alpha)\}) \\
	&\leq 2P(c_{W^*}(\alpha-\pi(u))<T\leq c_{W^*}(\alpha+\pi(u))) + P(c_{W^*}(\alpha-\pi(u))>c_{\overline W}(\alpha)) + P(c_{\overline W}(\alpha)>c_{W^*}(\alpha+\pi(u))) \\
	&\leq 2P(c_{W^*}(\alpha-\pi(u))<T\leq c_{W^*}(\alpha+\pi(u))) + 2P\left(\matrixnorm{\overline\Omega-\widehat\Omega}_{\max}>u\right),
	\end{align*}
	where by \eqref{eqn:ws},
	\begin{align*}
	P(c_{W^*}(\alpha-\pi(u))<T\leq c_{W^*}(\alpha+\pi(u))) &=P(T\leq c_{W^*}(\alpha+\pi(u)))-P(T\leq c_{W^*}(\alpha-\pi(u))) \\
	&\lesssim \pi(u)+N^{-c}+\zeta\sqrt{1\vee\log\frac d\zeta}+P\left(|T-T_0|>\zeta\right),
	\end{align*}
	and then,
	\begin{align}
	\sup_{\alpha\in(0,1)}\left|P(T\leq c_{\overline W}(\alpha))-\alpha\right| &\lesssim N^{-c}+v^{1/3}\left(1\vee\log\frac dv\right)^{2/3}+P\left(\matrixnorm{\widehat\Omega-\Omega_0}_{\max}>v\right) \nonumber \\
	&\quad+\zeta\sqrt{1\vee\log\frac d\zeta}+P\left(|T-T_0|>\zeta\right) + u^{1/3}\left(1\vee\log\frac du\right)^{2/3} + P\left(\matrixnorm{\overline\Omega-\widehat\Omega}_{\max}>u\right). \label{eqn:wbbd}
	\end{align}
	
	Applying Lemmas~\ref{lem:tbd_reg},~\ref{lem:gbdo_reg},~and~\ref{lem:gbd0_reg}, we have that there exist some $\zeta,u,v>0$ such that
	\begin{align}
	\zeta\sqrt{1\vee\log\frac d\zeta}&+P\left(|T-T_0|>\zeta\right)=o(1),\quad\text{and} \label{eqn:zeta} \\
	u^{1/3}\left(1\vee\log\frac du\right)^{2/3}&+P\left(\matrixnorm{\overline\Omega-\widehat\Omega}_{\max}>u\right)=o(1),\quad\text{and} \label{eqn:u} \\
	v^{1/3}\left(1\vee\log\frac dv\right)^{2/3}&+P\left(\matrixnorm{\widehat\Omega-\Omega_0}_{\max}>v\right)=o(1), \label{eqn:v}
	\end{align}
	and hence, after simplifying the conditions, obtain the first result in the lemma. To obtain the second result, we use Lemma~\ref{lem:thbd_reg}, which yields
	\begin{align}
	\xi\sqrt{1\vee\log\frac d\xi}+P\left(|\widehat T-T_0|>\xi\right)=o(1). \label{eqn:xi}
	\end{align}
\end{proof}

\begin{lemma}[\texttt{n+k-1-grad}]\label{lem:reg}
	In sparse linear model, under Assumptions~\ref{as:design}~and~\ref{as:noise}, if $n\gg{s^*}^2\log^{3+\kappa} d + {s^*}\log^{5+\kappa} d$, $n+k\gg{s^*}^2\log^{5+\kappa} d$, $nk\gtrsim\log^{7+\kappa}d$, and
	$$\left\|\btheta-\thetas\right\|_1\ll\min\left\{\frac1{ \sqrt{k{s^*}}\log^{1+\kappa}d},\frac1{  {s^*}\sqrt{\log((n+k)d)}\log^{2+\kappa}d}\right\},$$
	for some $\kappa>0$, then we have that
	\begin{align}
	\sup_{\alpha\in(0,1)}\left|P(T\leq c_{\widetilde W}(\alpha))-\alpha\right|&=o(1),\quad\text{and} \label{eqn:lem_wt_t} \\
	\sup_{\alpha\in(0,1)}\left|P(\widehat T\leq c_{\widetilde W}(\alpha))-\alpha\right|&=o(1). \label{eqn:lem_wt_ht}
	\end{align}
\end{lemma}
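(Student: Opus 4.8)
The plan is to follow, almost verbatim, the two–step template of the proof of Lemma~\ref{lem:reg0}, isolating the single place where the \texttt{n+k-1-grad} construction differs from \texttt{k-grad}. First I would reduce the $\ell_\infty$ statistic to a coordinatewise maximum as in Lemma~\ref{lem:reg0} (the observation of \cite{zhang2017simultaneous}), redefining $T=\max_l\sqrt N(\ttheta-\thetas)_l$ and $\widehat T=\max_l\sqrt N(\htheta-\thetas)_l$, and introduce the same oracle multiplier bootstrap statistic $W^*$ of \eqref{eqn:wsdef} together with its Bahadur surrogate $T_0$ of \eqref{eqn:t0}. Step~1 — consistency of $W^*$ for $T$ via the high-dimensional CLT and multiplier bootstrap results of \cite{chernozhukov2013gaussian} (their Theorem~3.2 and Corollary~2.1), after verifying Assumption~(E.1) for $\nabla^2\cLs(\thetas)^{-1}\nabla\cL(\thetas;Z)=\Theta x e$ exactly as in Lemma~\ref{lem:reg0} — does not involve the bootstrap flavour at all, so it carries over unchanged; this is the step that uses $nk\gtrsim\log^{7+\kappa}d$.

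The new content is Step~2, the comparison between the quantiles of $\widetilde W$ and $W^*$. Following the symmetric-difference and anti-concentration argument used for $\overline W$ (proof of Lemma~3.2 of \cite{chernozhukov2013gaussian}), this reduces, exactly as for $\overline W$, to controlling $\matrixnorm{\widetilde\Omega-\widehat\Omega}_{\max}$ with high probability, where $\widetilde\Omega\defn\cov_\epsilon(\widetilde A)$ is the conditional covariance of the \texttt{n+k-1-grad} bootstrap vector $\widetilde A$ of \eqref{eqn:wt} (gradients evaluated at $\btheta=\ttheta^{(\tau-1)}$ and inverse-Hessian plug-in $\tT$), and $\widehat\Omega$ is the oracle conditional covariance \eqref{eqn:oh}. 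I would split $\widetilde\Omega-\widehat\Omega=(\widetilde\Omega-\Omega_0)+(\Omega_0-\widehat\Omega)$; the second term is flavour-free and is handled by Lemma~\ref{lem:gbd0_reg}. For the first term I would prove the \texttt{n+k-1-grad} analogue of Lemma~\ref{lem:gbdo_reg}, i.e.\ the bound announced in \eqref{eqn:nk1grad_err_hd},
$$\matrixnorm{\widetilde\Omega-\Omega_0}_{\max}\lesssim s^*\|\btheta-\thetas\|_1+(n\wedge k)s^*\|\btheta-\thetas\|_1^2+O_P\Big(\sqrt{{s^*}^2/(n+k)}+\sqrt{s^*/n}\Big),$$
up to logarithmic factors in $d,n,k$, by decomposing it into (i) the error of replacing $\thetas$ by $\btheta$ and $\Theta$ by $\tT$ — a KKT/Taylor expansion of $\nabla\cL_j(\btheta)$ combined with row-sparsity of $\tT$ from Assumption~\ref{as:sparse} and the nodewise-Lasso guarantees of \cite{van2014asymptotically} — and (ii) the fluctuation of the $(n+k-1)$-term empirical covariance around $\Omega_0$.

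Having these, I would invoke the flavour-free Lemmas~\ref{lem:tbd_reg} and~\ref{lem:thbd_reg} to bound $|T-T_0|$ and $|\widehat T-T_0|$ (the latter the de-biased-Lasso Bahadur representation, Theorem~2.4 of \cite{van2014asymptotically}), choose $\zeta,u,v$ as in Lemma~\ref{lem:reg0}, and verify that under $n\gg{s^*}^2\log^{3+\kappa}d+{s^*}\log^{5+\kappa}d$, $n+k\gg{s^*}^2\log^{5+\kappa}d$ and the stated bound on $\|\btheta-\thetas\|_1$, every term in the assembled inequality (the analogue of \eqref{eqn:wbbd}) is $o(1)$, which yields \eqref{eqn:lem_wt_t} and \eqref{eqn:lem_wt_ht}.

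I expect the main obstacle to be part (ii) above. Unlike \texttt{k-grad}, where all $k$ summands are machine-level averages and hence well-behaved sub-Gaussian vectors, the \texttt{n+k-1-grad} covariance mixes $k-1$ such averages with $n$ single-observation gradient vectors $\Theta x_{i1}e_{i1}$, which are only sub-exponential; controlling $\matrixnorm{\cdot}_{\max}$ of the resulting $d\times d$ random matrix therefore requires a Bernstein-type maximal inequality over its $d^2$ entries whose price is a $\sqrt{\log((n+k)d)}$ factor — precisely why the hypothesis on $\|\btheta-\thetas\|_1$ is sharper here than in Lemma~\ref{lem:reg0}, and why $n+k$ (not $k$) governs the sampling error. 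A secondary point is to check that the cross terms between the master's single-observation block and the worker block do not inflate the bias contribution beyond $(n\wedge k)s^*\|\btheta-\thetas\|_1^2$; this is what makes \texttt{n+k-1-grad} effective even for small $k$ and, in particular, permits $\tau=1$.
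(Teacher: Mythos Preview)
Your proposal is correct and mirrors the paper's proof essentially step for step: reduce to coordinatewise maxima, borrow the entire Step~1 (oracle $W^*$, $T_0$, verification of (E.1), and the $N\gtrsim\log^{7+\kappa}d$ condition) from Lemma~\ref{lem:reg0}, then in Step~2 replace the \texttt{k-grad} covariance comparison by the \texttt{n+k-1-grad} one, invoking the flavour-free Lemmas~\ref{lem:tbd_reg}, \ref{lem:gbdo_reg}, \ref{lem:thbd_reg} together with the \texttt{n+k-1-grad}-specific Lemma~\ref{lem:gbd_reg} (whose proof carries out exactly the decomposition (i)--(ii) you outline, via Lemma~\ref{lem:vcov_reg}). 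One small clerical slip: you have the labels of Lemmas~\ref{lem:gbd0_reg} and~\ref{lem:gbdo_reg} interchanged---it is \ref{lem:gbdo_reg} that bounds $\matrixnorm{\widehat\Omega-\Omega_0}_{\max}$, while the analogue you need to prove is that of \ref{lem:gbd0_reg}, namely Lemma~\ref{lem:gbd_reg}.
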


\begin{proof}[Lemma \ref{lem:reg}]
	By the argument in the proof of Lemma~\ref{lem:reg0}, we have that
	\begin{align}
	\sup_{\alpha\in(0,1)}\left|P(T\leq c_{\widetilde W}(\alpha))-\alpha\right| &\lesssim N^{-c}+v^{1/3}\left(1\vee\log\frac dv\right)^{2/3}+P\left(\matrixnorm{\widehat\Omega-\Omega_0}_{\max}>v\right) \nonumber \\
	&\quad+\zeta\sqrt{1\vee\log\frac d\zeta}+P\left(|T-T_0|>\zeta\right) + u^{1/3}\left(1\vee\log\frac du\right)^{2/3} + P\left(\matrixnorm{\widetilde\Omega-\widehat\Omega}_{\max}>u\right), \label{eqn:wbbd2}
	\end{align}
	where
	\begin{align}
	\begin{split}
	\widetilde\Omega&\defn\cov_\epsilon\left(-\frac1{\sqrt{n+k-1}}\left(\sum_{i=1}^n\widetilde\Theta\left(\nabla\cL(\btheta;Z_{i1})-\nabla\cL_N(\btheta)\right)\epsilon_{i1}+\sum_{j=2}^k\widetilde\Theta\sqrt n\left(\nabla\cL_j(\btheta)-\nabla\cL_N(\btheta)\right)\epsilon_j\right)\right) \\
	&=\widetilde\Theta\frac1{n+k-1}\Bigg(\sum_{i=1}^n\left(\nabla\cL(\theta;Z_{i1})-\nabla\cL_N(\theta)\right)\left(\nabla\cL(\theta;Z_{i1})-\nabla\cL_N(\theta)\right)^\top \\
	&\quad+\sum_{j=2}^k n\left(\nabla\cL_j(\theta)-\nabla\cL_N(\theta)\right)\left(\nabla\cL_j(\theta)-\nabla\cL_N(\theta)\right)^\top\Bigg)\widetilde\Theta^\top, \label{eqn:ot}
	\end{split}
	\end{align}
	if $N\gtrsim\log^{7+\kappa}d$ for some $\kappa>0$.  Applying Lemmas~\ref{lem:tbd_reg},~\ref{lem:gbdo_reg},~and~\ref{lem:gbd_reg}, we have that there exist some $\zeta,u,v>0$ such that \eqref{eqn:zeta},
	\begin{align}
	u^{1/3}\left(1\vee\log\frac du\right)^{2/3}+P\left(\matrixnorm{\widetilde\Omega-\widehat\Omega}_{\max}>u\right)=o(1), \label{eqn:ut}
	\end{align}
	and \eqref{eqn:v} hold, and hence, after simplifying the conditions, obtain the first result in the lemma. To obtain the second result, we use Lemma~\ref{lem:thbd_reg}, which yields \eqref{eqn:xi}.
\end{proof}

\begin{lemma}[\texttt{k-grad}]\label{lem:reg0_glm}
	In sparse GLM, under Assumptions~\ref{as:smth_glm}--\ref{as:subexp_glm}, if $n\gg  (s_0^2+{s^*}^2)\log^{3+\kappa} d+(s_0+{s^*})\log^{5+\kappa} d$, $k\gg{s^*}^2\log^{5+\kappa} d$, and
	\begin{align*}
	\left\|\btheta-\thetas\right\|_1\ll\min\left\{\frac1{ \sqrt{k{s^*}}s_0\log^{1+\kappa}d},\frac1{ \sqrt{n {s^*}}\log^{1+\kappa}d}\right\},
	\end{align*}
	for some $\kappa>0$, then we have that \eqref{eqn:lem_wb_t} and \eqref{eqn:lem_wb_ht} hold.
\end{lemma}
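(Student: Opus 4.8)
The plan is to mirror the proof of Lemma~\ref{lem:reg0}, replacing the linear-model oracle score $\Theta x e$ by its GLM analogue $\bh=\nabla^2\cLs(\thetas)^{-1}\nabla\cL(\thetas;Z)$ and tracking the extra remainder terms produced by the nonlinearity of $\theta\mapsto\nabla\cL_j(\theta)$. First I would reduce the two $\ell_\infty$ statements to their one-sided versions $T=\max_l\sqrt N(\ttheta-\thetas)_l$ and $\widehat T=\max_l\sqrt N(\htheta-\thetas)_l$, exactly as in the opening of the proof of Lemma~\ref{lem:reg0}, then introduce the oracle multiplier bootstrap statistic $W^*=\max_l\bigl(-N^{-1/2}\sum_{i,j}(\bh_{ij})_l\,\epsilon_{ij}^*\bigr)$ with $\bh_{ij}=\nabla^2\cLs(\thetas)^{-1}\nabla\cL(\thetas;Z_{ij})$ and its Bahadur linearization $T_0=\max_l-\sqrt N\left(\nabla^2\cLs(\thetas)^{-1}\nabla\cL_N(\thetas)\right)_l$. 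The preliminary step is to verify Assumption~(E.1) of \citet{chernozhukov2013gaussian} for $\bh$: the lower bound on $\min_l\Ee[\bh_l^2]$ comes from the eigenvalue conditions in \ref{as:hes_glm}, while the $(2+q)$-th and maximal moment bounds come directly from \ref{as:subexp_glm} (which is precisely why that assumption is imposed, since the sub-Gaussianity argument available in the linear case is not). Granting $N\gtrsim\log^{7+\kappa}d$, Theorem~3.2 and Corollary~2.1 of \citet{chernozhukov2013gaussian}, together with the symmetric-difference quantile-comparison step (Lemma~3.2 of \citet{chernozhukov2013gaussian}, applied verbatim to pass from $c_{W^*}$ to $c_{\overline W}$), reduce both claims to producing, for suitable $\zeta,u,v,\xi>0$, bounds on $|T-T_0|$, $\matrixnorm{\overline\Omega-\widehat\Omega}_{\max}$, $\matrixnorm{\widehat\Omega-\Omega_0}_{\max}$ and $|\widehat T-T_0|$ that make the right-hand side of the master inequality \eqref{eqn:wbbd} an $o(1)$; here $\widehat\Omega,\Omega_0,\overline\Omega$ are as in \eqref{eqn:oh}--\eqref{eqn:ob}, now with the GLM gradients.

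The real work is in these four error-term bounds, for which I would prove GLM counterparts of the technical lemmas used in the proof of Lemma~\ref{lem:reg0}, and it is here that the stronger hypotheses $n\gg(s_0^2+{s^*}^2)\log^{3+\kappa}d+(s_0+{s^*})\log^{5+\kappa}d$ and $\|\btheta-\thetas\|_1\ll1/(\sqrt{k{s^*}}s_0\log^{1+\kappa}d)$ enter. Three ingredients are new relative to the linear model. First, the approximate inverse Hessian $\tT$ is now the nodewise Lasso of the weighted master design $X_1^*$ at $\ttheta^{(0)}$, so I would establish $\matrixnorm{\tT\nabla^2\cL_1(\ttheta^{(0)})-\mathrm I}_{\max}=O_P(\sqrt{\log d/n})$ and $\matrixnorm{\tT-\Theta}_{\max}=O_P(s^*\sqrt{\log d/n})$ by invoking the nodewise-Lasso analysis of \citet{van2014asymptotically}, using \ref{as:design_glm} to control the weights $g''(\cdot)^{1/2}$ and the population projections $\varphi^*_l$, and using \ref{as:smth_glm} (Lipschitz continuity of $g''$) to pass from $\nabla^2\cL_1(\ttheta^{(0)})$ to $\nabla^2\cLs(\thetas)$. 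Second, the de-biasing identity becomes $\ttheta-\thetas=-\Theta\nabla\cL_N(\thetas)+(\Theta-\tT)\nabla\cL_N(\thetas)+(\mathrm I-\tT\nabla^2\cL_N(\thetas))(\btheta-\thetas)-\tT R$, where $R$ is the second-order Taylor remainder of $\nabla\cL_N$ about $\thetas$; the bound $\|R\|_\infty\lesssim\|\btheta-\thetas\|_1^2$ follows from the Lipschitz control on $g''$ (or the bound on $g'''$) and $\|x\|_\infty=O(1)$ in \ref{as:smth_glm}--\ref{as:design_glm}, and multiplying by $\tT$ (whose $\ell_1\to\ell_\infty$ action incurs a sparsity factor through \ref{as:glmsparse}) is what produces the extra $s_0$ in the condition on $\|\btheta-\thetas\|_1$; combined with the CSL rate \eqref{eqn:csl} (Theorem~6 of \citet{wang2017efficient}, invoked one layer up in the proof of Theorem~\ref{theo:reg0_glm_csl}) and the bounds on $\tT$, this controls $\sqrt N\,\|(\ttheta-\thetas)+\Theta\nabla\cL_N(\thetas)\|_\infty$, hence $|T-T_0|$. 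Third, for the covariance terms I would expand $\nabla\cL_j(\btheta)-\nabla\cL_N(\btheta)$ about $\thetas$ and control the cross terms, obtaining a GLM version of \eqref{eqn:kgrad_err_hd} for $\matrixnorm{\overline\Omega-\widehat\Omega}_{\max}$ (carrying additional sparsity factors from the Taylor expansion), while $\matrixnorm{\widehat\Omega-\Omega_0}_{\max}=O_P(s^*\sqrt{\log d/N})$ follows from \ref{as:hes_glm} and a matrix Bernstein inequality; the bound on $|\widehat T-T_0|$ is the GLM analogue of Lemma~\ref{lem:thbd_reg}, additionally using Theorem~2.4 of \citet{van2014asymptotically} to compare $\widehat T$ with $\|\widehat\Theta\sqrt N\nabla\cL_N(\thetas)\|_\infty$.

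Finally I would substitute these bounds into \eqref{eqn:wbbd}, choose $\zeta,u,v,\xi$ at the rates dictated by the error bounds, and verify that the stated lower bounds on $n$ and $k$ and the upper bound on $\|\btheta-\thetas\|_1$ make every bracketed term $o(1)$, which gives both \eqref{eqn:lem_wb_t} and \eqref{eqn:lem_wb_ht}. I expect the main obstacle to be the third ingredient: obtaining the covariance-perturbation bound with the correct powers of $s^*$, $n$ and $k$ requires a uniform-over-$l$ concentration for quantities like $n^{-1}\sum_i\big(g''(y_{i1},x_{i1}^\top\thetas)-g''(y_{i1},x_{i1}^\top\btheta)\big)(x_{i1})_l(x_{i1})_{l'}$, which leans on both the Lipschitz smoothness in \ref{as:smth_glm} and the maximal-moment condition in \ref{as:subexp_glm} and has no counterpart in the linear proof, where the population Hessian is deterministic and the score is exactly bilinear. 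A careful $\ell_1$/$\ell_\infty$ Hölder accounting in the expansions of $T-T_0$ and $\overline\Omega-\widehat\Omega$ is what pins down the precise exponents in the hypotheses; the nodewise-Lasso analysis on the data-dependent weighted design $X_1^*$ is a secondary difficulty of the same flavor.
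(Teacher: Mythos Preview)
Your proposal is correct and follows essentially the same route as the paper: verify Assumption~(E.1) of \citet{chernozhukov2013gaussian} for $\bh$ via \ref{as:hes_glm} and \ref{as:subexp_glm}, reproduce the master inequality~\eqref{eqn:wbbd} with the GLM score, and then dispatch the four error terms by the GLM analogues of the linear-model technical lemmas (the paper packages these as Lemmas~\ref{lem:tbd_reg_glm}, \ref{lem:gbd0_reg_glm}, \ref{lem:gbdo_reg_glm} and \ref{lem:thbd_reg_glm}).

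One correction on where the extra $s_0$ actually originates: it does \emph{not} come from $\tT R$. That term contributes $O_P(\sqrt{s^*}\,r_{\btheta}^2)$, the sparsity factor being $s^*$ from $\matrixnorm{\tT}_\infty=O_P(\sqrt{s^*})$, not $s_0$. The $s_0$ enters because the nodewise Lasso is run at the \emph{initial} local Lasso $\ttheta^{(0)}$: the Lipschitz bound on $g''$ gives $\matrixnorm{\nabla^2\cL_1(\ttheta^{(0)})-\nabla^2\cL_1(\thetas)}_{\max}\lesssim\|\ttheta^{(0)}-\thetas\|_1=O_P\!\big(s_0\sqrt{\log d/n}\big)$, and this feeds into $\matrixnorm{\tT(\ttheta^{(0)})\nabla^2\cL_N(\cdot)-I}_{\max}=O_P\!\big(\sqrt{s^*}(r_{\btheta}+s_0\sqrt{\log d/n})\big)$; multiplying by $\|\btheta-\thetas\|_1$ and $\sqrt N$ yields the $\sqrt{ks^*}\,s_0\sqrt{\log d}\cdot r_{\btheta}$ term that forces the first bound on $r_{\btheta}$. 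A second minor point: $\widehat\Omega$ in \eqref{eqn:oh} is built with the \emph{population} inverse Hessian $\nabla^2\cLs(\thetas)^{-1}$, so $\matrixnorm{\widehat\Omega-\Omega_0}_{\max}$ follows directly from the proof of Corollary~3.1 of \citet{chernozhukov2013gaussian} with no $s^*$ factor; the sparsity factors appear only in $\overline\Omega-\Omega_0$, where $\tT(\ttheta^{(0)})$ is used.
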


\begin{proof}[Lemma \ref{lem:reg0_glm}]
    We redefine $T$ and $\widehat T$ as \eqref{eqn:tp} and \eqref{eqn:thp}.  We define an oracle multiplier bootstrap statistic as in \eqref{eqn:wsdef}.  Under Assumption~\ref{as:hes_glm},
	\begin{align*}
	\min_l\Ee\left[\left(\nabla^2\cLs(\thetas)^{-1}\nabla\cL(\thetas;Z)\right)_l^2\right]&=\min_l \left(\nabla^2\cLs(\thetas)^{-1}\Ee\left[\nabla\cL(\thetas;Z)\nabla\cL(\thetas;Z)^\top\right]\nabla^2\cLs(\thetas)^{-1}\right)_{l,l} \\
	&\geq\lambdamin\left(\nabla^2\cLs(\thetas)^{-1}\Ee\left[\nabla\cL(\thetas;Z)\nabla\cL(\thetas;Z)^\top\right]\nabla^2\cLs(\thetas)^{-1}\right) \\
	&\geq\lambdamin\left(\nabla^2\cLs(\thetas)^{-1}\right)^2\lambdamin\left(\Ee\left[\nabla\cL(\thetas;Z)\nabla\cL(\thetas;Z)^\top\right]\right) \\
	&=\frac{\lambdamin\left(\Ee\left[\nabla\cL(\thetas;Z)\nabla\cL(\thetas;Z)^\top\right]\right)}{\lambdamax\left(\nabla^2\cLs(\thetas)\right)^2}
	\end{align*}
	is bounded away from zero.  Combining this with Assumption~\ref{as:subexp_glm}, we have verified Assumption~(E.1) of \cite{chernozhukov2013gaussian} for $\nabla^2\cLs(\thetas)^{-1}\nabla\cL(\thetas;Z)$.  Then, we use the same argument as in the proof of Lemma~\ref{lem:reg0}, and obtain \eqref{eqn:wbbd} with
	\begin{align}
	\begin{split}
	\overline\Omega&\defn\tT(\ttheta^{(0)})\left(\frac1k\sum_{j=1}^k n\left(\nabla\cL_j(\btheta)-\nabla\cL_N(\btheta)\right)\left(\nabla\cL_j(\btheta)-\nabla\cL_N(\btheta)\right)^\top\right)\tT(\ttheta^{(0)})^\top, \label{eqn:ob_glm}
	\end{split}
	\end{align}
	under the condition $\log^7(dN)/N\lesssim N^{-c}$ for some constant $c>0$, which holds if $N\gtrsim\log^{7+\kappa}d$ for some $\kappa>0$.  Applying Lemmas~\ref{lem:tbd_reg_glm},~\ref{lem:gbdo_reg_glm},~and~\ref{lem:gbd0_reg_glm}, we have that there exist some $\zeta,u,v>0$ such that \eqref{eqn:zeta}, \eqref{eqn:u}, and \eqref{eqn:v} hold, and hence, after simplifying the conditions, obtain the first result in the lemma. To obtain the second result, we use Lemma~\ref{lem:thbd_reg_glm}, which yields \eqref{eqn:xi}.
\end{proof}

\begin{lemma}[\texttt{n+k-1-grad}]\label{lem:reg_glm}
	In sparse GLM, under Assumptions~\ref{as:smth_glm}--\ref{as:subexp_glm}, if $n\gg  (s_0+{s^*})\log^{5+\kappa} d+(s_0^2+{s^*}^2)\log^{3+\kappa} d$, $n+k\gg{s^*}^2\log^{5+\kappa} d$, $nk\gtrsim\log^{7+\kappa}d$, and
	\begin{align*}
	\left\|\btheta-\thetas\right\|_1&\ll\min\Bigg\{\frac{n+k}{  {s^*}\left(n+k\sqrt{\log d}+k^{3/4}\log^{3/4}d\right)\log^{2+\kappa}d},\frac1{ \sqrt{k{s^*}}s_0\log^{1+\kappa}d}, \frac1{\left(nk{s^*}\log^{1+\kappa}d\right)^{1/4}}\Bigg\},
	\end{align*}
	for some $\kappa>0$, then we have that \eqref{eqn:lem_wt_t} and \eqref{eqn:lem_wt_ht} hold.
\end{lemma}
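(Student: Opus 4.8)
The plan is to mirror the proof of Lemma~\ref{lem:reg} line by line, substituting the GLM ingredients established in the proof of Lemma~\ref{lem:reg0_glm} for the linear-model ones. First I would redefine $T$ and $\widehat T$ as the coordinatewise maxima $\max_l\sqrt N(\ttheta-\thetas)_l$ and $\max_l\sqrt N(\htheta-\thetas)_l$, exactly as at the start of the proof of Lemma~\ref{lem:reg0}, and introduce the oracle multiplier bootstrap statistic $W^*$ from \eqref{eqn:wsdef}. Under Assumptions~\ref{as:hes_glm}~and~\ref{as:subexp_glm}, the verification of Assumption~(E.1) of \citet{chernozhukov2013gaussian} for $\nabla^2\cLs(\thetas)^{-1}\nabla\cL(\thetas;Z)$ is identical to the one in the proof of Lemma~\ref{lem:reg0_glm}: the coordinatewise variances are bounded below by $\lambdamin(\Ee[\nabla\cL(\thetas;Z)\nabla\cL(\thetas;Z)^\top])/\lambdamax(\nabla^2\cLs(\thetas))^2$, and the required sub-exponential/moment bound is precisely \ref{as:subexp_glm}. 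Combining the Gaussian-comparison and anti-concentration bounds of \citet{chernozhukov2013gaussian} with the quantile-closeness argument (their Lemma~3.2), I arrive at the same decomposition as \eqref{eqn:wbbd2}, but with the \texttt{k-grad} covariance replaced by the \texttt{n+k-1-grad} conditional covariance $\widetilde\Omega$ of \eqref{eqn:ot} (now formed from the GLM scores $\nabla\cL(\btheta;\cdot)$ and the nodewise-Lasso estimate $\tT=\tT(\ttheta^{(0)})$); this step needs $nk\gtrsim\log^{7+\kappa}d$ so that $N\gtrsim\log^{7+\kappa}d$.

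It then remains to show that the three stochastic error terms in that decomposition are each $o(1)$ under the stated sample-size and $\|\btheta-\thetas\|_1$ conditions. The term $\zeta\sqrt{1\vee\log(d/\zeta)}+P(|T-T_0|>\zeta)$ is handled by the GLM Bahadur-representation lemma \ref{lem:tbd_reg_glm} --- where, relative to the linear case, the remainder of the de-biasing expansion acquires the extra quadratic contribution $s_0\|\btheta-\thetas\|_1^2$ visible in \eqref{eqn:csl} --- and the sample-covariance term $v^{1/3}(1\vee\log(d/v))^{2/3}+P(\matrixnorm{\widehat\Omega-\Omega_0}_{\max}>v)$ is handled by the concentration lemma \ref{lem:gbdo_reg_glm}; both of these are shared verbatim with the \texttt{k-grad} GLM proof. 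The genuinely new ingredient is the bound on $\matrixnorm{\widetilde\Omega-\widehat\Omega}_{\max}$, for which I would invoke the \texttt{n+k-1-grad} GLM analogue of Lemma~\ref{lem:gbd_reg}; here one must control, simultaneously, (i) the inverse-Hessian estimation error $\tT-\Theta=O_P(s^*\sqrt{\log d/N})$ in $\ell_\infty$-norm, (ii) the replacement of $\nabla\cL(\thetas;\cdot)$ by $\nabla\cL(\btheta;\cdot)$, whose cost for GLMs is $s^*\|\btheta-\thetas\|_1+(n\wedge k)s^*\|\btheta-\thetas\|_1^2$ up to logarithmic factors as in \eqref{eqn:nk1grad_err_hd}, and (iii) the heterogeneous averaging of the $n$ single master observations against the $k-1$ machine gradients, whose fluctuation is $O_P(s^*\sqrt{\log d/(n+k)})$ together with the higher-order Gaussian-approximation correction that produces the $n+k\sqrt{\log d}+k^{3/4}\log^{3/4}d$ denominator in the first term of the $\|\btheta-\thetas\|_1$ hypothesis. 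After simplifying the union of all these requirements one recovers the three-way minimum in the statement. Finally, \eqref{eqn:lem_wt_ht} follows by the same chain, replacing \ref{lem:tbd_reg_glm} by the oracle-$\widehat T$ Bahadur lemma \ref{lem:thbd_reg_glm}, which gives $\xi\sqrt{1\vee\log(d/\xi)}+P(|\widehat T-T_0|>\xi)=o(1)$.

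I expect the hard part to be ingredients (ii)--(iii) of the $\matrixnorm{\widetilde\Omega-\widehat\Omega}_{\max}$ bound. Unlike the linear case, the GLM score is not of the product form $\Theta x e$, so linearizing $\nabla\cL(\btheta;Z_{i1})$ around $\thetas$ leaves a Hessian-difference remainder governed by the Lipschitz constant of $g''$ (Assumption~\ref{as:smth_glm}); propagating that remainder through a sum of rank-one terms over the $n$ master observations --- where each observation carries weight $1$ rather than $1/n$, exactly the feature that distinguishes \texttt{n+k-1-grad} from \texttt{k-grad} --- is what forces the $\sqrt{\log((n+k)d)}$-type factors and hence the more delicate hypothesis. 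Reconciling this with the CSL convergence rate \eqref{eqn:csl}, whose quadratic term stays active until iteration $\tau_0$ of \eqref{eqn:tau0}, is ultimately what drives the requirement $\gamma_n>5\gamma_s$ and the two-regime formula for $\tau_{\min}$ recorded in Theorem~\ref{theo:reg_glm_csl}.
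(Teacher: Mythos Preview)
Your proposal is correct and follows essentially the same route as the paper's proof: obtain the decomposition \eqref{eqn:wbbd2} with $\widetilde\Omega$ given by \eqref{eqn:ot_glm_0} via the argument of Lemma~\ref{lem:reg0_glm}, and then control the three error terms by invoking Lemmas~\ref{lem:tbd_reg_glm}, \ref{lem:gbdo_reg_glm}, and \ref{lem:gbd_reg_glm} (the last being precisely your ``\texttt{n+k-1-grad} GLM analogue of Lemma~\ref{lem:gbd_reg}''), with Lemma~\ref{lem:thbd_reg_glm} giving the $\widehat T$ statement. One small slip: the inverse-Hessian error is $\matrixnorm{\tT(\ttheta^{(0)})-\Theta}_\infty=O_P\big((s_0+s^*)\sqrt{\log d/n}\big)$, not $\sqrt{\log d/N}$, since $\tT$ is computed on the master node only (Lemma~\ref{lem:hes_hd_glm}); this does not affect your argument.
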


\begin{proof}[Lemma \ref{lem:reg_glm}]
	By the argument in the proof of Lemma~\ref{lem:reg0_glm}, we obtain \eqref{eqn:wbbd2} with
	\begin{align}
	\begin{split}
	\widetilde\Omega&\defn\tT(\ttheta^{(0)})\frac1{n+k-1}\Bigg(\sum_{i=1}^n\left(\nabla\cL(\theta;Z_{i1})-\nabla\cL_N(\theta)\right)\left(\nabla\cL(\theta;Z_{i1})-\nabla\cL_N(\theta)\right)^\top \\
	&\quad+\sum_{j=2}^k n\left(\nabla\cL_j(\theta)-\nabla\cL_N(\theta)\right)\left(\nabla\cL_j(\theta)-\nabla\cL_N(\theta)\right)^\top\Bigg)\tT(\ttheta^{(0)})^\top, \label{eqn:ot_glm_0}
	\end{split}
	\end{align}
	if $N\gtrsim\log^{7+\kappa}d$ for some $\kappa>0$.  Applying Lemmas~\ref{lem:tbd_reg_glm},~\ref{lem:gbdo_reg_glm},~and~\ref{lem:gbd_reg_glm}, we have that there exist some $\zeta,u,v>0$ such that \eqref{eqn:zeta}, \eqref{eqn:ut}, and \eqref{eqn:v} hold, and hence, after simplifying the conditions, obtain the first result in the lemma. To obtain the second result, we use Lemma~\ref{lem:thbd_reg_glm}, which yields \eqref{eqn:xi}.
\end{proof}

\begin{lemma}\label{lem:tbd_reg}
	$T$ and $T_0$ are defined as in \eqref{eqn:t} and \eqref{eqn:t0} respectively.  In sparse linear model, under Assumptions~\ref{as:design}~and~\ref{as:noise}, provided that $\left\|\btheta-\thetas\right\|_1=O_P(r_{\btheta})$ and $n\gg  {s^*} \log d$, we have that
	$$|T-T_0| = O_P\left(  r_{\btheta} \sqrt{{s^*} k\log d} + \frac{{s^*}\log d}{\sqrt n}\right).$$
	Moreover, if $n\gg{s^*}^2\log^{3+\kappa} d$ and
	$$\left\|\btheta-\thetas\right\|_1\ll\frac1{ \sqrt{k{s^*}}\log^{1+\kappa}d},$$
	for some $\kappa>0$, then there exists some $\zeta>0$ such that \eqref{eqn:zeta} holds.
\end{lemma}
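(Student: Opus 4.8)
The plan is to control $|T-T_0|$ by the sup-norm distance of the two vectors whose coordinatewise maxima define them, and then exploit the exact linearization available for the least-squares loss. First I would use the closed form of the de-biasing step (Line~\ref{line:db} of Algorithm~\ref{alg:hd}), $\ttheta=\btheta-\tT\nabla\cL_N(\btheta)$, together with the fact that in the linear model the empirical Hessian $\widehat\Sigma:=\nabla^2\cL_N=N^{-1}\sum_{i,j}x_{ij}x_{ij}^\top$ is constant, so that $\nabla\cL_N(\btheta)=\nabla\cL_N(\thetas)+\widehat\Sigma(\btheta-\thetas)$ and hence
$$\ttheta-\thetas=(I-\tT\widehat\Sigma)(\btheta-\thetas)-\tT\nabla\cL_N(\thetas).$$
Since $T=\max_l\sqrt N(\ttheta-\thetas)_l$, $T_0=\max_l\big(-\sqrt N\,\Theta\nabla\cL_N(\thetas)\big)_l$ and $\Theta=\Sigma^{-1}=\nabla^2\cLs(\thetas)^{-1}$, the $1$-Lipschitz property of the maximum with respect to $\|\cdot\|_\infty$ gives
$$|T-T_0|\le\sqrt N\big\|(\ttheta-\thetas)+\Theta\nabla\cL_N(\thetas)\big\|_\infty=\sqrt N\big\|(I-\tT\widehat\Sigma)(\btheta-\thetas)+(\Theta-\tT)\nabla\cL_N(\thetas)\big\|_\infty.$$

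For the first summand I would bound $\|(I-\tT\widehat\Sigma)(\btheta-\thetas)\|_\infty\le\matrixnorm{I-\tT\widehat\Sigma}_{\max}\|\btheta-\thetas\|_1$ and show $\matrixnorm{I-\tT\widehat\Sigma}_{\max}=O_P(\sqrt{s^*\log d/n})$. To do so, split $I-\tT\widehat\Sigma=(I-\tT\widehat\Sigma_1)+\tT(\widehat\Sigma_1-\widehat\Sigma)$ with $\widehat\Sigma_1=n^{-1}\sum_i x_{i1}x_{i1}^\top$ the master-node sample Hessian: the nodewise-Lasso KKT conditions (Section~3.1.1 of \citet{van2014asymptotically}) give $\matrixnorm{I-\tT\widehat\Sigma_1}_{\max}=O_P(\max_l\lambda_l)=O_P(\sqrt{\log d/n})$, while $\matrixnorm{\tT(\widehat\Sigma_1-\widehat\Sigma)}_{\max}\le\max_l\|\tT_{l,\cdot}\|_1\,\matrixnorm{\widehat\Sigma_1-\widehat\Sigma}_{\max}=O_P(\sqrt{s^*}\cdot\sqrt{\log d/n})$, using $\max_l\|\tT_{l,\cdot}\|_1=O_P(\sqrt{s^*})$ and the standard max-norm concentration of the sample covariances around $\Sigma$. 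Multiplying by $\sqrt N=\sqrt{nk}$ and $\|\btheta-\thetas\|_1=O_P(r_\btheta)$ yields $O_P(r_\btheta\sqrt{s^*k\log d})$. For the second summand, $\|(\Theta-\tT)\nabla\cL_N(\thetas)\|_\infty\le\max_l\|\Theta_{l,\cdot}-\tT_{l,\cdot}\|_1\,\|\nabla\cL_N(\thetas)\|_\infty$; the $O(s^*)$-sparsity of $\Theta_{l,\cdot}$ and $\tT_{l,\cdot}$ together with the $\ell_2$-rate $\|\Theta_{l,\cdot}-\tT_{l,\cdot}\|_2=O_P(\sqrt{s^*\log d/n})$ give $\max_l\|\Theta_{l,\cdot}-\tT_{l,\cdot}\|_1=O_P(s^*\sqrt{\log d/n})$, and $\|\nabla\cL_N(\thetas)\|_\infty=\|N^{-1}\sum_{i,j}x_{ij}e_{ij}\|_\infty=O_P(\sqrt{\log d/N})$ by a union bound over its sub-exponential coordinates under \ref{as:design}--\ref{as:noise}; multiplying by $\sqrt N$ gives $O_P(s^*\log d/\sqrt n)$. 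Adding the two summands proves the first claim.

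For the ``moreover'' part I would substitute the stated conditions. Under $\|\btheta-\thetas\|_1\ll(\sqrt{ks^*}\log^{1+\kappa}d)^{-1}$ the first term is $\ll(\log d)^{-1/2-\kappa}$, and under $n\gg{s^*}^2\log^{3+\kappa}d$ the second is $\ll(\log d)^{-(1+\kappa)/2}$, so $|T-T_0|=O_P(a_d)$ for some $a_d=o((\log d)^{-1/2})$. Taking $\zeta=\zeta_d:=a_d^{1/2}(\log d)^{-1/4}$ then forces $a_d/\zeta_d\to0$, hence $P(|T-T_0|>\zeta_d)\to0$, and $\zeta_d\sqrt{1\vee\log(d/\zeta_d)}=O(\zeta_d\sqrt{\log d})=O(a_d^{1/2}(\log d)^{1/4})\to0$, which is exactly \eqref{eqn:zeta}.

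The hard part will be the control of the nodewise-Lasso estimator $\tT$: its KKT optimality is with respect to the \emph{local} Hessian $\widehat\Sigma_1$ while the de-biasing estimating equation involves the \emph{global} $\widehat\Sigma$, so establishing $\matrixnorm{I-\tT\widehat\Sigma}_{\max}=O_P(\sqrt{s^*\log d/n})$ and the $\ell_1$/$\ell_2$ rates for $\tT$ requires a compatibility/restricted-eigenvalue condition on $\widehat\Sigma_1$, which holds with high probability for the sub-Gaussian design when $n\gtrsim s^*\log d$ (by \citet{RZ13}, as noted after Assumption~\ref{as:sparse}), combined with an adaptation of the analysis of \citet{van2014asymptotically} to the distributed setting. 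The remaining ingredients --- the linearization, the $\max$-Lipschitz reduction, and the choice of $\zeta$ --- are routine.
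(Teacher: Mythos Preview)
Your proposal is correct and follows essentially the same route as the paper: the same $1$-Lipschitz reduction of $|T-T_0|$ to $\sqrt N\|(\ttheta-\thetas)+\Theta\nabla\cL_N(\thetas)\|_\infty$, the same exact linearization for least squares, the same two-term split $(I-\tT\widehat\Sigma)(\btheta-\thetas)+(\Theta-\tT)\nabla\cL_N(\thetas)$, and the same local/global Hessian decomposition $I-\tT\widehat\Sigma=(I-\tT\widehat\Sigma_1)+\tT(\widehat\Sigma_1-\widehat\Sigma)$ combined with the nodewise-Lasso rates of \citet{van2014asymptotically}. The only cosmetic difference is in the ``moreover'' part: the paper takes $\zeta=(r_{\btheta}\sqrt{s^*k\log d}+s^*\log d/\sqrt n)^{1-\kappa}$, whereas you pick $\zeta_d=a_d^{1/2}(\log d)^{-1/4}$; both choices yield \eqref{eqn:zeta} under the stated conditions.
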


\begin{proof}[Lemma \ref{lem:tbd_reg}]
	First, we note that
	\begin{align*}
	|T-T_0| &\leq \max_{1\leq l\leq d} \left|\sqrt{N}(\ttheta-\thetas)_l+\sqrt{N}\left(\nabla^2\cLs(\thetas)^{-1}\nabla\cL_N(\thetas)\right)_l\right| \\
	&= \sqrt{N}\left\|\ttheta-\thetas+\nabla^2\cLs(\thetas)^{-1}\nabla\cL_N(\thetas)\right\|_\infty,
	\end{align*}
	where we use the fact that $|\max_l a_l - \max_l b_l|\leq\max_l|a_l-b_l|$ for any two vectors $a$ and $b$ of the same dimension.  Next, we bound $\left\|\ttheta-\thetas+\nabla^2\cLs(\thetas)^{-1}\nabla\cL_N(\thetas)\right\|_\infty$.  In linear model, we have that
	$$\ttheta-\thetas+\nabla^2\cLs(\thetas)^{-1}\nabla\cL_N(\thetas)=\btheta+\tT\frac{X_N^\top (y_N-X_N\btheta)}N-\thetas-\Theta\frac{X_N^\top (y_N-X_N\thetas)}N,$$
	and then,
	\begin{align*}
	&\left\|\ttheta-\thetas+\nabla^2\cLs(\thetas)^{-1}\nabla\cL_N(\thetas)\right\|_\infty \\
	&= \left\|\btheta+\tT\frac{X_N^\top (y_N-X_N\btheta)}N-\thetas-\Theta\frac{X_N^\top (y_N-X_N\thetas)}N\right\|_\infty \\
	&= \left\|\btheta+\tT\frac{X_N^\top (y_N-X_N\btheta)}N-\thetas-\tT\frac{X_N^\top (y_N-X_N\thetas)}N+\tT\frac{X_N^\top (y_N-X_N\thetas)}N-\Theta\frac{X_N^\top (y_N-X_N\thetas)}N\right\|_\infty \\
	&\leq \left\|\left(\tT\frac{X_N^\top X_N}N-I_d\right)(\btheta-\thetas)\right\|_\infty + \left\|\left(\tT-\Theta\right)\frac{X_N^\top e_N}N\right\|_\infty \\
	&\leq \matrixnorm{\tT\frac{X_N^\top X_N}N-I_d}_{\max} \left\|\btheta-\thetas\right\|_1 + \matrixnorm{\tT-\Theta}_\infty \left\|\frac{X_N^\top e_N}N\right\|_\infty,
	\end{align*}
	where we use the triangle inequality in the second to last inequality and the fact that for any matrix $A$ and vector $a$ with compatible dimensions, $\|Aa\|_\infty\leq\matrixnorm{A}_{\max}\|a\|_1$ and $\|Aa\|_\infty\leq\matrixnorm{A}_\infty\|a\|_\infty$, in the last inequality.  Further applying the triangle inequality and the fact that for any two matrices $A$ and $B$ with compatible dimensions, $\matrixnorm{AB}_{\max}\leq\matrixnorm{A}_\infty\matrixnorm{B}_{\max}$, we have that
	\begin{align*}
	\matrixnorm{\tT\frac{X_N^\top X_N}N-I_d}_{\max} &=\matrixnorm{\tT\frac{X_N^\top X_N}N-\tT\frac{X_1^\top X_1}n+\tT\frac{X_1^\top X_1}n-I_d}_{\max} \\
	&\leq\matrixnorm{\tT\left(\frac{X_N^\top X_N}N-\frac{X_1^\top X_1}n\right)}_{\max}+\matrixnorm{\tT\frac{X_1^\top X_1}n-I_d}_{\max} \\
	&\leq\matrixnorm{\tT}_\infty \matrixnorm{\frac{X_N^\top X_N}N-\frac{X_1^\top X_1}n}_{\max}+\matrixnorm{\tT\frac{X_1^\top X_1}n-I_d}_{\max}.
	\end{align*}
	Under Assumption~\ref{as:design}, $X_N$ has sub-Gaussian rows.  Then, by Lemma~\ref{lem:hes_hd}, if $n\gg  {s^*} \log d$, we have that
	$$\matrixnorm{\tT}_\infty=\max_l \left\|\tT_l\right\|_1=O_P\left( \sqrt{{s^*}}\right),$$
	$$\matrixnorm{\tT\frac{X_1^\top X_1}n-I_d}_{\max} =O_P\left(\sqrt{\frac{\log d}n}\right),$$
	and
	$$\matrixnorm{\tT-\Theta}_\infty=\max_l\left\|\tT_l-\Theta_l\right\|_1=O_P\left(  {s^*}\sqrt{\frac{\log d}n}\right).$$
	It remains to bound $\matrixnorm{\frac{X_N^\top X_N}N-\frac{X_1^\top X_1}n}_{\max}$ and $\left\|\frac{X_N^\top e_N}N\right\|_\infty$.
	
	Under Assumptions~\ref{as:design}, each $x_{ij,l}$ is sub-Gaussian, and therefore, the product $x_{ij,l}x_{ij,l'}$ of any two is sub-exponential.  By Bernstein's inequality, we have that for any $t>0$,
	$$P\left(\left|\frac{(X_N^\top X_N)_{l,l'}}N-\Sigma_{l,l'}\right| > t\right)\leq 2\exp\left(-cN\left(\frac{t^2}{\Sigma_{l,l'}^2}\wedge\frac t{|\Sigma_{l,l'}|}\right)\right),$$
	or for any $\delta\in(0,1)$,
	$$P\left(\left|\frac{(X_N^\top X_N)_{l,l'}}N-\Sigma_{l,l'}\right| > |\Sigma_{l,l'}|\left(\frac{\log\frac{2d^2}\delta}{cN}\vee\sqrt{\frac{\log\frac{2d^2}\delta}{cN}}\right)\right)\leq \frac\delta{d^2},$$
	for some constant $c>0$.  Then, by the union bound, we have that
	\begin{align}
	P\left(\matrixnorm{\frac{X_N^\top X_N}N-\Sigma}_{\max} > \matrixnorm{\Sigma}_{\max}\left(\frac{\log\frac{2d^2}\delta}{cN}\vee\sqrt{\frac{\log\frac{2d^2}\delta}{cN}}\right)\right)\leq \delta. \label{eqn:bern1}
	\end{align}
	Similarly, we have that
	\begin{align}
	P\left(\matrixnorm{\frac{X_1^\top X_1}n-\Sigma}_{\max} > \matrixnorm{\Sigma}_{\max}\left(\frac{\log\frac{2d^2}\delta}{cn}\vee\sqrt{\frac{\log\frac{2d^2}\delta}{cn}}\right)\right)\leq \delta. \label{eqn:bern3}
	\end{align}
	Then, by the triangle inequality, we have that
	\begin{align*}
	\matrixnorm{\frac{X_N^\top X_N}N-\frac{X_1^\top X_1}n}_{\max} &\leq \matrixnorm{\frac{X_1^\top X_1}n-\Sigma}_{\max} + \matrixnorm{\frac{X_N^\top X_N}N-\Sigma}_{\max} \\
	&\lesssim \matrixnorm{\Sigma}_{\max}\left(\frac{\log\frac{2d^2}\delta}{n}\vee\sqrt{\frac{\log\frac{2d^2}\delta}{n}}\right) \\
	&\lesssim \left(\frac{\log\frac{2d^2}\delta}{n}\vee\sqrt{\frac{\log\frac{2d^2}\delta}{n}}\right),
	\end{align*}
	with probability at least $1-\delta$, where we use $\matrixnorm{\Sigma}_{\max} \leq \matrixnorm{\Sigma}_2 = \lambdamax(\Sigma) = O(1)$ under Assumption~\ref{as:design}.  This implies that
	$$\matrixnorm{\frac{X_N^\top X_N}N-\frac{X_1^\top X_1}n}_{\max} = O_P\left(\sqrt{\frac{\log d}n}\right).$$
	Under Assumptions~\ref{as:design}~and~\ref{as:noise}, each $x_{ij,l}$ and $e_{ij}$ are sub-Gaussian, and therefore, their product $x_{ij,l}e_{ij}$ is sub-exponential.  Applying Bernstein's inequality, we have that for any $\delta\in(0,1)$,
	$$P\left(\left|\frac{(X_N^\top e_N)_l}N\right| > \sqrt{\Sigma_{l,l}}\sigma\left(\frac{\log\frac{2d}\delta}{cN}\vee\sqrt{\frac{\log\frac{2d}\delta}{cN}}\right)\right)\leq \frac\delta d,$$
	for some constant $c>0$.  Then, by the union bound, we have that
	\begin{align}
	P\left(\left\|\frac{X_N^\top e_N}N\right\|_\infty > \max_l \sqrt{\Sigma_{l,l}}\sigma\left(\frac{\log\frac{2d}\delta}{cN}\vee\sqrt{\frac{\log\frac{2d}\delta}{cN}}\right)\right)\leq \delta, \label{eqn:bern2}
	\end{align}
	and then,
	$$\left\|\frac{X_N^\top e_N}N\right\|_\infty = O_P\left(\sqrt{\frac{\log d}N}\right).$$
	Putting all the preceding bounds together, we obtain that
	\begin{align*}
	&\left\|\ttheta-\thetas+\nabla^2\cLs(\thetas)^{-1}\nabla\cL_N(\thetas)\right\|_\infty \\
	&\leq \left(\matrixnorm{\tT}_\infty \matrixnorm{\frac{X_N^\top X_N}N-\frac{X_1^\top X_1}n}_{\max}+\matrixnorm{\tT\frac{X_1^\top X_1}n-I_d}_{\max}\right) \left\|\btheta-\thetas\right\|_1 + \matrixnorm{\tT-\Theta}_\infty \left\|\frac{X_N^\top e_N}N\right\|_\infty \\
	&= \left(O_P\left( \sqrt{{s^*}}\right) O_P\left(\sqrt{\frac{\log d}n}\right) + O_P\left(\sqrt{\frac{\log d}n}\right)\right) O_P(r_{\btheta}) + O_P\left(  {s^*}\sqrt{\frac{\log d}n}\right) O_P\left(\sqrt{\frac{\log d}N}\right) \\
	&= O_P\left( \sqrt{\frac{{s^*}\log d}n}r_{\btheta} + \frac{{s^*}\log d}{n\sqrt k}\right),
	\end{align*}
	where we assume that $\left\|\btheta-\thetas\right\|_1=O_P(r_{\btheta})$, and hence,
	$$|T-T_0| = O_P\left(  r_{\btheta} \sqrt{{s^*} k\log d} + \frac{{s^*}\log d}{\sqrt n}\right).$$
	
	Choosing
	$$\zeta= \left(r_{\btheta} \sqrt{{s^*} k\log d} + \frac{{s^*}\log d}{\sqrt n}\right)^{1-\kappa},$$
	with any $\kappa>0$, we deduce that
	$$P\left(|T-T_0|>\zeta\right)=o(1).$$
	We also have that
	$$\zeta\sqrt{1\vee\log\frac d\zeta}=o(1),$$
	provided that
	$$ \left(r_{\btheta} \sqrt{{s^*} k\log d} + \frac{{s^*}\log d}{\sqrt n}\right) \log^{1/2+\kappa} d =o(1),$$
	which holds if
	$$n\gg{s^*}^2\log^{3+\kappa} d,$$
	and
	$$r_{\btheta}\ll\frac1{ \sqrt{k{s^*}}\log^{1+\kappa}d}.$$
	
\end{proof}

\begin{lemma}\label{lem:thbd_reg}
	$\widehat T$ and $T_0$ are defined as in \eqref{eqn:thp} and \eqref{eqn:t0} respectively.  In sparse linear model, under Assumptions~\ref{as:design}~and~\ref{as:noise}, provided that $n\gg {s^*} \log d$, we have that
	$$|\widehat T-T_0| = O_P\left(\frac{\left(s_0\sqrt{s^*}+{s^*}\right)\log d}{\sqrt n}\right).$$
	Moreover, if $n\gg\left(s_0^2{s^*}+{s^*}^2\right)\log^{3+\kappa} d$ and for some $\kappa>0$, then there exists some $\xi>0$ such that \eqref{eqn:xi} holds.
\end{lemma}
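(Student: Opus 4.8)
The plan is to reduce the claim to the computation already carried out in the proof of Lemma~\ref{lem:tbd_reg}, with the CSL iterate $\btheta=\ttheta^{(\tau-1)}$ there replaced by the centralized Lasso estimator $\htheta_{Lasso}$. As in the proof of Lemma~\ref{lem:reg0} I first redefine $\widehat T=\max_{1\le l\le d}\sqrt N(\htheta-\thetas)_l$, so that $|\max_l a_l-\max_l b_l|\le\max_l|a_l-b_l|$ gives $|\widehat T-T_0|\le\sqrt N\|\htheta-\thetas+\nabla^2\cLs(\thetas)^{-1}\nabla\cL_N(\thetas)\|_\infty$. In the linear model $\htheta=\htheta_{Lasso}+\tT X_N^\top(y_N-X_N\htheta_{Lasso})/N$ and $\nabla^2\cLs(\thetas)^{-1}\nabla\cL_N(\thetas)=-\Theta X_N^\top e_N/N$ with $e_N=y_N-X_N\thetas$; writing $y_N-X_N\htheta_{Lasso}=e_N-X_N(\htheta_{Lasso}-\thetas)$ and adding and subtracting $\tT X_N^\top e_N/N$ yields the exact analogue of the key identity in the proof of Lemma~\ref{lem:tbd_reg},
\begin{align*}
\htheta-\thetas+\nabla^2\cLs(\thetas)^{-1}\nabla\cL_N(\thetas)=-\left(\tT\frac{X_N^\top X_N}N-I_d\right)(\htheta_{Lasso}-\thetas)+(\tT-\Theta)\frac{X_N^\top e_N}N.
\end{align*}
Since that computation uses only that $\ttheta=\btheta+\tT X_N^\top(y_N-X_N\btheta)/N$ together with $\|\btheta-\thetas\|_1=O_P(r_{\btheta})$, it transfers verbatim with $\btheta\mapsto\htheta_{Lasso}$ and $r_{\btheta}\mapsto s_0\sqrt{\log d/N}$, the latter being the standard Lasso $\ell_1$-rate under Assumptions~\ref{as:design}--\ref{as:noise} (the restricted-eigenvalue condition being guaranteed via \cite{RZ13} because $N\gtrsim s_0\log d$) with $\lambda\asymp\sqrt{\log d/N}$.

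Carrying it out, I would bound the right-hand side using $\|Aa\|_\infty\le\matrixnorm{A}_{\max}\|a\|_1$ and $\|Aa\|_\infty\le\matrixnorm{A}_\infty\|a\|_\infty$, the split $\matrixnorm{\tT X_N^\top X_N/N-I_d}_{\max}\le\matrixnorm{\tT}_\infty\matrixnorm{X_N^\top X_N/N-X_1^\top X_1/n}_{\max}+\matrixnorm{\tT X_1^\top X_1/n-I_d}_{\max}$, and then the inputs $\matrixnorm{\tT}_\infty=O_P(\sqrt{s^*})$, $\matrixnorm{\tT X_1^\top X_1/n-I_d}_{\max}=O_P(\sqrt{\log d/n})$, $\matrixnorm{\tT-\Theta}_\infty=O_P(s^*\sqrt{\log d/n})$ from Lemma~\ref{lem:hes_hd} (valid when $n\gg s^*\log d$), together with $\matrixnorm{X_N^\top X_N/N-X_1^\top X_1/n}_{\max}=O_P(\sqrt{\log d/n})$ and $\|X_N^\top e_N/N\|_\infty=O_P(\sqrt{\log d/N})$ already established via Bernstein's inequality in the proof of Lemma~\ref{lem:tbd_reg}. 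Equivalently, the conclusion of Lemma~\ref{lem:tbd_reg} with $r_{\btheta}=s_0\sqrt{\log d/N}$ reads $|\widehat T-T_0|=O_P\big(s_0\sqrt{\log d/N}\cdot\sqrt{s^* k\log d}+s^*\log d/\sqrt n\big)$; since $k/N=1/n$ this is $O_P\big((s_0\sqrt{s^*}+s^*)\log d/\sqrt n\big)$, the first assertion.

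For \eqref{eqn:xi}, I would follow the end of the proof of Lemma~\ref{lem:tbd_reg}: take $\xi=\big((s_0\sqrt{s^*}+s^*)\log d/\sqrt n\big)^{1-\kappa'}$ with $\kappa'>0$ small enough. Then $P(|\widehat T-T_0|>\xi)=o(1)$ since $|\widehat T-T_0|$ is $O_P$ of a base rate tending to $0$, and $\xi\sqrt{1\vee\log(d/\xi)}=o(1)$ once $(s_0\sqrt{s^*}+s^*)\log^{3/2}d/\sqrt n$ is small enough; by AM--GM $(s_0\sqrt{s^*}+s^*)^2\le 2(s_0^2{s^*}+{s^*}^2)$, so this is implied by the hypothesis $n\gg(s_0^2{s^*}+{s^*}^2)\log^{3+\kappa}d$.

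I expect the only delicate point — the main obstacle — to be the cross term $\matrixnorm{\tT X_N^\top X_N/N-I_d}_{\max}$: because the de-biasing matrix $\widehat\Theta=\tT$ is the nodewise-Lasso surrogate built from the $n$ observations on the master node alone, while $\htheta_{Lasso}$ is the full-sample Lasso, this term cannot be controlled at the full-data rate $\sqrt{\log d/N}$, and one must route it through the master Gram matrix $X_1^\top X_1/n$ — for which the nodewise-Lasso KKT conditions (Lemma~\ref{lem:hes_hd}) supply control — at the cost of an extra $\sqrt{s^*}$ and the slower $\sqrt{\log d/n}$, which is precisely what produces the $s_0\sqrt{s^*}$ contribution. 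Everything else is a line-by-line repetition of the linear-model calculation in the proof of Lemma~\ref{lem:tbd_reg} with $\btheta\mapsto\htheta_{Lasso}$.
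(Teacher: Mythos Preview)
Your proposal is correct and follows essentially the same approach as the paper's own proof: both reduce to the decomposition and bounds from Lemma~\ref{lem:tbd_reg} with $\btheta$ replaced by $\htheta_{Lasso}$ and $r_{\btheta}=s_0\sqrt{\log d/N}$, then pick $\xi$ as a $(1-\kappa)$-power of the resulting rate. Your added remarks on why $\tT$ must be routed through the master Gram matrix and the inequality $(s_0\sqrt{s^*}+s^*)^2\le 2(s_0^2 s^*+{s^*}^2)$ are accurate and make explicit what the paper leaves implicit.
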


\begin{proof}[Lemma \ref{lem:thbd_reg}]
	By the proof of Lemma~\ref{lem:tbd_reg}, we obtain that
	\begin{align*}
	|\widehat T-T_0| &\leq \max_{1\leq l\leq d} \left|\sqrt{N}(\htheta-\thetas)_l+\sqrt{N}\left(\nabla^2\cLs(\thetas)^{-1}\nabla\cL_N(\thetas)\right)_l\right| \\
	&= \sqrt{N}\left\|\htheta-\thetas+\nabla^2\cLs(\thetas)^{-1}\nabla\cL_N(\thetas)\right\|_\infty \\
	&= \sqrt{N}\left\|\htheta_L+\tT\frac{X_N^\top (y_N-X_N\htheta_L)}N-\thetas-\Theta\frac{X_N^\top (y_N-X_N\thetas)}N\right\|_\infty \\
	&\leq \matrixnorm{\tT\frac{X_N^\top X_N}N-I_d}_{\max} \left\|\htheta_L-\thetas\right\|_1 + \matrixnorm{\tT-\Theta}_\infty \left\|\frac{X_N^\top e_N}N\right\|_\infty \\
	&= O_P\left( \sqrt{{s^*}k\log d}\right) \left\|\htheta_L-\thetas\right\|_1 + O_P\left(\frac{{s^*}\log d}{\sqrt n}\right).
	\end{align*}
	Since
	$$\left\|\htheta_L-\thetas\right\|_1=O_P\left(s_0\sqrt{\frac{\log d}N}\right),$$
	we have that
	\begin{align*}
	|\widehat T-T_0| = O_P\left(\frac{\left(s_0\sqrt{s^*}+{s^*}\right)\log d}{\sqrt n}\right).
	\end{align*}
	Choosing
	$$\xi= \left(\frac{\left(s_0\sqrt{s^*}+{s^*}\right)\log d}{\sqrt n}\right)^{1-\kappa},$$
	with any $\kappa>0$, we deduce that
	$$P\left(|\widehat T-T_0|>\xi\right)=o(1).$$
	We also have that
	$$\xi\sqrt{1\vee\log\frac d\xi}=o(1),$$
	provided that
	$$ \left(\frac{\left(s_0\sqrt{s^*}+{s^*}\right)\log d}{\sqrt n}\right) \log^{1/2+\kappa} d =o(1),$$
	which holds if
	$$n\gg\left(s_0^2{s^*}+{s^*}^2\right)\log^{3+\kappa} d.$$
	
\end{proof}

\begin{lemma}\label{lem:tbd_reg_glm}
	$T$ and $T_0$ are defined as in \eqref{eqn:t} and \eqref{eqn:t0} respectively.  In sparse GLM, under Assumptions~\ref{as:smth_glm}~and~\ref{as:design_glm}, provided that $\left\|\btheta-\thetas\right\|_1=O_P(r_{\btheta})$ and $n\gg s_0^2\log^2 d+{s^*}^2\log d$, we have that
	$$|T-T_0| = O_P\left(  r_{\btheta} \sqrt{{s^*} k\log d} + \frac{{s^*}\log d}{\sqrt n}\right).$$
	Moreover, if $n\gg({s^*}^2+s_0^2) \log^{3+\kappa} d$ and
	$$\left\|\btheta-\thetas\right\|_1\ll\min\left\{\frac1{ \sqrt{k{s^*}}s_0\log^{1+\kappa}d}, \frac1{\left(nk{s^*}\log^{1+\kappa}d\right)^{1/4}}\right\},$$
	for some $\kappa>0$, then there exists some $\zeta>0$ such that \eqref{eqn:zeta} holds.
\end{lemma}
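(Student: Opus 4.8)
The plan is to follow the proof of Lemma~\ref{lem:tbd_reg} step by step, replacing the constant least-squares Hessian $X_N^\top X_N/N$ by the segment-averaged Hessian that arises from a mean-value expansion of the GLM score. As there, the starting point is $|T-T_0|\le\sqrt N\,\|\ttheta-\thetas+\nabla^2\cLs(\thetas)^{-1}\nabla\cL_N(\thetas)\|_\infty$, obtained from $|\max_l a_l-\max_l b_l|\le\max_l|a_l-b_l|$ applied to \eqref{eqn:t} and \eqref{eqn:t0}. Using the de-biasing step $\ttheta=\btheta-\tT\nabla\cL_N(\btheta)$ (line~\ref{line:db} of Algorithm~\ref{alg:hd}) together with the exact remainder $\nabla\cL_N(\btheta)-\nabla\cL_N(\thetas)=\bar H(\btheta-\thetas)$, where $\bar H\defn\int_0^1\nabla^2\cL_N\big(\thetas+t(\btheta-\thetas)\big)\,dt$, one obtains the identity
$$\ttheta-\thetas+\Theta\nabla\cL_N(\thetas)=(I_d-\tT\bar H)(\btheta-\thetas)-(\tT-\Theta)\nabla\cL_N(\thetas),$$
with $\Theta=\nabla^2\cLs(\thetas)^{-1}$, and hence
$$\|\ttheta-\thetas+\Theta\nabla\cL_N(\thetas)\|_\infty\le\matrixnorm{I_d-\tT\bar H}_{\max}\,\|\btheta-\thetas\|_1+\matrixnorm{\tT-\Theta}_\infty\,\|\nabla\cL_N(\thetas)\|_\infty.$$

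Next I would bound each factor. For $\|\nabla\cL_N(\thetas)\|_\infty$: its summands $g'(y_{ij},x_{ij}^\top\thetas)x_{ij}$ are bounded coordinatewise by Assumptions~\ref{as:smth_glm}~and~\ref{as:design_glm}, so Bernstein's inequality and a union bound give $O_P(\sqrt{\log d/N})$. For $\matrixnorm{\tT}_\infty$, $\matrixnorm{\tT-\Theta}_\infty$ and the ``KKT'' bound $\matrixnorm{\tT\nabla^2\cL_1(\ttheta^{(0)})-I_d}_{\max}$ I would invoke the GLM counterpart of Lemma~\ref{lem:hes_hd}, i.e. the nodewise-Lasso guarantees of Section~3.1.1 of \citet{van2014asymptotically}, whose hypotheses are supplied by Assumption~\ref{as:design_glm} and the consistency $\|\ttheta^{(0)}-\thetas\|_1=O_P(s_0\sqrt{\log d/n})$, yielding $\matrixnorm{\tT}_\infty=O_P(\sqrt{s^*})$, $\matrixnorm{\tT-\Theta}_\infty=O_P(s^*\sqrt{\log d/n})$ and $\matrixnorm{\tT\nabla^2\cL_1(\ttheta^{(0)})-I_d}_{\max}=O_P(\sqrt{\log d/n})$. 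The genuinely new work is $\matrixnorm{I_d-\tT\bar H}_{\max}$: I would triangulate through the matrix $\nabla^2\cL_1(\ttheta^{(0)})$ fed into the nodewise Lasso, using $\matrixnorm{I_d-\tT\bar H}_{\max}\le\matrixnorm{\tT\nabla^2\cL_1(\ttheta^{(0)})-I_d}_{\max}+\matrixnorm{\tT}_\infty\matrixnorm{\bar H-\nabla^2\cL_1(\ttheta^{(0)})}_{\max}$, then splitting $\bar H-\nabla^2\cL_1(\ttheta^{(0)})$ into (i) $\bar H-\nabla^2\cLs(\thetas)$, (ii) $\nabla^2\cLs(\thetas)-\nabla^2\cL_1(\thetas)$, (iii) $\nabla^2\cL_1(\thetas)-\nabla^2\cL_1(\ttheta^{(0)})$. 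Terms (i) and (iii) are controlled in $\matrixnorm{\cdot}_{\max}$ by the Lipschitz property of $g''$ in Assumption~\ref{as:smth_glm} and the boundedness of $\|x\|_\infty$ in Assumption~\ref{as:design_glm}: they are $O_P(\|\btheta-\thetas\|_1)=O_P(r_{\btheta})$ and $O_P(\|\ttheta^{(0)}-\thetas\|_1)=O_P(s_0\sqrt{\log d/n})$ respectively, after also peeling off the concentration of $\nabla^2\cL_N$ around $\nabla^2\cLs$; term (ii) is $O_P(\sqrt{\log d/n})$ by Bernstein. Collecting everything and using $N=nk$, I expect $\|\ttheta-\thetas+\Theta\nabla\cL_N(\thetas)\|_\infty=O_P\big(\sqrt{s^*\log d/n}\,r_{\btheta}+s^*\log d/(n\sqrt k)+\sqrt{s^*}\,r_{\btheta}^2\big)$, so that multiplying by $\sqrt N$ gives the claimed bound on $|T-T_0|$, with $n\gg s_0^2\log^2 d+{s^*}^2\log d$ being exactly what forces the $\|\ttheta^{(0)}-\thetas\|_1$- and concentration-driven terms below the target rate.

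For the second assertion I would mimic the last paragraph of the proof of Lemma~\ref{lem:tbd_reg}: take $\zeta=\big(r_{\btheta}\sqrt{s^*k\log d}+s^*\log d/\sqrt n\big)^{1-\kappa}$, so $P(|T-T_0|>\zeta)=o(1)$ by the first part, and $\zeta\sqrt{1\vee\log(d/\zeta)}=o(1)$ provided $\big(r_{\btheta}\sqrt{s^*k\log d}+s^*\log d/\sqrt n\big)\log^{1/2+\kappa}d=o(1)$ together with $\sqrt N\,\sqrt{s^*}\,r_{\btheta}^2\,\log^{1/2+\kappa}d=o(1)$; this holds under $n\gg({s^*}^2+s_0^2)\log^{3+\kappa}d$ and $\|\btheta-\thetas\|_1\ll\min\{(\sqrt{ks^*}s_0\log^{1+\kappa}d)^{-1},(nks^*\log^{1+\kappa}d)^{-1/4}\}$, which is \eqref{eqn:zeta}. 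The main obstacle, and the only place the argument departs materially from the linear case, is the bookkeeping for $\matrixnorm{\bar H-\nabla^2\cL_1(\ttheta^{(0)})}_{\max}$: one must carefully track that the segment-averaged Hessian (built around $\btheta=\ttheta^{(\tau-1)}$) and the matrix used by the nodewise Lasso (built around $\ttheta^{(0)}$) differ only by $g''$-Lipschitz terms in $\|\btheta-\thetas\|_1$ and $\|\ttheta^{(0)}-\thetas\|_1$, and that these, together with the quadratic remainder $\sqrt{s^*}\,r_{\btheta}^2$, stay below the target rate, which is precisely what dictates the extra $s_0^2\log^2 d$ sample-size requirement and the extra $(nks^*\log^{1+\kappa}d)^{-1/4}$ condition on $r_{\btheta}$ relative to Lemma~\ref{lem:tbd_reg}.
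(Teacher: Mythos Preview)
Your proposal is correct and follows essentially the same route as the paper's proof: the same mean-value identity $\nabla\cL_N(\btheta)-\nabla\cL_N(\thetas)=\bar H(\btheta-\thetas)$, the same triangulation of $\tT\bar H-I_d$ through $\tT\nabla^2\cL_1(\ttheta^{(0)})-I_d$, the same Lipschitz/Hoeffding splits for the Hessian differences, and the same choice of $\zeta$ in the second part.

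Two small bookkeeping corrections. First, in the GLM case the nodewise-Lasso bound is $\matrixnorm{\tT(\ttheta^{(0)})-\Theta}_\infty=O_P\big((s_0+s^*)\sqrt{\log d/n}\big)$ (Lemma~\ref{lem:hes_hd_glm}), not $O_P(s^*\sqrt{\log d/n})$. Second, your collected display drops the $s_0$ coming from term~(iii): it should read $O_P\big(\sqrt{s^*}(r_{\btheta}+s_0\sqrt{\log d/n})\,r_{\btheta}+(s_0+s^*)\log d/(n\sqrt k)\big)$. The paper's own proof derives exactly this more refined bound (which, after multiplying by $\sqrt N$, contains the extra terms $s_0\sqrt{s^*k\log d}\,r_{\btheta}$ and $\sqrt{nks^*}\,r_{\btheta}^2$ not shown in the lemma's first display); it is precisely these two terms that force the $s_0$ and the $(nks^*\log^{1+\kappa}d)^{-1/4}$ conditions you already correctly state in the second part.
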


\begin{proof}[Lemma \ref{lem:tbd_reg_glm}]
	Following the argument in the proof of Lemma~\ref{lem:tbd_reg}, we have that
	\begin{align*}
	|T-T_0| &\leq \max_{1\leq l\leq d} \left|\sqrt{N}(\ttheta_l-\thetas_l)+\sqrt{N}\left(\nabla^2\cLs(\thetas)^{-1}\nabla\cL_N(\thetas)\right)_l\right| \\
	&= \sqrt{N}\left\|\ttheta-\thetas+\nabla^2\cLs(\thetas)^{-1}\nabla\cL_N(\thetas)\right\|_\infty,
	\end{align*}
	and
	\begin{align*}
	&\left\|\ttheta-\thetas+\nabla^2\cLs(\thetas)^{-1}\nabla\cL_N(\thetas)\right\|_\infty \\
	&= \left\|\btheta-\tT(\ttheta^{(0)})\nabla\cL_N(\btheta)-\thetas+\Theta\nabla\cL_N(\thetas)\right\|_\infty \\
	&= \left\|\btheta-\tT(\ttheta^{(0)})\nabla\cL_N(\btheta)-\thetas+\tT(\ttheta^{(0)})\nabla\cL_N(\thetas)-\tT(\ttheta^{(0)})\nabla\cL_N(\thetas)+\Theta\nabla\cL_N(\thetas)\right\|_\infty \\
	&\leq \left\|\btheta-\thetas-\tT(\ttheta^{(0)})\left(\nabla\cL_N(\btheta)-\nabla\cL_N(\thetas)\right)\right\|_\infty+\left\|\left(\tT(\ttheta^{(0)})-\Theta\right)\nabla\cL_N(\thetas)\right\|_\infty.
	\end{align*}
	By Taylor's theorem, we have that
	\begin{align}
	\nabla\cL_N(\btheta)-\nabla\cL_N(\thetas)=\int_0^1\nabla^2\cL_N(\thetas+t(\btheta-\thetas))dt(\btheta-\thetas), \label{eqn:graddiff}
	\end{align}
	and then,
	\begin{align*}
	&\left\|\ttheta-\thetas+\nabla^2\cLs(\thetas)^{-1}\nabla\cL_N(\thetas)\right\|_\infty \\
	&\leq \left\|\btheta-\thetas-\tT(\ttheta^{(0)})\int_0^1\nabla^2\cL_N(\thetas+t(\btheta-\thetas))dt(\btheta-\thetas)\right\|_\infty+\left\|\left(\tT(\ttheta^{(0)})-\Theta\right)\nabla\cL_N(\thetas)\right\|_\infty \\
	&= \left\|\int_0^1\left(\tT(\ttheta^{(0)})\nabla^2\cL_N(\thetas+t(\btheta-\thetas))-I_d\right)dt(\btheta-\thetas)\right\|_\infty+\left\|\left(\tT(\ttheta^{(0)})-\Theta\right)\nabla\cL_N(\thetas)\right\|_\infty \\
	&\leq \int_0^1\matrixnorm{\tT(\ttheta^{(0)})\nabla^2\cL_N(\thetas+t(\btheta-\thetas))-I_d}_{\max}dt \left\|\btheta-\thetas\right\|_1 + \matrixnorm{\tT(\ttheta^{(0)})-\Theta}_\infty \left\|\nabla\cL_N(\thetas)\right\|_\infty.
	\end{align*}
	By the triangle inequality, we have that
	\begin{align*}
	&\matrixnorm{\tT(\ttheta^{(0)})\nabla^2\cL_N(\thetas+t(\btheta-\thetas))-I_d}_{\max} \\
	&=\bigg|\!\bigg|\!\bigg|\tT(\ttheta^{(0)})\nabla^2\cL_N(\thetas+t(\btheta-\thetas))-\tT(\ttheta^{(0)})\nabla^2\cL_N(\thetas)+\tT(\ttheta^{(0)})\nabla^2\cL_N(\thetas)-\tT(\ttheta^{(0)})\nabla^2\cL_1(\thetas) \\
	&\quad+\tT(\ttheta^{(0)})\nabla^2\cL_1(\thetas)-\tT(\ttheta^{(0)})\nabla^2\cL_1(\ttheta^{(0)})+\tT(\ttheta^{(0)})\nabla^2\cL_1(\ttheta^{(0)})-I_d\bigg|\!\bigg|\!\bigg|_{\max} \\
	&\leq\matrixnorm{\tT(\ttheta^{(0)})\left(\nabla^2\cL_N(\thetas+t(\btheta-\thetas))-\nabla^2\cL_N(\thetas)\right)}_{\max}+\matrixnorm{\tT(\ttheta^{(0)})\left(\nabla^2\cL_N(\thetas)-\nabla^2\cL_1(\thetas)\right)}_{\max} \\
	&\quad+\matrixnorm{\tT(\ttheta^{(0)})\left(\nabla^2\cL_1(\thetas)-\nabla^2\cL_1(\ttheta^{(0)})\right)}_{\max}+\matrixnorm{\tT(\ttheta^{(0)})\nabla^2\cL_1(\ttheta^{(0)})-I_d}_{\max} \\
	&\leq\matrixnorm{\tT(\ttheta^{(0)})}_\infty \bigg( \matrixnorm{\nabla^2\cL_N(\thetas+t(\btheta-\thetas))-\nabla^2\cL_N(\thetas)}_{\max}+\matrixnorm{\nabla^2\cL_N(\thetas)-\nabla^2\cL_1(\thetas)}_{\max} \\
	&\quad+\matrixnorm{\nabla^2\cL_1(\thetas)-\nabla^2\cL_1(\ttheta^{(0)})}_{\max} \bigg) +\matrixnorm{\tT(\ttheta^{(0)})\nabla^2\cL_1(\ttheta^{(0)})-I_d}_{\max}.
	\end{align*}
	
	Under Assumption~\ref{as:smth_glm}, we have by Taylor's theorem that
	\begin{align*}
	\left|g''(y_{ij},x_{ij}^\top(\thetas+t(\btheta-\thetas)))-g''(y_{ij},x_{ij}^\top\thetas)\right| &= \left|\int_0^1 g'''(y_{ij},x_{ij}^\top(\thetas+st(\btheta-\thetas))) ds \cdot tx_{ij}^\top(\btheta-\thetas)\right| \\
	&\lesssim \left|x_{ij}^\top(\btheta-\thetas)\right|,
	\end{align*}
	and then by the triangle inequality,
	\begin{align}
	\matrixnorm{\nabla^2\cL_N(\thetas+t(\btheta-\thetas))-\nabla^2\cL_N(\thetas)}_{\max} &=  \matrixnorm{\frac1N\sum_{i=1}^n\sum_{j=1}^k x_{ij} x_{ij}^\top \left(g''(y_{ij},x_{ij}^\top(\thetas+t(\btheta-\thetas)))-g''(y_{ij},x_{ij}^\top\thetas)\right)}_{\max} \notag \\
	&\leq \frac1N\sum_{i=1}^n\sum_{j=1}^k \matrixnorm{x_{ij} x_{ij}^\top \left(g''(y_{ij},x_{ij}^\top(\thetas+t(\btheta-\thetas)))-g''(y_{ij},x_{ij}^\top\thetas)\right)}_{\max} \notag \\
	&= \frac1N\sum_{i=1}^n\sum_{j=1}^k \matrixnorm{x_{ij} x_{ij}^\top}_{\max} \left|g''(y_{ij},x_{ij}^\top(\thetas+t(\btheta-\thetas)))-g''(y_{ij},x_{ij}^\top\thetas)\right| \notag \\
	&\lesssim \frac1N\sum_{i=1}^n\sum_{j=1}^k \|x_{ij}\|_\infty^2 \left|x_{ij}^\top(\btheta-\thetas)\right| \notag \\
	&\leq \frac1N\sum_{i=1}^n\sum_{j=1}^k \|x_{ij}\|_\infty^3 \|\btheta-\thetas\|_1 \notag \\
	&\lesssim \left\|\btheta-\thetas\right\|_1, \label{eqn:lip}
	\end{align}
	where we use that $\|x_{ij}\|_\infty=O(1)$ under Assumption~\ref{as:design_glm} in the last inequality.  Similarly, we have that
	$$\matrixnorm{\nabla^2\cL_1(\thetas)-\nabla^2\cL_1(\ttheta^{(0)})}_{\max} \lesssim \|\ttheta^{(0)}-\thetas\|_1=O_P\left(s_0\sqrt{\frac{\log d}n}\right),$$
	by noticing that $\ttheta^{(0)}$ is a local Lasso estimator computed using $n$ observations.  Note that
	\begin{align*}
	\matrixnorm{\nabla^2\cL_N(\thetas)-\nabla^2\cLs(\thetas)}_{\max} = \matrixnorm{\frac1N\sum_{i=1}^n\sum_{j=1}^k g''(y_{ij},x_{ij}^\top\thetas) x_{ij} x_{ij}^\top-\Ee[g''(y,x^\top\thetas)xx^\top]}_{\max},
	\end{align*}
	and $g''(y_{ij},x_{ij}^\top\thetas)=O(1)$ under Assumption~\ref{as:smth_glm}.  Then, we have that by Hoeffding's inequality,
	$$P\left(\frac{\sum_{i=1}^n\sum_{j=1}^k g''(y_{ij},x_{ij}^\top\thetas) x_{ij,l} x_{ij,l'}}N-\Ee[g''(y,x^\top\thetas) x_l x_{l'}]>\sqrt{\frac{2\log(\frac{2d^2}\delta)}N}\right)\leq\frac\delta{d^2},$$
	and by the union bound, for any $\delta\in(0,1)$, with probability at least $1-\delta$,
	$$\matrixnorm{\nabla^2\cL_N(\thetas)-\nabla^2\cLs(\thetas)}_{\max}\leq\sqrt{\frac{2\log(\frac{2d^2}\delta)}N},$$
	which implies that
	\begin{align}
	\matrixnorm{\nabla^2\cL_N(\thetas)-\nabla^2\cLs(\thetas)}_{\max}=O_P\left(\sqrt{\frac{\log d}N}\right). \label{eqn:hoef3}
	\end{align}
	Similarly, we have that
	$$\matrixnorm{\nabla^2\cL_1(\thetas)-\nabla^2\cLs(\thetas)}_{\max}=O_P\left(\sqrt{\frac{\log d}n}\right),$$
	and then, by the triangle inequality,
	\begin{align*}
	\matrixnorm{\nabla^2\cL_N(\thetas)-\nabla^2\cL_1(\thetas)}_{\max} &\leq \matrixnorm{\nabla^2\cL_N(\thetas)-\nabla^2\cLs(\thetas)}_{\max} + \matrixnorm{\nabla^2\cL_1(\thetas)-\nabla^2\cLs(\thetas)}_{\max} \\
	&=O_P\left(\sqrt{\frac{\log d}n}\right).
	\end{align*}
	Note that $\nabla\cL_N(\thetas)=\sum_{i=1}^n\sum_{j=1}^k g'(y_{ij},x_{ij}^\top\thetas) x_{ij}/N$ and $g'(y_{ij},x_{ij}^\top\thetas) x_{ij,l}=O(1)$ for each $l=1,\dots,d$ under Assumptions~\ref{as:smth_glm}~and~\ref{as:design_glm}.  Then, by Hoeffding's inequality, we have that
	\begin{align}
	P\left(\left|\nabla\cL_N(\thetas)_l\right|>t\right)\leq2\exp\left(-\frac{Nt^2}c\right), \label{eqn:hoef1}
	\end{align}
	for any $t>0$, or
	$$P\left(\left|\nabla\cL_N(\thetas)_l\right|>\sqrt{\frac{c\log\frac{2d}\delta}N}\right)\leq\frac\delta d,$$
	for any $\delta\in(0,1)$.  By the union bound, we have with probability at least $1-\delta$ that
	$$\left\|\nabla\cL_N(\thetas)\right\|_\infty\leq\sqrt{\frac{c\log\frac{2d}\delta}N},$$
	which implies that
	\begin{align}
	\left\|\nabla\cL_N(\thetas)\right\|_\infty=O_P\left(\sqrt{\frac{\log d}N}\right). \label{eqn:hoef2}
	\end{align}
	By Lemma~\ref{lem:hes_hd_glm}, provided that $n\gg s_0^2\log^2 d+{s^*}^2\log d$, we have that
	$$\matrixnorm{\tT(\ttheta^{(0)})}_\infty=O_P\left( \sqrt{{s^*}}\right),$$
	$$\matrixnorm{\tT(\ttheta^{(0)})\nabla^2\cL_1(\ttheta^{(0)})-I_d}_{\max}=O_P\left(\sqrt{\frac{\log d}n}\right),$$
	and
	$$\matrixnorm{\tT(\ttheta^{(0)})-\Theta}_\infty=O_P\left(\left(s_0 +   {s^*}\right)\sqrt{\frac{\log d}n}\right).$$
	Putting all the preceding bounds together, we obtain that
	\begin{align*}
	&\matrixnorm{\tT(\ttheta^{(0)})\nabla^2\cL_N(\thetas+t(\btheta-\thetas))-I_d}_{\max} \\
	&= O_P\left( \sqrt{{s^*}}\right) \left( O_P(r_{\btheta})+O_P\left(\sqrt{\frac{\log d}n}\right)+O_P\left(s_0\sqrt{\frac{\log d}n}\right) \right) +O_P\left(\sqrt{\frac{\log d}n}\right) \\
	&= O_P\left( \sqrt{{s^*}} \left(r_{\btheta}+s_0\sqrt{\frac{\log d}n}\right)\right),
	\end{align*}
	and then,
	\begin{align*}
	&\left\|\ttheta-\thetas+\nabla^2\cLs(\thetas)^{-1}\nabla\cL_N(\thetas)\right\|_\infty \\
	&= O_P\left( \sqrt{{s^*}} \left(r_{\btheta}+s_0\sqrt{\frac{\log d}n}\right)\right) O_P(r_{\btheta}) + O_P\left(\left(s_0 +   {s^*}\right)\sqrt{\frac{\log d}n}\right) O_P\left(\sqrt{\frac{\log d}N}\right) \\
	&= O_P\left(  \sqrt{{s^*}} \left(r_{\btheta}+s_0\sqrt{\frac{\log d}n}\right) r_{\btheta} + \left(s_0 + {s^*}\right) \frac{\log d}{n\sqrt k}\right),
	\end{align*}
	where we assume that $\left\|\btheta-\thetas\right\|_1=O_P(r_{\btheta})$, and hence,
	$$|T-T_0| = O_P\left(  \sqrt{{s^*}} \left(\sqrt n r_{\btheta}+s_0\sqrt{\log d}\right) \sqrt  k r_{\btheta} + \left(s_0 + {s^*}\right) \frac{\log d}{\sqrt n}\right).$$
	
	Choosing
	$$\zeta= \left(\sqrt{{s^*}} \left(\sqrt n r_{\btheta}+s_0\sqrt{\log d}\right) \sqrt  k r_{\btheta} + \left(s_0 + {s^*}\right) \frac{\log d}{\sqrt n}\right)^{1-\kappa},$$
	with any $\kappa>0$, we deduce that
	$$P\left(|T-T_0|>\zeta\right)=o(1).$$
	We also have that
	$$\zeta\sqrt{1\vee\log\frac d\zeta}=o(1),$$
	provided that
	$$ \left(\sqrt{{s^*}} \left(\sqrt n r_{\btheta}+s_0\sqrt{\log d}\right) \sqrt  k r_{\btheta} + \left(s_0 + {s^*}\right) \frac{\log d}{\sqrt n}\right) \log^{1/2+\kappa} d =o(1),$$
	which holds if
	$$n\gg\left({s^*}^2+s_0^2\right) \log^{3+\kappa} d,$$
	and
	$$r_{\btheta}\ll\min\left\{\frac1{ \sqrt{k{s^*}}s_0\log^{1+\kappa}d}, \frac1{\left(nk{s^*}\log^{1+\kappa}d\right)^{1/4}}\right\}.$$
	
\end{proof}

\begin{lemma}\label{lem:thbd_reg_glm}
	$\widehat T$ and $T_0$ are defined as in \eqref{eqn:thp} and \eqref{eqn:t0} respectively.  In sparse GLM, under Assumptions~\ref{as:smth_glm}~and~\ref{as:design_glm}, provided that $n\gg s_0^2\log^2 d+{s^*}^2\log d$, we have that
	$$|\widehat T-T_0| = O_P\left(\frac{\left(s_0^2\sqrt{s^*}+{s^*}\right)\log d}{\sqrt n}\right).$$
	Moreover, if $n\gg\left(s^4{s^*}+{s^*}^2\right) \log^{3+\kappa} d$ for some $\kappa>0$, then there exists some $\xi>0$ such that \eqref{eqn:xi} holds.
\end{lemma}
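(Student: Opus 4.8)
The plan is to follow the template of Lemma~\ref{lem:thbd_reg}, recycling the GLM-specific deterministic decomposition and probabilistic estimates from the proof of Lemma~\ref{lem:tbd_reg_glm}. First I would use $|\max_l a_l-\max_l b_l|\le\max_l|a_l-b_l|$, applied to the componentwise forms \eqref{eqn:thp}--\eqref{eqn:t0}, to reduce $|\widehat T-T_0|$ to $\sqrt N\,\|\htheta-\thetas+\nabla^2\cLs(\thetas)^{-1}\nabla\cL_N(\thetas)\|_\infty$, where $\htheta=\htheta_L-\tT(\ttheta^{(0)})\nabla\cL_N(\htheta_L)$ and $\htheta_L$ is the centralized Lasso estimator, which satisfies the standard $\ell_1$-rate $\|\htheta_L-\thetas\|_1=O_P(s_0\sqrt{\log d/N})$ under Assumptions~\ref{as:smth_glm}--\ref{as:hes_glm}. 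Adding and subtracting $\tT(\ttheta^{(0)})\nabla\cL_N(\thetas)$ and applying Taylor's theorem as in \eqref{eqn:graddiff} (with $\htheta_L$ in place of $\btheta$) gives
\begin{align*}
\htheta-\thetas+\nabla^2\cLs(\thetas)^{-1}\nabla\cL_N(\thetas)
&=\int_0^1\left(I_d-\tT(\ttheta^{(0)})\nabla^2\cL_N(\thetas+t(\htheta_L-\thetas))\right)dt\,(\htheta_L-\thetas) \\
&\quad-\left(\tT(\ttheta^{(0)})-\Theta\right)\nabla\cL_N(\thetas),
\end{align*}
so, by $\|Aa\|_\infty\le\matrixnorm{A}_{\max}\|a\|_1$ and $\|Aa\|_\infty\le\matrixnorm{A}_\infty\|a\|_\infty$, the left-hand side is at most $\int_0^1\matrixnorm{\tT(\ttheta^{(0)})\nabla^2\cL_N(\thetas+t(\htheta_L-\thetas))-I_d}_{\max}dt\,\|\htheta_L-\thetas\|_1 + \matrixnorm{\tT(\ttheta^{(0)})-\Theta}_\infty\|\nabla\cL_N(\thetas)\|_\infty$.

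Next I would substitute the estimates already established: from Lemma~\ref{lem:hes_hd_glm} (applicable since $n\gg s_0^2\log^2 d+{s^*}^2\log d$), $\matrixnorm{\tT(\ttheta^{(0)})}_\infty=O_P(\sqrt{s^*})$, $\matrixnorm{\tT(\ttheta^{(0)})\nabla^2\cL_1(\ttheta^{(0)})-I_d}_{\max}=O_P(\sqrt{\log d/n})$, and $\matrixnorm{\tT(\ttheta^{(0)})-\Theta}_\infty=O_P((s_0+{s^*})\sqrt{\log d/n})$; and, as in \eqref{eqn:hoef2}, $\|\nabla\cL_N(\thetas)\|_\infty=O_P(\sqrt{\log d/N})$ by Hoeffding's inequality. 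The Hessian-identity factor is controlled by the same four-piece triangle-inequality bound as in the proof of Lemma~\ref{lem:tbd_reg_glm} --- the Lipschitz perturbation of $\nabla^2\cL_N$ along the segment, $\lesssim\|\htheta_L-\thetas\|_1$ by \eqref{eqn:lip}; the $\cL_N$-versus-$\cL_1$ Hessian gap, $O_P(\sqrt{\log d/n})$; the $\cL_1$ Hessian at $\ttheta^{(0)}$ versus $\thetas$, $O_P(s_0\sqrt{\log d/n})$; and the residual $\matrixnorm{\tT(\ttheta^{(0)})\nabla^2\cL_1(\ttheta^{(0)})-I_d}_{\max}$ --- which, since $\|\htheta_L-\thetas\|_1=O_P(s_0\sqrt{\log d/N})\lesssim s_0\sqrt{\log d/n}$, collapses to $O_P(\sqrt{s^*}\,s_0\sqrt{\log d/n})$. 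Multiplying out and using $\sqrt{\log d/N}\le\sqrt{\log d/n}$ gives $\|\htheta-\thetas+\nabla^2\cLs(\thetas)^{-1}\nabla\cL_N(\thetas)\|_\infty=O_P\big((s_0^2\sqrt{s^*}+{s^*})\log d/\sqrt{nN}\big)$, hence $|\widehat T-T_0|=O_P\big((s_0^2\sqrt{s^*}+{s^*})\log d/\sqrt n\big)$ after multiplying by $\sqrt N$, which is the first assertion.

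For \eqref{eqn:xi} I would set $\xi=\big((s_0^2\sqrt{s^*}+{s^*})\log d/\sqrt n\big)^{1-\kappa}$ with $\kappa>0$; then $P(|\widehat T-T_0|>\xi)=o(1)$ is immediate, and $\xi\sqrt{1\vee\log(d/\xi)}=o(1)$ holds provided $(s_0^2\sqrt{s^*}+{s^*})\log^{1/2+\kappa}d/\sqrt n=o(1)$, i.e.\ provided $n\gg(s_0^4{s^*}+{s^*}^2)\log^{3+\kappa}d$ (using $(s_0^2\sqrt{s^*}+{s^*})^2\asymp s_0^4{s^*}+{s^*}^2$ and relabeling $\kappa$), which is the stated requirement. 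The one place that demands genuine care, rather than a mechanical re-run of Lemma~\ref{lem:tbd_reg_glm}, is verifying that Assumption~\ref{as:smth_glm} applies uniformly along the segment $\{\thetas+t(\htheta_L-\thetas):t\in[0,1]\}$: this needs $\|x\|_\infty\|\htheta_L-\thetas\|_1$ to stay below the smoothness radius $\Delta$ with high probability, which follows from $\|x\|_\infty=O(1)$ under Assumption~\ref{as:design_glm} and $\|\htheta_L-\thetas\|_1=O_P(s_0\sqrt{\log d/N})=o(1)$ under $n\gg s_0^2\log^2 d$. No other new estimate is needed --- only the bookkeeping replacement of the CSL error $r_{\btheta}$ by the centralized Lasso error.
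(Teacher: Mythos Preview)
Your proposal is correct and follows essentially the same approach as the paper's own proof: both replace $\btheta$ by the centralized Lasso estimate $\htheta_L$ in the decomposition of Lemma~\ref{lem:tbd_reg_glm}, invoke the same Hessian-identity bound via Lemma~\ref{lem:hes_hd_glm} and the Lipschitz/Hoeffding estimates \eqref{eqn:lip}--\eqref{eqn:hoef2}, and then plug in $\|\htheta_L-\thetas\|_1=O_P(s_0\sqrt{\log d/N})$ before choosing $\xi$ exactly as you do. The only cosmetic difference is that the paper keeps the quadratic term $O_P(\sqrt{nk{s^*}})\|\htheta_L-\thetas\|_1^2$ separate before substituting, whereas you absorb it into $O_P(\sqrt{s^*}\,s_0\sqrt{\log d/n})$ upfront; both routes give the same final rate.
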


\begin{proof}[Lemma \ref{lem:thbd_reg_glm}]
	By the proof of Lemma~\ref{lem:tbd_reg_glm}, we obtain that
	\begin{align*}
	|\widehat T-T_0| &\leq \max_{1\leq l\leq d} \left|\sqrt{N}(\htheta-\thetas)_l+\sqrt{N}\left(\nabla^2\cLs(\thetas)^{-1}\nabla\cL_N(\thetas)\right)_l\right| \\
	&= \sqrt{N}\left\|\htheta-\thetas+\nabla^2\cLs(\thetas)^{-1}\nabla\cL_N(\thetas)\right\|_\infty \\
	&= \sqrt{N}\left\|\htheta_L-\tT(\ttheta^{(0)})\nabla\cL_N(\htheta_L)-\thetas+\Theta\nabla\cL_N(\thetas)\right\|_\infty \\
	&\leq \int_0^1\matrixnorm{\tT(\ttheta^{(0)})\nabla^2\cL_N(\thetas+t(\htheta_L-\thetas))-I_d}_{\max}dt \left\|\htheta_L-\thetas\right\|_1 + \matrixnorm{\tT(\ttheta^{(0)})-\Theta}_\infty \left\|\nabla\cL_N(\thetas)\right\|_\infty \\
	&= O_P\left(\sqrt{nk{s^*}}\right) \left\|\htheta_L-\thetas\right\|_1^2 + O_P\left( s_0\sqrt{k{s^*}\log d}\right) \left\|\htheta_L-\thetas\right\|_1 + O_P\left(\left(s_0 + {s^*}\right) \frac{\log d}{\sqrt n}\right).
	\end{align*}
	Since
	$$\left\|\htheta_L-\thetas\right\|_1=O_P\left(s_0\sqrt{\frac{\log d}N}\right),$$
	we have that
	\begin{align*}
	|\widehat T-T_0| = O_P\left(\frac{\left(s_0^2\sqrt{s^*}+{s^*}\right)\log d}{\sqrt n}\right).
	\end{align*}
	Choosing
	$$\xi= \left(\frac{\left(s_0^2\sqrt{s^*}+{s^*}\right)\log d}{\sqrt n}\right)^{1-\kappa},$$
	with any $\kappa>0$, we deduce that
	$$P\left(|\widehat T-T_0|>\xi\right)=o(1).$$
	We also have that
	$$\xi\sqrt{1\vee\log\frac d\xi}=o(1),$$
	provided that
	$$ \left(\frac{\left(s_0^2\sqrt{s^*}+{s^*}\right)\log d}{\sqrt n}\right) \log^{1/2+\kappa} d =o(1),$$
	which holds if
	$$n\gg\left(s^4{s^*}+{s^*}^2\right) \log^{3+\kappa} d.$$
	
\end{proof}

\begin{lemma}\label{lem:gbd0_reg}
	$\overline\Omega$ and $\widehat\Omega$ are defined as in \eqref{eqn:ob} and \eqref{eqn:oh} respectively.  In sparse linear model, under Assumptions~\ref{as:design}~and~\ref{as:noise}, provided that $\left\|\btheta-\thetas\right\|_1=O_P(r_{\btheta})$, $r_{\btheta}\sqrt{\log(kd)}\lesssim 1$, $n\gg  {s^*} \log d$, and $k\gtrsim\log^2(dk)\log d$, we have that
	$$\matrixnorm{\overline\Omega-\widehat\Omega}_{\max} = O_P\left(  {s^*}\left(\sqrt{\frac{\log d}k} + \frac{\log^2(dk)\log d}k + \sqrt{\log(kd)}r_{\btheta} + nr_{\btheta}^2\right) + \sqrt{\frac{{s^*}\log d}n}\right).$$
	Moreover, if $n\gg  {s^*}\log^{5+\kappa} d$, $k\gg{s^*}^2\log^{5+\kappa} d$, and
	$$\left\|\btheta-\thetas\right\|_1\ll\min\left\{\frac1{  {s^*}\sqrt{\log(kd)}\log^{2+\kappa}d},\frac1{ \sqrt{n {s^*}}\log^{1+\kappa}d}\right\},$$
	for some $\kappa>0$, then there exists some $u>0$ such that \eqref{eqn:u} holds.
\end{lemma}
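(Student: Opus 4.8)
The plan is to control $\matrixnorm{\overline\Omega-\widehat\Omega}_{\max}$ by successively replacing the nodewise-Lasso matrix $\tT$ by the true inverse Hessian $\Theta=\nabla^2\cLs(\thetas)^{-1}$ and the preliminary estimate $\btheta$ by $\thetas$, so that everything reduces to one max-norm concentration bound. Set $\bg_j=\nabla\cL_j(\btheta)$, $\bg_j^\ast=\nabla\cL_j(\thetas)=-n^{-1}\sum_i x_{ij}e_{ij}$, $\bar\bg=k^{-1}\sum_j\bg_j$, $\bar\bg^\ast=k^{-1}\sum_j\bg_j^\ast$, $\widehat\Sigma_j=n^{-1}X_j^\top X_j$, $\widehat\Sigma_N=N^{-1}X_N^\top X_N$, and
\begin{align*}
M=\frac nk\sum_{j=1}^k(\bg_j-\bar\bg)(\bg_j-\bar\bg)^\top,\quad M^\ast=\frac nk\sum_{j=1}^k(\bg_j^\ast-\bar\bg^\ast)(\bg_j^\ast-\bar\bg^\ast)^\top,\quad M_0=\frac1N\sum_{i,j}x_{ij}x_{ij}^\top e_{ij}^2,
\end{align*}
so that $\overline\Omega=\tT M\tT^\top$, $\widehat\Omega=\Theta M_0\Theta^\top$, and
\begin{align*}
\matrixnorm{\overline\Omega-\widehat\Omega}_{\max}\le\matrixnorm{\tT(M-M^\ast)\tT^\top}_{\max}+\matrixnorm{\tT M^\ast\tT^\top-\Theta M^\ast\Theta^\top}_{\max}+\matrixnorm{\Theta(M^\ast-M_0)\Theta^\top}_{\max}.
\end{align*}
I will use $\matrixnorm{\tT}_\infty=\matrixnorm{\Theta}_\infty=O_P(\sqrt{{s^*}})$, $\matrixnorm{\tT-\Theta}_\infty=O_P({s^*}\sqrt{\log d/n})$ and $\|(\tT-\Theta)_{l,\cdot}\|_2=O_P(\sqrt{{s^*}\log d/n})$ with $O_P({s^*})$-sparse rows, all from Lemma \ref{lem:hes_hd}, together with $\matrixnorm{\widehat\Sigma_j}_{\max}=O_P(1)$ and $\matrixnorm{\widehat\Sigma_j-\widehat\Sigma_N}_{\max}=O_P(\sqrt{\log d/n})$ from Bernstein's inequality applied to the sub-exponential entries.

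For the first term, $(\bg_j-\bar\bg)-(\bg_j^\ast-\bar\bg^\ast)=(\widehat\Sigma_j-\widehat\Sigma_N)(\btheta-\thetas)=:c_j$, so $M-M^\ast$ is a quadratic part $\frac nk\sum_j c_jc_j^\top$ plus a cross part $\frac nk\sum_j\{(\bg_j^\ast-\bar\bg^\ast)c_j^\top+c_j(\bg_j^\ast-\bar\bg^\ast)^\top\}$. The quadratic part is bounded entrywise by $n\max_j\|c_j\|_\infty^2=O_P(nr_{\btheta}^2)$ using only $\matrixnorm{\widehat\Sigma_j}_{\max}=O_P(1)$; the cross part is treated by a Cauchy--Schwarz step over the machines, giving an entrywise bound $\sqrt{M^\ast_{l,l}}\bigl(\frac nk\sum_j[(\widehat\Sigma_j-\widehat\Sigma_N)(\btheta-\thetas)]_{l'}^2\bigr)^{1/2}=O_P(\sqrt{\log(kd)}\,r_{\btheta})$, which this time does need the sharper rate for $\matrixnorm{\widehat\Sigma_j-\widehat\Sigma_N}_{\max}$ together with $M^\ast_{l,l}=O_P(1)$. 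Multiplying by $\matrixnorm{\tT}_\infty^2=O_P({s^*})$ yields the ${s^*}\bigl(\sqrt{\log(kd)}\,r_{\btheta}+nr_{\btheta}^2\bigr)$ part of the claim. For the second term, write it as $(\tT-\Theta)M^\ast\tT^\top+\Theta M^\ast(\tT-\Theta)^\top$; since the $l$-th rows of $\tT-\Theta$ and of $\Theta$ are supported on $O_P({s^*})$ coordinates, each entry is bounded by $\|(\tT-\Theta)_{l,\cdot}\|_2$ times an $\ell_2$ row norm of $\tT,\Theta$ (each $O_P(\sqrt{{s^*}})$) times the operator norm of the corresponding $O_P({s^*})\times O_P({s^*})$ submatrix of $M^\ast$, which is $O_P(1)$ once $n\gg{s^*}\log d$; this gives the $\sqrt{{s^*}\log d/n}$ summand.

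The heart of the proof is the third term. Since $\matrixnorm{\Theta(M^\ast-M_0)\Theta^\top}_{\max}\le\matrixnorm{\Theta}_\infty^2\matrixnorm{M^\ast-M_0}_{\max}=O_P({s^*})\matrixnorm{M^\ast-M_0}_{\max}$, it suffices to bound $\matrixnorm{M^\ast-M_0}_{\max}$. The algebraic identity
\begin{align*}
M^\ast=\frac1k\sum_{j=1}^k\frac1n\Bigl(\sum_{i=1}^n x_{ij}e_{ij}\Bigr)\Bigl(\sum_{i=1}^n x_{ij}e_{ij}\Bigr)^\top-n\bar\bg^\ast\bar\bg^{\ast\top}=M_0+\frac1{nk}\sum_{j=1}^k\sum_{i\ne i'}x_{ij}e_{ij}\,x_{i'j}^\top e_{i'j}-n\bar\bg^\ast\bar\bg^{\ast\top}
\end{align*}
shows that $M^\ast-M_0$ is the centering term $-n\bar\bg^\ast\bar\bg^{\ast\top}$, bounded entrywise by $n\|\bar\bg^\ast\|_\infty^2=O_P(n\log d/N)=O_P(\log d/k)$, plus a sum over the $k$ machines of \emph{degenerate} (mean-zero) quadratic forms whose $(l,l')$ entry is $S_l^{(j)}S_{l'}^{(j)}-\frac1n\sum_i x_{ij,l}x_{ij,l'}e_{ij}^2$, where $S_l^{(j)}:=n^{-1/2}\sum_i x_{ij,l}e_{ij}$. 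The main obstacle is that $S_l^{(j)}$ is only a normalized sum of sub-exponential variables, so $S_l^{(j)}S_{l'}^{(j)}$ has merely a sub-Weibull tail and Bernstein cannot be applied directly to $k^{-1}\sum_j S_l^{(j)}S_{l'}^{(j)}$. I would truncate the per-machine entries at level $T\asymp\log^2(dk)$, apply Bernstein to the bounded, mean-corrected part to get a uniform-in-$(l,l')$ deviation of order $\sqrt{\log d/k}+T\log d/k$, and dispose of the truncation remainder by a union bound over $kd^2$ events using the sub-exponential tail $P(|S_l^{(j)}|>t)\lesssim\exp(-c(t^2\wedge t\sqrt n))$ (which is why $n\gtrsim\log(dk)$ is needed, implied by the hypotheses). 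This gives $\matrixnorm{M^\ast-M_0}_{\max}=O_P(\sqrt{\log d/k}+\log^2(dk)\log d/k)$, and multiplying by $O_P({s^*})$ and summing the three terms delivers the stated bound.

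For the second assertion, take $u=R^{1-\kappa'}$ with a small $\kappa'\in(0,1)$, where $R$ denotes the high-probability envelope just obtained; then $P(\matrixnorm{\overline\Omega-\widehat\Omega}_{\max}>u)=o(1)$ and $u^{1/3}(1\vee\log(d/u))^{2/3}\lesssim R^{(1-\kappa')/3}(\log d)^{2/3}=o(1)$ whenever $R(\log d)^{c}=o(1)$ for every fixed $c$. Checking each summand of $R$ against $n\gg{s^*}\log^{5+\kappa}d$, $k\gg{s^*}^2\log^{5+\kappa}d$ and $r_{\btheta}\ll\min\{({s^*}\sqrt{\log(kd)}\log^{2+\kappa}d)^{-1},(\sqrt{n{s^*}}\log^{1+\kappa}d)^{-1}\}$ confirms this (for instance ${s^*}\sqrt{\log d/k}\ll(\log d)^{-2}$, ${s^*}nr_{\btheta}^2\ll(\log d)^{-2-2\kappa}$, ${s^*}\sqrt{\log(kd)}\,r_{\btheta}\ll(\log d)^{-2-\kappa}$ and $\sqrt{{s^*}\log d/n}\ll(\log d)^{-2-\kappa/2}$), so \eqref{eqn:u} holds.
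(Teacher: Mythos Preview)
Your overall strategy is sound and close in spirit to the paper's, but the decomposition differs: the paper routes everything through the \emph{population} object $\Omega_0=\sigma^2\Theta$, writing $\matrixnorm{\overline\Omega-\widehat\Omega}_{\max}\le\matrixnorm{\overline\Omega-\Omega_0}_{\max}+\matrixnorm{\widehat\Omega-\Omega_0}_{\max}$, and then splits $\overline\Omega-\Omega_0$ into an ``$\tT$ versus $\Theta$'' piece (their $I_2$, sandwiching the deterministic $\sigma^2\Sigma$) and a ``meat'' piece (their $I_1$, handled by Lemma~\ref{lem:vcov0_reg} via the $U_1+U_2+U_3$ decomposition). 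Your first term corresponds exactly to $U_1$ with the same quadratic/cross split and Cauchy--Schwarz step, and your third term is a legitimate alternative to $U_2$ plus the $\widehat\Omega-\Omega_0$ bound: the truncation-plus-Bernstein argument you sketch for the degenerate U-statistic is essentially the content of Corollary~3.1 in \citet{chernozhukov2013gaussian}, which is what the paper invokes. Those parts are fine. What you gain by centering with the empirical $\bar\bg$ throughout is that the $nr_{\btheta}^2$ term appears directly from your crude bound on the quadratic part, whereas in the paper it emerges from $U_3$ after introducing the artificial population centering $\nabla\cLs(\btheta)$.

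The gap is in your second term. You sandwich $\tT-\Theta$ around the \emph{random} matrix $M^\ast$ and assert that $O_P(s^*)\times O_P(s^*)$ submatrices of $M^\ast$ have operator norm $O_P(1)$ ``once $n\gg s^*\log d$''. This is not justified as stated: concentration of $M^\ast$ around $\sigma^2\Sigma$ is governed by $k$, not $n$, and the submatrix you need is indexed by the \emph{random} support of $\tT_{l',\cdot}$ (which depends on machine~1's data and hence on $M^\ast$), so a uniform-over-subsets argument or a conditioning step is required. Moreover, your intermediate claim that the $\ell_2$ row norms of $\tT,\Theta$ are $O_P(\sqrt{s^*})$ together with an $O_P(1)$ submatrix norm yields $O_P(s^*\sqrt{\log d/n})$, not the $\sqrt{s^*\log d/n}$ you claim; the row norms are in fact $O_P(1)$ since $\|\Theta_{l,\cdot}\|_2\le\matrixnorm{\Theta}_2=O(1)$, but you do not say this. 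The paper avoids all of this cleanly: in its $I_2$ term the middle matrix is the deterministic $\sigma^2\Sigma$ with $\matrixnorm{\Sigma}_2=O(1)$, so $\matrixnorm{(\tT-\Theta)\Sigma(\tT-\Theta)^\top}_{\max}\le\matrixnorm{\Sigma}_2\max_l\|(\tT-\Theta)_{l,\cdot}\|_2^2$ and $\matrixnorm{\tT-\Theta}_{\max}\le\max_l\|(\tT-\Theta)_{l,\cdot}\|_2$ give $O_P(\sqrt{s^*\log d/n})$ immediately with no random-support issues. The simplest repair of your argument is to insert $\sigma^2\Sigma$ in your second term, i.e., write $\tT M^\ast\tT^\top-\Theta M^\ast\Theta^\top=(\tT\sigma^2\Sigma\tT^\top-\Theta\sigma^2\Sigma\Theta^\top)+\bigl(\tT(M^\ast-\sigma^2\Sigma)\tT^\top-\Theta(M^\ast-\sigma^2\Sigma)\Theta^\top\bigr)$ and bound the second bracket by $\matrixnorm{\tT-\Theta}_\infty\matrixnorm{M^\ast-\sigma^2\Sigma}_{\max}\bigl(\matrixnorm{\tT}_\infty+\matrixnorm{\Theta}_\infty\bigr)$, which is lower order.
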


\begin{proof}[Lemma \ref{lem:gbd0_reg}]
	Note by the triangle inequality that
	$$\matrixnorm{\overline\Omega-\widehat\Omega}_{\max}\leq\matrixnorm{\overline\Omega-\Omega_0}_{\max} + \matrixnorm{\widehat\Omega-\Omega_0}_{\max},$$
	where $\Omega_0$ is defined as in \eqref{eqn:o0}.  First, we bound $\matrixnorm{\widehat\Omega-\Omega_0}_{\max}$.  With Assumption~(E.1) of \citet{chernozhukov2013gaussian} verified for $\nabla^2\cLs(\thetas)^{-1}\nabla\cL(\thetas;Z)$ in the proof of Lemma~\ref{lem:reg0}, by the proof of Corollary 3.1 of \citet{chernozhukov2013gaussian}, we have that
	\begin{align*}
	\Ee\left[\matrixnorm{\widehat\Omega-\Omega_0}_{\max}\right]\lesssim \sqrt{\frac{\log d}N} + \frac{\log^2(dN)\log d}N,
	\end{align*}
	and then, by Markov's inequality, with probability at least $1-\delta$,
	\begin{align*}
	\matrixnorm{\widehat\Omega-\Omega_0}_{\max}&\lesssim\frac1\delta\left( \sqrt{\frac{\log d}N} + \frac{\log^2(dN)\log d}N\right),
	\end{align*}
	for any $\delta\in(0,1)$, which implies that
	$$\matrixnorm{\widehat\Omega-\Omega_0}_{\max} = O_P\left(\sqrt{\frac{\log d}N} + \frac{\log^2(dN)\log d}N\right).$$

	Next, we bound $\matrixnorm{\overline\Omega-\Omega_0}_{\max}$.  By the triangle inequality, we have that
	\begin{align*}
	&\matrixnorm{\overline\Omega-\Omega_0}_{\max}\\
	&=\matrixnorm{\tT\left(\frac1k\sum_{j=1}^k n\left(\nabla\cL_j(\btheta)-\nabla\cL_N(\btheta)\right)\left(\nabla\cL_j(\btheta)-\nabla\cL_N(\btheta)\right)^\top\right)\tT^\top-\Theta\Ee\left[\nabla\cL(\thetas;Z)\nabla\cL(\thetas;Z)^\top\right]\Theta}_{\max} \\
	&\leq \matrixnorm{\tT\left(\frac1k\sum_{j=1}^k n\left(\nabla\cL_j(\btheta)-\nabla\cL_N(\btheta)\right)\left(\nabla\cL_j(\btheta)-\nabla\cL_N(\btheta)\right)^\top-\Ee\left[\nabla\cL(\thetas;Z)\nabla\cL(\thetas;Z)^\top\right]\right)\tT}_{\max} \\
	&\quad+ \matrixnorm{\tT\Ee\left[\nabla\cL(\thetas;Z)\nabla\cL(\thetas;Z)^\top\right]\tT^\top-\Theta\Ee\left[\nabla\cL(\thetas;Z)\nabla\cL(\thetas;Z)^\top\right]\Theta}_{\max} \\
	&\defn I_1(\btheta) + I_2.
	\end{align*}
	To bound $I_1(\btheta)$, we use the fact that for any two matrices $A$ and $B$ with compatible dimensions, $\matrixnorm{AB}_{\max}\leq\matrixnorm{A}_\infty\matrixnorm{B}_{\max}$ and $\matrixnorm{AB}_{\max}\leq\matrixnorm{A}_{\max}\matrixnorm{B}_1$, and obtain that
	\begin{align*}
	I_1(\btheta) &\leq \matrixnorm{\tT}_\infty \matrixnorm{\frac1k\sum_{j=1}^k n\left(\nabla\cL_j(\btheta)-\nabla\cL_N(\btheta)\right)\left(\nabla\cL_j(\btheta)-\nabla\cL_N(\btheta)\right)^\top-\Ee\left[\nabla\cL(\thetas;Z)\nabla\cL(\thetas;Z)^\top\right]}_{\max} \matrixnorm{\tT^\top}_1 \\
	&= \matrixnorm{\tT}_\infty^2 \matrixnorm{\frac1k\sum_{j=1}^k n\left(\nabla\cL_j(\btheta)-\nabla\cL_N(\btheta)\right)\left(\nabla\cL_j(\btheta)-\nabla\cL_N(\btheta)\right)^\top-\Ee\left[\nabla\cL(\thetas;Z)\nabla\cL(\thetas;Z)^\top\right]}_{\max}.
	\end{align*}
	Under Assumption~\ref{as:design}, by Lemma~\ref{lem:hes_hd}, if $n\gg  {s^*} \log d$, we have that
	$$\matrixnorm{\tT}_\infty=\max_l \left\|\tT_l\right\|_1=O_P\left( \sqrt{{s^*}}\right).$$
	Then, applying Lemma~\ref{lem:vcov0_reg}, we have that
	\begin{align*}
	I_1(\btheta) &= O_P\left(  {s^*}\right) O_P\left(\sqrt{\frac{\log d}k} + \frac{\log^2(dk)\log d}k + \sqrt{\log(kd)}r_{\btheta} + nr_{\btheta}^2\right) \\
	&= O_P\left(  {s^*}\left(\sqrt{\frac{\log d}k} + \frac{\log^2(dk)\log d}k + \sqrt{\log(kd)}r_{\btheta} + nr_{\btheta}^2\right)\right),
	\end{align*}
	under Assumptions~\ref{as:design}~and~\ref{as:noise}, provided that $\left\|\btheta-\thetas\right\|_1=O_P(r_{\btheta})$, $r_{\btheta}\sqrt{\log(kd)}\lesssim 1$, and $k\gtrsim\log^2(dk)\log d$.
	
	It remains to bound $I_2$.  In linear model, we have that
	$$I_2=\matrixnorm{\tT\left(\sigma^2\Sigma\right)\tT^\top-\Theta\left(\sigma^2\Sigma\right)\Theta}_{\max}=\sigma^2\matrixnorm{\tT\Sigma\tT^\top-\Theta}_{\max},$$
	and by the triangle inequality,
	\begin{align*}
	I_2&=\sigma^2\matrixnorm{(\tT-\Theta+\Theta)\Sigma(\tT-\Theta+\Theta)^\top-\Theta}_{\max} \\
	&=\sigma^2\matrixnorm{(\tT-\Theta)\Sigma(\tT-\Theta)^\top+\Theta\Sigma(\tT-\Theta)^\top+(\tT-\Theta)\Sigma\Theta+\Theta\Sigma\Theta-\Theta}_{\max} \\
	&\leq\sigma^2\matrixnorm{(\tT-\Theta)\Sigma(\tT-\Theta)^\top}_{\max} + 2\sigma^2\matrixnorm{\tT-\Theta}_{\max}.
	\end{align*}
	By Lemma~\ref{lem:hes_hd}, we have that
	$$\matrixnorm{\tT-\Theta}_{\max}\leq\max_l\left\|\tT_l-\Theta_l\right\|_2=O_P\left( \sqrt{\frac{{s^*}\log d}n}\right),$$
	and
	\begin{align*}
	\matrixnorm{(\tT-\Theta)\Sigma(\tT-\Theta)^\top}_{\max}&\leq\matrixnorm{\Sigma}_2\max_l\left\|\tT_l-\Theta_l\right\|_2^2=O_P\left( \frac{{s^*}\log d}n\right),
	\end{align*}
	where we use that $\matrixnorm{\Sigma}_{\max}\leq\matrixnorm{\Sigma}_2=O(1)$ under Assumption~\ref{as:design}.  Then, we obtain that
	$$I_2 = O_P\left( \frac{{s^*}\log d}n\right) + O_P\left( \sqrt{\frac{{s^*}\log d}n}\right) = O_P\left( \sqrt{\frac{{s^*}\log d}n}\right).$$
	Putting all the preceding bounds together, we obtain that
	\begin{align*}
        \matrixnorm{\overline\Omega-\Omega_0}_{\max} &= O_P\left(  {s^*}\left(\sqrt{\frac{\log d}k} + \frac{\log^2(dk)\log d}k + \sqrt{\log(kd)}r_{\btheta} + nr_{\btheta}^2\right) + \sqrt{\frac{{s^*}\log d}n}\right),
        \end{align*}
        and
        \begin{align*}
        \matrixnorm{\overline\Omega-\widehat\Omega}_{\max} &= O_P\left(  {s^*}\left(\sqrt{\frac{\log d}k} + \frac{\log^2(dk)\log d}k + \sqrt{\log(kd)}r_{\btheta} + nr_{\btheta}^2\right) + \sqrt{\frac{{s^*}\log d}n}\right).
        \end{align*}
        
        Choosing
	$$u= \left({s^*}\sqrt{\frac{\log d}k} + \frac{{s^*}\log^2(dk)\log d}k + {s^*}\sqrt{\log(kd)}r_{\btheta} + n {s^*} r_{\btheta}^2 + \sqrt{\frac{{s^*}\log d}n}\right)^{1-\kappa},$$
	with any $\kappa>0$, we deduce that
	$$P\left(\matrixnorm{\overline\Omega-\widehat\Omega}_{\max}>u\right)=o(1).$$
	We also have that
	$$u^{1/3}\left(1\vee\log\frac du\right)^{2/3}=o(1),$$
	provided that
	$$ \left({s^*}\sqrt{\frac{\log d}k} + \frac{{s^*}\log^2(dk)\log d}k + {s^*}\sqrt{\log(kd)}r_{\btheta} + n {s^*} r_{\btheta}^2 + \sqrt{\frac{{s^*}\log d}n}\right) \log^{2+\kappa} d =o(1),$$
	which holds if
	$$n\gg  {s^*}\log^{5+\kappa} d,$$
	$$k\gg{s^*}^2\log^{5+\kappa} d,$$
	and
	$$r_{\btheta}\ll\min\left\{\frac1{  {s^*}\sqrt{\log(kd)}\log^{2+\kappa}d},\frac1{ \sqrt{n {s^*}}\log^{1+\kappa}d}\right\}.$$
	
\end{proof}

\begin{lemma}\label{lem:gbdo_reg}
	$\widehat\Omega$ and $\Omega_0$ is defined as in \eqref{eqn:oh} and \eqref{eqn:o0} respectively.  In sparse linear model, under Assumptions~\ref{as:design}~and~\ref{as:noise}, we have that
	$$\matrixnorm{\widehat\Omega-\Omega_0}_{\max} = O_P\left(\sqrt{\frac{\log d}N} + \frac{\log^2(dN)\log d}N\right).$$
	Moreover, if $N\gg\log^{5+\kappa}d$ for some $\kappa>0$, then there exists some $v>0$ such that \eqref{eqn:v} holds.
\end{lemma}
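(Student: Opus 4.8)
The plan is to read $\widehat\Omega$ as an empirical second-moment matrix and to recycle, almost verbatim, the ingredients already assembled for Lemma~\ref{lem:reg0}. Write $h_{ij}\defn\nabla^2\cLs(\thetas)^{-1}\nabla\cL(\thetas;Z_{ij})$, so that in the linear model $h_{ij}=\Theta x_{ij}e_{ij}$, and note from \eqref{eqn:oh} and \eqref{eqn:o0} that $\widehat\Omega=N^{-1}\sum_{i,j}h_{ij}h_{ij}^\top$ and $\Omega_0=\Ee[h_{ij}h_{ij}^\top]$. Thus the quantity to control is exactly the max-norm deviation of a $d\times d$ empirical covariance from its population counterpart. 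The key point — and the reason the bound carries no factor of $\overline s$ — is that one should \emph{not} pass through the crude bound $\matrixnorm{\widehat\Omega-\Omega_0}_{\max}\leq\matrixnorm{\Theta}_\infty^2\matrixnorm{N^{-1}\sum_{i,j}x_{ij}x_{ij}^\top e_{ij}^2-\Ee[xx^\top e^2]}_{\max}$ (which would cost the factor $\matrixnorm{\Theta}_\infty^2\lesssim s^*$), but instead apply a covariance-concentration result directly to the $h_{ij}$'s, so that the ambient dimension $d$, rather than any sparsity level, is the only scale entering the bound.

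Concretely, I would first re-invoke the two facts established in the proof of Lemma~\ref{lem:reg0}: under Assumptions~\ref{as:design}--\ref{as:noise}, each coordinate $(h_{ij})_l=(\Theta x_{ij})^\top e_{ij}$ is sub-exponential with $\psi_1$-norm bounded by an absolute constant (the product of the sub-Gaussian $w^\top\Theta x$, whose $\psi_2$-norm is $O(\lambdamax(\Theta))=O(1)$, and the sub-Gaussian $e$), and $\min_l\Ee[(h_{ij})_l^2]=\sigma^2\min_l\Theta_{l,l}\geq\sigma^2/\lambdamax(\Sigma)$ is bounded away from zero. These are precisely the hypotheses of Assumption~(E.1) of \citet{chernozhukov2013gaussian}. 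Then, following the proof of their Corollary~3.1 (a truncation-plus-symmetrization argument: the coordinates of $h_{ij}h_{ij}^\top$ are products of two sub-exponentials, hence heavier-tailed, and truncation at level of order $\log(dN)$ produces a bounded part contributing the $\sqrt{\log d/N}$ rate and a tail part contributing $\log^2(dN)\log d/N$), one obtains
$$\Ee\big[\matrixnorm{\widehat\Omega-\Omega_0}_{\max}\big]\;\lesssim\;\sqrt{\frac{\log d}N}+\frac{\log^2(dN)\log d}N,$$
with no dependence on $s_0$ or $s^*$. Converting this to the stated $O_P$ bound is a one-line Markov argument, exactly as already done inside the proof of Lemma~\ref{lem:gbd0_reg}: for any $\delta\in(0,1)$, with probability at least $1-\delta$ the left side is $\lesssim\delta^{-1}(\sqrt{\log d/N}+\log^2(dN)\log d/N)$.

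For the ``moreover'' clause, set $r_N\defn\sqrt{\log d/N}+\log^2(dN)\log d/N$ and take $v\defn r_N^{1-\kappa'}$ for a small $\kappa'\in(0,1)$ chosen below (this is the same slack device used for $\zeta$ in Lemma~\ref{lem:tbd_reg}). Since $\matrixnorm{\widehat\Omega-\Omega_0}_{\max}=O_P(r_N)$ and $v/r_N=r_N^{-\kappa'}\to\infty$, we get $P(\matrixnorm{\widehat\Omega-\Omega_0}_{\max}>v)=o(1)$. For the remaining term, $v\to0$ and $\log(1/r_N)=O(\log(dN))$, so $\log(d/v)\lesssim\log(dN)$ and hence $v^{1/3}(1\vee\log(d/v))^{2/3}\lesssim r_N^{(1-\kappa')/3}\log^{2/3}(dN)$; this is $o(1)$ as soon as $r_N\,\log^{2/(1-\kappa')}(dN)=o(1)$. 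Under $N\gg\log^{5+\kappa}d$ the rate $r_N$ is of order $\sqrt{\log d/N}$ and $\log(dN)\asymp\log d$, so this last requirement reads $N\gg\log^{1+4/(1-\kappa')}d$ up to constants; taking $\kappa'$ small enough in terms of $\kappa$ makes $1+4/(1-\kappa')\leq5+\kappa$, which delivers \eqref{eqn:v}.

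The argument is short and essentially obstruction-free; the only thing that requires genuine care is the observation in the first paragraph — working with $h_{ij}$ directly, rather than factoring out $\Theta$ — since that is what keeps the bound free of $\overline s$ and matches the statement. Everything else is bookkeeping: reusing the $\psi_1$-norm and variance-lower-bound computations of Lemma~\ref{lem:reg0}, quoting the proof of Corollary~3.1 of \citet{chernozhukov2013gaussian}, applying Markov's inequality, and tracking the logarithmic exponents to land precisely on $5+\kappa$.
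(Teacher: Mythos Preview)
Your proposal is correct and follows essentially the same approach as the paper: verify Assumption~(E.1) of \citet{chernozhukov2013gaussian} for $h_{ij}=\Theta x_{ij}e_{ij}$ by reusing the sub-exponential and variance-lower-bound facts from Lemma~\ref{lem:reg0}, invoke the proof of their Corollary~3.1 for the expectation bound, convert to $O_P$ via Markov, and then choose $v=r_N^{1-\kappa'}$ to obtain \eqref{eqn:v} under $N\gg\log^{5+\kappa}d$. The paper does exactly this (indeed, it simply cites the corresponding passage already written inside the proof of Lemma~\ref{lem:gbd0_reg}); your additional remark about why one should not factor out $\Theta$ via $\matrixnorm{\Theta}_\infty^2$ is a helpful clarification but not a departure from the paper's route.
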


\begin{proof}[Lemma \ref{lem:gbdo_reg}]
	In the proof of Lemma~\ref{lem:gbd0_reg}, we have shown that
	$$\matrixnorm{\widehat\Omega-\Omega_0}_{\max} = O_P\left(\sqrt{\frac{\log d}N} + \frac{\log^2(dN)\log d}N\right).$$
        Choosing
	$$v=\left(\sqrt{\frac{\log d}N} + \frac{\log^2(dN)\log d}N\right)^{1-\kappa},$$
	with any $\kappa>0$, we deduce that
	$$P\left(\matrixnorm{\widehat\Omega-\Omega_0}_{\max}>v\right)=o(1).$$
	We also have that
	$$v^{1/3}\left(1\vee\log\frac dv\right)^{2/3}=o(1),$$
	provided that
	$$\left(\sqrt{\frac{\log d}N} + \frac{\log^2(dN)\log d}N\right) \log^{2+\kappa} d=o(1),$$
	which holds if
	$$N\gg\log^{5+\kappa}d.$$
	The same result applies to the low-dimensional case as well.
\end{proof}

\begin{lemma}\label{lem:gbd_reg}
	$\widetilde\Omega$ and $\widehat\Omega$ are defined as in \eqref{eqn:ot} and \eqref{eqn:oh} respectively.  In sparse linear model, under Assumptions~\ref{as:design}~and~\ref{as:noise}, provided that $\left\|\btheta-\thetas\right\|_1=O_P(r_{\btheta})$, $r_{\btheta}\sqrt{\log((n+k)d)}\lesssim 1$, $n\gg  {s^*} \log d$, and $n+k\gtrsim\log^3 d$, we have that
	$$\matrixnorm{\widetilde\Omega-\widehat\Omega}_{\max} = O_P\left(  {s^*}\left(\sqrt{\frac{\log d}{n+k}} + \frac{\log^2(d(n+k))\log d}{n+k} + \sqrt{\log((n+k)d)} r_{\btheta} + \frac{nk}{n+k}r_{\btheta}^2\right) + \sqrt{\frac{{s^*}\log d}n}\right).$$
	Moreover, if $n\gg  {s^*}\log^{5+\kappa} d$, $n+k\gg{s^*}^2\log^{5+\kappa} d$, and
	$$\left\|\btheta-\thetas\right\|_1\ll\min\left\{\frac1{  {s^*}\sqrt{\log((n+k)d)}\log^{2+\kappa}d},\frac1{ \sqrt{{s^*}}\log^{1+\kappa}d}\sqrt{\frac1n+\frac1k}\right\},$$
	for some $\kappa>0$, then there exists some $u>0$ such that \eqref{eqn:ut} holds.
\end{lemma}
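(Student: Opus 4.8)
The plan is to mirror the three-term decomposition used in the proof of Lemma~\ref{lem:gbd0_reg}, replacing the \texttt{k-grad} weighted empirical covariance by its \texttt{n+k-1-grad} counterpart. First I would split $\matrixnorm{\widetilde\Omega-\widehat\Omega}_{\max}\leq\matrixnorm{\widetilde\Omega-\Omega_0}_{\max}+\matrixnorm{\widehat\Omega-\Omega_0}_{\max}$, where $\Omega_0$ is as in \eqref{eqn:o0}. The second term is already controlled in Lemma~\ref{lem:gbdo_reg}, namely $O_P\big(\sqrt{\log d/N}+\log^2(dN)\log d/N\big)$; since $N=nk\geq n+k$ this is dominated by the bound we will derive for the first term (even before the extra $s^*$ factor that appears there), so the work reduces to bounding $\matrixnorm{\widetilde\Omega-\Omega_0}_{\max}$.

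Here $\widetilde\Omega$ has the same sandwich form $\tT\,\widetilde M(\btheta)\,\tT^\top$ as $\overline\Omega$, where $\widetilde M(\btheta)\defn\frac1{n+k-1}\big(\sum_{i=1}^n(\nabla\cL(\btheta;Z_{i1})-\nabla\cL_N(\btheta))(\cdot)^\top+\sum_{j=2}^k n(\nabla\cL_j(\btheta)-\nabla\cL_N(\btheta))(\cdot)^\top\big)$ and $\Omega_0=\Theta\,\Ee[\nabla\cL(\thetas;Z)\nabla\cL(\thetas;Z)^\top]\,\Theta$. Splitting exactly as in Lemma~\ref{lem:gbd0_reg} into $\widetilde I_1(\btheta)+I_2$, the bias term $I_2=\matrixnorm{\tT\,\Ee[\nabla\cL(\thetas;Z)\nabla\cL(\thetas;Z)^\top]\,\tT^\top-\Theta\,\Ee[\nabla\cL(\thetas;Z)\nabla\cL(\thetas;Z)^\top]\,\Theta}_{\max}$ is literally the same object analyzed there (in the linear model it reduces to $\sigma^2\matrixnorm{\tT\Sigma\tT^\top-\Theta}_{\max}$), so $I_2=O_P(\sqrt{s^*\log d/n})$ follows verbatim from Lemma~\ref{lem:hes_hd}. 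For $\widetilde I_1(\btheta)\leq\matrixnorm{\tT}_\infty^2\,\matrixnorm{\widetilde M(\btheta)-\Ee[\nabla\cL(\thetas;Z)\nabla\cL(\thetas;Z)^\top]}_{\max}$ I would use $\matrixnorm{\tT}_\infty=O_P(\sqrt{s^*})$ from Lemma~\ref{lem:hes_hd} (valid once $n\gg s^*\log d$) together with the \texttt{n+k-1-grad} analogue of Lemma~\ref{lem:vcov0_reg}, which should give $\matrixnorm{\widetilde M(\btheta)-\Ee[\nabla\cL(\thetas;Z)\nabla\cL(\thetas;Z)^\top]}_{\max}=O_P\big(\sqrt{\log d/(n+k)}+\log^2(d(n+k))\log d/(n+k)+\sqrt{\log((n+k)d)}\,r_{\btheta}+\frac{nk}{n+k}r_{\btheta}^2\big)$. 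Combining these and absorbing the dominated $\matrixnorm{\widehat\Omega-\Omega_0}_{\max}$ term yields the displayed rate.

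The crux is establishing that max-norm concentration bound for $\widetilde M(\btheta)$. Its ``design'' is a heterogeneous mixture: $n$ outer products of centered single master scores plus $k-1$ outer products of centered block averages (each of order-$n^{-1}$ variance), with total effective sample size $n+k-1$. I would first linearize, using $\nabla\cL_j(\btheta)-\nabla\cL_j(\thetas)=(\frac1n X_j^\top X_j)(\btheta-\thetas)$ in the linear model (and the per-datum version in the master), so every difference is $O_P(r_{\btheta})$ in $\ell_\infty$; the quadratic form of these differences gives the $\frac{nk}{n+k}r_{\btheta}^2$ term — the $k-1$ worker blocks contribute $\asymp nk\,r_{\btheta}^2/(n+k)$ in total, dominating the master singletons' $\asymp n\,r_{\btheta}^2/(n+k)$ — while the cross term with the $\thetas$-scores gives the $\sqrt{\log((n+k)d)}\,r_{\btheta}$ term. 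The pure $\thetas$-part is handled by a sub-exponential Bernstein/maximal inequality over the $n+k-1$ independent summands (the scores are sub-exponential under Assumptions~\ref{as:design}~and~\ref{as:noise}, as already verified in the proof of Lemma~\ref{lem:reg0}), reproducing $\sqrt{\log d/(n+k)}+\log^2(d(n+k))\log d/(n+k)$ as in Corollary~3.1 of \citet{chernozhukov2013gaussian} with sample size $n+k-1$; the hypothesis $n+k\gtrsim\log^3 d$ is precisely what keeps the truncation term lower order. The main obstacle is the bookkeeping: carrying the correct effective sample size $n+k$ uniformly across non-identically-weighted summands, and tracking constants so the worker-versus-master trade-off comes out as $\frac{nk}{n+k}$ rather than $n$ (this is what lets \texttt{n+k-1-grad} tolerate small $k$, cf.\ \eqref{eqn:nk1grad_err_hd}).

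For the ``moreover'' part I would set $u$ equal to the derived rate raised to the power $1-\kappa$; then $P(\matrixnorm{\widetilde\Omega-\widehat\Omega}_{\max}>u)=o(1)$ is immediate, and checking $u^{1/3}(1\vee\log(d/u))^{2/3}=o(1)$ reduces term by term to $n+k\gg{s^*}^2\log^{5+\kappa}d$, $n\gg s^*\log^{5+\kappa}d$ (this also covers $I_2$), $r_{\btheta}\ll(s^*\sqrt{\log((n+k)d)}\log^{2+\kappa}d)^{-1}$, and $r_{\btheta}\ll(\sqrt{s^*}\log^{1+\kappa}d)^{-1}\sqrt{1/n+1/k}$ — exactly the stated hypotheses, paralleling the closing computation in the proof of Lemma~\ref{lem:gbd0_reg}.
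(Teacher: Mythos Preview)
Your proposal is correct and follows essentially the same route as the paper's proof: the same triangle-inequality split through $\Omega_0$, the same $I_1'(\btheta)+I_2$ decomposition of $\matrixnorm{\widetilde\Omega-\Omega_0}_{\max}$, the same use of Lemma~\ref{lem:hes_hd} for $\matrixnorm{\tT}_\infty$ and $I_2$, and the appeal to the \texttt{n+k-1-grad} covariance bound (Lemma~\ref{lem:vcov_reg} in the paper) for the middle matrix. The closing choice $u=(\text{rate})^{1-\kappa}$ and the term-by-term verification of the ``moreover'' conditions also match the paper exactly.
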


\begin{proof}[Lemma \ref{lem:gbd_reg}]
	Note by the triangle inequality that
	$$\matrixnorm{\widetilde\Omega-\widehat\Omega}_{\max}\leq\matrixnorm{\widetilde\Omega-\Omega_0}_{\max} + \matrixnorm{\widehat\Omega-\Omega_0}_{\max},$$
	where $\Omega_0$ is defined as in \eqref{eqn:o0}.  By the proof of Lemma~\ref{lem:gbd0_reg}, we have that
	$$\matrixnorm{\widehat\Omega-\Omega_0}_{\max} = O_P\left(\sqrt{\frac{\log d}N} + \frac{\log^2(dN)\log d}N\right).$$
	Next, we bound $\matrixnorm{\widetilde\Omega-\Omega_0}_{\max}$ using the same argument as in the proof of Lemma~\ref{lem:gbd0_reg}.  By the triangle inequality, we have that
	\begin{align*}
	&\matrixnorm{\widetilde\Omega-\Omega_0}_{\max}\\
	&=\Bigg|\!\Bigg|\!\Bigg|\tT\frac1{n+k-1}\Bigg(\sum_{i=1}^n\left(\nabla\cL(\theta;Z_{i1})-\nabla\cL_N(\btheta)\right)\left(\nabla\cL(\theta;Z_{i1})-\nabla\cL_N(\btheta)\right)^\top \\
	&\quad+\sum_{j=2}^k n\left(\nabla\cL_j(\btheta)-\nabla\cL_N(\btheta)\right)\left(\nabla\cL_j(\btheta)-\nabla\cL_N(\btheta)\right)^\top\Bigg)\tT^\top-\Theta\Ee\left[\nabla\cL(\thetas;Z)\nabla\cL(\thetas;Z)^\top\right]\Theta\Bigg|\!\Bigg|\!\Bigg|_{\max} \\
	&\leq \Bigg|\!\Bigg|\!\Bigg|\tT\Bigg(\frac1{n+k-1}\Bigg(\sum_{i=1}^n\left(\nabla\cL(\theta;Z_{i1})-\nabla\cL_N(\btheta)\right)\left(\nabla\cL(\theta;Z_{i1})-\nabla\cL_N(\btheta)\right)^\top \\
	&\quad+\sum_{j=2}^k n\left(\nabla\cL_j(\btheta)-\nabla\cL_N(\btheta)\right)\left(\nabla\cL_j(\btheta)-\nabla\cL_N(\btheta)\right)^\top\Bigg)-\Ee\left[\nabla\cL(\thetas;Z)\nabla\cL(\thetas;Z)^\top\right]\Bigg)\tT^\top\Bigg|\!\Bigg|\!\Bigg|_{\max} \\
	&\quad+ \matrixnorm{\tT\Ee\left[\nabla\cL(\thetas;Z)\nabla\cL(\thetas;Z)^\top\right]\tT^\top-\Theta\Ee\left[\nabla\cL(\thetas;Z)\nabla\cL(\thetas;Z)^\top\right]\Theta}_{\max} \\
	&\defn I_1'(\btheta) + I_2.
	\end{align*}
	We have shown in the proof of Lemma~\ref{lem:gbd0_reg} that
	$$I_2 = O_P\left( \sqrt{\frac{{s^*}\log d}n}\right).$$
	To bound $I_1'(\btheta)$, we note that
	\begin{align*}
	I_1'(\btheta) &\leq \matrixnorm{\tT}_\infty^2 \Bigg|\!\Bigg|\!\Bigg|\frac1{n+k-1}\Bigg(\sum_{i=1}^n\left(\nabla\cL(\btheta;Z_{i1})-\nabla\cL_N(\btheta)\right)\left(\nabla\cL(\btheta;Z_{i1})-\nabla\cL_N(\btheta)\right)^\top \\
	&\quad+\sum_{j=2}^k n\left(\nabla\cL_j(\btheta)-\nabla\cL_N(\btheta)\right)\left(\nabla\cL_j(\btheta)-\nabla\cL_N(\btheta)\right)^\top\Bigg) -\Ee\left[\nabla\cL(\thetas;Z)\nabla\cL(\thetas;Z)^\top\right]\Bigg|\!\Bigg|\!\Bigg|_{\max}.
	\end{align*}
	Under Assumption~\ref{as:design}, by Lemma~\ref{lem:hes_hd}, if $n\gg  {s^*} \log d$, we have that
	$$\matrixnorm{\tT}_\infty=\max_l \left\|\tT_l\right\|_1=O_P\left( \sqrt{{s^*}}\right).$$
	Then, applying Lemma~\ref{lem:vcov_reg}, we have that
	\begin{align*}
	I_1'(\btheta) = O_P\left(  {s^*}\left(\sqrt{\frac{\log d}{n+k}} + \frac{\log^2(d(n+k))\log d}{n+k} + \sqrt{\log((n+k)d)} r_{\btheta} + \frac{nk}{n+k}r_{\btheta}^2\right)\right),
	\end{align*}
	under Assumptions~\ref{as:design}~and~\ref{as:noise}, provided that $\left\|\btheta-\thetas\right\|_1=O_P(r_{\btheta})$, $r_{\btheta}\sqrt{\log((n+k)d)}\lesssim 1$, and $n+k\gtrsim\log^2(d(n+k))\log d$.
	Putting all the preceding bounds together, we obtain that
	\begin{align*}
        \matrixnorm{\widetilde\Omega-\Omega_0}_{\max} &= O_P\left(  {s^*}\left(\sqrt{\frac{\log d}{n+k}} + \frac{\log^2(d(n+k))\log d}{n+k} + \sqrt{\log((n+k)d)} r_{\btheta} + \frac{nk}{n+k}r_{\btheta}^2\right) + \sqrt{\frac{{s^*}\log d}n}\right),
        \end{align*}
        and
        \begin{align*}
        \matrixnorm{\widetilde\Omega-\widehat\Omega}_{\max} &= O_P\left(  {s^*}\left(\sqrt{\frac{\log d}{n+k}} + \frac{\log^2(d(n+k))\log d}{n+k} + \sqrt{\log((n+k)d)} r_{\btheta} + \frac{nk}{n+k}r_{\btheta}^2\right) + \sqrt{\frac{{s^*}\log d}n}\right).
        \end{align*}
        
        Choosing
	$$u= \left({s^*}\sqrt{\frac{\log d}{n+k}} + \frac{{s^*}\log^2(d(n+k))\log d}{n+k} + {s^*}\sqrt{\log((n+k)d)} r_{\btheta} + \frac{nk{s^*}}{n+k}r_{\btheta}^2 + \sqrt{\frac{{s^*}\log d}n}\right)^{1-\kappa},$$
	with any $\kappa>0$, we deduce that
	$$P\left(\matrixnorm{\widetilde\Omega-\widehat\Omega}_{\max}>u\right)=o(1).$$
	We also have that
	$$u^{1/3}\left(1\vee\log\frac du\right)^{2/3}=o(1),$$
	provided that
	$$ \left({s^*}\sqrt{\frac{\log d}{n+k}} + \frac{{s^*}\log^2(d(n+k))\log d}{n+k} + {s^*}\sqrt{\log((n+k)d)} r_{\btheta} + \frac{nk{s^*}}{n+k}r_{\btheta}^2 + \sqrt{\frac{{s^*}\log d}n}\right) \log^{2+\kappa} d =o(1),$$
	which holds if
	$$n\gg  {s^*}\log^{5+\kappa} d,$$
	$$n+k\gg{s^*}^2\log^{5+\kappa} d,$$
	and
	$$r_{\btheta}\ll\min\left\{\frac1{  {s^*}\sqrt{\log((n+k)d)}\log^{2+\kappa}d},\frac1{ \sqrt{{s^*}}\log^{1+\kappa}d}\sqrt{\frac1n+\frac1k}\right\}.$$
	
\end{proof}

\begin{lemma}\label{lem:gbd0_reg_glm}
	$\overline\Omega$ and $\widehat\Omega$ are defined as in \eqref{eqn:ob_glm} and \eqref{eqn:oh} respectively.  In sparse GLM, under Assumptions~\ref{as:smth_glm}--\ref{as:subexp_glm}, provided that $\left\|\btheta-\thetas\right\|_1=O_P(r_{\btheta})$, $r_{\btheta}\lesssim 1$, $n\gg s_0^2\log^2 d+{s^*}^2\log d$, and $k\gtrsim\log d$, we have that
	$$\matrixnorm{\overline\Omega-\widehat\Omega}_{\max} = O_P\left(  {s^*}\left(\sqrt{\frac{\log d}k} + \sqrt{\log d} r_{\btheta} + n r_{\btheta}^2\right) + \sqrt{\frac{(s_0+{s^*})\log d}n}+\frac{\log^2(dN)\log d}N\right).$$
	Moreover, if $n\gg  (s_0+{s^*})\log^{5+\kappa} d$, $k\gg{s^*}^2\log^{5+\kappa} d$, and
	$$\left\|\btheta-\thetas\right\|_1\ll\min\left\{\frac1{  {s^*}\log^{5/2+\kappa}d},\frac1{ \sqrt{n {s^*}}\log^{1+\kappa}d}\right\},$$
	for some $\kappa>0$, then there exists some $u>0$ such that \eqref{eqn:u} holds.
\end{lemma}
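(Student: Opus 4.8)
The plan is to follow the proof of the linear-model counterpart, Lemma~\ref{lem:gbd0_reg}, replacing the score's second-moment matrix $\sigma^2\Sigma$ by $\Sigma_g\defn\Ee[\nabla\cL(\thetas;Z)\nabla\cL(\thetas;Z)^\top]$ and the nodewise-Lasso bounds of Lemma~\ref{lem:hes_hd} by their GLM analogue, Lemma~\ref{lem:hes_hd_glm}. First, by the triangle inequality,
\begin{align*}
\matrixnorm{\overline\Omega-\widehat\Omega}_{\max}\leq\matrixnorm{\overline\Omega-\Omega_0}_{\max}+\matrixnorm{\widehat\Omega-\Omega_0}_{\max},
\end{align*}
with $\Omega_0$ as in \eqref{eqn:o0}, and I would bound the two pieces separately.

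For $\matrixnorm{\widehat\Omega-\Omega_0}_{\max}$, the key point is that $\widehat\Omega=N^{-1}\sum_{i,j}\bh_{ij}\bh_{ij}^\top$ and $\Omega_0=\Ee[\bh\bh^\top]$, where $\bh=\nabla^2\cLs(\thetas)^{-1}\nabla\cL(\thetas;Z)$ is the vector governed by Assumption~\ref{as:subexp_glm}. Applying the proof of Corollary~3.1 of \citet{chernozhukov2013gaussian} to the i.i.d.\ array $\{\bh_{ij}\}$ under those moment conditions, followed by Markov's inequality, gives $\matrixnorm{\widehat\Omega-\Omega_0}_{\max}=O_P(\sqrt{\log d/N}+\log^2(dN)\log d/N)$; because $\bh$ is only sub-exponential, the $\log^2(dN)\log d/N$ term is retained here, in contrast to the bounded summands appearing in $I_1$ below.

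Next I would split $\matrixnorm{\overline\Omega-\Omega_0}_{\max}\leq I_1(\btheta)+I_2$ exactly as in the linear proof, where $I_1(\btheta)$ collects the error of the empirical per-machine second-moment factor in \eqref{eqn:ob_glm}, evaluated at $\btheta$, relative to $\Sigma_g$, conjugated by $\tT(\ttheta^{(0)})$, and $I_2\defn\matrixnorm{\tT(\ttheta^{(0)})\Sigma_g\tT(\ttheta^{(0)})^\top-\Theta\Sigma_g\Theta}_{\max}$. For $I_1$ I would invoke Lemma~\ref{lem:hes_hd_glm} for $\matrixnorm{\tT(\ttheta^{(0)})}_\infty=O_P(\sqrt{s^*})$ (here $n\gg s_0^2\log^2 d+{s^*}^2\log d$ is used) and the GLM counterpart of Lemma~\ref{lem:vcov0_reg} for the centred-variance factor; the latter contributes $\sqrt{\log d/k}+\sqrt{\log d}\,r_{\btheta}+nr_{\btheta}^2$ --- the first term from averaging $k$ per-machine blocks, the last two from a Taylor expansion of $\nabla\cL_j(\btheta)-\nabla\cL_j(\thetas)$ about $\thetas$, where $\|\btheta-\thetas\|_1=O_P(r_{\btheta})$ together with $r_{\btheta}\lesssim1$ keeps $x^\top\btheta$ in the bounded range of Assumptions~\ref{as:smth_glm}--\ref{as:design_glm}. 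Since the coordinates of $\nabla\cL(\btheta;Z)$ are $O(1)$ in the GLM setting, Hoeffding/Bernstein bounds suffice and no logarithmic blow-up enters $I_1$; multiplying the centred-variance bound by $\matrixnorm{\tT(\ttheta^{(0)})}_\infty^2=O_P(s^*)$ yields the $s^*(\sqrt{\log d/k}+\sqrt{\log d}\,r_{\btheta}+nr_{\btheta}^2)$ contribution.

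The genuinely new step --- and the main obstacle --- is $I_2$: unlike the least-squares case, there is no identity $\Theta\Sigma_g=I$ to collapse $\tT\Sigma_g\tT^\top-\Theta\Sigma_g\Theta$ to $O(\matrixnorm{\tT-\Theta}_{\max})$. Instead I would expand
\begin{align*}
\tT(\ttheta^{(0)})\Sigma_g\tT(\ttheta^{(0)})^\top-\Theta\Sigma_g\Theta
&=(\tT(\ttheta^{(0)})-\Theta)\Sigma_g(\tT(\ttheta^{(0)})-\Theta)^\top \\
&\quad+\Theta\Sigma_g(\tT(\ttheta^{(0)})-\Theta)^\top+(\tT(\ttheta^{(0)})-\Theta)\Sigma_g\Theta,
\end{align*}
and bound each entry by Cauchy--Schwarz through row $\ell_2$ norms: $\matrixnorm{\Sigma_g}_2=O(1)$ and $\max_l\|\Theta_l\|_2\leq\matrixnorm{\Theta}_2=1/\lambdamin(\nabla^2\cLs(\thetas))=O(1)$ by Assumption~\ref{as:hes_glm}, whereas $\max_l\|\tT_l(\ttheta^{(0)})-\Theta_l\|_2=O_P(\sqrt{(s_0+s^*)\log d/n})$ by Lemma~\ref{lem:hes_hd_glm}, the extra $s_0$ reflecting that the nodewise Lasso is run at the pilot $\ttheta^{(0)}$ rather than at $\thetas$. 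This gives $I_2=O_P(\sqrt{(s_0+s^*)\log d/n})$ once the dominated quadratic contribution $(s_0+s^*)\log d/n$ is absorbed. Summing the three bounds produces the rate stated in the lemma; finally, taking $u$ equal to that rate raised to the power $1-\kappa$ makes $P(\matrixnorm{\overline\Omega-\widehat\Omega}_{\max}>u)=o(1)$ immediate, while $u^{1/3}(1\vee\log(d/u))^{2/3}=o(1)$ reduces to $n\gg(s_0+s^*)\log^{5+\kappa}d$, $k\gg{s^*}^2\log^{5+\kappa}d$, and $\|\btheta-\thetas\|_1\ll\min\{(s^*\log^{5/2+\kappa}d)^{-1},(\sqrt{ns^*}\log^{1+\kappa}d)^{-1}\}$, which establishes \eqref{eqn:u}.
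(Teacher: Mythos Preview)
Your proposal is correct and follows essentially the same route as the paper's proof: the same triangle-inequality split into $\matrixnorm{\widehat\Omega-\Omega_0}_{\max}$ (handled via Corollary~3.1 of \citet{chernozhukov2013gaussian}) and $\matrixnorm{\overline\Omega-\Omega_0}_{\max}\le I_1(\btheta)+I_2$, with $I_1$ bounded by $\matrixnorm{\tT(\ttheta^{(0)})}_\infty^2$ times Lemma~\ref{lem:vcov0_reg_glm} and $I_2$ by the three-term expansion of $\tT\Sigma_g\tT^\top-\Theta\Sigma_g\Theta$ controlled through $\matrixnorm{\Sigma_g}_2$, $\max_l\|\Theta_l\|_2$, and $\max_l\|\tT_l(\ttheta^{(0)})-\Theta_l\|_2$ from Lemma~\ref{lem:hes_hd_glm}. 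Your identification of the key new difficulty---that $\Theta\Sigma_g\neq I$ in the GLM case, forcing the row-$\ell_2$ Cauchy--Schwarz argument and bringing in the $s_0$ dependence---matches the paper exactly.
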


\begin{proof}[Lemma \ref{lem:gbd0_reg_glm}]
	We use the same argument as in the proof of Lemma~\ref{lem:gbd0_reg}.  Note by the triangle inequality that
	$$\matrixnorm{\overline\Omega-\widehat\Omega}_{\max}\leq\matrixnorm{\overline\Omega-\Omega_0}_{\max} + \matrixnorm{\widehat\Omega-\Omega_0}_{\max},$$
	where $\Omega_0$ is defined as in \eqref{eqn:o0}.  First, we bound $\matrixnorm{\widehat\Omega-\Omega_0}_{\max}$.  With Assumption~(E.1) of \citet{chernozhukov2013gaussian} verified for $\nabla^2\cLs(\thetas)^{-1}\nabla\cL(\thetas;Z)$ in the proof of Lemma~\ref{lem:reg0_glm}, by the proof of Corollary 3.1 of \citet{chernozhukov2013gaussian}, we have that
	$$\matrixnorm{\widehat\Omega-\Omega_0}_{\max} = O_P\left(\sqrt{\frac{\log d}N} + \frac{\log^2(dN)\log d}N\right).$$

	Next, we bound $\matrixnorm{\overline\Omega-\Omega_0}_{\max}$.  By the triangle inequality, we have that
	\begin{align*}
	&\matrixnorm{\overline\Omega-\Omega_0}_{\max}\\
	&=\Bigg|\!\Bigg|\!\Bigg|\tT(\ttheta^{(0)})\left(\frac1k\sum_{j=1}^k n\left(\nabla\cL_j(\btheta)-\nabla\cL_N(\btheta)\right)\left(\nabla\cL_j(\btheta)-\nabla\cL_N(\btheta)\right)^\top\right)\tT(\ttheta^{(0)})^\top \\
	&\quad-\Theta\Ee\left[\nabla\cL(\thetas;Z)\nabla\cL(\thetas;Z)^\top\right]\Theta\Bigg|\!\Bigg|\!\Bigg|_{\max} \\
	&\leq \matrixnorm{\tT(\ttheta^{(0)})\left(\frac1k\sum_{j=1}^k n\left(\nabla\cL_j(\btheta)-\nabla\cL_N(\btheta)\right)\left(\nabla\cL_j(\btheta)-\nabla\cL_N(\btheta)\right)^\top-\Ee\left[\nabla\cL(\thetas;Z)\nabla\cL(\thetas;Z)^\top\right]\right)\tT(\ttheta^{(0)})^\top}_{\max} \\
	&\quad + \matrixnorm{\tT(\ttheta^{(0)})\Ee\left[\nabla\cL(\thetas;Z)\nabla\cL(\thetas;Z)^\top\right]\tT(\ttheta^{(0)})^\top -\Theta\Ee\left[\nabla\cL(\thetas;Z)\nabla\cL(\thetas;Z)^\top\right]\Theta}_{\max} \\
	&\defn I_1(\btheta) + I_2.
	\end{align*}
	Note that
	\begin{align*}
	&\tT(\ttheta^{(0)})\Ee\left[\nabla\cL(\thetas;Z)\nabla\cL(\thetas;Z)^\top\right]\tT(\ttheta^{(0)})^\top \\
	&= \left(\tT(\ttheta^{(0)})-\Theta\right)\Ee\left[\nabla\cL(\thetas;Z)\nabla\cL(\thetas;Z)^\top\right]\left(\tT(\ttheta^{(0)})-\Theta\right)^\top + \Theta\Ee\left[\nabla\cL(\thetas;Z)\nabla\cL(\thetas;Z)^\top\right]\left(\tT(\ttheta^{(0)})-\Theta\right)^\top \\
	&\quad + \left(\tT(\ttheta^{(0)})-\Theta\right)\Ee\left[\nabla\cL(\thetas;Z)\nabla\cL(\thetas;Z)^\top\right]\Theta + \Theta\Ee\left[\nabla\cL(\thetas;Z)\nabla\cL(\thetas;Z)^\top\right]\Theta.
	\end{align*}
	By the triangle inequality, we have that
	\begin{align*}
	I_2 &\leq\matrixnorm{\left(\tT(\ttheta^{(0)})-\Theta\right)\Ee\left[\nabla\cL(\thetas;Z)\nabla\cL(\thetas;Z)^\top\right]\left(\tT(\ttheta^{(0)})-\Theta\right)^\top}_{\max} \\
	&\quad + 2\matrixnorm{\Theta\Ee\left[\nabla\cL(\thetas;Z)\nabla\cL(\thetas;Z)^\top\right]\left(\tT(\ttheta^{(0)})-\Theta\right)^\top}_{\max} \\
	&\leq\matrixnorm{\Ee\left[\nabla\cL(\thetas;Z)\nabla\cL(\thetas;Z)^\top\right]}_2 \max_l \left\|\tT(\ttheta^{(0)})_l-\Theta_l\right\|_2^2 \\
	&\quad + 2\matrixnorm{\Ee\left[\nabla\cL(\thetas;Z)\nabla\cL(\thetas;Z)^\top\right]}_2 \max_l \left\|\Theta_l\right\|_2 \max_l \left\|\tT(\ttheta^{(0)})_l-\Theta_l\right\|_2.
	\end{align*}
	Note that $\max_l \left\|\Theta_l\right\|_2\leq\matrixnorm{\Theta}_2=O(1)$ under Assumption~\ref{as:hes_glm}.  By Lemma~\ref{lem:hes_hd_glm}, provided that $n\gg s_0^2\log^2 d+{s^*}^2\log d$, we have that
	\begin{align*}
	I_2&=O_P\left((s_0+  {s^*})\frac{\log d}n + \sqrt{s_0+{s^*}}\sqrt{\frac{\log d}n}\right) =O_P\left( \sqrt{\frac{(s_0+{s^*})\log d}n}\right).
	\end{align*}
	To bound $I_1(\btheta)$, we note that
	\begin{align*}
	I_1(\btheta) &\leq \matrixnorm{\tT(\ttheta^{(0)})}_\infty^2 \matrixnorm{\frac1k\sum_{j=1}^k n\left(\nabla\cL_j(\btheta)-\nabla\cL_N(\btheta)\right)\left(\nabla\cL_j(\btheta)-\nabla\cL_N(\btheta)\right)^\top-\Ee\left[\nabla\cL(\thetas;Z)\nabla\cL(\thetas;Z)^\top\right]}_{\max}.
	\end{align*}
	By Lemma~\ref{lem:hes_hd_glm}, we have that
	$$\matrixnorm{\tT(\ttheta^{(0)})}_\infty=O_P\left( \sqrt{{s^*}}\right).$$
	Then, applying Lemma~\ref{lem:vcov0_reg_glm}, we obtain that
	$$I_1(\btheta) = O_P\left(  {s^*}\left(\sqrt{\frac{\log d}k} + \sqrt{\log d} r_{\btheta} + n r_{\btheta}^2\right)\right),$$
	provided that $\left\|\btheta-\thetas\right\|_1=O_P(r_{\btheta})$, $r_{\btheta}\lesssim 1$, $n\gtrsim\log d$, and $k\gtrsim\log d$.
	Putting all the preceding bounds together, we obtain that
        \begin{align*}
        \matrixnorm{\overline\Omega-\Omega_0}_{\max} &= O_P\left(  {s^*}\left(\sqrt{\frac{\log d}k} + \sqrt{\log d} r_{\btheta} + n r_{\btheta}^2\right) + \sqrt{\frac{(s_0+{s^*})\log d}n}\right),
        \end{align*}
        and
        \begin{align*}
        \matrixnorm{\overline\Omega-\widehat\Omega}_{\max} &= O_P\left(  {s^*}\left(\sqrt{\frac{\log d}k} + \sqrt{\log d} r_{\btheta} + n r_{\btheta}^2\right) + \sqrt{\frac{(s_0+{s^*})\log d}n}+\frac{\log^2(dN)\log d}N\right).
        \end{align*}
        
        Choosing
	$$u= \left({s^*}\sqrt{\frac{\log d}k} + {s^*}\sqrt{\log d} r_{\btheta} + n {s^*} r_{\btheta}^2 + \sqrt{\frac{(s_0+{s^*})\log d}n}+\frac{\log^2(dN)\log d}N\right)^{1-\kappa},$$
	with any $\kappa>0$, we deduce that
	$$P\left(\matrixnorm{\overline\Omega-\widehat\Omega}_{\max}>u\right)=o(1).$$
	We also have that
	$$u^{1/3}\left(1\vee\log\frac du\right)^{2/3}=o(1),$$
	provided that
	$$ \left({s^*}\sqrt{\frac{\log d}k} + {s^*}\sqrt{\log d} r_{\btheta} + n {s^*} r_{\btheta}^2 + \sqrt{\frac{(s_0+{s^*})\log d}n}+\frac{\log^2(dN)\log d}N\right) \log^{2+\kappa} d =o(1),$$
	which holds if
	$$n\gg  (s_0+{s^*})\log^{5+\kappa} d,$$
	$$k\gg{s^*}^2\log^{5+\kappa} d,$$
	and
	$$r_{\btheta}\ll\min\left\{\frac1{  {s^*}\log^{5/2+\kappa}d},\frac1{ \sqrt{n {s^*}}\log^{1+\kappa}d}\right\}.$$
	
\end{proof}

\begin{lemma}\label{lem:gbdo_reg_glm}
	$\widehat\Omega$ and $\Omega_0$ is defined as in \eqref{eqn:oh} and \eqref{eqn:o0} respectively.  In sparse GLM, under Assumptions~\ref{as:hes_glm}--\ref{as:subexp_glm}, we have that
	$$\matrixnorm{\widehat\Omega-\Omega_0}_{\max} = O_P\left(\sqrt{\frac{\log d}N} + \frac{\log^2(dN)\log d}N\right).$$
	Moreover, if $N\gg\log^{5+\kappa}d$ for some $\kappa>0$, then there exists some $v>0$ such that \eqref{eqn:v} holds.
\end{lemma}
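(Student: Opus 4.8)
\textbf{Proof proposal for Lemma \ref{lem:gbdo_reg_glm}.}

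The plan is to observe that this lemma is the GLM counterpart of Lemma \ref{lem:gbd0_reg}'s first step, and that its proof was already carried out in the course of establishing Lemma \ref{lem:gbd0_reg_glm}. Concretely, recall that $\widehat\Omega$ in \eqref{eqn:oh} is exactly the empirical (Gaussian-multiplier) covariance of the centered oracle score vectors $\nabla^2\cLs(\thetas)^{-1}\nabla\cL(\thetas;Z_{ij})$ over the $N=nk$ full-sample observations, and $\Omega_0$ in \eqref{eqn:o0} is its population version. In the proof of Lemma \ref{lem:reg0_glm} it was verified, using Assumptions \ref{as:hes_glm} and \ref{as:subexp_glm}, that these score vectors satisfy Assumption (E.1) of \citet{chernozhukov2013gaussian}: the coordinatewise second moments are bounded away from zero (via the eigenvalue bounds on $\nabla^2\cLs(\thetas)$ and $\Ee[\nabla\cL(\thetas;Z)\nabla\cL(\thetas;Z)^\top]$), and the sub-exponential / moment conditions of \ref{as:subexp_glm} control the tails. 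Given this, the proof of Corollary 3.1 of \citet{chernozhukov2013gaussian} applies verbatim to give
$$\Ee\left[\matrixnorm{\widehat\Omega-\Omega_0}_{\max}\right]\lesssim \sqrt{\frac{\log d}N}+\frac{\log^2(dN)\log d}N,$$
and then Markov's inequality upgrades this to the stated $O_P$ bound $\matrixnorm{\widehat\Omega-\Omega_0}_{\max}=O_P\big(\sqrt{\log d/N}+\log^2(dN)\log d/N\big)$. This is the first assertion.

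For the second assertion I would simply pick, for any $\kappa>0$,
$$v=\left(\sqrt{\frac{\log d}N}+\frac{\log^2(dN)\log d}N\right)^{1-\kappa}.$$
Then the just-established $O_P$ bound forces $P\big(\matrixnorm{\widehat\Omega-\Omega_0}_{\max}>v\big)=o(1)$, since $v$ is of strictly larger order than the stochastic bound. It remains to check $v^{1/3}(1\vee\log(d/v))^{2/3}=o(1)$. Because $1\vee\log(d/v)\lesssim\log d$, it suffices that $\big(\sqrt{\log d/N}+\log^2(dN)\log d/N\big)\log^{2+\kappa}d=o(1)$ (after absorbing the exponent $1-\kappa$ into constants and a renamed $\kappa$); both terms are $o(1)$ precisely when $N\gg\log^{5+\kappa}d$, using $\log(dN)\lesssim\log d$ in the high-dimensional regime. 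This gives \eqref{eqn:v}.

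Since everything reduces to a direct citation of \citet{chernozhukov2013gaussian} combined with an already-verified moment hypothesis, there is no genuine obstacle here; the only point requiring a little care is making sure the verification of Assumption (E.1) from the proof of Lemma \ref{lem:reg0_glm} (which was stated for the \texttt{k-grad} setting) is phrased to depend only on Assumptions \ref{as:hes_glm}--\ref{as:subexp_glm}, so that it legitimately transfers to the present lemma whose hypotheses are exactly those three assumptions. A secondary bookkeeping point is to note that $\log(dN)\asymp\log d$ whenever $N$ is polynomial in $d$, so that the $\log^2(dN)\log d/N$ term can be bounded by a power of $\log d$ over $N$, which is what makes the threshold $N\gg\log^{5+\kappa}d$ sufficient.
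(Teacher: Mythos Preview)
Your proposal is correct and follows essentially the same route as the paper: cite the bound on $\matrixnorm{\widehat\Omega-\Omega_0}_{\max}$ already established in the proof of Lemma~\ref{lem:gbd0_reg_glm} (via the verification of Assumption~(E.1) of \citet{chernozhukov2013gaussian} in the proof of Lemma~\ref{lem:reg0_glm} and Corollary~3.1 of that paper), then choose $v=(\sqrt{\log d/N}+\log^2(dN)\log d/N)^{1-\kappa}$ and check that $N\gg\log^{5+\kappa}d$ suffices. One small slip: the lemma's hypotheses are the two assumptions \ref{as:hes_glm} and \ref{as:subexp_glm}, not three.
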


\begin{proof}[Lemma \ref{lem:gbdo_reg_glm}]
	In the proof of Lemma~\ref{lem:gbd0_reg_glm}, we have shown that
	$$\matrixnorm{\widehat\Omega-\Omega_0}_{\max} = O_P\left(\sqrt{\frac{\log d}N} + \frac{\log^2(dN)\log d}N\right).$$
        Choosing
	$$v=\left(\sqrt{\frac{\log d}N} + \frac{\log^2(dN)\log d}N\right)^{1-\kappa},$$
	with any $\kappa>0$, we deduce that
	$$P\left(\matrixnorm{\widehat\Omega-\Omega_0}_{\max}>v\right)=o(1).$$
	We also have that
	$$v^{1/3}\left(1\vee\log\frac dv\right)^{2/3}=o(1),$$
	provided that
	$$\left(\sqrt{\frac{\log d}N} + \frac{\log^2(dN)\log d}N\right) \log^{2+\kappa} d=o(1),$$
	which holds if
	$$N\gg\log^{5+\kappa}d.$$
	The same result applies to the low-dimensional case as well.
\end{proof}

\begin{lemma}\label{lem:gbd_reg_glm}
	$\widetilde\Omega$ and $\widehat\Omega$ are defined as in \eqref{eqn:ot_glm_0} and \eqref{eqn:oh} respectively.  In sparse GLM, under Assumptions~\ref{as:smth_glm}--\ref{as:subexp_glm}, provided that $\left\|\btheta-\thetas\right\|_1=O_P(r_{\btheta})$, $r_{\btheta}\lesssim 1$, and $n\gg s_0^2\log^2 d+{s^*}^2\log d$, we have that
	\begin{align*}
	\matrixnorm{\widetilde\Omega-\widehat\Omega}_{\max} &= O_P\Bigg(  {s^*}\left(\sqrt{\frac{\log d}{n+k}} +\frac{n+k\sqrt{\log d}+k^{3/4}\log^{3/4}d}{n+k} r_{\btheta} + \frac{nk}{n+k}r_{\btheta}^2\right) + \sqrt{\frac{(s_0+{s^*})\log d}n} \\
        &\quad+\frac{\log^2(dN)\log d}N\Bigg).
        \end{align*}
	Moreover, if $n\gg  (s_0+{s^*})\log^{5+\kappa} d+s_0^2\log^2 d+{s^*}^2\log d$, $n+k\gg{s^*}^2\log^{5+\kappa} d$, and
	$$\left\|\btheta-\thetas\right\|_1\ll\min\left\{\frac{n+k}{  {s^*}\left(n+k\sqrt{\log d}+k^{3/4}\log^{3/4}d\right)\log^{2+\kappa}d},\frac1{ \sqrt{{s^*}}\log^{1+\kappa}d}\sqrt{\frac1n+\frac1k}\right\},$$
	for some $\kappa>0$, then there exists some $u>0$ such that \eqref{eqn:ut} holds.
\end{lemma}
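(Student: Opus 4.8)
The plan is to replicate the architecture of the proof of Lemma~\ref{lem:gbd_reg} (the linear-model \texttt{n+k-1-grad} bound), substituting the Taylor-expansion devices for the GLM gradient developed in the proof of Lemma~\ref{lem:tbd_reg_glm} and in \eqref{eqn:lip}. First I would split, by the triangle inequality,
$$\matrixnorm{\widetilde\Omega-\widehat\Omega}_{\max}\leq\matrixnorm{\widetilde\Omega-\Omega_0}_{\max}+\matrixnorm{\widehat\Omega-\Omega_0}_{\max},$$
with $\Omega_0$ as in \eqref{eqn:o0}, and dispatch the second term at once: it equals $O_P\big(\sqrt{\log d/N}+\log^2(dN)\log d/N\big)$ by the argument already given in the proof of Lemma~\ref{lem:gbd0_reg_glm}, which runs Corollary~3.1 of \citet{chernozhukov2013gaussian} once Assumption~(E.1) has been verified for $\nabla^2\cLs(\thetas)^{-1}\nabla\cL(\thetas;Z)$.

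For the first term I would use the same decomposition $\matrixnorm{\widetilde\Omega-\Omega_0}_{\max}\leq I_1'(\btheta)+I_2$ as in Lemma~\ref{lem:gbd_reg}, where $I_2$ collects the terms arising when $\tT(\ttheta^{(0)})$ is replaced by $\Theta$ in the outer factors, and $I_1'(\btheta)$ is the centered mixed empirical covariance pre- and post-multiplied by $\tT(\ttheta^{(0)})$. The term $I_2$ is handled verbatim as in the proof of Lemma~\ref{lem:gbd0_reg_glm}: write $\tT(\ttheta^{(0)})=(\tT(\ttheta^{(0)})-\Theta)+\Theta$, bound the cross terms by $\matrixnorm{\Ee[\nabla\cL(\thetas;Z)\nabla\cL(\thetas;Z)^\top]}_2$ times products of $\ell_2$-norms of rows of $\tT(\ttheta^{(0)})-\Theta$ and of $\Theta$, and invoke Lemma~\ref{lem:hes_hd_glm} (legitimate since $n\gg s_0^2\log^2 d+{s^*}^2\log d$) to conclude $I_2=O_P\big(\sqrt{(s_0+{s^*})\log d/n}\big)$. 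For $I_1'(\btheta)$ I would pull out $\matrixnorm{\tT(\ttheta^{(0)})}_\infty^2=O_P({s^*})$, again by Lemma~\ref{lem:hes_hd_glm}, leaving the max-norm deviation of
$$\frac1{n+k-1}\Bigg(\sum_{i=1}^n\big(\nabla\cL(\btheta;Z_{i1})-\nabla\cL_N(\btheta)\big)\big(\nabla\cL(\btheta;Z_{i1})-\nabla\cL_N(\btheta)\big)^\top+\sum_{j=2}^k n\big(\nabla\cL_j(\btheta)-\nabla\cL_N(\btheta)\big)\big(\nabla\cL_j(\btheta)-\nabla\cL_N(\btheta)\big)^\top\Bigg)$$
from $\Ee[\nabla\cL(\thetas;Z)\nabla\cL(\thetas;Z)^\top]$, which is the content of the GLM mixed-covariance lemma (Lemma~\ref{lem:vcov_reg_glm}, the direct analogue of Lemma~\ref{lem:vcov_reg}); it yields $I_1'(\btheta)=O_P\big({s^*}(\sqrt{\log d/(n+k)}+\tfrac{n+k\sqrt{\log d}+k^{3/4}\log^{3/4}d}{n+k}r_{\btheta}+\tfrac{nk}{n+k}r_{\btheta}^2)\big)$.

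The crux is that mixed-covariance lemma. Unlike the linear case, where $\nabla\cL(\theta;Z)=x(x^\top\theta-y)$ is explicitly bilinear in $(\theta,Z)$, here one must expand $\nabla\cL(\btheta;Z)-\nabla\cL(\thetas;Z)=\int_0^1\nabla^2\cL(\thetas+t(\btheta-\thetas);Z)\,dt\,(\btheta-\thetas)$ and then, exactly as in \eqref{eqn:lip}, expand $g''$ around $x^\top\thetas$ using its Lipschitz property (Assumption~\ref{as:smth_glm}), bounding the emerging factors $\|x\|_\infty$, $|g'|$, $|g''|$ by $O(1)$ under Assumptions~\ref{as:smth_glm}--\ref{as:design_glm}. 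Propagating these expansions through the outer products of the two heterogeneously-weighted blocks — single data points weighted by $1$ and machine averages weighted by $n$ — is what produces the asymmetric linear-in-$r_{\btheta}$ rate: the machine-average block contributes the $n/(n+k)$ piece, while the cross terms with $\nabla\cL_N(\btheta)$, together with the sub-exponential / fourth-moment tail control of $\bh_l$ from Assumption~\ref{as:subexp_glm}, contribute the $k\sqrt{\log d}$ and $k^{3/4}\log^{3/4}d$ pieces after a union bound over the $d^2$ entries, and Hoeffding/Bernstein bounds as in \eqref{eqn:hoef1}--\eqref{eqn:hoef3} furnish the leading $\sqrt{\log d/(n+k)}$. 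With $I_1'$, $I_2$ and $\matrixnorm{\widehat\Omega-\Omega_0}_{\max}$ in hand I would close exactly as in Lemma~\ref{lem:gbd_reg}: take $u$ equal to the assembled rate raised to the power $1-\kappa$, so that $P(\matrixnorm{\widetilde\Omega-\widehat\Omega}_{\max}>u)=o(1)$, and verify $u^{1/3}(1\vee\log(d/u))^{2/3}=o(1)$ under the stated conditions $n\gg(s_0+{s^*})\log^{5+\kappa}d+s_0^2\log^2 d+{s^*}^2\log d$, $n+k\gg{s^*}^2\log^{5+\kappa}d$, and the displayed bound on $\|\btheta-\thetas\|_1$, which gives \eqref{eqn:ut}.
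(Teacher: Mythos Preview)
Your architecture is exactly the paper's: triangle inequality to split off $\matrixnorm{\widehat\Omega-\Omega_0}_{\max}$, then $\matrixnorm{\widetilde\Omega-\Omega_0}_{\max}\leq I_1'(\btheta)+I_2$ with $I_2=O_P(\sqrt{(s_0+{s^*})\log d/n})$ from Lemma~\ref{lem:hes_hd_glm}, $I_1'(\btheta)\leq\matrixnorm{\tT(\ttheta^{(0)})}_\infty^2$ times the mixed-covariance deviation, and then Lemma~\ref{lem:vcov_reg_glm} for that deviation, followed by the $u^{1-\kappa}$ choice. Two small corrections to your narrative of Lemma~\ref{lem:vcov_reg_glm}'s internals: the $n/(n+k)$ linear-in-$r_{\btheta}$ piece comes from the \emph{single-observation} block $V_1'$ on the master, while the $k\sqrt{\log d}$ and $k^{3/4}\log^{3/4}d$ pieces come from the \emph{machine-average} block $V_1$ (via the Cauchy--Schwarz cross term $U_{12}$), not from cross terms with $\nabla\cL_N$; and Assumption~\ref{as:subexp_glm} plays no role inside Lemma~\ref{lem:vcov_reg_glm} (only \ref{as:smth_glm}--\ref{as:hes_glm} are used there) --- it enters solely through the $\matrixnorm{\widehat\Omega-\Omega_0}_{\max}$ bound.
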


\begin{proof}[Lemma \ref{lem:gbd_reg_glm}]
	Note by the triangle inequality that
	$$\matrixnorm{\widetilde\Omega-\widehat\Omega}_{\max}\leq\matrixnorm{\widetilde\Omega-\Omega_0}_{\max} + \matrixnorm{\widehat\Omega-\Omega_0}_{\max},$$
	where $\Omega_0$ is defined as in \eqref{eqn:o0}.  By the proof of Lemma~\ref{lem:gbd0_reg_glm}, we have that
	$$\matrixnorm{\widehat\Omega-\Omega_0}_{\max} = O_P\left(\sqrt{\frac{\log d}N} + \frac{\log^2(dN)\log d}N\right).$$
	Next, we bound $\matrixnorm{\widetilde\Omega-\Omega_0}_{\max}$ using the same argument as in the proof of Lemma~\ref{lem:gbd0_reg_glm}.  By the triangle inequality, we have that
	\begin{align*}
	&\matrixnorm{\widetilde\Omega-\Omega_0}_{\max}\\
	&=\Bigg|\!\Bigg|\!\Bigg|\tT(\ttheta^{(0)})\frac1{n+k-1}\Bigg(\sum_{i=1}^n\left(\nabla\cL(\theta;Z_{i1})-\nabla\cL_N(\btheta)\right)\left(\nabla\cL(\theta;Z_{i1})-\nabla\cL_N(\btheta)\right)^\top \\
	&\quad+\sum_{j=2}^k n\left(\nabla\cL_j(\btheta)-\nabla\cL_N(\btheta)\right)\left(\nabla\cL_j(\btheta)-\nabla\cL_N(\btheta)\right)^\top\Bigg)\tT(\ttheta^{(0)})^\top-\Theta\Ee\left[\nabla\cL(\thetas;Z)\nabla\cL(\thetas;Z)^\top\right]\Theta\Bigg|\!\Bigg|\!\Bigg|_{\max} \\
	&\leq \Bigg|\!\Bigg|\!\Bigg|\tT(\ttheta^{(0)})\Bigg(\frac1{n+k-1}\Bigg(\sum_{i=1}^n\left(\nabla\cL(\theta;Z_{i1})-\nabla\cL_N(\btheta)\right)\left(\nabla\cL(\theta;Z_{i1})-\nabla\cL_N(\btheta)\right)^\top \\
	&\quad+\sum_{j=2}^k n\left(\nabla\cL_j(\btheta)-\nabla\cL_N(\btheta)\right)\left(\nabla\cL_j(\btheta)-\nabla\cL_N(\btheta)\right)^\top\Bigg)-\Ee\left[\nabla\cL(\thetas;Z)\nabla\cL(\thetas;Z)^\top\right]\Bigg)\tT(\ttheta^{(0)})^\top\Bigg|\!\Bigg|\!\Bigg|_{\max} \\
	&\quad+ \matrixnorm{\tT(\ttheta^{(0)})\Ee\left[\nabla\cL(\thetas;Z)\nabla\cL(\thetas;Z)^\top\right]\tT(\ttheta^{(0)})^\top-\Theta\Ee\left[\nabla\cL(\thetas;Z)\nabla\cL(\thetas;Z)^\top\right]\Theta}_{\max} \\
	&\defn I_1'(\btheta) + I_2.
	\end{align*}
	We have shown in the proof of Lemma~\ref{lem:gbd0_reg_glm} that
	$$I_2 = O_P\left( \sqrt{\frac{(s_0+{s^*})\log d}n}\right).$$
	To bound $I_1'(\btheta)$, we note that
	\begin{align*}
	I_1'(\btheta) &\leq \matrixnorm{\tT(\ttheta^{(0)})}_\infty^2 \Bigg|\!\Bigg|\!\Bigg|\frac1{n+k-1}\Bigg(\sum_{i=1}^n\left(\nabla\cL(\btheta;Z_{i1})-\nabla\cL_N(\btheta)\right)\left(\nabla\cL(\btheta;Z_{i1})-\nabla\cL_N(\btheta)\right)^\top \\
	&\quad+\sum_{j=2}^k n\left(\nabla\cL_j(\btheta)-\nabla\cL_N(\btheta)\right)\left(\nabla\cL_j(\btheta)-\nabla\cL_N(\btheta)\right)^\top\Bigg) -\Ee\left[\nabla\cL(\thetas;Z)\nabla\cL(\thetas;Z)^\top\right]\Bigg|\!\Bigg|\!\Bigg|_{\max}.
	\end{align*}
	By Lemma~\ref{lem:hes_hd_glm}, provided that $n\gg s_0^2\log^2 d+{s^*}^2\log d$, we have that
	$$\matrixnorm{\tT(\ttheta^{(0)})}_\infty=O_P\left( \sqrt{{s^*}}\right).$$
	Then, applying Lemma~\ref{lem:vcov_reg_glm}, we have that
	\begin{align*}
	I_1'(\btheta) = O_P\left(  {s^*}\left(\sqrt{\frac{\log d}{n+k}} + \frac{n+k\sqrt{\log d}+k^{3/4}\log^{3/4}d}{n+k} r_{\btheta} + \frac{nk}{n+k}r_{\btheta}^2\right)\right),
	\end{align*}
	under Assumptions~\ref{as:smth_glm}--\ref{as:hes_glm}, provided that $\left\|\btheta-\thetas\right\|_1=O_P(r_{\btheta})$, $r_{\btheta}\lesssim 1$, and $n+k\gtrsim\log d$.
	
	Putting all the preceding bounds together, we obtain that
	\begin{align*}
        \matrixnorm{\widetilde\Omega-\Omega_0}_{\max} &= O_P\left(  {s^*}\left(\sqrt{\frac{\log d}{n+k}} +\frac{n+k\sqrt{\log d}+k^{3/4}\log^{3/4}d}{n+k} r_{\btheta} + \frac{nk}{n+k}r_{\btheta}^2\right) + \sqrt{\frac{(s_0+{s^*})\log d}n}\right),
        \end{align*}
        and
        \begin{align*}
        \matrixnorm{\widetilde\Omega-\widehat\Omega}_{\max} &= O_P\Bigg(  {s^*}\left(\sqrt{\frac{\log d}{n+k}} +\frac{n+k\sqrt{\log d}+k^{3/4}\log^{3/4}d}{n+k} r_{\btheta} + \frac{nk}{n+k}r_{\btheta}^2\right) + \sqrt{\frac{(s_0+{s^*})\log d}n} \\
        &\quad+\frac{\log^2(dN)\log d}N\Bigg).
        \end{align*}
        
        Choosing
	$$u= \left({s^*}\sqrt{\frac{\log d}{n+k}} + \frac{n+k\sqrt{\log d}+k^{3/4}\log^{3/4}d}{n+k} {s^*} r_{\btheta} + \frac{nk{s^*}}{n+k}r_{\btheta}^2 + \sqrt{\frac{(s_0+{s^*})\log d}n}+\frac{\log^2(dN)\log d}N\right)^{1-\kappa},$$
	with any $\kappa>0$, we deduce that
	$$P\left(\matrixnorm{\widetilde\Omega-\widehat\Omega}_{\max}>u\right)=o(1).$$
	We also have that
	$$u^{1/3}\left(1\vee\log\frac du\right)^{2/3}=o(1),$$
	provided that
	\begin{align*}
	& \left({s^*}\sqrt{\frac{\log d}{n+k}} + \frac{n+k\sqrt{\log d}+k^{3/4}\log^{3/4}d}{n+k} {s^*} r_{\btheta} + \frac{nk{s^*}}{n+k}r_{\btheta}^2 + \sqrt{\frac{(s_0+{s^*})\log d}n}+\frac{\log^2(dN)\log d}N\right) \log^{2+\kappa} d \\
	&=o(1),
	\end{align*}
	which holds if
	$$n\gg  (s_0+{s^*})\log^{5+\kappa} d+s_0^2\log^2 d+{s^*}^2\log d,$$
	$$n+k\gg{s^*}^2\log^{5+\kappa} d,$$
	and
	$$r_{\btheta}\ll\min\left\{\frac{n+k}{  {s^*}\left(n+k\sqrt{\log d}+k^{3/4}\log^{3/4}d\right)\log^{2+\kappa}d},\frac1{ \sqrt{{s^*}}\log^{1+\kappa}d}\sqrt{\frac1n+\frac1k}\right\}.$$
	
\end{proof}

\begin{lemma}\label{lem:kgrad_decomp}
	For any $\theta$, we have that
	\begin{align*}
	&\matrixnorm{\frac1k\sum_{j=1}^k n\left(\nabla\cL_j(\theta)-\nabla\cL_N(\theta)\right)\left(\nabla\cL_j(\theta)-\nabla\cL_N(\theta)\right)^\top-\Ee\left[\nabla\cL(\thetas;Z)\nabla\cL(\thetas;Z)^\top\right]}_{\max} \leq U_1(\theta) + U_2 + U_3(\theta),
	\end{align*}
	where
	$$U_1(\theta)\defn\matrixnorm{\frac1k\sum_{j=1}^k n\left(\nabla\cL_j(\theta)-\nabla\cLs(\theta)\right)\left(\nabla\cL_j(\theta)-\nabla\cLs(\theta)\right)^\top-n\nabla\cL_j(\thetas)\nabla\cL_j(\thetas)^\top}_{\max},$$
	$$U_2\defn\matrixnorm{\frac1k\sum_{j=1}^k n\nabla\cL_j(\thetas)\nabla\cL_j(\thetas)^\top-\Ee\left[\nabla\cL(\thetas;Z)\nabla\cL(\thetas;Z)^\top\right]}_{\max},$$
	and
	$$U_3(\theta)\defn n\left\|\nabla\cL_N(\theta)-\nabla\cLs(\theta)\right\|_\infty^2.$$
\end{lemma}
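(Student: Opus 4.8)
The plan is to reduce everything to the elementary identity relating a sample second-moment matrix to a sample covariance matrix, followed by two applications of the triangle inequality in $\matrixnorm{\cdot}_{\max}$. Abbreviate $G\defn\Ee[\nabla\cL(\thetas;Z)\nabla\cL(\thetas;Z)^\top]$ and, for a fixed $\theta$, set $b_j\defn\sqrt n\,(\nabla\cL_j(\theta)-\nabla\cLs(\theta))$ for $j=1,\dots,k$, so that $\bar b\defn k^{-1}\sum_{j=1}^k b_j=\sqrt n\,(\nabla\cL_N(\theta)-\nabla\cLs(\theta))$ by linearity of the gradient and the identity $\cL_N=k^{-1}\sum_j\cL_j$. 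Then $\sqrt n\,(\nabla\cL_j(\theta)-\nabla\cL_N(\theta))=b_j-\bar b$, and the first step is the standard centering decomposition
$$\frac1k\sum_{j=1}^k(b_j-\bar b)(b_j-\bar b)^\top=\frac1k\sum_{j=1}^k b_jb_j^\top-\bar b\,\bar b^\top,$$
which holds entrywise and is a one-line computation. This already isolates $\matrixnorm{\bar b\,\bar b^\top}_{\max}=n\|\nabla\cL_N(\theta)-\nabla\cLs(\theta)\|_\infty^2=U_3(\theta)$.

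Next I would subtract $G$ and apply the triangle inequality to peel off $U_3(\theta)$:
$$\matrixnorm{\frac1k\sum_{j=1}^k(b_j-\bar b)(b_j-\bar b)^\top-G}_{\max}\leq\matrixnorm{\frac1k\sum_{j=1}^k b_jb_j^\top-G}_{\max}+\matrixnorm{\bar b\,\bar b^\top}_{\max}.$$
Inserting $\pm\,k^{-1}\sum_{j=1}^k n\,\nabla\cL_j(\thetas)\nabla\cL_j(\thetas)^\top$ into the first term on the right and invoking the triangle inequality once more gives precisely $U_1(\theta)+U_2$ for that term, since $b_jb_j^\top=n(\nabla\cL_j(\theta)-\nabla\cLs(\theta))(\nabla\cL_j(\theta)-\nabla\cLs(\theta))^\top$ matches the first summand inside $U_1(\theta)$ and $n\,\nabla\cL_j(\thetas)\nabla\cL_j(\thetas)^\top$ its second. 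Collecting the three contributions yields the claimed bound $U_1(\theta)+U_2+U_3(\theta)$, uniformly in $\theta$.

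There is essentially no obstacle in this lemma; the only things requiring care are (i) checking that $\bar b=\sqrt n\,(\nabla\cL_N(\theta)-\nabla\cLs(\theta))$, and (ii) keeping the factor $n$ and the centering point $\nabla\cLs(\theta)$ consistent across the three terms so that the right-hand side matches $U_1$, $U_2$, $U_3$ verbatim. The substantive estimates---bounding $U_1(\theta)$ by a smoothness/empirical-process argument (exploiting that $\nabla\cL_j(\cdot)-\nabla\cLs(\cdot)$ varies slowly and $\nabla\cLs(\thetas)=0$), $U_2$ by a concentration inequality for an average of $k$ independent rank-one matrices, and $U_3(\theta)$ by concentration of $\nabla\cL_N(\theta)$ around $\nabla\cLs(\theta)$---are carried out in the subsequent lemmas and are not part of this statement.
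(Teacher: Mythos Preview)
Your argument is correct: the centering identity $\tfrac1k\sum_j(b_j-\bar b)(b_j-\bar b)^\top=\tfrac1k\sum_j b_jb_j^\top-\bar b\,\bar b^\top$ together with two triangle-inequality steps in $\matrixnorm{\cdot}_{\max}$ is exactly what is needed, and your bookkeeping of the factor $n$ and the centering point $\nabla\cLs(\theta)$ is accurate. The paper itself omits the proof entirely, deferring to Lemma~F.1 of \citet{ycc2020simultaneous}, so there is no alternative approach to compare against; your write-up is the standard one.
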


Lemma~\ref{lem:kgrad_decomp} is the same as Lemma~F.1 of \citet{ycc2020simultaneous}. We omit the proof.

\begin{lemma}\label{lem:vcov0_reg}
	In sparse linear model, under Assumptions~\ref{as:design}~and~\ref{as:noise}, provided that $\left\|\btheta-\thetas\right\|_1=O_P(r_{\btheta})$, we have that
	\begin{align*}
	&\matrixnorm{\frac1k\sum_{j=1}^k n\left(\nabla\cL_j(\btheta)-\nabla\cL_N(\btheta)\right)\left(\nabla\cL_j(\btheta)-\nabla\cL_N(\btheta)\right)^\top-\Ee\left[\nabla\cL(\thetas;Z)\nabla\cL(\thetas;Z)^\top\right]}_{\max} \\
	&= O_P\Bigg(\sqrt{\frac{\log d}k} + \frac{\log^2(dk)\log d}k + \Bigg(1+\left(\frac{\log d}k\right)^{1/4} + \sqrt{\frac{\log^2(dk)\log d}k}\Bigg)\sqrt{\log(kd)} r_{\btheta} \\
	&\quad + \left(n + \sqrt{\frac{n\log d}k} + \log(kd)\right) r_{\btheta}^2\Bigg).
	\end{align*}
\end{lemma}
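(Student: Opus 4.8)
The plan is to reduce everything to the decomposition of Lemma~\ref{lem:kgrad_decomp} at $\theta=\btheta$, which bounds the target quantity by $U_1(\btheta)+U_2+U_3(\btheta)$, and then control the three pieces by concentration of the design matrices and of the score vectors. The algebraic fact that makes this tractable in the linear model is the identity
\begin{align*}
\nabla\cL_j(\theta)-\nabla\cLs(\theta)=-\frac1n X_j^\top e_j+\left(\frac1n X_j^\top X_j-\Sigma\right)(\theta-\thetas),
\end{align*}
valid for every $j$ and, with $N,X_N,e_N$ in place of $n,X_j,e_j$, globally; it follows from $\nabla\cL_j(\theta)=n^{-1}X_j^\top X_j\theta-n^{-1}X_j^\top y_j$ and $\nabla\cLs(\theta)=\Sigma(\theta-\thetas)$. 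Writing $\bg_j\defn\nabla\cL_j(\thetas)=-n^{-1}X_j^\top e_j$ and $\delta_j\defn(n^{-1}X_j^\top X_j-\Sigma)(\btheta-\thetas)$, the $j$-th summand inside $U_1(\btheta)$ equals $n(\bg_j\delta_j^\top+\delta_j\bg_j^\top+\delta_j\delta_j^\top)$, while $U_3(\btheta)=n\big\|{-N^{-1}X_N^\top e_N}+(N^{-1}X_N^\top X_N-\Sigma)(\btheta-\thetas)\big\|_\infty^2$.

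The concentration inputs, all obtained from Bernstein's inequality and a union bound under Assumptions~\ref{as:design}--\ref{as:noise} (the entries $x_{ij,l}e_{ij}$ and $x_{ij,l}x_{ij,l'}$ are sub-exponential with $O(1)$ norms since $x,e$ are sub-Gaussian), are: $\max_{j\le k}\|\bg_j\|_\infty=O_P(\sqrt{\log(dk)/n})$; $\max_{j\le k}\matrixnorm{n^{-1}X_j^\top X_j-\Sigma}_{\max}=O_P(\sqrt{\log(dk)/n}+\log(dk)/n)$, hence $\max_j\|\delta_j\|_\infty=O_P(r_{\btheta})$; and the global analogues $\|N^{-1}X_N^\top e_N\|_\infty=O_P(\sqrt{\log d/N})$ and $\matrixnorm{N^{-1}X_N^\top X_N-\Sigma}_{\max}=O_P(\sqrt{\log d/N}+\log d/N)$. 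Substituting these into $U_3(\btheta)$ and using $n/N=1/k$ gives $U_3(\btheta)=O_P(\log d/k+(\text{lower order})\,r_{\btheta}+(\text{lower order})\,r_{\btheta}^2)$, all subsumed by the claimed bound. For $U_2=\matrixnorm{k^{-1}\sum_{j}n\,\bg_j\bg_j^\top-\Ee[\nabla\cL(\thetas;Z)\nabla\cL(\thetas;Z)^\top]}_{\max}$, a short calculation shows $\Ee[n\,\bg_j\bg_j^\top]=\Ee[\nabla\cL(\thetas;Z)\nabla\cL(\thetas;Z)^\top]=\sigma^2\Sigma$ (cross terms vanish by independence and $\Ee e=0$), so $U_2$ is the max-entry deviation of an average of $k$ i.i.d.\ outer products of the vectors $\sqrt n\,\bg_j$, whose coordinates have $O(1)$ $\psi_1$-norm; verifying the moment condition (E.1) of \citet{chernozhukov2013gaussian} for $\sqrt n\,\bg_j$ and invoking the argument behind their Corollary~3.1 (paralleling the low-dimensional treatment in \citet{ycc2020simultaneous}) yields $U_2=O_P(\sqrt{\log d/k}+\log^2(dk)\log d/k)$, the first two terms of the bound.

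It remains to bound $U_1(\btheta)=\matrixnorm{k^{-1}\sum_j n(\bg_j\delta_j^\top+\delta_j\bg_j^\top+\delta_j\delta_j^\top)}_{\max}$. For the cross terms $k^{-1}\sum_j n\,\bg_j\delta_j^\top$, a naive pointwise product bound would cost a spurious factor $\sqrt n$, so I would instead bound each entry by Cauchy--Schwarz over the machine index, $|k^{-1}\sum_j n(\bg_j)_l(\delta_j)_{l'}|\le(k^{-1}\sum_j n(\bg_j)_l^2)^{1/2}(k^{-1}\sum_j n(\delta_j)_{l'}^2)^{1/2}$: the first factor concentrates to $\sigma^2\Sigma_{l,l}=O(1)$ with fluctuations controlled by the same Corollary~3.1-type estimate (this is the source of the $1+(\log d/k)^{1/4}+\sqrt{\log^2(dk)\log d/k}$ shape), and the second is at most $r_{\btheta}^2\cdot k^{-1}\sum_j n\matrixnorm{n^{-1}X_j^\top X_j-\Sigma}_{\max}^2$, handled by the design concentration, together giving the $\sqrt{\log(kd)}\,r_{\btheta}$-type terms. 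For the purely quadratic term $k^{-1}\sum_j n\,\delta_j\delta_j^\top$ I would use $\|\delta_j\|_\infty\le\matrixnorm{n^{-1}X_j^\top X_j-\Sigma}_{\max}r_{\btheta}$ together with the crude uniform estimate $\max_j\matrixnorm{n^{-1}X_j^\top X_j}_{\max}=O_P(1)$, which produces the $(n+\sqrt{n\log d/k}+\log(kd))\,r_{\btheta}^2$ contribution. Adding $U_1(\btheta)+U_2+U_3(\btheta)$ and collecting terms gives the stated rate. The main obstacle is bookkeeping: one must decide which quantities to estimate sharply (every term linear in $r_{\btheta}$, and the $k$-average concentrations, for which the $\sqrt n$ cannot be tolerated) versus crudely (the quadratic design term), and carry the powers of $\log(dk)$ and $\log d$ and the fourth-moment factors through consistently; a secondary subtlety is that the $U_2$ step needs the precise Chernozhukov--Chetverikov--Kato covariance bound applied to the sub-exponential (rather than sub-Gaussian) machine-level scores.
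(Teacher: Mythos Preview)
Your proposal follows the paper's proof closely: the same decomposition via Lemma~\ref{lem:kgrad_decomp}, the same Chernozhukov--Chetverikov--Kato Corollary~3.1 argument for $U_2$ applied to the sub-exponential coordinates of $\sqrt n\,\bg_j$, the same entrywise Cauchy--Schwarz over $j$ for the cross part of $U_1$ (the paper phrases it as Cauchy--Schwarz for $\langle A,B\rangle=\matrixnorm{AB^\top}_{\max}$, which reduces to exactly your coordinatewise inequality), and the same uniform design concentration giving $U_{11}(\btheta)=O_P(\log(kd)\,r_{\btheta}^2)$.

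There is one bookkeeping discrepancy. In the paper the $n\,r_{\btheta}^2$ and $\sqrt{n\log d/k}\,r_{\btheta}^2$ terms come from $U_3$: the paper bounds $\|\nabla\cL_N(\btheta)-\nabla\cLs(\btheta)\|_\infty$ by the cruder triangle inequality $\|\nabla\cL_N(\btheta)-\nabla\cL_N(\thetas)\|_\infty+\|\nabla\cL_N(\thetas)\|_\infty+\|\nabla\cLs(\btheta)\|_\infty=O_P\big((1+\sqrt{\log d/N})\,r_{\btheta}+\sqrt{\log d/N}\big)$, and squaring and multiplying by $n$ produces those terms. Your centered identity is sharper and yields $U_3=O_P\big((\log d/k)(1+r_{\btheta})^2\big)$ with no $n\,r_{\btheta}^2$ contribution. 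The quadratic piece $k^{-1}\sum_j n\,\delta_j\delta_j^\top$, bounded via $\|\delta_j\|_\infty\le\matrixnorm{n^{-1}X_j^\top X_j-\Sigma}_{\max}r_{\btheta}$ as you describe, only gives $\log(kd)\,r_{\btheta}^2$; the ``crude uniform estimate $\max_j\matrixnorm{n^{-1}X_j^\top X_j}_{\max}=O_P(1)$'' is not what generates the $(n+\sqrt{n\log d/k})\,r_{\btheta}^2$ in the lemma. This is harmless---your argument actually proves a slightly tighter $O_P$ bound, which implies the stated one---but your attribution of the $n\,r_{\btheta}^2$ term to the $\delta\delta^\top$ block is off.
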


\begin{proof}[Lemma \ref{lem:vcov0_reg}]
	By Lemma~\ref{lem:kgrad_decomp}, it suffices to bound $U_1(\btheta)$, $U_2$, and $U_3(\btheta)$.  We begin by bounding $U_2$.  In linear model, we have that
	\begin{align*}
	U_2&=\matrixnorm{\frac1k\sum_{j=1}^k n \left(\frac{X_j^\top e_j}n\right)\left(\frac{X_j^\top e_j}n\right)^\top-\sigma^2\Sigma}_{\max}.
	\end{align*}
	Note that
	$$\Ee\left[\left(\frac{(X_j^\top e_j)_l}{\sqrt n}\right)^2\right]=\Ee\left[\frac{\sum_{i=1}^n X_{ij,l}^2 e_{ij}^2}n\right]=\sigma^2\Sigma_{l,l}$$
	is bounded away from zero, under Assumptions~\ref{as:design}~and~\ref{as:noise}.  Also, using same argument for obtaining \eqref{eqn:bern2}, we have that for any $t>0$,
	$$P\left(\left|\frac{(X_j^\top e_j)_l}n\right| > t\right)\leq 2\exp\left(-cn\left(\frac{t^2}{\Sigma_{l,l}\sigma^2}\wedge\frac t{\sqrt{\Sigma_{l,l}}\sigma}\right)\right),$$
	and then,
	$$P\left(\left|\frac{(X_j^\top e_j)_l}{\sqrt n}\right| > t\right)\leq 2\exp\left(-c\left(\frac{t^2}{\Sigma_{l,l}\sigma^2}\wedge\frac{t\sqrt n}{\sqrt{\Sigma_{l,l}}\sigma}\right)\right)\leq C\exp\left(-c't\right),$$
	for some positive constants $c$, $c'$, and $C$, that is, $(X_j^\top e_j)_l/\sqrt n$ is sub-exponential with O(1) $\psi_1$-norm for each $(j,l)$.  Then, by the proof of Corollary 3.1 of \citet{chernozhukov2013gaussian}, we have that
	\begin{align*}
	\Ee[U_2]&=\Ee\left[\matrixnorm{\frac1k\sum_{j=1}^k \left(\frac{X_j^\top e_j}{\sqrt n}\right)\left(\frac{X_j^\top e_j}{\sqrt n}\right)^\top-\sigma^2\Sigma}_{\max}\right] \lesssim \sqrt{\frac{\log d}k} + \frac{\log^2(dk)\log d}k,
	\end{align*}
	and then, for any $\delta\in(0,1)$, with probability at least $1-\delta$,
	\begin{align*}
	U_2&\lesssim\frac1\delta\left( \sqrt{\frac{\log d}k} + \frac{\log^2(dk)\log d}k\right),
	\end{align*}
	by Markov's inequality, which implies that
	$$U_2 = O_P\left(\sqrt{\frac{\log d}k} + \frac{\log^2(dk)\log d}k\right).$$
	
	Next, we bound $U_3(\btheta)$.  By the triangle inequality and the fact that for any matrix $A$ and vector $a$ with compatible dimensions, $\|Aa\|_\infty\leq\matrixnorm{A}_{\max}\|a\|_1$, we have that
	\begin{align*}
	\left\|\nabla\cL_N(\btheta)-\nabla\cLs(\btheta)\right\|_\infty
	&\leq \left\|\nabla\cL_N(\btheta)-\nabla\cL_N(\thetas)\right\|_\infty + \left\|\nabla\cL_N(\thetas)\right\|_\infty + \left\|\nabla\cLs(\btheta)\right\|_\infty \\
	&= \left\|\frac{X_N^\top(X_N\btheta-y_N)}N-\frac{X_N^\top(X_N\thetas-y_N)}N\right\|_\infty + \left\|\frac{X_N^\top(X_N\thetas-y_N)}N\right\|_\infty + \left\|\Sigma(\btheta-\thetas)\right\|_\infty \\
	&= \left\|\frac{X_N^\top X_N}N(\btheta-\thetas)\right\|_\infty + \left\|\frac{X_N^\top e_N}N\right\|_\infty + \left\|\Sigma(\btheta-\thetas)\right\|_\infty \\
	&\leq \matrixnorm{\frac{X_N^\top X_N}N}_{\max} \left\|\btheta-\thetas\right\|_1 + \left\|\frac{X_N^\top e_N}N\right\|_\infty + \matrixnorm{\Sigma}_{\max} \left\|\btheta-\thetas\right\|_1 \\
	&\lesssim \matrixnorm{\frac{X_N^\top X_N}N-\Sigma}_{\max} \left\|\btheta-\thetas\right\|_1 + \left\|\frac{X_N^\top e_N}N\right\|_\infty + \matrixnorm{\Sigma}_{\max} \left\|\btheta-\thetas\right\|_1.
	\end{align*}
	By \eqref{eqn:bern1} and \eqref{eqn:bern2}, we have that
	$$\matrixnorm{\frac{X_N^\top X_N}N-\Sigma}_{\max} \leq \matrixnorm{\Sigma}_{\max}\left(\frac{\log\frac{2d^2}\delta}{cN}\vee\sqrt{\frac{\log\frac{2d^2}\delta}{cN}}\right)=O_P\left(\sqrt{\frac{\log d}N}\right),$$
	and
	$$\left\|\frac{X_N^\top e_N}N\right\|_\infty \leq \max_l \sqrt{\Sigma_{l,l}}\sigma\left(\frac{\log\frac{2d}\delta}{cN}\vee\sqrt{\frac{\log\frac{2d}\delta}{cN}}\right)=O_P\left(\sqrt{\frac{\log d}N}\right),$$
	where $\max_l \sqrt{\Sigma_{l,l}}\leq\matrixnorm{\Sigma}_{\max}=O(1)$ under Assumption~\ref{as:design}.  Then, assuming that $\left\|\btheta-\thetas\right\|_1=O_P(r_{\btheta})$, we have that
	\begin{align*}
	\left\|\nabla\cL_N(\btheta)-\nabla\cLs(\btheta)\right\|_\infty
	&= \left(O(1)+O_P\left(\sqrt{\frac{\log d}N}\right)\right) O_P(r_{\btheta}) + O_P\left(\sqrt{\frac{\log d}N}\right) \\
	&= O_P\left(\left(1+\sqrt{\frac{\log d}N}\right) r_{\btheta}+\sqrt{\frac{\log d}N}\right),
	\end{align*}
	and then,
	$$U_3(\btheta) =  O_P\left(\left(1+\sqrt{\frac{\log d}N}\right) n r_{\btheta}^2 + \frac{\log d}k\right).$$
	
	Lastly, we bound $U_1(\btheta)$.  We write $\nabla\cL_j(\btheta)-\nabla\cLs(\btheta)$ as $\left(\nabla\cL_j(\btheta)-\nabla\cLs(\btheta)-\nabla\cL_j(\thetas)\right)+\nabla\cL_j(\thetas)$, and obtain by the triangle inequality that
        \begin{align*}
        U_1(\btheta) &\leq \matrixnorm{\frac1k\sum_{j=1}^k n\left(\nabla\cL_j(\btheta)-\nabla\cLs(\btheta)-\nabla\cL_j(\thetas)\right)\left(\nabla\cL_j(\btheta)-\nabla\cLs(\btheta)-\nabla\cL_j(\thetas)\right)^\top}_{\max} \\
        &\quad+ \matrixnorm{\frac1k\sum_{j=1}^k n\nabla\cL_j(\thetas)\left(\nabla\cL_j(\btheta)-\nabla\cLs(\btheta)-\nabla\cL_j(\thetas)\right)^\top}_{\max} \\
        &\quad+ \matrixnorm{\frac1k\sum_{j=1}^k n\left(\nabla\cL_j(\btheta)-\nabla\cLs(\btheta)-\nabla\cL_j(\thetas)\right)\nabla\cL_j(\thetas)^\top}_{\max} \\
        &= \matrixnorm{\frac1k\sum_{j=1}^k n\left(\nabla\cL_j(\btheta)-\nabla\cLs(\btheta)-\nabla\cL_j(\thetas)\right)\left(\nabla\cL_j(\btheta)-\nabla\cLs(\btheta)-\nabla\cL_j(\thetas)\right)^\top}_{\max} \\
        &\quad+ 2\matrixnorm{\frac1k\sum_{j=1}^k n\nabla\cL_j(\thetas)\left(\nabla\cL_j(\btheta)-\nabla\cLs(\btheta)-\nabla\cL_j(\thetas)\right)^\top}_{\max} \\
        &\defn U_{11}(\btheta) + 2 U_{12}(\btheta).
        \end{align*}
        To bound $U_{12}(\btheta)$, we first define an inner product $\langle A,B\rangle=\matrixnorm{AB^\top}_{\max}$ for any $A,B\in\R^{d\times k}$, the validity of which is easy to check.  We then apply Cauchy-Schwarz inequality on $\langle A,B\rangle$ with
        $$A=\sqrt{\frac nk}
        \begin{bmatrix}
	\nabla\cL_1(\thetas) & \dots & \nabla\cL_k(\thetas)
	\end{bmatrix}$$
        and
        $$B=\sqrt{\frac nk}
        \begin{bmatrix}
	\nabla\cL_1(\btheta)-\nabla\cLs(\btheta)-\nabla\cL_1(\thetas) & \dots & \nabla\cL_k(\btheta)-\nabla\cLs(\btheta)-\nabla\cL_k(\thetas))
	\end{bmatrix}$$
        and obtain that
         \begin{align*}
        U_{12}(\btheta) &\leq \matrixnorm{\frac1k\sum_{j=1}^k n\nabla\cL_j(\thetas)\nabla\cL_j(\thetas)^\top}_{\max}^{1/2} \\
        &\quad \cdot\matrixnorm{\frac1k\sum_{j=1}^k n\left(\nabla\cL_j(\btheta)-\nabla\cLs(\btheta)-\nabla\cL_j(\thetas)\right)\left(\nabla\cL_j(\btheta)-\nabla\cLs(\btheta)-\nabla\cL_j(\thetas)\right)^\top}_{\max}^{1/2} \\
        &= \matrixnorm{\frac1k\sum_{j=1}^k n\nabla\cL_j(\thetas)\nabla\cL_j(\thetas)^\top}_{\max}^{1/2} U_{11}(\btheta)^{1/2}.
        \end{align*}
        By the triangle inequality, we have that
        \begin{align*}
        &\matrixnorm{\frac1k\sum_{j=1}^k n\nabla\cL_j(\thetas)\nabla\cL_j(\thetas)^\top}_{\max} \\
        &\leq \matrixnorm{\frac1k\sum_{j=1}^k n\nabla\cL_j(\thetas)\nabla\cL_j(\thetas)^\top-\Ee\left[\nabla\cL(\thetas;Z)\nabla\cL(\thetas;Z)^\top\right]}_{\max} + \matrixnorm{\Ee\left[\nabla\cL(\thetas;Z)\nabla\cL(\thetas;Z)^\top\right]}_{\max} \\
        &= U_2 + \sigma^2\matrixnorm{\Sigma}_{\max} \\
        &= O_P\left(1+\sqrt{\frac{\log d}k} + \frac{\log^2(dk)\log d}k\right).
        \end{align*}
        It remains to bound $U_{11}(\btheta)$.  Note that
        \begin{align*}
        \nabla\cL_j(\btheta)-\nabla\cLs(\btheta)-\nabla\cL_j(\thetas)
        &= \frac{X_j^\top(X_j\btheta-y_j)}n - \Sigma(\btheta-\thetas) + \frac{X_j^\top(X_j\thetas-y_j)}n \\
        &= \left(\frac{X_j^\top X_j}n-\Sigma\right)(\btheta-\thetas).
        \end{align*}
        Then, we have that
        \begin{align*}
        U_{11}(\btheta) &= \matrixnorm{\frac1k\sum_{j=1}^k n\left(\frac{X_j^\top X_j}n-\Sigma\right)(\btheta-\thetas)(\btheta-\thetas)^\top\left(\frac{X_j^\top X_j}n-\Sigma\right)}_{\max} \\
        &\leq \frac1k\sum_{j=1}^k n\matrixnorm{\left(\frac{X_j^\top X_j}n-\Sigma\right)(\btheta-\thetas)(\btheta-\thetas)^\top\left(\frac{X_j^\top X_j}n-\Sigma\right)}_{\max} \\
        &= \frac1k\sum_{j=1}^k n\matrixnorm{\left(\frac{X_j^\top X_j}n-\Sigma\right)(\btheta-\thetas)}_\infty^2 \\
        &\leq \frac1k\sum_{j=1}^k n\matrixnorm{\frac{X_j^\top X_j}n-\Sigma}_{\max}^2 \left\|\btheta-\thetas\right\|_1^2,
        \end{align*}
        where we use the triangle inequality and the fact that $\matrixnorm{aa^\top}_{\max}=\|a\|_\infty^2$ for any vector $a$, and $\|Aa\|_\infty\leq\matrixnorm{A}_{\max}\|a\|_1$ for any matrix $A$ and vector $a$ with compatible dimensions.  By \eqref{eqn:bern3}, we have that
        $$P\left(\matrixnorm{\frac{X_j^\top X_j}n-\Sigma}_{\max} > \matrixnorm{\Sigma}_{\max}\left(\frac{\log\frac{2kd^2}\delta}{cn}\vee\sqrt{\frac{\log\frac{2kd^2}\delta}{cn}}\right)\right)\leq \frac\delta k,$$
        and then, by the union bound,
        $$P\left(\max_j\matrixnorm{\frac{X_j^\top X_j}n-\Sigma}_{\max} > \matrixnorm{\Sigma}_{\max}\left(\frac{\log\frac{2kd^2}\delta}{cn}\vee\sqrt{\frac{\log\frac{2kd^2}\delta}{cn}}\right)\right)\leq \delta,$$
        which implies that
        $$\max_j\matrixnorm{\frac{X_j^\top X_j}n-\Sigma}_{\max} = O_P\left(\sqrt{\frac{\log(kd)}n}\right).$$
        Putting all the preceding bounds together, we obtain that
        $$U_{11}(\btheta) = O_P\left(\log(kd) r_{\btheta}^2\right),$$
        $$U_{12}(\btheta) = O_P\left(\left(1+\left(\frac{\log d}k\right)^{1/4} + \sqrt{\frac{\log^2(dk)\log d}k}\right)\sqrt{\log(kd)} r_{\btheta}\right),$$
        $$U_1(\btheta) = O_P\left(\left(1+\left(\frac{\log d}k\right)^{1/4} + \sqrt{\frac{\log^2(dk)\log d}k}\right)\sqrt{\log(kd)} r_{\btheta} + \log(kd) r_{\btheta}^2\right),$$
        and finally,
        \begin{align*}
	&\matrixnorm{\frac1k\sum_{j=1}^k n\left(\nabla\cL_j(\btheta)-\nabla\cL_N(\btheta)\right)\left(\nabla\cL_j(\btheta)-\nabla\cL_N(\btheta)\right)^\top-\Ee\left[\nabla\cL(\thetas;Z)\nabla\cL(\thetas;Z)^\top\right]}_2 \\
	&=O_P\Bigg(\sqrt{\frac{\log d}k} + \frac{\log^2(dk)\log d}k + \Bigg(1+\left(\frac{\log d}k\right)^{1/4} + \sqrt{\frac{\log^2(dk)\log d}k}\Bigg)\sqrt{\log(kd)} r_{\btheta} \\
	&\quad + \left(n + \sqrt{\frac{n\log d}k} + \log(kd)\right) r_{\btheta}^2\Bigg).
	\end{align*}
\end{proof}

\begin{lemma}\label{lem:nk1grad_decomp}
	For any $\theta$, we have that
	\begin{align*}
	&\Bigg|\!\Bigg|\!\Bigg|\frac1{n+k-1}\Bigg(\sum_{i=1}^n\left(\nabla\cL(\theta;Z_{i1})-\nabla\cL_N(\theta)\right)\left(\nabla\cL(\theta;Z_{i1})-\nabla\cL_N(\theta)\right)^\top \\
	&\quad+\sum_{j=2}^k n\left(\nabla\cL_j(\theta)-\nabla\cL_N(\theta)\right)\left(\nabla\cL_j(\theta)-\nabla\cL_N(\theta)\right)^\top\Bigg) -\Ee\left[\nabla\cL(\thetas;Z)\nabla\cL(\thetas;Z)^\top\right]\Bigg|\!\Bigg|\!\Bigg|_{\max} \\
	&\leq V_1(\theta) + V_1'(\theta) + V_2 + V_2' + V_3(\theta),
	\end{align*}
	where
	$$V_1(\theta)\defn\frac{k-1}{n+k-1}\matrixnorm{\frac1{k-1}\sum_{j=2}^k n\left(\nabla\cL_j(\theta)-\nabla\cLs(\theta)\right)\left(\nabla\cL_j(\theta)-\nabla\cLs(\theta)\right)^\top-n\nabla\cL_j(\thetas)\nabla\cL_j(\thetas)^\top}_{\max},$$
	$$V_1'(\theta)\defn\frac n{n+k-1}\matrixnorm{\frac1n\sum_{i=1}^n\left(\nabla\cL(\theta;Z_{i1})-\nabla\cLs(\theta)\right)\left(\nabla\cL(\theta;Z_{i1})-\nabla\cLs(\theta)\right)^\top-\nabla\cL(\thetas;Z_{i1})\nabla\cL(\thetas;Z_{i1})^\top}_{\max},$$
	$$V_2\defn\frac{k-1}{n+k-1}\matrixnorm{\frac1{k-1}\sum_{j=2}^k n\nabla\cL_j(\thetas)\nabla\cL_j(\thetas)^\top-\Ee\left[\nabla\cL(\thetas;Z)\nabla\cL(\thetas;Z)^\top\right]}_{\max},$$
	$$V_2'\defn\frac n{n+k-1}\matrixnorm{\frac1n\sum_{i=1}^n\nabla\cL(\thetas;Z_{i1})\nabla\cL(\thetas;Z_{i1})^\top-\Ee\left[\nabla\cL(\thetas;Z)\nabla\cL(\thetas;Z)^\top\right]}_{\max},$$
	and
	$$V_3(\theta)\defn \frac{nk}{n+k-1} \left\|\nabla\cL_N(\theta)-\nabla\cLs(\theta)\right\|_\infty^2.$$
\end{lemma}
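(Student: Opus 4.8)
The statement is purely deterministic: it is an exact matrix identity followed by repeated triangle inequalities for $\matrixnorm{\cdot}_{\max}$, with no probabilistic input. The argument mirrors that of Lemma~\ref{lem:kgrad_decomp} (which coincides with Lemma~F.1 of \citet{ycc2020simultaneous}), the one new feature being that the gradient terms now carry two different weights: weight $1$ for each single-datum gradient $\nabla\cL(\theta;Z_{i1})$ held in the master, and weight $n$ for each worker-averaged gradient $\nabla\cL_j(\theta)$, $j=2,\dots,k$.

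First I would record the weighted sum-of-squares identity: for vectors $\{a_m\}$ with weights $\{w_m\}\geq0$, total weight $W=\sum_m w_m$, weighted mean $\bar a=W^{-1}\sum_m w_m a_m$, and any fixed $c$,
\begin{align*}
\sum_m w_m(a_m-\bar a)(a_m-\bar a)^\top=\sum_m w_m(a_m-c)(a_m-c)^\top-W(\bar a-c)(\bar a-c)^\top.
\end{align*}
Applying this with the $n$ master points $\nabla\cL(\theta;Z_{i1})$ of weight $1$, the $k-1$ worker points $\nabla\cL_j(\theta)$ of weight $n$, and $c=\nabla\cLs(\theta)$: since $\sum_{i=1}^n\nabla\cL(\theta;Z_{i1})=n\nabla\cL_1(\theta)$, the total weight is $W=nk$ and the weighted mean equals $k^{-1}\sum_{j=1}^k\nabla\cL_j(\theta)=\nabla\cL_N(\theta)$. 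Dividing through by $n+k-1$, the matrix inside $\matrixnorm{\cdot}_{\max}$ in the lemma becomes a sum recentered at $\thetas$ minus $\tfrac{nk}{n+k-1}(\nabla\cL_N(\theta)-\nabla\cLs(\theta))(\nabla\cL_N(\theta)-\nabla\cLs(\theta))^\top$; using $\matrixnorm{aa^\top}_{\max}=\|a\|_\infty^2$, the latter contributes exactly $V_3(\theta)$.

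For the recentered sum, I would peel it into its master block $\tfrac1{n+k-1}\sum_{i=1}^n(\cdot)(\cdot)^\top$ and worker block $\tfrac n{n+k-1}\sum_{j=2}^k(\cdot)(\cdot)^\top$, split $\Ee[\nabla\cL(\thetas;Z)\nabla\cL(\thetas;Z)^\top]$ with the matching convex weights $\tfrac n{n+k-1}$ and $\tfrac{k-1}{n+k-1}$, and invoke the triangle inequality and subadditivity of $\matrixnorm{\cdot}_{\max}$. Inside the worker block I insert and subtract $\tfrac1{k-1}\sum_{j=2}^k n\nabla\cL_j(\thetas)\nabla\cL_j(\thetas)^\top$, obtaining $V_1(\theta)$ (the perturbation from $\theta$ to $\thetas$) and $V_2$ (the empirical-vs-population fluctuation); inside the master block I insert and subtract $\tfrac1n\sum_{i=1}^n\nabla\cL(\thetas;Z_{i1})\nabla\cL(\thetas;Z_{i1})^\top$, obtaining $V_1'(\theta)$ and $V_2'$. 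Collecting the five pieces gives exactly $V_1(\theta)+V_1'(\theta)+V_2+V_2'+V_3(\theta)$.

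There is no genuine obstacle; the only thing requiring care is the bookkeeping of the weights. One must verify that the weights $(1,\dots,1,n,\dots,n)$ produce weighted mean $\nabla\cL_N(\theta)$ and total mass $nk$, so that the cross term in the identity carries precisely the coefficient $\tfrac{nk}{n+k-1}$ appearing in $V_3$, and that the split of $\Ee[\cdot]$ into the convex combination with weights $\tfrac n{n+k-1}$ and $\tfrac{k-1}{n+k-1}$ is exactly the one that makes the master and worker contributions collapse to $V_1',V_2'$ and $V_1,V_2$ respectively. Since every weight appearing in the $V$-terms of the lemma is already the one generated by this split, all inequalities used are plain triangle inequalities and the bound holds with the stated constants.
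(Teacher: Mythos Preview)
Your proposal is correct. The paper omits the proof entirely, citing that the lemma coincides with Lemma~F.3 of \citet{ycc2020simultaneous}; your argument---the weighted sum-of-squares identity recentered at $\nabla\cLs(\theta)$, followed by the convex split of $\Ee[\nabla\cL(\thetas;Z)\nabla\cL(\thetas;Z)^\top]$ with weights $\tfrac{n}{n+k-1}$ and $\tfrac{k-1}{n+k-1}$ and two insertions of the $\thetas$-evaluated empirical second moments---is exactly the natural proof one would expect behind that citation, and your verification that the weighted mean is $\nabla\cL_N(\theta)$ with total mass $nk$ (hence the $\tfrac{nk}{n+k-1}$ coefficient in $V_3$) is the only point requiring care.
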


Lemma~\ref{lem:nk1grad_decomp} is the same as Lemma~F.3 of \citet{ycc2020simultaneous}. We omit the proof.

\begin{lemma}\label{lem:vcov_reg}
	In sparse linear model, under Assumptions~\ref{as:design}~and~\ref{as:noise}, provided that $\left\|\btheta-\thetas\right\|_1=O_P(r_{\btheta})$, we have that
	\begin{align*}
	&\Bigg|\!\Bigg|\!\Bigg|\frac1{n+k-1}\Bigg(\sum_{i=1}^n\left(\nabla\cL(\btheta;Z_{i1})-\nabla\cL_N(\btheta)\right)\left(\nabla\cL(\btheta;Z_{i1})-\nabla\cL_N(\btheta)\right)^\top \\
	&\quad+\sum_{j=2}^k n\left(\nabla\cL_j(\btheta)-\nabla\cL_N(\btheta)\right)\left(\nabla\cL_j(\btheta)-\nabla\cL_N(\btheta)\right)^\top\Bigg) -\Ee\left[\nabla\cL(\thetas;Z)\nabla\cL(\thetas;Z)^\top\right]\Bigg|\!\Bigg|\!\Bigg|_{\max} \\
	&= O_P\Bigg(\sqrt{\frac{\log d}{n+k}} + \frac{\log^2(d(n+k))\log d}{n+k} + \left(\left(1+\sqrt{\frac{\log d}N}\right)\frac{nk}{n+k}+\log((n+k)d)\right)r_{\btheta}^2 \\
	&\quad + \Bigg(\sqrt{\log((n+k)d)} + \frac{\log^{1/4} d\sqrt{\log((n+k)d)}}{(n+k)^{1/4}} + \sqrt{\frac{\log^3(d(n+k))\log d}{n+k}}\Bigg) r_{\btheta}\Bigg).
	\end{align*}
\end{lemma}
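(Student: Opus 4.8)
The plan is to follow the blueprint of the \texttt{k-grad} analogue, Lemma~\ref{lem:vcov0_reg}, adapting every step to the \texttt{n+k-1-grad} structure. I would first apply the deterministic decomposition of Lemma~\ref{lem:nk1grad_decomp} with $\theta=\btheta$, reducing the task to bounding the five pieces $V_1(\btheta)$, $V_1'(\btheta)$, $V_2$, $V_2'$, $V_3(\btheta)$ separately. The terms $V_2$ and $V_2'$ are concentration terms for the empirical second moment of the oracle score: in the linear model $n\nabla\cL_j(\thetas)\nabla\cL_j(\thetas)^\top=(X_j^\top e_j/\sqrt n)(X_j^\top e_j/\sqrt n)^\top$ with sub-exponential entries, and $\nabla\cL(\thetas;Z_{i1})\nabla\cL(\thetas;Z_{i1})^\top=x_{i1}x_{i1}^\top e_{i1}^2$ with entries that are products of sub-exponentials. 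Applying the same concentration estimate used for $U_2$ in the proof of Lemma~\ref{lem:vcov0_reg} (via \citet{chernozhukov2013gaussian}) to the average over the $k-1$ machines and, respectively, over the $n$ master observations, and then multiplying by the prefactors $(k-1)/(n+k-1)$ and $n/(n+k-1)$, uses $\sqrt k\le\sqrt{n+k}$ and $\sqrt n\le\sqrt{n+k}$ to collapse both into $O_P\big(\sqrt{\log d/(n+k)}+\log^2(d(n+k))\log d/(n+k)\big)$.

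For $V_3(\btheta)=\tfrac{nk}{n+k-1}\|\nabla\cL_N(\btheta)-\nabla\cLs(\btheta)\|_\infty^2$ I would reuse the bound $\|\nabla\cL_N(\btheta)-\nabla\cLs(\btheta)\|_\infty=O_P\big((1+\sqrt{\log d/N})r_{\btheta}+\sqrt{\log d/N}\big)$ already established inside the proof of Lemma~\ref{lem:vcov0_reg}; squaring, multiplying by $nk/(n+k)$ and using $N=nk$ gives $O_P\big((1+\sqrt{\log d/N})\tfrac{nk}{n+k}r_{\btheta}^2+\log d/(n+k)\big)$, the last summand being absorbed into $\sqrt{\log d/(n+k)}$. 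For $V_1(\btheta)$ I would reproduce the $U_1=U_{11}+2U_{12}$ split of Lemma~\ref{lem:vcov0_reg} verbatim, writing $\nabla\cL_j(\btheta)-\nabla\cLs(\btheta)=(X_j^\top X_j/n-\Sigma)(\btheta-\thetas)+\nabla\cL_j(\thetas)$, bounding the purely quadratic part by $n\max_{j\ge2}\matrixnorm{X_j^\top X_j/n-\Sigma}_{\max}^2\|\btheta-\thetas\|_1^2=O_P(\log(kd)\,r_{\btheta}^2)$ via a Bernstein bound and a union bound over the $k-1$ machines, and the cross part by Cauchy--Schwarz (in the form used in the proof of Lemma~\ref{lem:vcov0_reg}) against $\matrixnorm{\tfrac1{k-1}\sum_{j\ge2}n\nabla\cL_j(\thetas)\nabla\cL_j(\thetas)^\top}_{\max}^{1/2}=O_P\big(1+(\log d/k)^{1/4}+(\log^2(dk)\log d/k)^{1/2}\big)$. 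The prefactor $(k-1)/(n+k-1)$ then converts these, via $k/(n+k)\le1$, $k^{3/4}/(n+k)\le(n+k)^{-1/4}$ and $\sqrt k/(n+k)\le(n+k)^{-1/2}$, into exactly the $\log((n+k)d)r_{\btheta}^2$ term and the three $r_{\btheta}$-linear contributions in the stated rate.

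The delicate part, and what I expect to be the main obstacle, is $V_1'(\btheta)$. The same $U_{11}/U_{12}$-type split applies, now with $\nabla\cL(\btheta;Z_{i1})-\nabla\cLs(\btheta)=(x_{i1}x_{i1}^\top-\Sigma)(\btheta-\thetas)+\nabla\cL(\thetas;Z_{i1})$, but the summands involve the \emph{single-observation} matrices $x_{i1}x_{i1}^\top$, which do not concentrate, so the crude bound $\|(x_{i1}x_{i1}^\top-\Sigma)v\|_\infty\le\matrixnorm{x_{i1}x_{i1}^\top-\Sigma}_{\max}\|v\|_1$ would force $\max_i\matrixnorm{x_{i1}x_{i1}^\top-\Sigma}_{\max}^2=O_P(\log^2(nd))$ and thus an extra power of the logarithm. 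To keep the log to first power I would instead write $\|(x_{i1}x_{i1}^\top-\Sigma)v\|_\infty\lesssim\|x_{i1}\|_\infty\,|x_{i1}^\top v|+\|v\|_1$ and, after averaging over $i$, control $\tfrac1n\sum_{i}(x_{i1}^\top v)^2=v^\top(X_1^\top X_1/n)v\le\matrixnorm{X_1^\top X_1/n}_{\max}\|v\|_1^2=O_P(r_{\btheta}^2)$ — a bound that is deterministic given $X_1$ and so sidesteps the dependence of $v=\btheta-\thetas$ on the master data — while only $\max_{1\le i\le n}\|x_{i1}\|_\infty^2=O_P(\log(nd))$ enters through a maximum over observations; the cross term is handled by Cauchy--Schwarz against $\matrixnorm{\tfrac1n\sum_i\nabla\cL(\thetas;Z_{i1})\nabla\cL(\thetas;Z_{i1})^\top}_{\max}^{1/2}=O_P(1+\cdots)$. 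Finally I would collect the five bounds, use $\tfrac n{n+k},\tfrac k{n+k}\le1$ and $\tfrac{nk}{n+k}\le\min(n,k)$, and consolidate the polylogarithmic factors; the bookkeeping of which $\log$-power lands on which power of $r_{\btheta}$ — in particular checking that the master-node contributions are dominated by the machine-node ones and by $\tfrac{nk}{n+k}r_{\btheta}^2$ — is where most of the care is needed.
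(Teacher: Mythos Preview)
Your plan coincides with the paper's proof: the same decomposition via Lemma~\ref{lem:nk1grad_decomp}, the same recycling of the $U_1,U_2,U_3$ bounds from Lemma~\ref{lem:vcov0_reg} for $V_1(\btheta),V_2,V_3(\btheta)$ (multiplied by the prefactor $(k-1)/(n+k-1)$ or $nk/(n+k-1)$), and the same Chernozhukov--Chetverikov--Kato concentration argument for $V_2'$.

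The one place you diverge is $V_{11}'(\btheta)$. The paper does \emph{not} use your refined inequality $\|(x_{i1}x_{i1}^\top-\Sigma)v\|_\infty\lesssim\|x_{i1}\|_\infty|x_{i1}^\top v|+\|v\|_1$; it simply applies the crude bound $\matrixnorm{x_{i1}x_{i1}^\top-\Sigma}_{\max}\|v\|_1$ that you flagged, and then asserts $\max_i\matrixnorm{x_{i1}x_{i1}^\top-\Sigma}_{\max}=O_P(\sqrt{\log(nd)})$ via the single-observation analogue of \eqref{eqn:bern3}, arriving at $V_{11}'=O_P(\log(nd)\,r_{\btheta}^2)$ directly. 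Your concern that the sub-exponential tail of $x_{i1,l}x_{i1,l'}$ should give $\log(nd)$ rather than $\sqrt{\log(nd)}$ for a single observation is legitimate --- in the paper's own display $\tfrac{\log(2nd^2/\delta)}{c}\vee\sqrt{\tfrac{\log(2nd^2/\delta)}{c}}$ it is the linear term that dominates --- so your route through $\tfrac1n\sum_i\|x_{i1}\|_\infty^2(x_{i1}^\top v)^2\le\max_i\|x_{i1}\|_\infty^2\cdot v^\top(X_1^\top X_1/n)v=O_P(\log(nd))\cdot O_P(r_{\btheta}^2)$ is arguably the cleaner way to land on the stated $\log((n+k)d)\,r_{\btheta}^2$ contribution. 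Apart from this local refinement the two arguments are identical.
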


\begin{proof}[Lemma \ref{lem:vcov_reg}]
	By Lemma~\ref{lem:nk1grad_decomp}, it suffices to bound $V_1(\btheta)$, $V_1'(\btheta)$, $V_2$, $V_2'$, and $V_3(\btheta)$.  By the proof of Lemma~\ref{lem:vcov0_reg}, we have that under Assumptions~\ref{as:design}~and~\ref{as:noise}, assuming that $\left\|\btheta-\thetas\right\|_1=O_P(r_{\btheta})$,
	\begin{align*}
	V_1(\btheta) &= \frac{k-1}{n+k-1} O_P\left(\left(1+\left(\frac{\log d}k\right)^{1/4} + \sqrt{\frac{\log^2(dk)\log d}k}\right)\sqrt{\log(kd)} r_{\btheta} + \log(kd) r_{\btheta}^2\right) \\
	&= O_P\left(\left(1+\left(\frac{\log d}k\right)^{1/4} + \sqrt{\frac{\log^2(dk)\log d}k}\right)\frac{k\sqrt{\log(kd)}}{n+k} r_{\btheta} + \frac{k\log(kd)}{n+k} r_{\btheta}^2\right),
	\end{align*}
	$$V_2 = \frac{k-1}{n+k-1} O_P\left(\sqrt{\frac{\log d}k} + \frac{\log^2(dk)\log d}k\right) = O_P\left(\frac{\sqrt{k\log d}}{n+k} + \frac{\log^2(dk)\log d}{n+k}\right),$$
	and
	$$V_3(\btheta) =  \frac{nk}{n+k-1} O_P\left(\left(1+\sqrt{\frac{\log d}N}\right) r_{\btheta}^2 + \frac{\log d}N\right) = O_P\left(\left(1+\sqrt{\frac{\log d}N}\right)\frac{nk}{n+k}r_{\btheta}^2 + \frac{\log d}{n+k}\right).$$
	It remains to bound $V_1'(\btheta)$ and $V_2'$.
	
	To bound $V_2'$, we have that in linear model, under Assumptions~\ref{as:design}~and~\ref{as:noise},
	\begin{align*}
	V_2'&=\frac n{n+k-1}\matrixnorm{\frac1n\sum_{i=1}^n\left(x_{i1}e_{i1}\right)\left(x_{i1}e_{i1}\right)^\top-\sigma^2\Sigma}_{\max}.
	\end{align*}
	Note that
	$$\Ee\left[\left(x_{i1}e_{i1}\right)_l^2\right]=\sigma^2\Sigma_{l,l}$$
	is bounded away from zero, and also, $\left(x_{i1}e_{i1}\right)_l$ is sub-exponential with O(1) $\psi_1$-norm for each $(i,l)$.  Then, by the proof of Corollary 3.1 of \citet{chernozhukov2013gaussian}, we have that
	\begin{align*}
	\Ee\left[\matrixnorm{\frac1n\sum_{i=1}^n\left(x_{i1}e_{i1}\right)\left(x_{i1}e_{i1}\right)^\top-\sigma^2\Sigma}_{\max}\right] \lesssim \sqrt{\frac{\log d}n} + \frac{\log^2(dn)\log d}n,
	\end{align*}
	and then, for any $\delta\in(0,1)$, with probability at least $1-\delta$,
	\begin{align*}
	\matrixnorm{\frac1n\sum_{i=1}^n\left(x_{i1}e_{i1}\right)\left(x_{i1}e_{i1}\right)^\top-\sigma^2\Sigma}_{\max}&\lesssim\frac1\delta\left(\sqrt{\frac{\log d}n} + \frac{\log^2(dn)\log d}n\right),
	\end{align*}
	by Markov's inequality, which implies that
	$$V_2' = \frac n{n+k-1} O_P\left(\sqrt{\frac{\log d}n} + \frac{\log^2(dn)\log d}n\right) = O_P\left(\frac{\sqrt{n\log d}}{n+k} + \frac{\log^2(dn)\log d}{n+k}\right).$$
	
	Lastly, we bound $V_1'(\btheta)$ using the same argument as in bounding $U_1(\btheta)$ in the proof of Lemma~\ref{lem:vcov0_reg}.  We write $\nabla\cL(\theta;Z_{i1})-\nabla\cLs(\theta)$ as $\left(\nabla\cL(\theta;Z_{i1})-\nabla\cLs(\theta)-\nabla\cL(\thetas;Z_{i1})\right)+\nabla\cL(\thetas;Z_{i1})$, and obtain by the triangle inequality that
        \begin{align*}
        \frac{n+k-1}n V_1'(\btheta) &\leq \matrixnorm{\frac1n\sum_{i=1}^n \left(\nabla\cL(\theta;Z_{i1})-\nabla\cLs(\theta)-\nabla\cL(\thetas;Z_{i1})\right)\left(\nabla\cL(\theta;Z_{i1})-\nabla\cLs(\theta)-\nabla\cL(\thetas;Z_{i1})\right)^\top}_{\max} \\
        &\quad+ \matrixnorm{\frac1n\sum_{i=1}^n \nabla\cL(\thetas;Z_{i1})\left(\nabla\cL(\theta;Z_{i1})-\nabla\cLs(\theta)-\nabla\cL(\thetas;Z_{i1})\right)^\top}_{\max} \\
        &\quad+ \matrixnorm{\frac1n\sum_{i=1}^n \left(\nabla\cL(\theta;Z_{i1})-\nabla\cLs(\theta)-\nabla\cL(\thetas;Z_{i1})\right)\nabla\cL(\thetas;Z_{i1})^\top}_{\max} \\
        &= \matrixnorm{\frac1n\sum_{i=1}^n \left(\nabla\cL(\theta;Z_{i1})-\nabla\cLs(\theta)-\nabla\cL(\thetas;Z_{i1})\right)\left(\nabla\cL(\theta;Z_{i1})-\nabla\cLs(\theta)-\nabla\cL(\thetas;Z_{i1})\right)^\top}_{\max} \\
        &\quad+ 2\matrixnorm{\frac1n\sum_{i=1}^n \nabla\cL(\thetas;Z_{i1})\left(\nabla\cL(\theta;Z_{i1})-\nabla\cLs(\theta)-\nabla\cL(\thetas;Z_{i1})\right)^\top}_{\max} \\
        &\defn V_{11}'(\btheta) + 2 V_{12}'(\btheta).
        \end{align*}
        Applying Cauchy-Schwarz inequality, we obtain that
        \begin{align*}
        V_{12}'(\btheta) &\leq \matrixnorm{\frac1n\sum_{i=1}^n \nabla\cL(\thetas;Z_{i1})\nabla\cL(\thetas;Z_{i1})^\top}_{\max}^{1/2} \\
        &\quad \cdot\matrixnorm{\frac1n\sum_{i=1}^n \left(\nabla\cL(\theta;Z_{i1})-\nabla\cLs(\theta)-\nabla\cL(\thetas;Z_{i1})\right)\left(\nabla\cL(\theta;Z_{i1})-\nabla\cLs(\theta)-\nabla\cL(\thetas;Z_{i1})\right)^\top}_{\max}^{1/2} \\
        &= \matrixnorm{\frac1n\sum_{i=1}^n \nabla\cL(\thetas;Z_{i1})\nabla\cL(\thetas;Z_{i1})^\top}_{\max}^{1/2}V_{11}'(\btheta)^{1/2}.
        \end{align*}
        By the triangle inequality, we have that
        \begin{align*}
        &\matrixnorm{\frac1n\sum_{i=1}^n \nabla\cL(\thetas;Z_{i1})\nabla\cL(\thetas;Z_{i1})^\top}_{\max} \\
        &\leq \matrixnorm{\frac1n\sum_{i=1}^n \nabla\cL(\thetas;Z_{i1})\nabla\cL(\thetas;Z_{i1})^\top-\Ee\left[\nabla\cL(\thetas;Z)\nabla\cL(\thetas;Z)^\top\right]}_{\max} + \matrixnorm{\Ee\left[\nabla\cL(\thetas;Z)\nabla\cL(\thetas;Z)^\top\right]}_{\max} \\
        &= \frac{n+k-1}n V_2' + \sigma^2\matrixnorm{\Sigma}_{\max} \\
        &= O_P\left(1+\sqrt{\frac{\log d}n} + \frac{\log^2(dn)\log d}n\right).
        \end{align*}
        It remains to bound $V_{11}'(\btheta)$.  Note that
        \begin{align*}
        \nabla\cL(\theta;Z_{i1})-\nabla\cLs(\theta)-\nabla\cL(\thetas;Z_{i1})
        &= x_{ij}(x_{ij}^\top\btheta-y_{ij}) - \Sigma(\btheta-\thetas) + x_{ij}(x_{ij}^\top\thetas-y_{ij}) \\
        &= \left(x_{ij} x_{ij}^\top-\Sigma\right)(\btheta-\thetas).
        \end{align*}
        Then, we have by the triangle inequality that
        \begin{align*}
        V_{11}'(\btheta) &= \matrixnorm{\frac1n\sum_{i=1}^n \left(x_{i1} x_{i1}^\top-\Sigma\right)(\btheta-\thetas)(\btheta-\thetas)^\top\left(x_{i1} x_{i1}^\top-\Sigma\right)}_{\max} \\
        &\leq \frac1n\sum_{i=1}^n \matrixnorm{\left(x_{i1} x_{i1}^\top-\Sigma\right)(\btheta-\thetas)(\btheta-\thetas)^\top\left(x_{i1} x_{i1}^\top-\Sigma\right)}_{\max} \\
        &= \frac1n\sum_{i=1}^n \matrixnorm{\left(x_{i1} x_{i1}^\top-\Sigma\right)(\btheta-\thetas)}_\infty^2 \\
        &\leq \frac1n\sum_{i=1}^n \matrixnorm{x_{i1} x_{i1}^\top-\Sigma}_{\max}^2 \left\|\btheta-\thetas\right\|_1^2.
        \end{align*}
        Similarly to obtaining \eqref{eqn:bern3}, we have that
        $$P\left(\matrixnorm{x_{i1} x_{i1}^\top-\Sigma}_{\max} > \matrixnorm{\Sigma}_{\max}\left(\frac{\log\frac{2nd^2}\delta}c\vee\sqrt{\frac{\log\frac{2nd^2}\delta}c}\right)\right)\leq \frac\delta n,$$
        and then, by the union bound,
        $$P\left(\max_i\matrixnorm{x_{i1} x_{i1}^\top-\Sigma}_{\max} > \matrixnorm{\Sigma}_{\max}\left(\frac{\log\frac{2nd^2}\delta}c\vee\sqrt{\frac{\log\frac{2nd^2}\delta}c}\right)\right)\leq \delta,$$
        which implies that
        $$\max_i\matrixnorm{x_{i1} x_{i1}^\top-\Sigma}_{\max} = O_P\left(\sqrt{\log(nd)}\right).$$
        Putting all the preceding bounds together, we obtain that
        $$V_{11}'(\btheta) = O_P\left(\log(nd) r_{\btheta}^2\right),$$
        $$V_{12}'(\btheta) = O_P\left(\left(1+\left(\frac{\log d}n\right)^{1/4} + \sqrt{\frac{\log^2(dn)\log d}n}\right)\sqrt{\log(nd)} r_{\btheta}\right),$$
        \begin{align*}
        V_1'(\btheta) &= \frac n{n+k-1} O_P\left(\left(1+\left(\frac{\log d}n\right)^{1/4} + \sqrt{\frac{\log^2(dn)\log d}n}\right)\sqrt{\log(nd)} r_{\btheta} + \log(nd) r_{\btheta}^2\right) \\
        &= O_P\left(\left(1+\left(\frac{\log d}n\right)^{1/4} + \sqrt{\frac{\log^2(dn)\log d}n}\right)\frac{n\sqrt{\log(nd)}}{n+k} r_{\btheta} + \frac{n\log(nd)}{n+k} r_{\btheta}^2\right),
        \end{align*}
        and finally,
        \begin{align*}
	&\Bigg|\!\Bigg|\!\Bigg|\frac1{n+k-1}\Bigg(\sum_{i=1}^n\left(\nabla\cL(\btheta;Z_{i1})-\nabla\cL_N(\btheta)\right)\left(\nabla\cL(\btheta;Z_{i1})-\nabla\cL_N(\btheta)\right)^\top \\
	&\quad+\sum_{j=2}^k n\left(\nabla\cL_j(\btheta)-\nabla\cL_N(\btheta)\right)\left(\nabla\cL_j(\btheta)-\nabla\cL_N(\btheta)\right)^\top\Bigg) -\Ee\left[\nabla\cL(\thetas;Z)\nabla\cL(\thetas;Z)^\top\right]\Bigg|\!\Bigg|\!\Bigg|_{\max} \\
	&=O_P\Bigg(\sqrt{\frac{\log d}{n+k}} + \frac{\log^2(d(n+k))\log d}{n+k} + \left(\left(1+\sqrt{\frac{\log d}N}\right)\frac{nk}{n+k}+\log((n+k)d)\right)r_{\btheta}^2 \\
	&\quad + \Bigg(\sqrt{\log((n+k)d)} + \frac{\log^{1/4} d\sqrt{\log((n+k)d)}}{(n+k)^{1/4}} + \sqrt{\frac{\log^3(d(n+k))\log d}{n+k}}\Bigg) r_{\btheta}\Bigg).
	\end{align*}
	
\end{proof}

\begin{lemma}\label{lem:vcov0_reg_glm}
	In sparse GLM, under Assumptions~\ref{as:smth_glm}--\ref{as:hes_glm}, provided that $\left\|\btheta-\thetas\right\|_1=O_P(r_{\btheta})$, we have that
	\begin{align*}
	&\matrixnorm{\frac1k\sum_{j=1}^k n\left(\nabla\cL_j(\btheta)-\nabla\cL_N(\btheta)\right)\left(\nabla\cL_j(\btheta)-\nabla\cL_N(\btheta)\right)^\top-\Ee\left[\nabla\cL(\thetas;Z)\nabla\cL(\thetas;Z)^\top\right]}_{\max} \\
	&= O_P\left(\sqrt{\frac{\log d}k} + \frac{\log d}k + \left(1+\left(\frac{\log d}k\right)^{1/4}\right)\left(\sqrt{\log d} + \sqrt{n}r_{\btheta}\right) r_{\btheta} + \left(n + \log d + nr_{\btheta}^2\right) r_{\btheta}^2\right).
	\end{align*}
\end{lemma}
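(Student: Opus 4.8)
The plan is to mirror the proof of Lemma~\ref{lem:vcov0_reg}, substituting GLM-specific concentration tools at each step. First I would apply Lemma~\ref{lem:kgrad_decomp} with $\theta=\btheta$ to reduce the target to bounding the three pieces $U_1(\btheta)$, $U_2$, and $U_3(\btheta)$. Throughout I would use that $\thetas$ minimizes $\cLs$, so $\nabla\cLs(\thetas)=0$ and $\Ee[\nabla\cL(\thetas;Z)]=0$, and that under \ref{as:smth_glm}--\ref{as:design_glm} the coordinates of the score $\nabla\cL(\thetas;Z)=g'(y,x^\top\thetas)x$ are $O(1)$, while $g''$ is bounded and Lipschitz on the relevant range of its second argument (as used in \eqref{eqn:lip}), and $\matrixnorm{\nabla^2\cLs(\thetas)}_{\max}$ and $\matrixnorm{\Ee[\nabla\cL(\thetas;Z)\nabla\cL(\thetas;Z)^\top]}_{\max}$ are $O(1)$ by \ref{as:hes_glm}.

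For $U_2$, boundedness of the score makes each $\sqrt n(\nabla\cL_j(\thetas))_l=n^{-1/2}\sum_i g'(y_{ij},x_{ij}^\top\thetas)x_{ij,l}$ sub-Gaussian with $O(1)$ $\psi_2$-norm by Hoeffding, hence the entries of $n\nabla\cL_j(\thetas)\nabla\cL_j(\thetas)^\top$ are sub-exponential with $O(1)$ $\psi_1$-norm; since this is an unbiased estimator of $\Ee[\nabla\cL(\thetas;Z)\nabla\cL(\thetas;Z)^\top]$ averaged over $k$ i.i.d.\ machines, Bernstein's inequality plus a union bound over the $d^2$ entries give $U_2=O_P(\sqrt{\log d/k}+\log d/k)$ --- sharper than the linear-model rate, where the per-machine summands were only sub-exponential. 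For $U_3(\btheta)=n\|\nabla\cL_N(\btheta)-\nabla\cLs(\btheta)\|_\infty^2$, I would bound $\|\nabla\cL_N(\btheta)-\nabla\cLs(\btheta)\|_\infty\le\|\nabla\cL_N(\btheta)-\nabla\cL_N(\thetas)\|_\infty+\|\nabla\cL_N(\thetas)\|_\infty+\|\nabla\cLs(\btheta)\|_\infty$, where the middle term is $O_P(\sqrt{\log d/N})$ by Hoeffding (cf.\ \eqref{eqn:hoef2}) and the outer two are $O_P(r_\btheta)$ by a first-order Taylor expansion of the gradient and the $O_P(1)$ bound on $\matrixnorm{\nabla^2\cL_N(\cdot)}_{\max}$ and $\matrixnorm{\nabla^2\cLs(\cdot)}_{\max}$ near $\thetas$ (in the regime $r_\btheta\lesssim1$, $\log d\lesssim N$); this yields $U_3(\btheta)=O_P(nr_\btheta^2+\log d/k)$ and is the source of the $nr_\btheta^2$ term in the bound.

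The bulk of the work is $U_1(\btheta)$, which I would split as $U_1(\btheta)\le U_{11}(\btheta)+2U_{12}(\btheta)$ after writing $\nabla\cL_j(\btheta)-\nabla\cLs(\btheta)=\big(\nabla\cL_j(\btheta)-\nabla\cLs(\btheta)-\nabla\cL_j(\thetas)\big)+\nabla\cL_j(\thetas)$. The key estimate, replacing the exact linear-model identity $\nabla\cL_j(\btheta)-\nabla\cLs(\btheta)-\nabla\cL_j(\thetas)=(X_j^\top X_j/n-\Sigma)(\btheta-\thetas)$, is the Taylor representation $\nabla\cL_j(\btheta)-\nabla\cLs(\btheta)-\nabla\cL_j(\thetas)=\int_0^1\big(\nabla^2\cL_j(\thetas+t(\btheta-\thetas))-\nabla^2\cLs(\thetas+t(\btheta-\thetas))\big)\,dt\,(\btheta-\thetas)$ together with the uniform bound $\sup_t\matrixnorm{\nabla^2\cL_j(\thetas+t(\btheta-\thetas))-\nabla^2\cLs(\thetas+t(\btheta-\thetas))}_{\max}=O_P(r_\btheta+\sqrt{\log(kd)/n})$ holding simultaneously over $j=1,\dots,k$, obtained by inserting $\pm\nabla^2\cL_j(\thetas)$ and $\pm\nabla^2\cLs(\thetas)$, using the Lipschitz property of $g''$ from \ref{as:smth_glm} exactly as in \eqref{eqn:lip}, a Hoeffding bound on $\matrixnorm{\nabla^2\cL_j(\thetas)-\nabla^2\cLs(\thetas)}_{\max}$, and a union bound over machines. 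This gives $U_{11}(\btheta)\le\max_j n\|\cdot\|_\infty^2=O_P(nr_\btheta^4+\log d\,r_\btheta^2)$ after an AM--GM split of the cross term, and then Cauchy--Schwarz with respect to the bilinear form $\langle A,B\rangle=\matrixnorm{AB^\top}_{\max}$ (as in Lemma~\ref{lem:vcov0_reg}) gives $U_{12}(\btheta)\le\matrixnorm{k^{-1}\sum_j n\nabla\cL_j(\thetas)\nabla\cL_j(\thetas)^\top}_{\max}^{1/2}U_{11}(\btheta)^{1/2}=O_P\big((1+(\log d/k)^{1/4})(\sqrt{\log d}\,r_\btheta+\sqrt n\,r_\btheta^2)\big)$, using $\matrixnorm{k^{-1}\sum_j n\nabla\cL_j(\thetas)\nabla\cL_j(\thetas)^\top}_{\max}\le U_2+\matrixnorm{\Ee[\nabla\cL(\thetas;Z)\nabla\cL(\thetas;Z)^\top]}_{\max}=O_P(1+(\log d/k)^{1/2})$. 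Summing $U_1+U_2+U_3$ and regrouping reproduces the stated bound.

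The main obstacle is controlling $U_1(\btheta)$: one must bound the Hessian remainder $\nabla^2\cL_j(\cdot)-\nabla^2\cLs(\cdot)$ along the whole segment from $\thetas$ to $\btheta$ simultaneously over all $k$ worker nodes, and then propagate this rate through $U_{11}$ and the Cauchy--Schwarz step so that the cross terms collapse exactly into the groupings $(\sqrt{\log d}+\sqrt n\,r_\btheta)r_\btheta$ and $(n+\log d+n r_\btheta^2)r_\btheta^2$; once the GLM analogues of \eqref{eqn:lip} and the Hoeffding/Bernstein estimates from the linear-model proof are established, the remaining bookkeeping is routine.
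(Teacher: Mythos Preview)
Your proposal is correct and follows essentially the same route as the paper: decompose via Lemma~\ref{lem:kgrad_decomp}, bound $U_2$ by Bernstein on the sub-exponential products $n\nabla\cL_j(\thetas)_l\nabla\cL_j(\thetas)_{l'}$, bound $U_3$ via a first-order Taylor expansion of the gradient, and handle $U_1$ by the $U_{11}+2U_{12}$ split with the entrywise Cauchy--Schwarz. The only cosmetic difference is that the paper expands $\nabla\cL_j(\btheta)-\nabla\cLs(\btheta)-\nabla\cL_j(\thetas)$ to second order in $g$ (separating the $\nabla^2\cL_j(\thetas)-\nabla^2\cLs(\thetas)$ piece from a $g'''$ remainder) rather than inserting $\pm\nabla^2\cL_j(\thetas),\pm\nabla^2\cLs(\thetas)$ along the segment as you do; the two are algebraically equivalent. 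One small slip: your union bound over the $k$ machines yields $\sqrt{\log(kd)/n}$ for the Hessian fluctuation, which propagates to $\log(kd)\,r_{\btheta}^2$ in $U_{11}$, not the $\log d\,r_{\btheta}^2$ you then write; to match the lemma's $\log d$ exactly you should instead bound the \emph{average} $k^{-1}\sum_j\matrixnorm{\nabla^2\cL_j(\thetas)-\nabla^2\cLs(\thetas)}_{\max}^2$ directly (each term has expectation $O(\log d/n)$), as the paper implicitly does.
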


\begin{proof}[Lemma \ref{lem:vcov0_reg_glm}]
	By Lemma~\ref{lem:kgrad_decomp}, it suffices to bound $U_1(\btheta)$, $U_2$, and $U_3(\btheta)$.  We begin by bounding $U_2$.  Using the argument for obtaining \eqref{eqn:hoef1}, we have that for any $t>0$,
	$$P\left(\left|\nabla\cL_j(\thetas)_l\right|>t\right)\leq2\exp\left(-\frac{nt^2}c\right),$$
	and then,
	$$P\left(\sqrt n\left|\nabla\cL_j(\thetas)_l\right|>t\right)\leq2\exp\left(-\frac{t^2}c\right),$$
	that is, $\sqrt n \nabla\cL_j(\thetas)_l$ is sub-Gaussian with $O(1)$ $\psi_2$-norm.  Therefore, $n \nabla\cL_j(\thetas)_l\nabla\cL_j(\thetas)_{l'}$ is sub-exponential with $O(1)$ $\psi_1$-norm.  Note that $\Ee[n \nabla\cL_j(\thetas)_l\nabla\cL_j(\thetas)_{l'}]=\Ee[\nabla\cL(\thetas;Z)_l\nabla\cL(\thetas;Z)_{l'}]$.  Then, we apply Bernstein's inequality and obtain that for any $t>0$,
	$$P\left(\left|\frac1k\sum_{j=1}^k n \nabla\cL_j(\thetas)_l\nabla\cL_j(\thetas)_{l'}-\Ee\left[\nabla\cL(\thetas;Z)_l\nabla\cL(\thetas;Z)_{l'}\right]\right| > t\right)\leq 2\exp\left(-ck\left(t^2\wedge t\right)\right),$$
	or, for any $\delta\in(0,1)$,
	$$P\left(\left|\frac1k\sum_{j=1}^k n \nabla\cL_j(\thetas)_l\nabla\cL_j(\thetas)_{l'}-\Ee\left[\nabla\cL(\thetas;Z)_l\nabla\cL(\thetas;Z)_{l'}\right]\right|>\sqrt{\frac{\log\frac{2d^2}\delta}{ck}}\vee\frac{\log\frac{2d^2}\delta}{ck}\right)\leq\frac\delta{d^2},$$
	and by the union bound, with probability at least $1-\delta$,
	$$U_2\leq\sqrt{\frac{\log\frac{2d^2}\delta}{ck}}\vee\frac{\log\frac{2d^2}\delta}{ck},$$
	which implies that
	$$U_2=O_P\left(\sqrt{\frac{\log d}k}\right).$$
	
	Next, we bound $U_3(\btheta)$.  By the triangle inequality, we have that
	\begin{align*}
	\left\|\nabla\cL_N(\btheta)-\nabla\cLs(\btheta)\right\|_\infty
	&\leq \left\|\nabla\cL_N(\btheta)-\nabla\cL_N(\thetas)\right\|_\infty + \left\|\nabla\cL_N(\thetas)\right\|_\infty + \left\|\nabla\cLs(\btheta)\right\|_\infty.
	\end{align*}
	By \eqref{eqn:graddiff}, we have that
	\begin{align*}
	\nabla\cL_N(\btheta)-\nabla\cL_N(\thetas)&=\int_0^1\nabla^2\cL_N(\thetas+t(\btheta-\thetas))dt(\btheta-\thetas) \\
	&=\int_0^1\frac1N\sum_{i=1}^n\sum_{j=1}^k g''(y_{ij},x_{ij}^\top(\thetas+t(\btheta-\thetas)))x_{ij}x_{ij}^\top dt(\btheta-\thetas),
	\end{align*}
	and then, under Assumptions~\ref{as:smth_glm}~and~\ref{as:design_glm},
	\begin{align*}
	\left\|\nabla\cL_N(\btheta)-\nabla\cL_N(\thetas)\right\|_\infty&=\int_0^1\frac1N\sum_{i=1}^n\sum_{j=1}^k \left|g''(y_{ij},x_{ij}^\top(\thetas+t(\btheta-\thetas)))\right|\left\|x_{ij}\right\|_\infty^2 dt \left\|\btheta-\thetas\right\|_\infty \\
	&\lesssim \left\|\btheta-\thetas\right\|_\infty.
	\end{align*}
	Note that for any $\theta$,
	\begin{align*}
	\left\|\nabla\cLs(\theta)\right\|_\infty&=\left\|\nabla\cLs(\theta)-\nabla\cLs(\thetas)\right\|_\infty \\
	&=\left\|\Ee\left[\left(g'(y,x^\top\theta)-g'(y,x^\top\thetas))\right)x\right]\right\|_\infty \\
	&=\left\|\Ee\left[\int_0^1 g''(y,x^\top(\thetas+t(\theta-\thetas)))dt xx^\top(\theta-\thetas)\right]\right\|_\infty \\
	&\leq\Ee\left[\int_0^1 \left|g''(y,x^\top(\thetas+t(\theta-\thetas)))\right|dt \left\|x\right\|_\infty^2 \left\|\theta-\thetas\right\|_\infty\right] \\
	&\lesssim \left\|\theta-\thetas\right\|_\infty.
	\end{align*}
	Therefore, $$\left\|\nabla\cLs(\btheta)\right\|_\infty\lesssim \left\|\btheta-\thetas\right\|_\infty.$$
	By \eqref{eqn:hoef2}, we have that
	$$\left\|\nabla\cL_N(\thetas)\right\|_\infty=O_P\left(\sqrt{\frac{\log d}N}\right).$$
	Then, assuming that $\left\|\btheta-\thetas\right\|_1=O_P(r_{\btheta})$, we have that
	\begin{align*}
	\left\|\nabla\cL_N(\btheta)-\nabla\cLs(\btheta)\right\|_\infty
	&=O_P\left(r_{\btheta}+\sqrt{\frac{\log d}N}\right),
	\end{align*}
	and then,
	$$U_3(\btheta) = O_P\left(n r_{\btheta}^2 + \frac{\log d}k\right).$$
	
	Lastly, we bound $U_1(\btheta)$.  As in the proof of Lemma~\ref{lem:vcov0_reg}, we have that
        \begin{align*}
        U_1(\btheta) &\leq \matrixnorm{\frac1k\sum_{j=1}^k n\left(\nabla\cL_j(\btheta)-\nabla\cLs(\btheta)-\nabla\cL_j(\thetas)\right)\left(\nabla\cL_j(\btheta)-\nabla\cLs(\btheta)-\nabla\cL_j(\thetas)\right)^\top}_{\max} \\
        &\quad+ 2\matrixnorm{\frac1k\sum_{j=1}^k n\nabla\cL_j(\thetas)\left(\nabla\cL_j(\btheta)-\nabla\cLs(\btheta)-\nabla\cL_j(\thetas)\right)^\top}_{\max} \\
        &\defn U_{11}(\btheta) + 2 U_{12}(\btheta),
        \end{align*}
        and
        \begin{align*}
        U_{12}(\btheta) &\leq \matrixnorm{\frac1k\sum_{j=1}^k n\nabla\cL_j(\thetas)\nabla\cL_j(\thetas)^\top}_{\max}^{1/2} U_{11}(\btheta)^{1/2}.
        \end{align*}
        Note that $\matrixnorm{\Ee\left[\nabla\cL(\thetas;Z)\nabla\cL(\thetas;Z)^\top\right]}_{\max} = O(1)$ under Assumption~\ref{as:hes_glm}.  Then, by the triangle inequality, we have that
        \begin{align*}
        &\matrixnorm{\frac1k\sum_{j=1}^k n\nabla\cL_j(\thetas)\nabla\cL_j(\thetas)^\top}_{\max} \\
        &\leq \matrixnorm{\frac1k\sum_{j=1}^k n\nabla\cL_j(\thetas)\nabla\cL_j(\thetas)^\top-\Ee\left[\nabla\cL(\thetas;Z)\nabla\cL(\thetas;Z)^\top\right]}_{\max} + \matrixnorm{\Ee\left[\nabla\cL(\thetas;Z)\nabla\cL(\thetas;Z)^\top\right]}_{\max} \\
        &=U_2 + \matrixnorm{\Ee\left[\nabla\cL(\thetas;Z)\nabla\cL(\thetas;Z)^\top\right]}_{\max} \\
        &= O_P\left(1+\sqrt{\frac{\log d}k}\right).
        \end{align*}
        It remains to bound $U_{11}(\btheta)$.  Note that
	\begin{align*}
	\nabla\cL_j(\btheta)-\nabla\cL_j(\thetas)&=\int_0^1\nabla^2\cL_j(\thetas+t(\btheta-\thetas))dt(\btheta-\thetas) \\
	&=\int_0^1\frac1n\sum_{i=1}^n g''(y_{ij},x_{ij}^\top(\thetas+t(\btheta-\thetas)))x_{ij}x_{ij}^\top dt(\btheta-\thetas),
	\end{align*}
	and
	\begin{align*}
	g''(y_{ij},x_{ij}^\top(\thetas+t(\btheta-\thetas)))&=g''(y_{ij},x_{ij}^\top\thetas)+\int_0^1 g'''(y_{ij},x_{ij}^\top(\thetas+st(\btheta-\thetas)))ds x_{ij}^\top(t(\btheta-\thetas)),
	\end{align*}
	and then
	\begin{align*}
	\nabla\cL_j(\btheta)-\nabla\cL_j(\thetas)&=\frac1n\sum_{i=1}^n g''(y_{ij},x_{ij}^\top\thetas)x_{ij}x_{ij}^\top (\btheta-\thetas) \\
	&\quad+ \int_0^1\int_0^1 \frac1n\sum_{i=1}^n g'''(y_{ij},x_{ij}^\top(\thetas+st(\btheta-\thetas))) x_{ij}^\top t(\btheta-\thetas) x_{ij}x_{ij}^\top dtds (\btheta-\thetas).
	\end{align*}
	In a similar way, we have that
	\begin{align*}
	\nabla\cLs(\btheta)&=\nabla\cLs(\btheta)-\nabla\cLs(\thetas) \\
	&= \Ee\left[g''(y,x^\top\thetas)xx^\top\right] (\btheta-\thetas) \\
	&\quad+ \int_0^1\int_0^1 \Ee_{x,y}\left[g'''(y,x^\top(\thetas+st(\btheta-\thetas))) x^\top t(\btheta-\thetas) xx^\top\right] dtds (\btheta-\thetas),
	\end{align*}
	and then,
	\begin{align*}
        \nabla\cL_j(\btheta)-\nabla\cLs(\btheta)-\nabla\cL_j(\thetas)
        &=\left(\frac1n\sum_{i=1}^n g''(y_{ij},x_{ij}^\top\thetas)x_{ij}x_{ij}^\top-\Ee\left[g''(y,x^\top\thetas)xx^\top\right]\right) (\btheta-\thetas) \\
	&\quad+ \bigg(\int_0^1\int_0^1 \frac1n\sum_{i=1}^n g'''(y_{ij},x_{ij}^\top(\thetas+st(\btheta-\thetas))) x_{ij}^\top t(\btheta-\thetas) x_{ij}x_{ij}^\top \\
	&\quad - \Ee_{x,y}\left[g'''(y,x^\top(\thetas+st(\btheta-\thetas))) x^\top t(\btheta-\thetas) xx^\top\right] dtds (\btheta-\thetas)\bigg) \\
	&\defn U_{111,j}(\btheta)+U_{112,j}(\btheta).
        \end{align*}
        Then, we have by the triangle inequality that
        	\begin{align*}
	U_{11}(\btheta)&=\matrixnorm{\frac1k\sum_{j=1}^k n\left(U_{111,j}(\btheta)+U_{112,j}(\btheta)\right)\left(U_{111,j}(\btheta)+U_{112,j}(\btheta)\right)^\top}_{\max} \\
	&\leq \frac1k\sum_{j=1}^k n \matrixnorm{\left(U_{111,j}(\btheta)+U_{112,j}(\btheta)\right)\left(U_{111,j}(\btheta)+U_{112,j}(\btheta)\right)^\top}_{\max} \\
	&= \frac1k\sum_{j=1}^k n \left\|U_{111,j}(\btheta)+U_{112,j}(\btheta)\right\|_\infty^2 \\
	&\leq \frac2k\sum_{j=1}^k n \left(\left\|U_{111,j}(\btheta)\right\|_\infty^2 + \left\|U_{112,j}(\btheta)\right\|_\infty^2\right)
	\end{align*}
	Using the argument for obtaining \eqref{eqn:hoef3}, we have that
	\begin{align*}
	\left\|U_{111,j}(\btheta)\right\|_\infty&=\left\|\left(\nabla^2\cL_j(\thetas)-\nabla^2\cLs(\thetas)\right)(\btheta-\thetas)\right\|_\infty \\
	&\leq\matrixnorm{\nabla^2\cL_j(\thetas)-\nabla^2\cLs(\thetas)}_{\max}\left\|\btheta-\thetas\right\|_1 \\
	&=O_P\left(\sqrt{\frac{\log d}n}\right) O_P\left(r_{\btheta}\right) \\
	&=O_P\left(\sqrt{\frac{\log d}n} r_{\btheta}\right).
	\end{align*}
	Under Assumptions~\ref{as:smth_glm}~and~\ref{as:design_glm}, we have that
	\begin{align*}
	\left\|U_{112,j}(\btheta)\right\|_\infty&\leq\int_0^1\int_0^1 \frac1n\sum_{i=1}^n \left|g'''(y_{ij},x_{ij}^\top(\thetas+st(\btheta-\thetas)))\right| \left\|x_{ij}\right\|_\infty t\left\|\btheta-\thetas\right\|_1 \left\|x_{ij}\right\|_\infty^2 \\
	&\quad + \Ee_{x,y}\left[\left|g'''(y,x^\top(\thetas+st(\btheta-\thetas)))\right| \left\|x\right\|_\infty t\left\|\btheta-\thetas\right\|_1 \left\|x\right\|_\infty^2\right] dtds \left\|\btheta-\thetas\right\|_1 \\
	&\lesssim \left\|\btheta-\thetas\right\|_1^2 \\
	&=O_P\left(r_{\btheta}^2\right).
	\end{align*}
	Hence, we have that
	$$U_{11}(\btheta) = n\left(O_P\left(\frac{\log d}n r_{\btheta}^2\right) + O_P\left(r_{\btheta}^4\right)\right) = O_P\left(\left(\log d + nr_{\btheta}^2\right) r_{\btheta}^2\right).$$
        Putting all the preceding bounds together, we obtain that
        $$U_{12}(\btheta) = O_P\left(\left(1+\left(\frac{\log d}k\right)^{1/4}\right)\left(\sqrt{\log d} + \sqrt{n}r_{\btheta}\right) r_{\btheta}\right),$$
        $$U_1(\btheta) = O_P\left(\left(1+\left(\frac{\log d}k\right)^{1/4}\right)\left(\sqrt{\log d} + \sqrt{n}r_{\btheta}\right) r_{\btheta} + \left(\log d + nr_{\btheta}^2\right) r_{\btheta}^2\right),$$
        and finally,
        \begin{align*}
	&\matrixnorm{\frac1k\sum_{j=1}^k n\left(\nabla\cL_j(\btheta)-\nabla\cL_N(\btheta)\right)\left(\nabla\cL_j(\btheta)-\nabla\cL_N(\btheta)\right)^\top-\Ee\left[\nabla\cL(\thetas;Z)\nabla\cL(\thetas;Z)^\top\right]}_2 \\
	&=O_P\left(\sqrt{\frac{\log d}k} + \frac{\log d}k + \left(1+\left(\frac{\log d}k\right)^{1/4}\right)\left(\sqrt{\log d} + \sqrt{n}r_{\btheta}\right) r_{\btheta} + \left(n + \log d + nr_{\btheta}^2\right) r_{\btheta}^2\right).
	\end{align*}
\end{proof}

\begin{lemma}\label{lem:vcov_reg_glm}
	In sparse GLM, under Assumptions~\ref{as:smth_glm}--\ref{as:hes_glm}, provided that $\left\|\btheta-\thetas\right\|_1=O_P(r_{\btheta})$, we have that
	\begin{align*}
	&\Bigg|\!\Bigg|\!\Bigg|\frac1{n+k-1}\Bigg(\sum_{i=1}^n\left(\nabla\cL(\btheta;Z_{i1})-\nabla\cL_N(\btheta)\right)\left(\nabla\cL(\btheta;Z_{i1})-\nabla\cL_N(\btheta)\right)^\top \\
	&\quad+\sum_{j=2}^k n\left(\nabla\cL_j(\btheta)-\nabla\cL_N(\btheta)\right)\left(\nabla\cL_j(\btheta)-\nabla\cL_N(\btheta)\right)^\top\Bigg) -\Ee\left[\nabla\cL(\thetas;Z)\nabla\cL(\thetas;Z)^\top\right]\Bigg|\!\Bigg|\!\Bigg|_{\max} \\
	&= O_P\Bigg(\sqrt{\frac{\log d}{n+k}} + \frac{\log d}{n+k} + \frac{nk}{n+k}r_{\btheta}^2 + \left(1+\left(\frac{\log d}n\right)^{1/4}\right)\frac{n}{n+k} \left(r_{\btheta} + r_{\btheta}^2\right) + \frac{n}{n+k} r_{\btheta}^4 \\
	&\quad + \left(1+\left(\frac{\log d}k\right)^{1/4}\right)\frac{k\sqrt{\log d}+k\sqrt n r_{\btheta}}{n+k} r_{\btheta} + \frac{k\log d+knr_{\btheta}^2}{n+k} r_{\btheta}^2\Bigg).
	\end{align*}
\end{lemma}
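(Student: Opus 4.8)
The plan is to follow the same template as the proof of Lemma~\ref{lem:vcov0_reg_glm}, but now through the five-term decomposition of Lemma~\ref{lem:nk1grad_decomp} instead of the three-term decomposition of Lemma~\ref{lem:kgrad_decomp}. First I would apply Lemma~\ref{lem:nk1grad_decomp} at $\theta=\btheta$, reducing the target to bounding $V_1(\btheta)$, $V_1'(\btheta)$, $V_2$, $V_2'$, and $V_3(\btheta)$. Three of these — $V_1(\btheta)$, $V_2$, and $V_3(\btheta)$ — are, up to the deterministic prefactors $(k-1)/(n+k-1)$, $(k-1)/(n+k-1)$, and $nk/(n+k-1)$, exactly the worker-sum quantities $U_1$, $U_2$, $U_3$ already controlled inside the proof of Lemma~\ref{lem:vcov0_reg_glm} (with $k$ replaced by $k-1$). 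So I would simply quote those bounds, multiply by the prefactors, and simplify using $k\le n+k$ and $N=nk$; this already produces all of the $\sqrt{\log d/(n+k)}$, $\log d/(n+k)$, $nk\,r_{\btheta}^2/(n+k)$, and $(1+(\log d/k)^{1/4})(k\sqrt{\log d}+k\sqrt n\,r_{\btheta})\,r_{\btheta}/(n+k)$ type terms in the statement.

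The two genuinely new pieces are $V_2'$ and $V_1'(\btheta)$, which replace machine-level gradients by single-datum gradients from the master node. For $V_2'$ I would observe that under Assumptions~\ref{as:smth_glm}--\ref{as:hes_glm} each coordinate $\nabla\cL(\thetas;Z_{i1})_l=g'(y_{i1},x_{i1}^\top\thetas)x_{i1,l}$ is bounded, hence sub-Gaussian, so the rank-one outer products $\nabla\cL(\thetas;Z_{i1})\nabla\cL(\thetas;Z_{i1})^\top$ have sub-exponential entries with $O(1)$ $\psi_1$-norm; applying the proof of Corollary~3.1 of \citet{chernozhukov2013gaussian} to the $n$ master observations gives $\matrixnorm{\frac1n\sum_{i=1}^n\nabla\cL(\thetas;Z_{i1})\nabla\cL(\thetas;Z_{i1})^\top-\Ee[\nabla\cL(\thetas;Z)\nabla\cL(\thetas;Z)^\top]}_{\max}=O_P(\sqrt{\log d/n}+\log d/n)$, and the prefactor $n/(n+k-1)$ turns this into the $\sqrt{\log d/(n+k)}+\log d/(n+k)$ contribution. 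For $V_1'(\btheta)$ I would reuse the $U_1$-decomposition of Lemma~\ref{lem:vcov0_reg_glm} but at the level of individual observations: write $\nabla\cL(\btheta;Z_{i1})-\nabla\cLs(\btheta)=D_i+\nabla\cL(\thetas;Z_{i1})$ with $D_i\defn\nabla\cL(\btheta;Z_{i1})-\nabla\cLs(\btheta)-\nabla\cL(\thetas;Z_{i1})$, split the inner max-norm by the triangle inequality into $V_{11}'\defn\matrixnorm{\frac1n\sum_iD_iD_i^\top}_{\max}$ and $2V_{12}'$ with $V_{12}'\defn\matrixnorm{\frac1n\sum_i\nabla\cL(\thetas;Z_{i1})D_i^\top}_{\max}$, and bound $V_{12}'$ by Cauchy--Schwarz for the inner product $\langle A,B\rangle\defn\matrixnorm{AB^\top}_{\max}$ against $\matrixnorm{\frac1n\sum_i\nabla\cL(\thetas;Z_{i1})\nabla\cL(\thetas;Z_{i1})^\top}_{\max}^{1/2}=O_P(1+(\log d/n)^{1/4})$ from the $V_2'$ step.

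To control $D_i$, a two-fold Taylor expansion of $g'(y_{i1},\cdot)$ about $x_{i1}^\top\thetas$ writes $D_i=\big(\nabla^2\cL(\thetas;Z_{i1})-\nabla^2\cLs(\thetas)\big)(\btheta-\thetas)$ plus a $g'''$-remainder; under Assumptions~\ref{as:smth_glm}--\ref{as:design_glm} the first term has $\ell_\infty$-norm $O_P(r_{\btheta})$ — only $O_P(r_{\btheta})$, not $O_P(\sqrt{\log d/n}\,r_{\btheta})$, because a single observation's Hessian fluctuation is not averaged — and the remainder has $\ell_\infty$-norm $O_P(r_{\btheta}^2)$, so $V_{11}'=O_P(r_{\btheta}^2+r_{\btheta}^4)$ and hence $V_1'(\btheta)=\tfrac{n}{n+k-1}O_P\big((1+(\log d/n)^{1/4})(r_{\btheta}+r_{\btheta}^2)+r_{\btheta}^4\big)$. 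Adding the five bounds and collecting terms via $n,k\le n+k$ and $N=nk$ yields the displayed right-hand side. The main obstacle is $V_1'(\btheta)$: unlike the linear case of Lemma~\ref{lem:vcov_reg}, the GLM gradient difference is nonlinear in $\btheta-\thetas$, so the $g'''$-remainder must be carried through, and because the master contributes one observation at a time there is no concentration gain on the single-datum Hessian fluctuation $\nabla^2\cL(\thetas;Z_{i1})-\nabla^2\cLs(\thetas)$, leaving it at $O_P(r_{\btheta})$; honestly propagating this through the Cauchy--Schwarz step is what forces the $(1+(\log d/n)^{1/4})\tfrac{n}{n+k}(r_{\btheta}+r_{\btheta}^2)$ term and is the delicate part of the bookkeeping.
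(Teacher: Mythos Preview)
Your proposal is correct and follows essentially the same route as the paper: apply Lemma~\ref{lem:nk1grad_decomp}, import the bounds on $V_1(\btheta)$, $V_2$, $V_3(\btheta)$ from the proof of Lemma~\ref{lem:vcov0_reg_glm} with the appropriate prefactors, and handle the two master-node terms $V_2'$ and $V_1'(\btheta)$ exactly as you describe (Taylor expansion of $D_i$ into a Hessian-fluctuation piece of size $O_P(r_{\btheta})$ plus a $g'''$-remainder of size $O_P(r_{\btheta}^2)$, then Cauchy--Schwarz for $V_{12}'$). The only cosmetic difference is that for $V_2'$ the paper invokes Hoeffding's inequality directly (the entries $g'(y_{i1},x_{i1}^\top\thetas)^2x_{i1,l}x_{i1,l'}$ are bounded under \ref{as:smth_glm}--\ref{as:design_glm}), yielding $O_P(\sqrt{\log d/n})$ without a second-order term, whereas you route through the sub-exponential machinery of \citet{chernozhukov2013gaussian}; either works and the extra log factors are absorbed in the final display.
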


\begin{proof}[Lemma \ref{lem:vcov_reg_glm}]
	By Lemma~\ref{lem:nk1grad_decomp}, it suffices to bound $V_1(\btheta)$, $V_1'(\btheta)$, $V_2$, $V_2'$, and $V_3(\btheta)$.  By the proof of Lemma~\ref{lem:vcov0_reg_glm}, we have that under Assumptions~\ref{as:smth_glm}--\ref{as:hes_glm}, assuming that $\left\|\btheta-\thetas\right\|_1=O_P(r_{\btheta})$,
	\begin{align*}
	V_1(\btheta) &= \frac{k-1}{n+k-1} O_P\left(\left(1+\left(\frac{\log d}k\right)^{1/4}\right)\left(\sqrt{\log d} + \sqrt{n}r_{\btheta}\right) r_{\btheta} + \left(\log d + nr_{\btheta}^2\right) r_{\btheta}^2\right) \\
	&= O_P\left(\left(1+\left(\frac{\log d}k\right)^{1/4}\right)\frac{k\sqrt{\log d}+k\sqrt n r_{\btheta}}{n+k} r_{\btheta} + \frac{k\log d+knr_{\btheta}^2}{n+k} r_{\btheta}^2\right),
	\end{align*}
	$$V_2 = \frac{k-1}{n+k-1} O_P\left(\sqrt{\frac{\log d}k}\right) = O_P\left(\frac{\sqrt{k\log d}}{n+k}\right),$$
	and
	$$V_3(\btheta) =  \frac{nk}{n+k-1} O_P\left(r_{\btheta}^2+\frac{\log d}N\right) = O_P\left(\frac{nk}{n+k}r_{\btheta}^2 + \frac{\log d}{n+k}\right).$$
	It remains to bound $V_1'(\btheta)$ and $V_2'$.
	
	To bound $V_2'$, we note that each $\nabla\cL(\thetas;Z_{i1})_l\nabla\cL(\thetas;Z_{i1})_{l'}=g'(y_{i1},x_{i1}^\top\thetas)^2 x_{i1,l} x_{i1,l'}$ is bounded under Assumptions~\ref{as:smth_glm}~and~\ref{as:design_glm}.  Applying Hoeffding's inequality, we obtain that for any $t>0$
	$$P\left(\left|\frac1n\sum_{i=1}^n\nabla\cL(\thetas;Z_{i1})_l\nabla\cL(\thetas;Z_{i1})_{l'}-\Ee\left[\nabla\cL(\thetas;Z)_l\nabla\cL(\thetas;Z)_{l'}\right]\right| > t\right)\leq 2\exp\left(-\frac{nt^2}c\right),$$
	for some constant $c$, or, for any $\delta\in(0,1)$,
	$$P\left(\left|\frac1n\sum_{i=1}^n\nabla\cL(\thetas;Z_{i1})_l\nabla\cL(\thetas;Z_{i1})_{l'}-\Ee\left[\nabla\cL(\thetas;Z)_l\nabla\cL(\thetas;Z)_{l'}\right]\right| > \sqrt{\frac{c\log\frac{2d^2}\delta}n}\right)\leq\frac\delta{d^2},$$
	and by the union bound, with probability at least $1-\delta$,
	$$\matrixnorm{\frac1n\sum_{i=1}^n\nabla\cL(\thetas;Z_{i1})\nabla\cL(\thetas;Z_{i1})^\top-\Ee\left[\nabla\cL(\thetas;Z)\nabla\cL(\thetas;Z)^\top\right]}_{\max}\leq\sqrt{\frac{c\log\frac{2d^2}\delta}n},$$
	which implies that
	$$V_2'=\frac n{n+k-1} O_P\left(\sqrt{\frac{\log d}n}\right)=O_P\left(\sqrt{\frac{n\log d}{n+k}}\right).$$
	
	Lastly, we bound $V_1'(\btheta)$.  As in the proof of Lemma~\ref{lem:vcov_reg}, we have that
	\begin{align*}
        \frac{n+k-1}n V_1'(\btheta) &\leq \matrixnorm{\frac1n\sum_{i=1}^n \left(\nabla\cL(\theta;Z_{i1})-\nabla\cLs(\theta)-\nabla\cL(\thetas;Z_{i1})\right)\left(\nabla\cL(\theta;Z_{i1})-\nabla\cLs(\theta)-\nabla\cL(\thetas;Z_{i1})\right)^\top}_{\max} \\
        &\quad+ 2\matrixnorm{\frac1n\sum_{i=1}^n \nabla\cL(\thetas;Z_{i1})\left(\nabla\cL(\theta;Z_{i1})-\nabla\cLs(\theta)-\nabla\cL(\thetas;Z_{i1})\right)^\top}_{\max} \\
        &\defn V_{11}'(\btheta) + 2 V_{12}'(\btheta),
        \end{align*}
        and
        \begin{align*}
        V_{12}'(\btheta) &\leq \matrixnorm{\frac1n\sum_{i=1}^n \nabla\cL(\thetas;Z_{i1})\nabla\cL(\thetas;Z_{i1})^\top}_{\max}^{1/2}V_{11}'(\btheta)^{1/2}.
        \end{align*}
        Note that $\matrixnorm{\Ee\left[\nabla\cL(\thetas;Z)\nabla\cL(\thetas;Z)^\top\right]}_{\max} = O(1)$ under Assumption~\ref{as:hes_glm}.  Then, by the triangle inequality, we have that
        \begin{align*}
        &\matrixnorm{\frac1n\sum_{i=1}^n \nabla\cL(\thetas;Z_{i1})\nabla\cL(\thetas;Z_{i1})^\top}_{\max} \\
        &\leq \matrixnorm{\frac1n\sum_{i=1}^n \nabla\cL(\thetas;Z_{i1})\nabla\cL(\thetas;Z_{i1})^\top-\Ee\left[\nabla\cL(\thetas;Z)\nabla\cL(\thetas;Z)^\top\right]}_{\max} + \matrixnorm{\Ee\left[\nabla\cL(\thetas;Z)\nabla\cL(\thetas;Z)^\top\right]}_{\max} \\
        &= \frac{n+k-1}n V_2' + \matrixnorm{\Ee\left[\nabla\cL(\thetas;Z)\nabla\cL(\thetas;Z)^\top\right]}_{\max} \\
        &= O_P\left(1+\sqrt{\frac{\log d}n}\right).
        \end{align*}
        It remains to bound $V_{11}'(\btheta)$.  Using the same argument for analyzing $\nabla\cL_j(\btheta)-\nabla\cLs(\btheta)-\nabla\cL_j(\thetas)$ in the proof of Lemma~\ref{lem:vcov0_reg_glm}, we obtain that
        \begin{align*}
        \nabla\cL(\theta;Z_{i1})-\nabla\cLs(\theta)-\nabla\cL(\thetas;Z_{i1})
        &=\left(g''(y_{i1},x_{i1}^\top\thetas)x_{i1}x_{i1}^\top-\Ee\left[g''(y,x^\top\thetas)xx^\top\right]\right) (\btheta-\thetas) \\
	&\quad+ \bigg(\int_0^1\int_0^1 g'''(y_{i1},x_{i1}^\top(\thetas+st(\btheta-\thetas))) x_{i1}^\top t(\btheta-\thetas) x_{i1}x_{i1}^\top \\
	&\quad - \Ee_{x,y}\left[g'''(y,x^\top(\thetas+st(\btheta-\thetas))) x^\top t(\btheta-\thetas) xx^\top\right] dtds (\btheta-\thetas)\bigg) \\
	&\defn V_{111,i}'(\btheta)+V_{112,i}'(\btheta),
        \end{align*}
        and
        	\begin{align*}
	V_{11}'(\btheta)&=\matrixnorm{\frac1n\sum_{i=1}^n \left(V_{111,i}'(\btheta)+V_{112,i}'(\btheta)\right)\left(V_{111,i}'(\btheta)+V_{112,i}'(\btheta)\right)^\top}_{\max} \\
	&\leq \frac1n\sum_{i=1}^n \matrixnorm{\left(V_{111,i}'(\btheta)+V_{112,i}'(\btheta)\right)\left(V_{111,i}'(\btheta)+V_{112,i}'(\btheta)\right)^\top}_{\max} \\
	&= \frac1n\sum_{i=1}^n \left\|V_{111,i}'(\btheta)+V_{112,i}'(\btheta)\right\|_\infty^2 \\
	&\leq \frac2n\sum_{i=1}^n \left(\left\|V_{111,i}'(\btheta)\right\|_\infty^2 + \left\|V_{112,i}'(\btheta)\right\|_\infty^2\right).
	\end{align*}
	Moreover, under Assumptions~\ref{as:smth_glm}--\ref{as:hes_glm}, we have that
	\begin{align*}
	\left\|V_{111,i}'(\btheta)\right\|_\infty&=\left\|\left(\nabla^2\cL(\thetas;Z_{i1})-\nabla^2\cLs(\thetas)\right)(\btheta-\thetas)\right\|_\infty \\
	&\leq\matrixnorm{\nabla^2\cL(\thetas;Z_{i1})-\nabla^2\cLs(\thetas)}_{\max}\left\|\btheta-\thetas\right\|_1 \\
	&\leq\left(\left|g''(y_{i1},x_{i1}^\top\thetas)\right|\left\|x_{i1}\right\|_\infty^2+\matrixnorm{\nabla^2\cLs(\thetas)}_{\max}\right)\left\|\btheta-\thetas\right\|_1 \\
	&=O_P\left(r_{\btheta}\right),
	\end{align*}
	and
	\begin{align*}
	\left\|V_{112,i}'(\btheta)\right\|_\infty&\leq\int_0^1\int_0^1  \left|g'''(y_{i1},x_{i1}^\top(\thetas+st(\btheta-\thetas)))\right| \left\|x_{i1}\right\|_\infty t\left\|\btheta-\thetas\right\|_1 \left\|x_{i1}\right\|_\infty^2 \\
	&\quad + \Ee_{x,y}\left[\left|g'''(y,x^\top(\thetas+st(\btheta-\thetas)))\right| \left\|x\right\|_\infty t\left\|\btheta-\thetas\right\|_1 \left\|x\right\|_\infty^2\right] dtds \left\|\btheta-\thetas\right\|_1 \\
	&\lesssim \left\|\btheta-\thetas\right\|_1^2 \\
	&=O_P\left(r_{\btheta}^2\right),
	\end{align*}
	and hence,
	$$V_{11}'(\btheta) = O_P\left(r_{\btheta}^2 + r_{\btheta}^4\right).$$
	
	Putting all the preceding bounds together, we obtain that
        $$V_{12}'(\btheta) = O_P\left(\left(1+\left(\frac{\log d}n\right)^{1/4}\right)\left(r_{\btheta} + r_{\btheta}^2\right)\right),$$
        \begin{align*}
        V_1'(\btheta) &= \frac n{n+k-1} O_P\left(\left(1+\left(\frac{\log d}n\right)^{1/4}\right)\left(r_{\btheta} + r_{\btheta}^2\right) + r_{\btheta}^2 + r_{\btheta}^4\right) \\
        &= O_P\left(\left(1+\left(\frac{\log d}n\right)^{1/4}\right)\frac{n}{n+k} \left(r_{\btheta} + r_{\btheta}^2\right) + \frac{n}{n+k} r_{\btheta}^4\right),
        \end{align*}
        and finally,
        \begin{align*}
	&\Bigg|\!\Bigg|\!\Bigg|\frac1{n+k-1}\Bigg(\sum_{i=1}^n\left(\nabla\cL(\btheta;Z_{i1})-\nabla\cL_N(\btheta)\right)\left(\nabla\cL(\btheta;Z_{i1})-\nabla\cL_N(\btheta)\right)^\top \\
	&\quad+\sum_{j=2}^k n\left(\nabla\cL_j(\btheta)-\nabla\cL_N(\btheta)\right)\left(\nabla\cL_j(\btheta)-\nabla\cL_N(\btheta)\right)^\top\Bigg) -\Ee\left[\nabla\cL(\thetas;Z)\nabla\cL(\thetas;Z)^\top\right]\Bigg|\!\Bigg|\!\Bigg|_{\max} \\
	&=O_P\Bigg(\sqrt{\frac{\log d}{n+k}} + \frac{\log d}{n+k} + \frac{nk}{n+k}r_{\btheta}^2 + \left(1+\left(\frac{\log d}n\right)^{1/4}\right)\frac{n}{n+k} \left(r_{\btheta} + r_{\btheta}^2\right) + \frac{n}{n+k} r_{\btheta}^4 \\
	&\quad + \left(1+\left(\frac{\log d}k\right)^{1/4}\right)\frac{k\sqrt{\log d}+k\sqrt n r_{\btheta}}{n+k} r_{\btheta} + \frac{k\log d+knr_{\btheta}^2}{n+k} r_{\btheta}^2\Bigg).
	\end{align*}
	
\end{proof}

\begin{lemma}\label{lem:hes_hd}
	In high-dimensional linear model, under Assumption~\ref{as:design}, if $n\gg  {s^*} \log d$, we have that
	$$\matrixnorm{\tT}_\infty=O_P\left( \sqrt{{s^*}}\right),\quad\matrixnorm{\tT-\Theta}_\infty=O_P\left(  {s^*}\sqrt{\frac{\log d}n}\right),$$
	$$\matrixnorm{\tT\frac{X_1^\top X_1}n-I_d}_{\max}=O_P\left(\sqrt{\frac{\log d}n}\right),\quad\text{and}\quad\max_l\left\|\tT_l-\Theta_l\right\|_2=O_P\left( \sqrt{\frac{{s^*}\log d}n}\right).$$
\end{lemma}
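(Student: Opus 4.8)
The plan is to recognize $\tT$ as the nodewise Lasso estimator of \citet{van2014asymptotically} applied to the master Gram matrix, and to follow their analysis while tracking the dependence on $s^*$. Since the loss is quadratic, $\nabla^2\cL_1(\ttheta^{(0)}) = X_1^\top X_1/n$ independently of $\ttheta^{(0)}$, so $\tT = \widehat{M^{-1}}$ is produced by Algorithm~\ref{alg:node} from $\widehat M = X_1^\top X_1/n$, and its $l$-th row is $\tT_l = \widehat\tau_l^{-2}(-\widehat\gamma_{l,1},\dots,1,\dots,-\widehat\gamma_{l,d})$ with the $1$ in coordinate $l$, where $\widehat\gamma_l$ is the Lasso regression of the $l$-th column of $X_1$ on the remaining columns with tuning parameter $\lambda_l\asymp\sqrt{\log d/n}$ and $\widehat\tau_l^2 = \widehat M_{l,l}-\widehat M_{l,-l}\widehat\gamma_l$. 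The population counterparts are $\Theta_l = \tau_l^{-2}(-\gamma_{l,1},\dots,1,\dots,-\gamma_{l,d})$ with $\gamma_l = (\Sigma_{-l,-l})^{-1}\Sigma_{-l,l}$ and $\tau_l^2 = 1/\Theta_{l,l}$; under \ref{as:design}, $\lambdamax(\Sigma) = O(1)$, so $\|\gamma_l\|_2 = O(1)$, $\|\gamma_l\|_1 \le \sqrt{s_l}\,\|\gamma_l\|_2 = O(\sqrt{s_l})$ (since $\gamma_l$ has at most $s_l$ nonzeros), and $1/\mu \le \tau_l^2 \le \lambdamax(\Sigma) = O(1)$, all uniformly in $l$.

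The technical core is a control of the $d$ nodewise regressions uniform over $l$. I would write $x_l = x_{-l}^\top\gamma_l + \varepsilon_l$ with $\varepsilon_l$ sub-Gaussian (a linear image of $x$), use Bernstein's inequality and a union bound over the $O(d^2)$ entries to obtain $\max_l\|X_{1,-l}^\top\varepsilon_l/n\|_\infty = O_P(\sqrt{\log d/n})$, so $\lambda_l\asymp\sqrt{\log d/n}$ dominates the noise, and invoke \citet{RZ13} to get, with probability tending to one, a restricted-eigenvalue/compatibility condition for $X_1$ on all $s^*$-sparse cones simultaneously, valid once $n \gg s^*\log d$. The standard Lasso oracle inequality applied to each of the $d$ problems then yields $\max_l\|\widehat\gamma_l-\gamma_l\|_1 = O_P(s^*\sqrt{\log d/n})$ and $\max_l\|\widehat\gamma_l-\gamma_l\|_2 = O_P(\sqrt{s^*\log d/n})$; expanding $\widehat\tau_l^2-\tau_l^2 = (\widehat M_{l,l}-\Sigma_{l,l}) - (\widehat M_{l,-l}\widehat\gamma_l - \Sigma_{l,-l}\gamma_l)$ and using these bounds together with $\matrixnorm{\widehat M-\Sigma}_{\max} = O_P(\sqrt{\log d/n})$ gives $\max_l|\widehat\tau_l^2-\tau_l^2| = O_P(\sqrt{\log d/n})$, so $\widehat\tau_l^2$ is bounded away from $0$ and $\infty$ uniformly with probability tending to one.

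The four claims then follow by algebra on the $d$ rows. For $\matrixnorm{\tT}_\infty = \max_l\|\tT_l\|_1$, use $\|\tT_l\|_1 = \widehat\tau_l^{-2}(1+\|\widehat\gamma_l\|_1) \le \widehat\tau_l^{-2}(1 + \|\gamma_l\|_1 + \|\widehat\gamma_l-\gamma_l\|_1) = O_P(\sqrt{s^*})$, since $\|\widehat\gamma_l-\gamma_l\|_1 = o_P(\sqrt{s^*})$ when $n\gg s^*\log d$. Writing $\tT_l = \widehat\tau_l^{-2}\widehat v_l$ and $\Theta_l = \tau_l^{-2}v_l$ with $\widehat v_l,v_l\in\R^d$ having $l$-th coordinate $1$ and other coordinates $-\widehat\gamma_l$, resp.\ $-\gamma_l$, one has $\tT_l-\Theta_l = \widehat\tau_l^{-2}(\widehat v_l-v_l) + (\widehat\tau_l^{-2}-\tau_l^{-2})v_l$; since $\|\widehat v_l - v_l\|_1 = \|\widehat\gamma_l-\gamma_l\|_1$, $\|\widehat v_l-v_l\|_2 = \|\widehat\gamma_l-\gamma_l\|_2$, $\|v_l\|_1 = O(\sqrt{s^*})$, and $\|v_l\|_2 = O(1)$, taking the maximum over $l$ gives $\matrixnorm{\tT-\Theta}_\infty = O_P\big(s^*\sqrt{\log d/n} + \sqrt{s^*}\,\sqrt{\log d/n}\big) = O_P(s^*\sqrt{\log d/n})$ and $\max_l\|\tT_l-\Theta_l\|_2 = O_P\big(\sqrt{s^*\log d/n} + \sqrt{\log d/n}\big) = O_P(\sqrt{s^*\log d/n})$. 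For $\matrixnorm{\tT\widehat M-I_d}_{\max}$, use the subgradient (KKT) condition of the nodewise Lasso: $(\tT\widehat M)_{l,l} = \widehat\tau_l^{-2}(\widehat M_{l,l}-\widehat M_{l,-l}\widehat\gamma_l) = 1$ exactly, while for $l'\neq l$, $\widehat M_{l,l'}-\widehat\gamma_l^\top\widehat M_{-l,l'}$ is a coordinate of $X_{1,-l}^\top(X_{1,l}-X_{1,-l}\widehat\gamma_l)/n$, hence $\le\lambda_l$ in magnitude, so $\matrixnorm{\tT\widehat M-I_d}_{\max} \le \max_l\lambda_l/\widehat\tau_l^2 = O_P(\sqrt{\log d/n})$. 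The main obstacle, and the only step requiring real work, is obtaining the restricted-eigenvalue condition and the Lasso oracle inequality uniformly over all $d$ nodewise regressions under \ref{as:design} — precisely the content imported from \citet{RZ13} and standard Lasso theory — after which the four bounds reduce to bookkeeping.
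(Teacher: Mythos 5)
Your proposal is correct and is essentially the paper's approach: the paper's proof simply imports these four bounds from the nodewise-Lasso analysis of \citet{van2014asymptotically} (the proof of their Lemma 5.3 and Theorem 2.4), and your argument is a reconstruction of exactly that analysis — the uniform Lasso oracle inequalities over the $d$ nodewise regressions with the restricted-eigenvalue condition from \citet{RZ13}, the perturbation of $\widehat\tau_l^2$, and the KKT bound giving $\matrixnorm{\tT\frac{X_1^\top X_1}{n}-I_d}_{\max}\leq\max_l \lambda_l/\widehat\tau_l^2$. The one slip is the intermediate claim $\max_l|\widehat\tau_l^2-\tau_l^2|=O_P(\sqrt{\log d/n})$: your own expansion only yields $O_P(\sqrt{s^*\log d/n})$, since the term $(\widehat M_{l,-l}-\Sigma_{l,-l})\gamma_l$ already costs $\sqrt{s^*}\sqrt{\log d/n}$ and $\Sigma_{l,-l}(\widehat\gamma_l-\gamma_l)$ costs $\sqrt{s^*\log d/n}$; this weaker (and correct) rate still delivers all four stated bounds unchanged, so nothing breaks.
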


\begin{proof}[Lemma \ref{lem:hes_hd}]
	In the high-dimensional setting, $\tT$ is constructed using nodewise Lasso.  We obtain the bounds in the lemma from the proof of Lemma 5.3 and Theorem 2.4 of \citet{van2014asymptotically}.
\end{proof}

\begin{lemma}\label{lem:hes_hd_glm}
	In high-dimensional GLM, under Assumptions~\ref{as:smth_glm}--\ref{as:hes_glm}, if $n\gg s_0^2\log^2 d+{s^*}^2\log d$, we have that
	$$\matrixnorm{\tT(\ttheta^{(0)})}_\infty=O_P\left( \sqrt{{s^*}}\right),\quad\matrixnorm{\tT(\ttheta^{(0)})-\Theta}_\infty=O_P\left(\left(s_0 +   {s^*}\right)\sqrt{\frac{\log d}n}\right),$$
	$$\matrixnorm{\tT(\ttheta^{(0)})\nabla^2\cL_1(\ttheta^{(0)})-I_d}_{\max}=O_P\left(\sqrt{\frac{\log d}n}\right),\quad\text{and}\quad$$
	$$\max_l\left\|\tT(\ttheta^{(0)})_l-\Theta_l\right\|_2=O_P\left(\sqrt{\frac{(s_0+{s^*})\log d}n}\right).$$
\end{lemma}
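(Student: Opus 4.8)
The plan is to regard $\tT(\ttheta^{(0)})$ as the nodewise-Lasso estimator of $\Theta=\nabla^2\cLs(\thetas)^{-1}$ built from the plug-in weighted Gram matrix $\widehat\Sigma\defn\nabla^2\cL_1(\ttheta^{(0)})=n^{-1}\sum_{i=1}^n g''(y_{i1},x_{i1}^\top\ttheta^{(0)})x_{i1}x_{i1}^\top$, and to reduce the four assertions to the analysis of \citet{van2014asymptotically} --- the GLM counterparts of their Lemma~5.3 and Theorem~2.4 --- exactly as was done for the linear model in Lemma~\ref{lem:hes_hd}. The one genuinely new feature relative to the linear case is that the weights in $\widehat\Sigma$ are evaluated at the estimate $\ttheta^{(0)}$ rather than at $\thetas$, so I would first quantify how close $\widehat\Sigma$ is to the population weighted Gram $\Sigma^*\defn\nabla^2\cLs(\thetas)$ and then track how the plug-in error propagates into the bounds.

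First I would record the Lasso rate for the initial estimator: since $\ttheta^{(0)}=\argmin_\theta\cL_1(\theta)+\lambda^{(0)}\|\theta\|_1$ is an ordinary $\ell_1$-penalized GLM estimator on the $n$ i.i.d.\ samples in $\cM_1$ with $\lambda^{(0)}\asymp\sqrt{\log d/n}$, Assumptions~\ref{as:smth_glm}--\ref{as:hes_glm} and $n\gg s_0^2\log^2 d$ place it in the fast-rate regime (the $t=0$ instance of Theorem~6 of \citet{wang2017efficient}), giving $\|\ttheta^{(0)}-\thetas\|_1=O_P(s_0\sqrt{\log d/n})$. Using the Lipschitz-in-$b$ bound on $g''$ from Assumption~\ref{as:smth_glm} together with $\|x\|_\infty=O(1)$ and $|x^\top\thetas|=O(1)$ from Assumption~\ref{as:design_glm} --- the computation already carried out in \eqref{eqn:lip} --- I get $\matrixnorm{\widehat\Sigma-\nabla^2\cL_1(\thetas)}_{\max}=O_P(\|\ttheta^{(0)}-\thetas\|_1)=O_P(s_0\sqrt{\log d/n})$, while a Hoeffding bound over the $d^2$ entries (bounded summands under Assumptions~\ref{as:smth_glm}--\ref{as:design_glm}) gives $\matrixnorm{\nabla^2\cL_1(\thetas)-\Sigma^*}_{\max}=O_P(\sqrt{\log d/n})$; hence $\matrixnorm{\widehat\Sigma-\Sigma^*}_{\max}=O_P\big((1+s_0)\sqrt{\log d/n}\big)$.

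Next I would transfer the structural conditions from $\Sigma^*$ to $\widehat\Sigma$: Assumption~\ref{as:hes_glm} gives eigenvalues of $\Sigma^*$ bounded away from $0$ and $\infty$, and each row $\Theta_l$ is $s^*$-sparse, so the standard sparse-perturbation argument shows that the compatibility constant of each of the $d$ nodewise regressions run on $\widehat\Sigma$ remains bounded below once $s^*\matrixnorm{\widehat\Sigma-\Sigma^*}_{\max}$ is small --- this, combined with the fast-rate requirement on $\ttheta^{(0)}$, is where the hypothesis $n\gg s_0^2\log^2 d+{s^*}^2\log d$ enters. Conditionally on these events, the four conclusions follow from \citet{van2014asymptotically}: the KKT condition of the $l$-th nodewise Lasso yields $\matrixnorm{\widehat\Sigma\tT(\ttheta^{(0)})-I_d}_{\max}=O_P(\max_l\lambda_l)=O_P(\sqrt{\log d/n})$ since $\widehat\tau_l^2$ is bounded below; the nodewise oracle inequality gives $\|\widehat\gamma_l\|_1=O_P(\sqrt{s^*})$ and hence $\matrixnorm{\tT(\ttheta^{(0)})}_\infty=\max_l\|\tT(\ttheta^{(0)})_l\|_1=O_P(\sqrt{s^*})$; and decomposing $\tT(\ttheta^{(0)})_l-\Theta_l$ into the ``oracle'' nodewise error (identical in form to the linear case, of order $\sqrt{s^*\log d/n}$ in $\ell_2$) plus a term governed by $\matrixnorm{\widehat\Sigma-\Sigma^*}_{\max}$ (which propagates $\|\ttheta^{(0)}-\thetas\|_1$ linearly, contributing the additive $s_0$) gives $\max_l\|\tT(\ttheta^{(0)})_l-\Theta_l\|_2=O_P(\sqrt{(s_0+s^*)\log d/n})$ and $\matrixnorm{\tT(\ttheta^{(0)})-\Theta}_\infty=O_P((s_0+s^*)\sqrt{\log d/n})$.

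The main obstacle I anticipate is the transfer of the restricted-eigenvalue / compatibility condition to the random, data-dependent matrix $\widehat\Sigma$ uniformly over all $d$ nodewise regressions: one must verify that perturbing the weights from $\thetas$ to $\ttheta^{(0)}$ neither destroys the eigenvalue lower bound nor inflates the propagated error beyond the additive $s_0$ term, and it is this step that dictates the scaling $n\gg s_0^2\log^2 d+{s^*}^2\log d$. Everything else is bookkeeping layered on top of the nodewise-Lasso machinery of \citet{van2014asymptotically} already used in Lemma~\ref{lem:hes_hd}.
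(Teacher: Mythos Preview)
Your proposal is correct and follows essentially the same approach as the paper: both reduce the four assertions to the GLM nodewise-Lasso analysis of \citet{van2014asymptotically}. The paper's proof is a bare citation to Theorem~3.2 and the proofs of Theorems~3.1 and~3.3 of \citet{van2014asymptotically}, whereas you have spelled out the mechanics --- the plug-in error $\matrixnorm{\widehat\Sigma-\Sigma^*}_{\max}=O_P((1+s_0)\sqrt{\log d/n})$ arising from evaluating the weights at $\ttheta^{(0)}$, and how this produces the additive $s_0$ in the $\ell_1$ and $\ell_2$ row-error bounds and dictates the condition $n\gg s_0^2\log^2 d+{s^*}^2\log d$ --- but the underlying route is identical.
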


\begin{proof}[Lemma \ref{lem:hes_hd_glm}]
	In the high-dimensional setting, $\tT(\ttheta^{(0)})$ is constructed using nodewise Lasso.  We obtain the bounds in the lemma from Theorem~3.2 and the proof of Theorems~3.1~and~3.3 of \citet{van2014asymptotically}.
\end{proof}

\end{document}